\newtheorem{theorem}{Theorem}
\newtheorem{lemma}{Lemma}
\newtheorem{remark}{Remark}
\newtheorem*{lemma*}{Lemma}
\newtheorem{corollary}{Corollary}
\newtheorem{problem}{Open Problem}
\newtheorem{claim}{Claim}
\newtheorem*{theorem*}{Theorem}
\newtheorem*{corollary*}{Corollary}
\newcommand{\opt}{opt}
\newcommand{\OPT}{OPT}
\newcommand{\cost}{\small{\mathsf{cost}}}
\newcommand{\dist}{dist}
\newcommand{\wO}{\widetilde{\text{OPT}}}
\newcommand{\wo}{\widetilde{\text{opt}}}
\newcommand{\calP}{\mathcal{P}}
\newcommand{\tF}{\tilde{F}}
\newcommand{\N}{\mathbb{N}}
\newcommand{\argmin}{\mathrm{argmin}}
\newcommand{\R}{\mathbb{R}}
\newcommand{\RP}{\mathbb{R}^{\geq 0}}
\newcommand{\poly}{\mathrm{poly}}
\newcommand{\E}{\mathbb{E}}
\newcommand{\eps}{\varepsilon}
\definecolor{Darkblue}{rgb}{0,0,0.4}
\definecolor{Brown}{cmyk}{0,0.61,1.,0.60}
\definecolor{Purple}{cmyk}{0.45,0.86,0,0}
\definecolor{brickred}{rgb}{0.8, 0.25, 0.33}
\newcommand{\open}{\texttt{open}}
	\newcommand{\fab}[1]{\textcolor{red}{#1}}
	\newcommand{\fabr}[1]{\rem{\textcolor{red}{$\bullet$ #1}}}
	\newcommand{\vca}[1]{\rem{\textcolor{blue}{$\bullet$ #1}}}
	\newcommand{\el}[1]{\rem{\textcolor{gray}{$\bullet$ #1}}}
	\newcommand{\chris}[1]{\rem{\textcolor{magenta}{$\bullet$ #1}}}
	\def\rem#1{{\marginpar{\raggedright\scriptsize #1}}}
	\newcommand{\fab}[1]{#1}
	\newcommand{\fabr}[1]{}
	\newcommand{\vca}[1]{}
	\newcommand{\el}[1]{}
	\newcommand{\chris}[1]{}
\title{Breaching the 2 LMP Approximation Barrier for Facility Location with Applications to $k$-Median}
\author[1]{Vincent Cohen-Addad Viallat\thanks{\texttt{vcohenad@gmail.com}}}
\author[2]{Fabrizio Grandoni\thanks{\texttt{fabrizio@idsia.ch}. Partially supported by the SNSF Excellence Grant 200020B\_182865/1.}}
\author[3]{Euiwoong Lee\thanks{\texttt{euiwoong@umich.edu}}}
\author[4]{Chris Schwiegelshohn\thanks{\texttt{cschwiegelshohn@gmail.com}}}
\affil[1]{Google Zurich, Switzerland}
\affil[2]{IDSIA, Switzerland}
\affil[3]{University of Michigan, USA}
\affil[4]{Aarhus University, Denmark}
\author[1]{Vincent Cohen-Addad Viallat\thanks{\texttt{vcohenad@gmail.com}}}
\author[2]{Fabrizio Grandoni\thanks{\texttt{fabrizio@idsia.ch}. Partially supported by the SNSF Excellence Grant 200020B\_182865/1.}}
\author[3]{Euiwoong Lee\thanks{\texttt{euiwoong@umich.edu}}}
\author[4]{Chris Schwiegelshohn\thanks{\texttt{cschwiegelshohn@gmail.com}}}
\affil[1]{Google Research, France}
\affil[2]{IDSIA, Switzerland}
\affil[3]{University of Michigan, USA}
\affil[4]{Aarhus University, Denmark}
\date{}
\begin{document}

\maketitle

\begin{abstract}

 \noindent The Uncapacitated Facility Location (UFL) problem is one of the most fundamental clustering problems: Given a set of clients $C$ and a set of facilities $F$ in a metric space $(C \cup F, dist)$ with facility costs $\open : F \to \R^+$, the goal is to find a set of facilities $S \subseteq F$ to minimize the sum of the opening cost $\open(S)$ and the connection cost $d(S) := \sum_{p \in C} \min_{c \in S} dist(p, c)$. 
An algorithm for UFL is called a Lagrangian Multiplier Preserving (LMP) $\alpha$ approximation if it outputs a solution $S\subseteq F$ satisfying $\open(S) + d(S) \leq \open(S^*) + \alpha d(S^*)$ for any $S^* \subseteq F$. 
The best-known LMP approximation ratio for UFL is at most $2$ by the JMS algorithm of Jain, Mahdian, and Saberi [STOC'02, J.ACM'03] based on the Dual-Fitting technique.
The lack of progress on improving the upper bound on $\alpha_{LMP}$ in the last two decades raised the natural question whether $\alpha_{LMP}=2$. 

We answer this question negatively by presenting a (slightly) improved LMP approximation algorithm for UFL. This is achieved by combining the Dual-Fitting technique with Local Search, another popular technique to address clustering problems. 
  In more detail, we use the LMP solution $S$ produced by JMS to seed a local search algorithm.
  We show that local search substantially improves $S$ unless a big fraction of the connection cost of $S$ is
  associated with facilities of relatively small opening costs. In the latter case however the
  analysis of Jain, Mahdian, and Saberi can be improved (i.e., $S$ is cheaper than expected). To summarize:
  Either $S$ is close enough to the optimum, or it must belong to the local neighborhood of a good enough local optimum. From a conceptual viewpoint, our result gives a theoretical evidence that local search can be enhanced so as to avoid bad local optima
  by choosing the initial feasible solution with LP-based techniques.

Our result directly implies a (slightly) improved approximation for the related $k$-Median problem, another fundamental clustering problem: Given $(C \cup F, dist)$ as in a UFL instance and an integer $k \in \mathbb{N}$, find $S \subseteq F$ with $|S| = k$ that minimizes $d(S)$. The current best approximation algorithms for $k$-Median are based on the following framework: use an LMP $\alpha$ approximation algorithm for UFL to build an $\alpha$ approximate \emph{bipoint solution} for $k$-Median, and then round it with a $\rho_{BR}$ approximate bipoint rounding algorithm. This implies an $\alpha\cdot \rho_{BR}$ approximation. 
The current-best value of $\rho_{BR}$ is $1.338$ by Byrka, Pensyl, Rybicki, Srinivasan, and Trinh [SODA'15, TALG'17], which yields $2.6742$-approximation. Combining their algorithm with our refined LMP algorithm for UFL (replacing JMS) gives a $2.67059$-approximation.
\end{abstract}

\thispagestyle{empty}
\setcounter{page}{0}

\newpage
\pagenumbering{arabic}

\setcounter{page}{1}

\section{Introduction}

Given a set of points (clients) $C$ and a set of facilities $F$ in a metric space $(C \cup F, dist)$, the classic $k$-Median problem asks to find a subset $S\subseteq F$ of size $k$
(the \emph{centers} or \emph{facilities}), such that the total distance $d(S) := \sum_{p\in C} \underset{c\in S}{~\min~} dist(p,c)$ of points in $C$ to the closest facility is minimized. In the related Uncapacitated Facility Location problem (UFL), 
opening costs $\open : F \to \R^+$ replace $k$ in the input, and we need to compute $S \subseteq F$ that minimizes $\open(S) + d(S)$, $\open(S):=\sum_{f\in S}\open(f)$. 
Initially motivated by operations research applications, the study of the complexity of these two problems 
dates back to the early 60s~\cite{KH63,Sto63} and has been the focus of a great deal of attention over the years.
They are APX-hard in general metric spaces,
and have been well-studied in terms of approximation algorithms. 
For UFL, Li~\cite{Li13} presented an algorithm with the currently best known approximation ratio of $1.488$, while Guha and Khuller~\cite{GuK99} show that it admits no better than an $\gamma_{GK} \approx 1.463$ approximation. 
For $k$-Median, while a hardness-of-approximation bound of $1+2/e \approx 1.73$ has been known for more than 20 years \cite{GuK99, JMS02}, designing an algorithm matching this lower bound remains a major challenge despite ongoing efforts.

The most successful techniques to approximate $k$-Median and UFL are essentially of two kinds: Local search and LP-based techniques. Local search starts with a given feasible solution, and iteratively improves it by enumerating all ``neighbouring'' feasible solutions (obtained by swapping in and out a subset of at most $\Delta=O(1)$ centers) and switching to any such solution if the cost decreases.
\cite{AryaGKMMP04} showed that this approach achieves a $(3+2/\Delta)$ approximation for $k$-Median, and this factor is tight (more precisely, there exists a locally optimal solution which is $(3+2/\Delta)$ approximate). They also showed that local search achieves a $(1+\sqrt{2})\approx 2.41$ approximation for UFL. 
 
 There are two families of LP-based approximation algorithms. The first family is based on the direct rounding of a fractional solution to an LP relaxation (see e.g. \cite{ARS03,CGTS02,ByrkaA10, Li13, CL12}). The second approach, which is the most relevant for this paper, is based on LMP approximations for UFL. 
Suppose that we are given a Lagrangian Multiplier Preserving (LMP) $\alpha_{LMP}$ approximation algorithm for UFL; the solution $S$ produced by this algorithm satisfies $\open(S) + d(S) \leq \open(S^*) + \alpha_{LMP} \cdot d(S^*)$ for any feasible solution $S^*$.\footnote{An equivalent definition also used in the literature is that $\alpha_{LMP}\, \open(S)+d(S) \le \alpha_{LMP} (\open(S^*)+ d(S^*))$. Two definitions are equivalent by scaling facility costs.}
For $k$-Median, there is a framework using {\em bipoint solutions} that takes any LMP $\alpha_{LMP}$ approximation algorithm for UFL and constructs an $\alpha_{LMP} \cdot \rho_{BR}$ approximation, where $\rho_{BR}$ denotes the best known bipoint rounding ratio. (See Section~\ref{subsec:prelim} for details.) 
The fundamental paper by Jain and Vazirani \cite{JaV01} gives $\alpha_{LMP}=3$ and $\rho_{BR}=2$, leading to a $6$ approximation for $k$-Median. The value of $\alpha_{LMP}$ was later improved to $2$ in another seminal paper by Jain, Mahdian, and Saberi~\cite{JMS02, JainMMSV03}, where the authors use the \emph{Dual-Fitting} technique. 
While the lower bound remains $1 +2/e \approx 1.73$~\cite{JMS02}, in the last two decades no progress was made on $\alpha_{LMP}$. Recent progress was however made on $\rho_{BR}$: In a breakthrough result $\rho_{BR}$ was improved to $1.3661$ by Li and Svensson \cite{LiS16}, and later refined to $1.3371$ by~\cite{ByrkaPRST17}. \cite{ByrkaPRST17} also shows that $\rho_{BR}\geq \frac{1+\sqrt{2}}{2}>1.207$. Combining \cite{JMS02} with \cite{ByrkaPRST17} one obtains the current best $2\cdot 1.3371=2.6742$ approximation for $k$-Median.
There is also a simple reduction showing that an $\alpha$ LMP inapproximability for UFL with uniform facility costs implies an $\alpha$ inapproximability for $k$-Median for any $\alpha \geq 1$, so the question $\alpha_{LMP} < 2$ is directly related to whether $k$-Median would eventually admit a strictly better than $2$ approximation. (The standard LP relaxation for $k$-Median has an integrality gap at least $2 - o(1)$~\cite{JMS02}.)

Bicriteria approximation algorithms for UFL, of which LMP approximation algorithms are an extreme case, play an important role for the approximation ratio of UFL.  The best known $1.488$ approximation algorithm for UFL by Li~\cite{Li13} uses the algorithms by Madhian, Ye, and Zhang~\cite{MahdianYZ06} and Byrka and Aardal~\cite{ByrkaA10} that guarantee 
$\open(S) + d(S) \leq 1.11\open(S^*) + 1.78 d(S^*)$ and 
$\open(S) + d(S) \leq 1.67\open(S^*) + 1.37 d(S^*)$ for any $S^*$ respectively. 
\cite{MahdianYZ06} also proves that the optimal LMP $(1+2/e)$ approximation automatically yields the optimal $\gamma_{GK}$ approximation for UFL. 


\subsection{Our Results}
Due to the lack of progress on improving $\alpha_{LMP}$ in the last two decades, it is natural to ask the following question:
\begin{problem}
Is there an LMP $(2-\eta)$ approximation for UFL for some constant $\eta>0$?
\end{problem}   
We answer the above question in the affirmative (though for a rather small $\eta>2.25 \cdot 10^{-7}$), even for the general version of UFL where facilities can have different opening costs. The constant $\eta$ can be substantially improved for the case of uniform opening costs, which is sufficient in our application to $k$-Median. Indeed, for $k$-Median some other technical refinements are possible, altogether leading to a more substantial (still small) improvement of $k$-Median approximation factor from $2.6742$ to $2.67059$\footnote{We did not insist too much on improving $\eta$, but we made a substantial effort to refine the approximation factor for $k$-Median.}.

While showing (though only qualitatively) that the LMP $2$ approximation barrier can be breached is interesting in our opinion, probably the most interesting aspect of our work is at technical level. The dual-fitting LMP 2 approximation by~\cite{JMS02} is tight. In particular, a tight instance can be built via their factor-revealing LP. One might consider also a natural local-search-based algorithm for UFL, where one swaps in and out subsets of up to $\Delta=O(1)$ facilities: however (see Claim~\ref{claim:ls_bad} in Section \ref{sec:fl}) such an algorithm is not an LMP $\alpha_{LMP}$ approximation for any constant $\alpha_{LMP}$ (in the case of non-uniform opening costs).

Somewhat surprisingly, we show that a combination of dual-fitting and local search is better than each such approach in isolation. Consider first the case of uniform opening costs. We consider the UFL solution $S$ produced by the algorithm JMS in \cite{JMS02}. We seed the above mentioned local search UFL algorithm with $S$ (i.e., $S$ is used as the initial feasible solution), hence obtaining a refined solution $S'$. The latter solution turns out to be an LMP $2-\eta<2$ approximation for the problem. More precisely, we can show that either $S$ is already a good enough solution, or it belongs to the local neighbourhood of a good enough solution. Intuitively, the worst case instances for JMS and local search are to a certain
degree mutually exclusive. For general opening costs we augment the standard local-search neighborhood by also considering the solutions obtained by setting to zero the cost of some facilities and running JMS itself on the obtained instance. We believe that our result is conceptually interesting: seeding a local search algorithm with an LP-based approximate solution is a natural heuristic; our result supports this heuristic from a theoretical viewpoint.

\subsection{Related Work}
\label{subsec:related}

\paragraph{$k$-Median.} 

$k$-Median remains APX-hard~\cite{Cohen-AddadS19,CKL21,MeS84} in Euclidean metrics, and here the best-known approximation is $2.406$~\cite{CEMN22}. The problem however admits a local-search-based PTAS for a constant number of dimensions~\cite{Cohen-Addad18}. This extends to metrics with bounded doubling dimension~\cite{FriggstadRS19,CFS19}, and graph metrics with an excluded minor \cite{Cohen-AddadKM19}. Local search computes optimal or nearly optimal solutions also assuming a variety of stability conditions~\cite{BalcanW17,Cohen-AddadS17} (see also \cite{BBLM14,BeT10,DGK02,HaM01,SAD04,YSZUC08} for related clustering results). For fixed $k$, there are a few PTASs, typically using small coresets \cite{BecchettiBC0S19,FL11,HuangV20,KumarSS10}. There exists also a large body of bicriteria approximations, see~\cite{BaV15,CharikarG05,CoM15,KPR00,MakarychevMSW16}.

  A related result by Chen~\cite{Chen08} also combines bipoint solutions and local-search-based approaches to obtain the first $O(1)$-approximation
  algorithm for $k$-Median with outliers (recently significantly improved by Krishnaswamy, Li, and Sandeep~\cite{KrishnaswamyLS18} using a different method). 
  In the $k$-Median with
  outliers problem the goal is to place $k$ centers so as to minimize the sum of the distances from the closest $n-z$ points to their closest center,
  where $z$ is an input integer designating the number of outliers. 
  In his paper, Chen cleverly rounds a bipoint solution for the problem by using a local search algorithm for
  the $k$-Median problem with penalty (where the contribution of each point to the objective is the minimum between the distance to the closest center
  and the penalty) with increasing values of penalty. Though the focus of his local search is different from ours (we use local search to obtain better bipoint solutions instead of rounding them), we believe that these examples suggest that combining local search with other algorithms at various stages has a great potential to obtain improved results.

\paragraph{Facility Location.}
As mentioned previously, following the extensive literature on UFL~\cite{ByrkaA10,CharikarG05,ChudakS03,JainMMSV03,JMS02,KPR00,MahdianYZ06,ShmoysTA97}, the $1.488$ approximation by Li~\cite{Li13} cleverly combines the algorithms of Mahdian, Ye, and Zhang~\cite{MahdianYZ06} and Byrka and Aardal~\cite{ByrkaA10} that achieve different $(\alpha, \beta)$-approximations; an algorithm for UFL (producing a solution $S$) is an $(\alpha,\beta)$ approximation if, for any feasible solution $S'$, one has $\open(S)+d(S)\leq \alpha \open(S')+\beta d(S')$. In this language,~\cite{MahdianYZ06} gives a $(1.11, 1.78)$ approximation and~\cite{ByrkaA10} gives a $(1.6774, 1.3738)$-approximation. Achieving better than $(\gamma, 1 + e^{-\gamma})$ is NP-hard~\cite{JMS02}, and the latter result matches this lower bound for $\gamma = 1.6774$. Note that Guha and Khuller's lower bound $\gamma_{GK} \approx 1.463$ for UFL is the solution of $\gamma = 1 + e^{-\gamma}$. 

There is also a large body of work in Capacitated Facility Location, where each facility location has a limited capacity~\cite{chudak1999improved, korupolu2000analysis, pal2001facility, aggarwal2013, bansal2012, an2017lp}. After many constant factor approximation algorithms using local search, An, Singh, and Svensson~\cite{an2017lp} presented the first constant factor approximation algorithm using a LP relaxation.

\subsection{Preliminaries}
\label{subsec:prelim}

\paragraph{Facility Location and JMS Algorithm.}
In the classical (metric uncapacitated) Facility Location problem (UFL) we are given a set of facilities $F$ and a set of clients $C$ together with a metric $dist(\cdot,\cdot)$ over $F\cup C$. Furthermore, we are given opening costs $\open(f)\geq 0$ over facilities $f\in F$. A feasible solution consists of a subset $S\subseteq F$ of \emph{open} facilities. Let $\open(S):=\sum_{f\in S}\open(f)$ be the opening cost of the solution, and $d(S):=\sum_{j\in C}dist(j,S)$ be its connection cost, where $dist(j,S):=\min_{f\in S}\{dist(j,f)\}$. Intuitively, we open the facilities in $S$ and connect each client $j$ to the closest open facility $f\in S$ (paying the associated connection cost $dist(j,S)$). Our goal is to minimize the total cost $\open(S)+d(S)$. We will also consider the special case of UFL where all opening costs have uniform value $\lambda$. Recall that an algorithm for UFL (producing a solution $S$) is an $(\alpha,\beta)$ approximation if, for any feasible solution $S'$, one has $\open(S)+d(S)\leq \alpha \open(S')+\beta d(S')$. If $\alpha=1$, that algorithm is a \emph{Lagrangian Multiplier Preserving} (LMP) $\beta$ approximation for FL.

As mentioned in the introduction, our algorithm performs local search starting from the solution computed by the classical JMS algorithm~\cite{JMS02, JainMMSV03}.
Roughly speaking, the JMS algorithm works as follows. At any point of time we maintain a subset of \emph{active} clients $A\subseteq C$ and a subset of open facilities $S\subseteq F$. Furthermore, for each $j\in C$, we have a variable $\alpha_j\geq 0$ and a facility $S(j)\in S$ which is defined only for \emph{inactive} clients $j\in C\setminus A$ ($j$ is \emph{connected} or \emph{assigned} to $S(j)$). Initially $A=C$, $S=\emptyset$ and $\alpha=0$. At any point of time each client $j$ makes an \emph{offer} to each \emph{closed} facility $f\in F\setminus S$. If $j$ is active, this offer is $\max\{0,\alpha_j-dist(j,f)\}$. Otherwise ($j$ is inactive), the offer is $\max\{0,dist(j,S(j))-dist(j,f)\}$. We increase at uniform rate the dual variables of active clients until one of the  following events happens. If $\alpha_j=dist(j,f)$ for some $j \in A, f\in S$, we remove $j$ from $A$ and set $S(j)=f$ ($j$ is connected to $f$). If the offers to a closed facility $f\in F\setminus S$ reach its opening cost $\open(f)$, we open $f$ (i.e. add $f$ to $S$), and set $S(j)=f$ for all the clients who made a strictly positive offer to $f$. Notice that $S(j)$ might change multiple times. We iterate the process until $A=\emptyset$, and then return $S$. 

This algorithm is analyzed with the dual-fitting technique. We interpret the variables $\alpha$ as a dual solution for a proper standard LP (that we will omit). From the above construction it should be clear that the sum of the dual variables upper bounds the total cost of the final solution $S$. The dual solution turns out to be infeasible, but it can be made feasible by scaling down all the dual variables by a constant factor $\gamma>1$ which also determines the approximation factor of the algorithm. The value of $\gamma$ is determined by upper bounding the value of a properly defined \emph{factor-revealing} LP, which is parameterized by an integer cluster size $q$. In particular the authors show that $\gamma<1.61$. By modifying the objective function of such LP, one can also show that JMS  is an $(\alpha,\beta)$ approximation algorithm for different pairs $(\alpha,\beta)$, and this is at the heart of the current best approximation algorithm for UFL \cite{Li13}. In particular, we consider the following variant of a factor-revealing LP used in \cite{JMS02} to show that JMS is an LMP $2$-approximation for UFL\footnote{The normalization constraint \eqref{con:sumdi} can be omitted by dividing the objective function by $\sum_{j=1}^{q}d_j$. The maxima in \eqref{con:LBlambda} can be avoided by introducing auxiliary variables.}:
\begin{align}
\mbox{max} & \sum_{i=1}^q \alpha_i - \lambda \nonumber & (LP_{JMS(q)}) \\ 
\mbox{s.t.} 
& \sum_{i=1}^q d_i = 1 \label{con:sumdi} & \\
& \alpha_i \leq \alpha_{i+1} & \forall 1 \leq i < q \label{con:orderalphai}\\
& r_{j,i+1} \leq r_{j,i} & \forall 1 \leq j  \leq i < q\label{con:orderrji}\\
& \alpha_i \leq r_{j,i} + d_i + d_j & \forall 1 \leq j < i \leq q\label{con:triangleIneq} \\  
& r_{j,j} \leq \alpha_j & \forall 1 \leq j \leq q \label{con:addedConstr} \\
& \sum_{j=1}^{i-1} \max\{r_{j,i} - d_j,0\} + \sum_{j=i}^{q} \max\{\alpha_i-d_j,0\} \leq \lambda & \forall 1 \leq i \leq q \label{con:LBlambda} \\
& \alpha, d, \lambda, r\geq 0 \nonumber   
\end{align}
In the above LP, for $1\leq j<i\leq q$, $r_{j,i}$ is interpreted as the distance between client $j$ and the facility it is connected to right before time $\alpha_i$, assuming $\alpha_j<\alpha_i$. Otherwise, $r_{j,i}=\alpha_j$. 
Equation \eqref{con:sumdi} is the normalizing constraint so that the objective value of the program directly becomes the LMP approximation ratio, 
\eqref{con:orderalphai} orders the clients in the increasing order of $\alpha_i$'s,
and \eqref{con:orderrji} indicates that a client will be connected to closer facilities as time progresses. 
W.r.t. the LP in \cite{JMS02}, we add the variables $r_{j,j}$ which denote the distance between client $j$ and the first facility it is connected to. W.r.t. to the LP studied in \cite{JMS02}, we add the constraint $r_{j,j+1}\leq r_{j,j}$ in \eqref{con:orderrji} and the new constraint $\eqref{con:addedConstr}$, namely $r_{j,j}\leq \alpha_j$. Both constraints are clearly satisfied by the \emph{original} algorithm in \cite{JMS02}. 
Equation~\eqref{con:triangleIneq} uses the triangle inequality to show that client $i$ can be connected to the facility client $j$ is already connected to, 
and~\eqref{con:LBlambda} shows that the current facility is never overpaid (i.e., it becomes open as soon as the equality holds). 
Let $\opt_{JMS}(q)$ be the optimal solution to $LP_{JMS}(q)$, and $\opt_{JMS}=\sup_q\{\opt_{JMS}(q)\}$. The analysis in \cite{JMS02} directly implies $\opt_{JMS}\leq 2$ (since we only added constraints). We will use a variant of the above LP to prove our improved LMP bound for UFL\footnote{We remark that JMS produces an LMP $2$-approximation w.r.t. the value of a standard LP, while our approach provides an LMP $2-\eta$ approximation but only w.r.t. the optimal integral solution (due to our use of local search). Finding an LMP $2-\eta$ approximation w.r.t. a natural LP is an interesting open problem.
}.       

Given a facility location solution $S$ and a client $j$, we let $S(j)$ denote the facility serving client $j$ in solution $S$. We also let $C_S(f)$ be the clients served by facility $f\in S$ in solution $S$. For $S'\subseteq S$, $C_S(S')=\cup_{f\in S'}C_S(f)$. For $f\in S$ and $S'\subseteq S$, we let $d_S(f):=\sum_{j\in C_{S}(f)}dist(j,f)$ and $d_S(S')=\sum_{f\in S'}d_S(f)$. In all the mentioned cases, we sometimes omit the subscript $S$ when $S$ is clear from the context. 

\paragraph{Bipoint solution.} Consider a $k$-Median instance $(C,F,dist,k)$ with an optimal solution $\OPT$ of cost $\opt$. Let us assume w.l.o.g. that $k<|F|$, otherwise the problem can be solved optimally in polynomial time by opening the entire $F$. Given an LMP $\alpha_{LMP}$ approximation for UFL, a bipoint solution is constructed as follows. For a given parameter $\lambda\geq 0$, consider the UFL instance with clients $C$, facilities $F$, and uniform opening cost $\lambda$. Let $S(\lambda)\subseteq F$ be the solution produced by the considered UFL algorithm on this instance. We assume that $S(\lambda')$ consists of precisely one facility for a large enough $\lambda'$, and that $S(0)=F$. These properties are either automatically satisfied by the UFL algorithm, or they can be easily enforced. 
We perform a binary search in the interval $[0,\lambda']$ until we find two values $0\leq \lambda_2<\lambda_1\leq \lambda'$ such that $S_1:=S(\lambda_1)$ opens $k_1\leq k$ facilities and $S_2:=S(\lambda_2)$ opens $k_2> k$ facilities. Furthermore, $(\lambda_1-\lambda_2)k\leq \eps\cdot \opt$. Interpreting $S_i$ as an incidence vector over $F$, the bipoint solution $S_B=aS_1+bS_2$ is a convex combination of $S_1$ and $S_2$ with coefficients $a=\frac{k_2-k}{k_2-k_1}$ and $b=1-a=\frac{k-k_1}{k_2-k_1}$. Notice that $S_B$ opens precisely $k$ facilities in a fractional sense. In $S_B$ each client $c$ is connected by an amount $a$ to $S_1(c)$ and by an amount $b$ to $S_2(c)$, hence the connection cost of $S_B$ is $a\,d(S_1)+b\,d(S_2)$.
Recall that by assumption $\lambda_i |S_i| + d(S_i) \leq \lambda_i k+\alpha_{LMP} \cdot \opt$. Thus 
$$
a\,d(S_1)+b\,d(S_2)\leq a\lambda_1 (k-k_1)+b\lambda_2 (k-k_2)+\alpha_{LMP} \cdot \opt\leq (\alpha_{LMP}+\eps)\opt.
$$
We will next neglect the term $\eps\opt$ since it is subsumed by our numerical overestimates. Furthermore, to lighten the notation, we will neglect the slight gap between $\lambda_1$ and $\lambda_2$ and simply assume $\lambda_1=\lambda_2=\lambda$.

\paragraph{Organization.} To illustrate (some of) our main ideas in a simpler setting, in the main body we will focus on approximation algorithms for $k$-Median. To that aim, it is sufficient to consider LMP approximation algorithms for UFL \emph{in the case of uniform opening cost $\lambda$} (which in turn is sufficient to build a convenient bipoint solution). As mentioned earlier, our UFL algorithm starts with the solution produced by JMS, and then refines it by means of local search. In Sections~\ref{sec:LMP} and~\ref{sec:localSearch} (with some proofs in Sections~\ref{sec:LMP_omittedProofs} and~\ref{sec:localSearchOmitted}, resp.) we provide two distinct upper bounds on the cost of the produced UFL solution. The first such bound is based on a refinement of the analysis in \cite{JMS02} which holds in certain special cases (see Section \ref{sec:LMP_old} for an alternative analytical approach). The second bound instead exploits the local optimality of the final solution. Combining these two bounds, we derive an improved approximation for $k$-Median in Section \ref{sec:improvedkMedian} (with some proofs in Section~\ref{sec:omittedSection4}). In Section \ref{sec:improvedFLuniform} we describe how to obtain a better than $2$ LMP approximation for UFL with uniform facility costs. A refined approximation for $k$-Median is sketched in Section \ref{sec:refinedkMedian} (using results from Section \ref{sec:boundsS1}). Finally our improved LMP approximation for UFL with arbitrary opening costs is given in Section \ref{sec:fl}. 

\section{An Improved LMP Bound for Small Facility Cost}
\label{sec:LMP}

Given a Facility Location instance $(C, F, dist, \open)$, let $\OPT \subseteq F$ be some given solution to a UFL instance. Suppose that a constant fraction $\alpha>0$ of the connection cost of $\OPT$ is associated with a subset $\OPT'\subseteq \OPT$ of facilities whose opening cost $\open(\OPT')$ is at most some constant $T$ times the respective connection cost $d(\OPT')$. Then we can show that JMS on this instance produces a solution of cost at most $\open(\OPT)+(2-\eta)d(\OPT)$, for some constant $\eta=\eta(T,\alpha)>0$. Hence in some sense JMS is an LMP $(2-\eta)$ approximation algorithm w.r.t. this type of solutions. This result will be used in our overall win-win strategy for $k$-Median; later in Theorem~\ref{thr:localSearch} and Lemma~\ref{lem:boundLonelyFacilityCost}, we will show that if an outcome of local search with mild additional requirements (satisfied by starting local search from a JMS solution) does not achieve a $(2 - \eta')$ LMP approximation for some $\eta' > 0$, then $\OPT$ will have the above property, so JMS already guarantees a $(2-\eta)$ LMP approximation.

In order to prove this claim, let us have a closer look at the analysis of JMS in \cite{JMS02}. Using the dual-fitting technique, the authors show that, for any $f \in OPT$, $\sum_{j\in C(f)}\alpha_j\leq \open(f)+\opt_{JMS}\cdot d(f)\leq \open(f)+2 d(f)$. 
The worst case value of $\opt_{JMS}$ in their analysis is achieved for $\lambda=\open(f)$ arbitrarily larger than $d(f)=\sum_{j=1}^{q}d_j=1$. However for the facilities in $\OPT'$ this cannot happen on average since $\sum_{f\in \OPT'}\open(f)=\open(\OPT')\leq T\cdot d(\OPT')=T\cdot \sum_{f\in \OPT'}d(f)$. This motivated us to study a variant $LP_{JMS}(q,T)$ of $LP_{JMS}(q)$, for a parameter $T>0$, where we add the following constraint that intuitively captures the condition $\open(f)\leq T\cdot d(f)$
\begin{equation}
\lambda \leq T.\label{con:Tbound-new}
\end{equation}
Let $\opt_{JMS}(q,T)$ be the optimal value of $LP_{JMS}(q,T)$, and $\opt_{JMS}(T)=\sup_q\{\opt_{JMS}(q,T)\}$. Observe that $\opt_{JMS}(q,T)\leq \opt_{JMS}(q)$ where the equality holds for any fixed $q$ and $T$ large enough. A straightforward adaptation of the analysis in \cite{JMS02} using our modified LP $LP_{JMS}(q,T)$ implies the following.
\begin{lemma}\label{lem:modifiedJMSclaim}
Let $\OPT$ be a solution to a UFL instance and $S$ be the JMS solution on the same instance. Then
$$
\open(S)+d(S)\leq \open(\OPT)+\sum_{f^*\in \OPT}\opt_{JMS}(\frac{\open(f^*)}{d(f^*)})\cdot d(f^*).
$$
\end{lemma}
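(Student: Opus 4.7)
The plan is to mimic the dual-fitting analysis of~\cite{JMS02} essentially verbatim, but to apply the strengthened LP bound $\opt_{JMS}(T)$ on a per-facility basis, taking $T = \open(f^*)/d(f^*)$ for each $f^* \in \OPT$ rather than a single uniform bound. I would begin from the algorithmic inequality already noted in the preliminaries, namely $\sum_{j \in C} \alpha_j \geq \open(S) + d(S)$, where $\alpha_j$ is the final dual value of client $j$ in the JMS execution. Partitioning the clients by their $\OPT$-assignment, it then suffices to prove the per-facility bound
$$
\sum_{j \in C(f^*)} \alpha_j \;\leq\; \open(f^*) \;+\; \opt_{JMS}\!\left(\tfrac{\open(f^*)}{d(f^*)}\right) \cdot d(f^*)
$$
for every $f^* \in \OPT$, where $C(f^*)$ denotes the clients served by $f^*$ in $\OPT$; summing over $\OPT$ and combining with the algorithmic bound then yields the lemma.

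To establish the per-facility bound, I would fix $f^* \in \OPT$, set $q := |C(f^*)|$, and enumerate $C(f^*) = \{j_1, \ldots, j_q\}$ so that $\alpha_{j_1} \leq \ldots \leq \alpha_{j_q}$. Following the JMS analysis, set $d_i := dist(j_i, f^*)$ and, for $1 \leq j \leq i \leq q$, let $r_{j,i}$ be the distance from $j_j$ to the facility currently serving it right before time $\alpha_{j_i}$, with the convention $r_{j,j} := \alpha_{j_j}$. The heart of the JMS argument is to verify that the tuple $(\alpha_{j_i}, d_i, r_{j,i})_{i,j}$ together with $\lambda := \open(f^*)$ satisfies every constraint of $LP_{JMS}(q)$ save the normalization~\eqref{con:sumdi}: monotonicity~\eqref{con:orderalphai} holds by the chosen ordering, monotonicity~\eqref{con:orderrji} by the fact that reassignments only ever shorten the serving distance; the triangle inequality~\eqref{con:triangleIneq} reflects that at time $\alpha_{j_i}$ client $j_i$ could have rerouted through $f^*$ to the facility currently serving $j_j$; and~\eqref{con:LBlambda} is the tight-facility rule, which guarantees that the total offers made to $f^*$ never exceed $\open(f^*)$ (else $f^*$ would have been opened). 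The two constraints added in the present paper---the case $i=j$ of~\eqref{con:orderrji} and the new constraint~\eqref{con:addedConstr}---are immediate from the convention $r_{j,j} = \alpha_{j_j}$ together with the monotonicity just noted.

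To finish, I would rescale by $d(f^*) = \sum_i d_i$. All constraints of $LP_{JMS}(q)$ besides~\eqref{con:sumdi} are homogeneous of degree one, so dividing $d$, $\alpha$, $r$, and $\lambda$ by $d(f^*)$ produces a feasible solution $(\tilde \alpha, \tilde d, \tilde r, \tilde \lambda)$ to $LP_{JMS}(q)$ with $\tilde \lambda = \open(f^*)/d(f^*)$; in particular it is feasible for $LP_{JMS}(q, T_{f^*})$ with $T_{f^*} := \open(f^*)/d(f^*)$, so
$$
\sum_{i=1}^{q} \tilde \alpha_i - \tilde \lambda \;\leq\; \opt_{JMS}(q, T_{f^*}) \;\leq\; \opt_{JMS}(T_{f^*}),
$$
and multiplying through by $d(f^*)$ recovers the per-facility bound. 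The main obstacle is purely bookkeeping: one must verify the two freshly added constraints hold on JMS-produced data (they do, by the very definition of $r_{j,j}$) and that the rescaling is legitimate (it is, by homogeneity of the remaining constraints). Everything else is a direct port of the JMS analysis, which is why the paper describes the adaptation as straightforward.
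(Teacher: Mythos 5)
Your proof is correct and matches the route the paper leaves implicit: per-facility dual fitting via a feasible point of the rescaled factor-revealing LP $LP_{JMS}(q, T_{f^*})$ obtained by normalizing the JMS-produced values by $d(f^*)$ with $\lambda := \open(f^*)$, with the two freshly added constraints holding by the convention on $r_{j,j}$, and then invoking $\opt_{JMS}(q,T) \le \opt_{JMS}(T)$. The only small omission is the degenerate case $d(f^*)=0$, where the rescaling is undefined and the term $\opt_{JMS}(\open(f^*)/d(f^*)) \cdot d(f^*)$ must be read as $0$; there the constraints with all $d_i=0$ force $\alpha_i$ constant and $q\alpha_1 \le \lambda$ via \eqref{con:LBlambda}, so $\sum_{j\in C(f^*)} \alpha_j \le \open(f^*)$ holds directly.
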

Given Lemma \ref{lem:modifiedJMSclaim} and the above discussion, it makes sense to study the behaviour of $\opt_{JMS}(T)$ for bounded values of $T$. We can show that the supremum defining $\opt_{JMS}(T)$ is achieved for $q$ going to infinity, i.e. $\opt_{JMS}(T)=\lim_{q\to +\infty}\opt_{JMS}(q,T)$:
\begin{restatable}{lemma}{lemjmssplitnew}
\label{lem:jms-split-new}
Fix $T > 0$. For any positive integers $q$ and $c$, 
$\opt_{JMS}(q, T) \leq \opt_{JMS}(cq, T)$. 
\end{restatable}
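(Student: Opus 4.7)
The plan is to construct, from any feasible solution $(\alpha^*, d^*, r^*, \lambda^*)$ of $LP_{JMS}(q,T)$, a feasible solution $(\alpha', d', r', \lambda')$ of $LP_{JMS}(cq,T)$ achieving the same objective value. The construction is to \emph{replicate} each of the $q$ clients of the smaller LP into $c$ identical copies, scale the distance-like variables $\alpha, d, r$ by $1/c$ so that the normalization $\sum d = 1$ is preserved, and keep $\lambda$ unscaled. Concretely, for $p \in \{1,\dots,cq\}$ let $i(p) := \lceil p/c \rceil$ and $s(p) := p - (i(p)-1)c$; then set $\alpha'_p := \alpha^*_{i(p)}/c$, $d'_p := d^*_{i(p)}/c$, $\lambda' := \lambda^*$, and, for $p' \leq p$,
\[
r'_{p',p} := \begin{cases} r^*_{i(p'),\,i(p)}/c & \text{if } i(p') < i(p), \\[2pt] \alpha^*_{i(p)}/c & \text{if } i(p') = i(p). \end{cases}
\]

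Most of the constraints are routine. \eqref{con:sumdi} holds since $\sum_p d'_p = c \sum_i d^*_i/c = 1$; \eqref{con:orderalphai} holds because $i(p)$ is non-decreasing; \eqref{con:addedConstr} is tight by construction ($r'_{p,p} = \alpha^*_{i(p)}/c = \alpha'_p$); and $\lambda' = \lambda^* \leq T$. For \eqref{con:triangleIneq}, the case $i(p') < i(p)$ is the original inequality divided by $c$, while the case $i(p') = i(p) = i$ reduces to $\alpha^*_i/c \leq \alpha^*_i/c + 2 d^*_i/c$. For monotonicity \eqref{con:orderrji}, the only nontrivial case is the block transition $i(p) = i \to i(p+1) = i+1$ with $i(p') = i$, comparing $\alpha^*_i/c = r'_{p',p}$ to $r^*_{i,i+1}/c = r'_{p',p+1}$; here I would invoke the chain $r^*_{i,i+1} \leq r^*_{i,i} \leq \alpha^*_i$ supplied by the added constraints of the LP. Granted \eqref{con:LBlambda}, the objective evaluates to $c \cdot \sum_i \alpha^*_i / c - \lambda^* = \opt_{JMS}(q,T)$, as desired.

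The main step, which the whole construction is designed around, is \eqref{con:LBlambda} at an index $p$ with $i(p) = i$ and $s(p) = s$. Splitting the two sums on the LHS according to whether $p'$ lies in block $B_j$ with $j < i$, $j = i$, or $j > i$: every $(\,\cdot\,)^+$ term equals exactly $1/c$ of the corresponding original term, but cross-block contributions come in $c$ replicated copies and collapse to $\sum_{j<i}(r^*_{j,i}-d^*_j)^+$ and $\sum_{j>i}(\alpha^*_i - d^*_j)^+$. The same-block contribution is $\tfrac{(s-1)+(c-s+1)}{c}(\alpha^*_i - d^*_i)^+ = (\alpha^*_i - d^*_i)^+$: the $s-1$ same-block predecessors of $p$ use $r'_{p',p} = \alpha^*_i/c$, while the $c-s+1$ same-block indices at or after $p$ use $\alpha'_p = \alpha^*_i/c$. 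Altogether, the LHS of the new \eqref{con:LBlambda} coincides with the LHS of the original \eqref{con:LBlambda} at index $i$, which is at most $\lambda^* = \lambda'$. The main obstacle is precisely this delicate coordination on same-block pairs: using $r^*_{i,i}/c$ in place of $\alpha^*_{i(p)}/c$ would in general violate \eqref{con:triangleIneq}, while inflating $r'$ further would break \eqref{con:addedConstr} at $p' = p$; the value $\alpha^*_{i(p)}/c$ is the sweet spot that respects the block boundary and produces the exact $(s-1)+(c-s+1) = c$ cancellation.
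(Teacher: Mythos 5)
Your construction is exactly the paper's: replicate each client into $c$ copies, scale $\alpha,d,r$ by $1/c$, keep $\lambda$, and set $r'$ on same-block pairs to $\alpha^*_{i(p)}/c$. The constraint checks (including the $r^*_{i,i+1}\le r^*_{i,i}\le \alpha^*_i$ chain for \eqref{con:orderrji} and the same-block cancellation in \eqref{con:LBlambda}) match the paper's argument step for step.
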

\begin{proof}[Proof sketch]
Given an optimal solution $S=(\alpha,d,r,\lambda)$ of $LP_{JMS}(q,T)$, consider the following solution $S'=(\alpha',d',r',\lambda')$ of $LP_{JMS}(cq,T)$:
\begin{align*}
\alpha'_{i}:=\frac{\alpha_{\lceil i/c\rceil}}{c},\quad d'_{i}=\frac{d_{\lceil i/c\rceil}}{c},\quad 
r'_{j,i}:=\begin{cases} 
\frac{r_{\lceil j/c\rceil,\lceil i/c\rceil}}{c} & \text{if } \lceil j/c\rceil<\lceil i/c\rceil;\\
\frac{\alpha_{\lceil j/c\rceil}}{c} & \text{if }\lceil j/c\rceil=\lceil i/c\rceil.
\end{cases}
\quad \lambda'=\lambda,\quad \forall 1\leq j\leq i\leq cq.
\end{align*}
Observe that $S$ and $S'$ have exactly the same objective value. The reader can easily verify that $S'$ is feasible (details in Section \ref{sec:LMP_omittedProofs}). The claim follows.
\end{proof}

We also observe that $\opt_{JMS(T)}$ is a concave function.
\begin{restatable}{lemma}{lemJMSconcavity}\label{lem:JMSconcavity}
$\opt_{JMS}(T)$ is a concave function of $T$. 
\end{restatable}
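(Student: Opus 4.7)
The plan is to reduce the claim to two steps: (i) for each fixed positive integer $q$, $\opt_{JMS}(q, T)$ is a concave function of $T$, and (ii) $\opt_{JMS}(T)$ can be expressed as a monotone pointwise limit along the subsequence indexed by factorials, i.e.\ $\opt_{JMS}(T) = \lim_{n \to \infty} \opt_{JMS}(n!, T)$. Since the pointwise limit of concave functions is concave (the defining inequality passes to the limit), these two steps will together yield the conclusion.

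For (i), I would fix $q$ and argue by convex combination of feasible solutions. Let $T_1, T_2 \geq 0$, $\mu \in [0,1]$, $T := \mu T_1 + (1-\mu) T_2$, and let $S_k = (\alpha^{(k)}, d^{(k)}, r^{(k)}, \lambda_k)$ be an optimal solution of $LP_{JMS}(q, T_k)$ for $k = 1,2$. Define $S := \mu S_1 + (1-\mu) S_2$ coordinatewise. The constraints \eqref{con:sumdi}--\eqref{con:addedConstr}, nonnegativity, and the new constraint \eqref{con:Tbound-new} are all linear in the variables and in $T$, so they are preserved (in particular $\sum_i d_i = \mu \cdot 1 + (1-\mu)\cdot 1 = 1$ and $\lambda(S) = \mu \lambda_1 + (1-\mu)\lambda_2 \leq \mu T_1 + (1-\mu)T_2 = T$). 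The only constraint that is not linear in the variables is \eqref{con:LBlambda}, but its left-hand side is a sum of maxima of linear functions, hence convex in $(\alpha, d, r)$; so
\[
\mathrm{LHS}(S) \;\leq\; \mu\,\mathrm{LHS}(S_1) + (1-\mu)\,\mathrm{LHS}(S_2) \;\leq\; \mu \lambda_1 + (1-\mu) \lambda_2 \;=\; \lambda(S),
\]
and \eqref{con:LBlambda} is satisfied at $S$. Since the objective $\sum_i \alpha_i - \lambda$ is linear, the value of $S$ equals $\mu\,\opt_{JMS}(q, T_1) + (1-\mu)\,\opt_{JMS}(q, T_2)$, establishing concavity of $\opt_{JMS}(q, \cdot)$.

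For (ii), iterating Lemma~\ref{lem:jms-split-new} with $c = m!/n!$ (for $n \leq m$) shows that $\{\opt_{JMS}(n!, T)\}_{n \geq 1}$ is non-decreasing in $n$. Applying the same lemma with $c = q$ gives $\opt_{JMS}(q, T) \leq \opt_{JMS}(q!, T)$ for every $q$, so the factorial subsequence is cofinal and hence $\opt_{JMS}(T) = \sup_n \opt_{JMS}(n!, T) = \lim_{n \to \infty} \opt_{JMS}(n!, T)$. Taking the limit in $n$ of the concavity inequality for $\opt_{JMS}(n!, \cdot)$ yields the concavity of $\opt_{JMS}(\cdot)$. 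The only genuinely delicate point in the plan is the verification that constraint \eqref{con:LBlambda} is preserved under convex combinations; everything else is routine once the convex combination $S = \mu S_1 + (1-\mu) S_2$ is written down.
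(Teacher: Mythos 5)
Your proof is correct and follows essentially the same structure as the paper's: step (i), showing concavity of $\opt_{JMS}(q,\cdot)$ for each fixed $q$ by taking a convex combination of the two optimal solutions and noting that every constraint including \eqref{con:LBlambda} (convex left-hand side $\leq$ linear right-hand side) defines a convex set, is identical to the paper's argument. Your step (ii) repackages the paper's contradiction argument as a monotone pointwise limit along a cofinal factorial subsequence --- a clean and equivalent route, both of which hinge on Lemma~\ref{lem:jms-split-new}. One small slip: to obtain $\opt_{JMS}(q,T) \leq \opt_{JMS}(q!,T)$ from Lemma~\ref{lem:jms-split-new} you need $c=(q-1)!$, not $c=q$ (which would only give $\opt_{JMS}(q,T) \leq \opt_{JMS}(q^2,T)$); this is trivially fixed and does not affect the argument.
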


We next show how to compute an upper bound on $\opt_{JMS}(T)$ for any given $T$. Here we slightly adapt an approach presented in \cite{BPRST15}, which avoids to construct a family of feasible dual solutions as in \cite{JMS02}\footnote{\cite{BPRST15} contains a technical bug which was later fixed in the journal version \cite{ByrkaPRST17} (leading to a slightly worse approximation factor). Here we adapt a part of the analysis in \cite{BPRST15} which does not appear in \cite{ByrkaPRST17} (nor in the most recent arXiv version). 
}. This leads to a simpler and easier to adapt analysis.

We define an alternative LP $LP^{+}_{JMS}(q,T)$ which is obtained from $LP_{JMS}(q,T)$ by replacing Constraint \eqref{con:LBlambda} with the following constraint
\begin{equation}\label{con:LBlambda-plus}
\sum_{j=1}^{i} \max\{r_{j,i} - d_j,0\} + \sum_{j=i+1}^{q} \max\{\alpha_i-d_j,0\} \leq \lambda, \quad\quad \forall 1 \leq i \leq q.
\end{equation}
In particular, there is a shift by one of two indexes in the two sums. \fab{Furthermore, we replace Constraint \eqref{con:triangleIneq} with the following constraint
\begin{equation}\label{con:triangleIneq-plus}
\alpha_i \leq r_{j,i-1}+d_i+d_j, \quad\quad \forall 1 \leq j < i \leq q.
\end{equation}
In particular, $r_{j,i}$ is replaced by $r_{j,i-1}$.
}

Let $\opt^+_{JMS}(q,T)$ be the optimal solution to $LP^{+}_{JMS}(q,T)$. The next lemma follows analogously to \cite{BPRST15}.
\begin{restatable}{lemma}{lemjmsUBnew}\label{lem:jms-UB}
Fix $T > 0$. For any positive integers $q$ and $c$, 
$\opt^+_{JMS}(q, T) \geq \opt_{JMS}(cq, T)$. 
\end{restatable}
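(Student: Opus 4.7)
Plan: My plan is to construct, for any feasible solution $(\alpha,d,r,\lambda)$ of $LP_{JMS}(cq,T)$, a feasible solution $(\tilde\alpha,\tilde d,\tilde r,\tilde\lambda)$ of $LP^{+}_{JMS}(q,T)$ with the same objective value; this immediately gives $\opt^{+}_{JMS}(q,T)\ge\opt_{JMS}(cq,T)$. I would partition the $cq$ fine indices into $q$ consecutive blocks of size $c$, block $i$ being $\{(i-1)c+1,\dots,ic\}$, and set $\tilde\alpha_i:=\sum_{k=1}^{c}\alpha_{(i-1)c+k}$, $\tilde d_i:=\sum_{k=1}^{c}d_{(i-1)c+k}$, $\tilde\lambda:=\lambda$, and for $j<i$, $\tilde r_{j,i}:=\frac{1}{c}\sum_{k,\ell=1}^{c}r_{(j-1)c+\ell,(i-1)c+k}$. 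The normalization $\sum_i\tilde d_i=1$, the bound $\tilde\lambda\le T$, the ordering of $\tilde\alpha$, the monotonicity $\tilde r_{j,i+1}\le\tilde r_{j,i}$ for $j<i$, and the triangle inequality \eqref{con:triangleIneq} then all follow by summing/averaging the corresponding fine-level inequalities over $(k,\ell)$, and the objective is preserved exactly.

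To obtain the shifted constraint \eqref{con:LBlambda-plus}, I would use the key observation that, since $r_{m,m}\le\alpha_m$ by \eqref{con:addedConstr}, the fine-scale solution automatically satisfies the plus form $\sum_{m'\le m}\max\{r_{m',m}-d_{m'},0\}+\sum_{m'>m}\max\{\alpha_m-d_{m'},0\}\le\lambda$ of its own constraint at every fine index $m\in[cq]$. I would sum these fine plus-form constraints over the $c$ indices $m=(i-1)c+k$ of block $i$ and divide by $c$; using $\sum_{k,\ell}\max\{x_{k,\ell},0\}\ge\max\{\sum_{k,\ell}x_{k,\ell},0\}$, the parts of the average coming from $m'$ in blocks $j<i$ and $j>i$ then dominate $\sum_{j<i}\max\{\tilde r_{j,i}-\tilde d_j,0\}$ and $\sum_{j>i}\max\{\tilde\alpha_i-\tilde d_j,0\}$ respectively, leaving only an in-block $j=i$ residual with the asymmetric structure $\frac{1}{c}\sum_k\bigl[\sum_{\ell\le k}\max\{r_{(i-1)c+\ell,(i-1)c+k}-b_\ell,0\}+\sum_{\ell>k}\max\{a_k-b_\ell,0\}\bigr]$ against which $\max\{\tilde r_{i,i}-\tilde d_i,0\}$ must be charged.

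The main technical obstacle will be defining $\tilde r_{i,i}$ so that (a)~$\tilde r_{i,i+1}\le\tilde r_{i,i}$, (b)~$\tilde r_{i,i}\le\tilde\alpha_i$, and (c)~the above in-block charging works. The asymmetry of the plus form — $r$-terms for $\ell\le k$ and $\alpha$-terms for $\ell>k$ — is precisely what makes (c) feasible, whereas for the original constraint \eqref{con:LBlambda} such an aggregated charging would fail (this is the whole point of introducing \eqref{con:LBlambda-plus}). I plan to take $\tilde r_{i,i}$ as the block average matching $r_{(i-1)c+\ell,(i-1)c+k}$ on the diagonal pairs $\ell\le k$ and using the surrogate $\alpha_{(i-1)c+k}$ on the remaining pairs $\ell>k$, following the template of \cite{BPRST15}; then (c) holds term-by-term using $r_{m,m}\le\alpha_m$, and (b) reduces to a Chebyshev-type inequality on the sorted sequence $\alpha_{(i-1)c+1}\le\dots\le\alpha_{ic}$, since the coefficient profile $(2c-2\ell+1)/c$ weighted by the monotone $a_\ell$ sums to at most $\sum_\ell a_\ell=\tilde\alpha_i$. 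For (a), I would choose $\tilde r_{i,i+1}$ as the tight value forced by the triangle inequality at coarse index $i+1$, which by $r_{m',m}\le r_{m',m'}\le\alpha_{m'}$ does not exceed the chosen $\tilde r_{i,i}$; this is the most delicate step since the in-block pairs with $\ell>k$ have no term-by-term comparison between $r_{(i-1)c+\ell,ic+k}$ and $\alpha_{(i-1)c+k}$, and will require combining the plus form of the original constraint with the cross-block triangle inequalities to handle the residual globally.
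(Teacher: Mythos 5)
Your overall strategy — block-aggregate a fine solution of $LP_{JMS}(cq,T)$ into a feasible solution of $LP^{+}_{JMS}(q,T)$ with the same objective — is exactly the paper's strategy, and most of your verification steps (normalization, $\lambda\leq T$, ordering of $\alpha$, cross-block monotonicity and triangle inequality, using $r_{m,m}\leq\alpha_m$ to pass from~\eqref{con:LBlambda} to a plus-form fine constraint) match. The difference is in the choice of aggregated $r$-variable, and that is precisely where your proof has a real gap.

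The paper defines $r^+_{j,i}:=\sum_{\ell=(j-1)c+1}^{jc}r_{\ell,ic}$ for all $1\leq j\leq i\leq q$: it sums only over the first index's block while pinning the second fine index at $ic$. This is a one-dimensional ``sampling'' aggregation rather than your two-dimensional average $\tilde r_{j,i}=\frac1c\sum_{k,\ell=1}^c r_{(j-1)c+\ell,(i-1)c+k}$. With the paper's choice, every constraint involving the diagonal check term-by-term in the fine index $\ell$: $r^+_{i,i+1}\leq r^+_{i,i}$ is immediate from~\eqref{con:orderrji}, $r^+_{i,i}\leq\alpha^+_i$ follows from $r_{(i-1)c+\ell,ic}\leq r_{(i-1)c+\ell,(i-1)c+\ell}\leq\alpha_{(i-1)c+\ell}$, and the plus-constraint is verified by specializing to the single fine constraint at $m=ic$ instead of averaging over the whole block. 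No Chebyshev inequality and no diagonal surrogate are needed.

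By contrast, your definition of $\tilde r_{i,i}$ forces you into a genuine dilemma that the rest of your plan does not resolve. If you take the surrogate $\alpha_{(i-1)c+k}$ on the off-diagonal $\ell>k$ pairs (as you propose), then the charging in~\eqref{con:LBlambda-plus} works term-by-term and the Chebyshev argument gives $\tilde r_{i,i}\leq\tilde\alpha_i$, but $\tilde r_{i,i+1}\leq\tilde r_{i,i}$ \emph{fails}: the only available upper bound on $r_{(i-1)c+\ell,ic+k}$ for $\ell>k$ is $\alpha_{(i-1)c+\ell}$, which is $\geq\alpha_{(i-1)c+k}$, so the comparison goes the wrong way, and a global accounting using these per-term bounds can produce a strictly positive surplus (e.g.\ when $\alpha_{ic}$ is much larger than the rest of the block). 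If instead you take the surrogate $\alpha_{(i-1)c+\ell}$, then $\tilde r_{i,i+1}\leq\tilde r_{i,i}\leq\tilde\alpha_i$ both hold term-by-term, but now the plus-constraint charging against the averaged fine constraints fails for $\ell>k$ because you need $\alpha_{(i-1)c+\ell}\leq\alpha_{(i-1)c+k}$, which is false. You flag the second issue (``no term-by-term comparison'') but the proposed fix, ``combining the plus form with cross-block triangle inequalities globally,'' is not an argument; the triangle inequalities lower-bound $r$, they do not help upper-bound the diagonal surplus. The clean way out is the paper's: abandon the double average, pin the second argument to $ic$ in the definition of $r^+$, and the diagonal case disappears.
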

\begin{proof}[Proof sketch]
Let $S:=(\alpha,d,r,\lambda)$ be an optimal solution to $LP_{JMS}(cq,T)$. Consider the following solution $S^+:=(\alpha^+,d^+,r^+,\lambda^+)$ for $LP^{+}_{JMS}(q,T)$:
$$
\alpha^+_i:=\sum_{\ell=(i-1)c+1}^{ic}\alpha_\ell,\quad d^+_i=\sum_{\ell=(i-1)c+1}^{ic}d_\ell,\quad r^+_{j,i}:= \sum_{\ell=(j-1)c+1}^{jc}r_{\ell,ic},\quad \lambda^+=\lambda, \quad \forall 1\leq j\leq i\leq q.
$$
Clearly the objective values of $S$ and $S^+$ are identical. The reader can easily check that $S^+$ is feasible (details in Section \ref{sec:LMP_omittedProofs}). The claim follows.
\end{proof}
\begin{restatable}{lemma}{lemjmsUB-new}\label{lem:jms-UB-new}
Fix $T > 0$. For any positive integer $q$, $\opt^+_{JMS}(q, T) \geq \opt_{JMS}(T)$. 
\end{restatable}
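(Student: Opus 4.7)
The plan is to combine the two preceding lemmas (\ref{lem:jms-split-new} and \ref{lem:jms-UB}) with the definition $\opt_{JMS}(T) = \sup_{q'} \opt_{JMS}(q', T)$. The key observation is that both lemmas give us control over scaling the ``cluster size'' parameter, one upward ($\opt_{JMS}$ is monotone when the size is multiplied by an integer factor) and one linking $\opt^+_{JMS}$ at size $q$ to $\opt_{JMS}$ at any integer multiple of $q$.

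Concretely, I would fix $q$ and argue that $\opt^+_{JMS}(q,T) \geq \opt_{JMS}(q',T)$ for every positive integer $q'$. To see this, apply Lemma \ref{lem:jms-split-new} with the roles $q \leftarrow q'$ and $c \leftarrow q$ to obtain
\[
\opt_{JMS}(q', T) \;\leq\; \opt_{JMS}(q q', T).
\]
Then apply Lemma \ref{lem:jms-UB} with $c \leftarrow q'$, which yields
\[
\opt^+_{JMS}(q, T) \;\geq\; \opt_{JMS}(q q', T).
\]
Chaining the two inequalities gives $\opt^+_{JMS}(q,T) \geq \opt_{JMS}(q',T)$. Since this holds for every $q'$, taking the supremum on the right-hand side yields $\opt^+_{JMS}(q,T) \geq \opt_{JMS}(T)$, as desired.

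There is essentially no obstacle here, since the real content lies in the two preceding lemmas, which already establish (i) that refining the discretization $q \mapsto cq$ can only weakly increase the value of $LP_{JMS}(\cdot,T)$, and (ii) that $LP^+_{JMS}$ with parameter $q$ is at least as strong as $LP_{JMS}$ with parameter $cq$ for any $c$. The short proof above simply packages these two facts together. The only care needed is to make sure we have the freedom to choose $c$ in each lemma independently: in Lemma \ref{lem:jms-split-new} we choose $c = q$ (so the refinement factor equals the target parameter in $LP^+$), and in Lemma \ref{lem:jms-UB} we choose $c = q'$ (so the aggregation factor reaches the refined parameter $qq'$). With these choices the two intermediate quantities align on $\opt_{JMS}(qq',T)$ and the conclusion is immediate.
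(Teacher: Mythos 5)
Your proof is correct and uses the same two ingredients as the paper's own argument (Lemmas~\ref{lem:jms-split-new} and~\ref{lem:jms-UB} chained on the intermediate quantity $\opt_{JMS}(qq',T)$); the only stylistic difference is that you take the supremum directly over $q'$ whereas the paper phrases the same chaining as a proof by contradiction with an $\eps$-close $q'$.
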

\begin{proof}
Suppose by contradiction that the claim does not hold. In particular, for some constant $\eps>0$, one has $\opt^+_{JMS}(q, T) \leq \opt_{JMS}(T)-\eps$. There must exist a finite integer value $c$ such that $\opt_{JMS}(c,T)> \opt_{JMS}(T)-\eps$. Hence we obtain the desired contradiction by Lemmas \ref{lem:jms-split-new} and \ref{lem:jms-UB}:
$$
\opt^+_{JMS}(q, T) \overset{\text{Lem. \ref{lem:jms-UB}}}{\geq} \opt_{JMS}(cq,T) \overset{\text{Lem. \ref{lem:jms-split-new}}}{\geq} \opt_{JMS}(c,T)> \opt_{JMS}(T)-\eps\geq \opt^+_{JMS}(q, T).\qedhere
$$
\end{proof}
Lemma \ref{lem:jms-UB-new} provides a convenient way to upper bound $\opt_{JMS}(T)$. Though we will not prove this formally, an empirical evaluation shows that $\opt^+_{JMS}(q, T)$ is a decreasing function of $q$ which (rather quickly) converges to $\opt_{JMS}(T)$. Hence, for a given $T$, a very tight upper bound on $\opt_{JMS}(T)$ is obtained by computing $\opt^+_{JMS}(q, T)$ for a large enough but fixed value of $q$ (in the order of a few hundreds). In the following we will use $q$ as a parameter that we will fix only in the final numerical computation. We are also able to prove an analytical (though weaker) upper bound on $\opt_{JMS}(T)$. Though this is not needed for the rest of our analysis, we present the latter upper bound in Section \ref{sec:LMP_old}.

The next corollary follows from Lemmas \ref{lem:modifiedJMSclaim}, \ref{lem:JMSconcavity}, and \ref{lem:jms-UB-new} (proof in Section \ref{sec:LMP_omittedProofs}). 
\begin{restatable}{corollary}{cormodifiedJMSclaim}\label{cor:modifiedJMSclaim}
Let $\OPT$ be a solution to a UFL instance and $S$ be the JMS solution on the same instance. For a given constant $T>0$, let $\OPT'\subseteq \OPT$ be a subset of facilities satisfying $\open(\OPT')\leq T\cdot d(\OPT')$. Then, for every integer $q>0$, 
$$
\open(S)+d(S)\leq \open(\OPT)+2d(\OPT\setminus \OPT')+\opt^+_{JMS}(q,T)\cdot d(\OPT').
$$
\end{restatable}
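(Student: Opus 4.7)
The plan is to combine the three ingredients in a completely natural way: Lemma~\ref{lem:modifiedJMSclaim} gives a per-facility upper bound on $\open(S)+d(S)$, Lemma~\ref{lem:JMSconcavity} (concavity of $\opt_{JMS}(T)$) lets us move from an average of $\opt_{JMS}$-values to a single $\opt_{JMS}$-value via Jensen's inequality, and Lemma~\ref{lem:jms-UB-new} converts that single value into the computable quantity $\opt^+_{JMS}(q,T)$.

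First I would start from the bound of Lemma~\ref{lem:modifiedJMSclaim} and split the sum according to the partition $\OPT = \OPT' \cup (\OPT\setminus\OPT')$. For each facility $f^*\in \OPT\setminus\OPT'$ I would just invoke the trivial bound $\opt_{JMS}(\open(f^*)/d(f^*))\leq 2$, which holds because adding Constraint~\eqref{con:Tbound-new} only shrinks the feasible region of $LP_{JMS}(q)$, so $\opt_{JMS}(T)\leq \opt_{JMS}\leq 2$ for every $T$; this yields the term $2\,d(\OPT\setminus\OPT')$.

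Next, for the facilities in $\OPT'$, I would write $t_{f^*}:=\open(f^*)/d(f^*)$ and view the weights $d(f^*)/d(\OPT')$ as a probability distribution on $\OPT'$. By Jensen's inequality applied to the concave function $\opt_{JMS}$,
\[
\sum_{f^*\in \OPT'}\frac{d(f^*)}{d(\OPT')}\,\opt_{JMS}(t_{f^*}) \;\leq\; \opt_{JMS}\!\left(\sum_{f^*\in \OPT'}\frac{d(f^*)}{d(\OPT')}\,t_{f^*}\right) \;=\; \opt_{JMS}\!\left(\frac{\open(\OPT')}{d(\OPT')}\right).
\]
The hypothesis $\open(\OPT')\leq T\cdot d(\OPT')$ combined with the monotonicity of $\opt_{JMS}(\cdot)$ (again immediate from the fact that relaxing the bound on $\lambda$ enlarges the feasible region of $LP_{JMS}(q,T)$) then gives $\opt_{JMS}(\open(\OPT')/d(\OPT'))\leq \opt_{JMS}(T)$, and finally Lemma~\ref{lem:jms-UB-new} upgrades this to the bound $\opt^+_{JMS}(q,T)$. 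Multiplying back by $d(\OPT')$ yields the contribution $\opt^+_{JMS}(q,T)\cdot d(\OPT')$, and adding the opening-cost term $\open(\OPT)$ already present in Lemma~\ref{lem:modifiedJMSclaim} completes the inequality.

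I do not anticipate any real obstacle here: the only points that require a sentence of justification are the monotonicity of $\opt_{JMS}$ in $T$ and the fact that $\opt_{JMS}(\cdot)\leq 2$ universally, both of which are one-line consequences of how the LP $LP_{JMS}(q,T)$ is obtained from $LP_{JMS}(q)$. The only mildly delicate issue is the degenerate case $d(\OPT')=0$ (in which $\open(\OPT')=0$ too by the hypothesis, so $\OPT'$ contributes nothing and the statement is trivial) and the case $d(f^*)=0$ inside Lemma~\ref{lem:modifiedJMSclaim}, which can be handled by the usual convention of dropping those terms or by a limiting argument.
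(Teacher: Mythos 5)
Your proof is correct and follows essentially the same route as the paper: split the sum from Lemma~\ref{lem:modifiedJMSclaim} over $\OPT'$ and $\OPT\setminus\OPT'$, bound the latter by $\opt_{JMS}\leq 2$, apply Jensen's inequality via the concavity of $\opt_{JMS}(\cdot)$ from Lemma~\ref{lem:JMSconcavity}, use monotonicity to pass to $\opt_{JMS}(T)$, and finish with Lemma~\ref{lem:jms-UB-new}. Your extra remark on the degenerate case $d(\OPT')=0$ is a harmless addition the paper leaves implicit.
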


\section{A Local-Search-Based Bound}
\label{sec:localSearch}

In this section we present our local-search-based bound for UFL. 
Suppose that we start with a facility location solution $S'$ for some given uniform facility opening cost $\lambda$. 
(In our case $S'$ will be produced by the JMS algorithm, though the result of this section holds for an arbitrary starting solution $S'$.) We refine $S'$ via local search as follows. Let $\eps>0$ be a constant parameter. A feasible \emph{swap pair} $(A,B)$ consists of $A\subseteq S'$ and $B\subseteq F$ such that $|A|,|B|\leq \Delta$. Here $\Delta\geq 1$ is a constant depending on $\eps$ to be fixed later. For any feasible swap pair, we consider the alternative solution $S''=S'\setminus A\cup  B$ and if $S''$ has total (i.e., facility plus connection) cost strictly smaller than $S'$, we replace $S'$ with $S''$ and repeat. 
Lemma~\ref{lem:discrete_localsearch} shows that with a loss of $1+\eps$ in the approximation factor, we can guarantee that the local-search procedure ends after a polynomial number of rounds. 
For the sake of clarity we simply assume here and omit the extra factor $1+\eps$ since it is subsumed by similar factors in the rest of our analysis.  

Let $S'$ be the solution at the end of the process (i.e., $S'$ is a {\em local optimum}), with $k'=|S'|$ and $d'=d(S')$. Let also $\OPT$ denote some given UFL solution, with $k=|OPT|$ and connection cost $\opt=d(\OPT)$. Intuitively, for us $\OPT$ will be the optimal solution to the underlying $k$-Median problem. We will next describe a bound on the total cost of $S'$ which is based on a carefully chosen classification of the facilities in $S'$ and $\OPT$. 

\paragraph{Matched vs. Lonely facilities: a key notion.}

Given two facility location solutions $S$ and $S'$ and $\alpha\geq 1/2$, we say that $f\in S$ $\alpha$-captures $f'\in S'$ if more than an $\alpha$ fraction of the clients served by $f'$ in $S'$ are served by $f$ in $S$, i.e. $|C_{S}(f)\cap C_{S'}(f')|>\alpha|C_{S'}(f')|$. Since $\alpha\geq 1/2$, $f'\in S'$ can be captured by at most one $f\in S$. Similarly, we say that $A\subseteq S$  $\alpha$-captures $f'\in S'$ if more than an $\alpha$ fraction of the clients served by $f'$ in $S'$ are served by some $f\in A$ in $S$.

We classify the facilities in $S'$ and $\OPT$ as lonely and matched as follows. Let $\delta \leq 1/2$ be a parameter to be fixed later. Suppose first that $|S'|=k'\leq k=|\OPT|$ (intuitively corresponding to $S'=S_1$). In this case, consider $f^*\in \OPT$, and let $M(f^*)$ be the facilities in $S'$ which are $1/2$-captured by $f^*$. If $M(f^*)$ $(1-\delta)$-captures $f^*$, then we say that $f^*$ and $M(f^*)$ are \emph{matched}. In the complementary case $k'> k$ (intuitively corresponding to $S'=S_2$) we use a symmetric definition. Let $f'\in S'$, and let $M(f')$ be the facilities in $\OPT$ which are $(1-\delta)$-captured by $f'$. If $M(f')$ $1/2$ captures $f'$ we say that $f'$ and $M(f')$ are \emph{matched}.

All the facilities which are not matched are \emph{lonely}. We use $\OPT^M$ and $\OPT^L$ to denote the facilities in $\OPT$ which are respectively matched and lonely. We define similarly $S^M$ and $S^L$ w.r.t. to $S'$. By $\opt^X$, $X\in \{M,L\}$, we denote the connection cost of clients served by $\OPT^X$ in $\OPT$. Similarly, by $\opt^{XY}$, $X,Y\in \{M,L\}$, we denote the connection cost of clients served by $S^X$ in $S'$ and by $\OPT^Y$ in $\OPT$. We define $d^{XY}$ analogously w.r.t. the connection cost of $S'$.

\paragraph{High-level overview.}
At a high-level, pairs of facilities of $S'$ and $\OPT$ induce \emph{well-served} clients in the solution output by local
search.
Indeed, think about the extreme scenario where there is one facility of $f' \in S'$ and one facility of $f^* \in \OPT$ that
serve the exact same group of clients. Then, by local optimality, one could argue that this group of clients
is served optimally in $S'$ since the swap that closes down $f'$ and opens $f^*$ does not decrease the cost.
Thus generalizing this idea, our analysis shows that clients served by matched facilities of $\OPT$ are served
nearly-optimally (i.e,: much better than the 2 approximation of JMS) while the remaining clients are served
similarly than in the classic local search analysis (namely within a factor 3 of the optimum). 

Our proof then decouples the matched and lonely facilities. We provide an analysis of the local search solution
that is specific to the matched facilities, and use the analysis of Gupta and Tangwongsan~\cite{LSGupta}. We then
show how to mix the two analysis.
Concretely, the proof works as follows.
We define feasible swap pairs $(A_i,B_i)$ so that the $A_i$'s and $B_i$'s are carefully chosen subsets of $S'$ and $\OPT$.
The goal is thus to bound the variations $\Delta_i$ defined by the cost of solution $S' - A_i \cup B_i$ minus the cost
of $S'$. Local optimality of $S'$ implies that $\Delta_i$ is non-negative for all $i$, providing us with an upper bound
on the cost of $S'$. The challenge is thus to provide
the best possible upper bound on the cost of $S' - A_i \cup B_i$, in terms of the cost of $\OPT$ and $S'$.
The usual challenge here is to bound the \emph{reassignment}
cost of the clients served by a facility of $A_i$ in $S'$ but not served
by a facility of $B_i$ in $\OPT$.

The swap pairs are defined so that matched facilities belong to the same swap pair 
(when one facility is matched to a large number of facilities, we carefully break them into multiple swap pairs to
have the same effect).
This way, for matched facilities  $f' \in S'$
and $f^* \in \OPT$, the clients served by $f'$ in $S'$ and $f^*$ in $\OPT$ (call them {\em matched} clients) do not induce a reassignment cost -- they are
served optimally in the swap pair involving $f'$ (and so $f^*$). The clients of $f^*$ that are not served by $f'$ still benefit
from being served by a matched facility $f^*$. Any such client served by a facility $f'' \in S'$ can be reassigned to
$f'$ when $f''$ is swapped out. The reassignment cost in this case can be charged to the matched clients; since the 
clients of $f^*$ served by a facility different from $f'$ are in minority, the charge is tiny.

The swap pairs are also built in a way that lonely facilities of $\OPT$ belong to the same swap with their closest
facility in $S'$, following the approach of  Gupta and Tangwongsan~\cite{LSGupta}. This ensures that the reassignment
cost is overall bounded by a factor 3 (as shown in~\cite{LSGupta}).
Building the swap pairs satisfying the above two properties requires a more careful approach than in~\cite{LSGupta}
and is an important step in our proof.

We are ready to state our local-search based bound (with proof in Section \ref{sec:localSearchOmitted}). 
\begin{theorem}\label{thr:localSearch}
  For $\Delta = (2\eps^{-2\eps^{-7}})$, one has
  \[
 \lambda k'+d' \leq
   \lambda k  +  3 \opt^{L} + \opt^{M} +  \frac{\delta}{1-\delta} (d^{MM} + \opt^{MM})
  +  O(\eps (d' + \opt)).
  \]
\end{theorem}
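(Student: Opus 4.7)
The plan is to derive the bound by exhibiting a polynomial-size collection of feasible swap pairs $(A_i,B_i)$, writing down the inequality $\Delta_i := \cost(S'\setminus A_i \cup B_i) - \cost(S') \ge 0$ implied by local optimality for each pair, and then summing (with appropriate multiplicities) so that the reassignment costs telescope into the claimed bound. The key construction, following and generalising the ``swap graph'' idea of Gupta--Tangwongsan, is to ensure two structural properties simultaneously. First, for each matched pair (a matched $f^*\in\OPT^M$ together with $M(f^*)\subseteq S^M$ in the case $k'\le k$, or a matched $f'\in S^M$ together with $M(f')\subseteq \OPT^M$ in the case $k'>k$), the whole matched group must sit inside a single swap pair so that the ``majority'' clients (those served by $f'$ in $S'$ and by $f^*$ in $\OPT$) need not be reassigned; when $|M(\cdot)|>\Delta$ we split the group into blocks of $\le\Delta$ facilities and pay the splitting loss as an $O(\eps)$ term, which is where the bound $\Delta=2\eps^{-2\eps^{-7}}$ originates (to make the lost boundary terms negligible compared to $\opt+d'$). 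Second, for the lonely part we pair every $f^*\in\OPT^L$ with its nearest open facility in $S^L$ exactly as in Gupta--Tangwongsan, so that closed facilities of $S^L$ are swapped out at most a bounded number of times.

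Given the swap pairs, the next step is to bound $\cost(S'\setminus A_i\cup B_i)$ by exhibiting a specific reassignment of the clients in $C_{S'}(A_i)$. The reassignment rule is the standard one: clients served by some $f^*\in B_i\cap\OPT$ in $\OPT$ are moved to $f^*$ (paying $dist(j,\OPT)$ instead of $dist(j,S')$); all remaining clients in $C_{S'}(A_i)$ must be ``shifted'' to a still-open facility in $S'\setminus A_i$. For a matched pair with representative $(f',f^*)$, the matched-majority clients move to $f^*$ directly. A non-matched client of $f^*$ served by some other $f''\in S'\setminus A_i$ stays at $f''$ and costs nothing to reassign; a non-majority client of $f'$ (at most a $\delta$ fraction of $C_{S'}(f')$, by the $(1-\delta)$-capture definition) must be shifted, and we charge its reassignment using the triangle inequality through a matched-majority client at $f'$, producing the factor $\frac{\delta}{1-\delta}(d^{MM}+\opt^{MM})$ after averaging. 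For the lonely part, the Gupta--Tangwongsan accounting already gives a reassignment cost bounded by $2\opt^L + 2d(\text{reassigned lonely clients in }S')$, yielding the $3\opt^L$ term on matched summation.

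Summing the inequalities $\Delta_i\ge 0$ with the multiplicities chosen so that each $f'\in S'$ is swapped out $O(1)$ times and each $f^*\in\OPT$ is swapped in the right number of times to account for the facility-cost balance $\lambda(k-k')$, the facility-cost contributions collapse to $\lambda k'\le \lambda k + \text{(opening cost of inserted $\OPT$ facilities net of removed $S'$ facilities)}$, while the client-cost contributions collapse to the sum of the reassignment bounds above, namely $\opt^M+3\opt^L+\frac{\delta}{1-\delta}(d^{MM}+\opt^{MM})+O(\eps(d'+\opt))$. The $O(\eps)$ slack absorbs (i)~the approximate-local-optimum loss from Lemma~\ref{lem:discrete_localsearch}, (ii)~the boundary losses from splitting oversized matched groups into blocks of size $\le\Delta$, and (iii)~the at most $\eps$-fraction of clients whose reassignment route passes through the ``wrong'' block.

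The main technical obstacle I anticipate is the bookkeeping in constructing swap pairs that simultaneously satisfy the matched-group-together requirement and the Gupta--Tangwongsan lonely-nearest-neighbour requirement while respecting the size bound $\Delta$; in particular, a matched $f^*$ with huge $|M(f^*)|$ must be partitioned so that the ``representative'' block inheriting $f^*$ carries the bulk of the matched mass, and the other blocks are made to look locally like lonely swaps whose reassignment cost can be bounded by an $O(\eps)$ fraction of $d'+\opt$. Getting the dependence $\Delta = 2\eps^{-2\eps^{-7}}$ explicit amounts to iterating this partitioning argument a constant number of times, each level shrinking the residual mass by a factor of $\eps$, which is the only place where the tower-style blow-up in $\Delta$ arises.
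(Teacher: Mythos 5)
Your high-level strategy agrees with the paper's: keep each matched group inside a single swap pair so that matched-majority clients cost nothing to reassign, pair each lonely $\OPT$ facility with its nearest $S'$ facility as in Gupta--Tangwongsan, and absorb all the splitting losses into the $O(\eps(d'+\opt))$ term. Your accounting of the $\frac{\delta}{1-\delta}(d^{MM}+\opt^{MM})$ term via the triangle inequality through matched-majority clients, and of the $3\opt^L$ term via the Gupta--Tangwongsan rerouting, also matches the paper's reassignment rules for individual clients. That part is sound.

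However, there is a genuine gap in the construction of the swap pairs, concentrated in the case $|S'|\le k$ (the paper's proof splits into $|S'|>k$ and $|S'|\le k$, and the second case is the hard one). When $|S'|\le k$, a single facility of $S'$ can be the nearest neighbour (or the capture target) of arbitrarily many facilities of $\OPT$; if you insist that each such $\OPT$ facility lands in the swap containing that $S'$ facility, then $|B_i|$ is unbounded, and splitting the $\OPT$ facilities across several blocks destroys the swap-consistency property you implicitly rely on (namely, that when $f^*\notin B_i$ its ``anchor'' facility $\phi(f^*)$ in $S'$ is guaranteed to remain open). The paper resolves this by first building a \emph{modified} solution $\wO$ through random group exchanges: sets of $\OPT$ facilities whose clients can be rerouted to a single $S'$ facility at only $O(\eps(d'+\opt))$ extra cost are deleted and replaced by that $S'$ facility, which collapses the in-degrees, and only then are swaps formed between $S'$ and $\wO$ via a directed graph with a degree-reduction step and a randomized BFS cut, whose parts become the $(A_i,B_i)$. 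Your proposal of ``splitting oversized matched groups into blocks and paying an $O(\eps)$ boundary loss'' is essentially the paper's mechanism for $|S'|>k$ (where each swap contains one leader from $S'$ and the large groups are simply partitioned), but it does not address the $|S'|\le k$ in-degree blow-up, which is the reason the construction and the bound on $\Delta$ are nontrivial. Relatedly, your stated source of $\Delta=2\eps^{-2\eps^{-7}}$ --- ``iterating the partitioning argument a constant number of times, each level shrinking residual mass by $\eps$'' --- would give a dependence polynomial in $1/\eps$, not a tower; the paper's tower arises from a BFS of depth $O(1/\eps^4)$ in a graph of degree $O(1/\eps^3)$ after degree reduction, giving component size $(1/\eps)^{O(1/\eps^7)}$. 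Without the $\wO$ modification and the graph-cut argument your sketch does not close.
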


A high-level corollary of the above theorem is that, if most of the connection cost of $\OPT$ is due to clients served by matched facilities (and for $\delta$ small enough), then $S'$ provides an LMP $\alpha_{LMP}$ approximation (w.r.t. to $\OPT$) for some $\alpha_{LMP}<2$ (hence improving on the JMS bound). Since this result is an intrinsic property of local search and does not use the properties of the starting solutions (e.g., those produced by the JMS algorithm), we hope that it may be useful in the future to show that local search can improve the quality of the solution of other algorithms.

Handling non-uniform facility opening costs via local search requires a new notion of local neighborhood because the disparity in facility costs forces some swap pairs $(A_i, B_i)$ to contain a superconstant number of facilities to result in a meaningful guarantee. For general UFL, we introduce a new local search algorithm where each local move closes a constant number of open facilities and runs a variant of the JMS algorithm to find a new solution that can possibly open a large number of new facilities. Specifically, we set the opening cost of the currently open facilities to zero, and run the JMS algorithm to possibly open more facilities. 
Theorem~\ref{thr:extendJMS} provides a guarantee for this new local search; essentially the guarantee of Theorem~\ref{thr:localSearch} holds except $3\opt^L$ is replaced by $4\opt^L$ due to additional approximations created by the use of JMS.
In our opinion defining a local neighbourhood via an LP-based algorithm is a promising approach which deserves further investigation. See Section~\ref{subsec:newlocal} for the new local search and its guarantee.


\section{An Improved Approximation for $k$-Median}
\label{sec:improvedkMedian}

In this section we present the claimed improved approximation algorithm for $k$-Median. A refinement (for which we did not compute the explicit approximation factor) is sketched in Section \ref{sec:refinedkMedian}.

We consider the LMP UFL algorithm described in previous sections, where we start with the solution produced by the JMS algorithm and then apply local search to improve it. We use this algorithm to build a bipoint solution $S_B=aS_1+bS_2$ via binary search as described in Section~\ref{subsec:prelim}. Let $k_i=|S_i|$ and $d_i=d(S_i)$. Recall that $k_1\leq k<k_2$, $1-b=a=\frac{k_2-k}{k_2-k_1}$, and $ad_1+bd_2\leq 2\opt$. 
Since $S_1$ and $S_2$ are the outcomes of local search starting from solutions produced by the JMS algorithm, their total costs ($\lambda k_1 + d_1$ for $S_1$ and $\lambda k_2 + d_2$ for $S_2$) are at most the total cost of the starting JMS solutions, which are at most $\lambda k+2\opt$ by the standard $2$ LMP approximation guarantee. Furthermore, since $k_2 \geq k$, one can conclude that $d_2 \leq 2\opt$. 

Based on the results from Sections \ref{sec:LMP} and \ref{sec:localSearch}, we will prove the following improved upper bound on the total cost of $S_2$.
\begin{lemma}\label{lem:S2bound}
$
\lambda k_2+d_2\leq \lambda k+(2-\eta_2)\opt
$ for some absolute constant $\eta_2>0.00536$. 
\end{lemma}
Using $\lambda k_1 + d_1 \leq \lambda k + 2\opt$ with Lemma \ref{lem:S2bound}, we derive an improved bound on the connection cost of $S_B$:
$$
ad_1+bd_2\leq (2 -(1-a)\eta_2)\opt. 
$$ 
Thus combining the bipoint solution $S_B$ with the $\rho_{BR}<1.3371$ approximate bipoint rounding procedure in \cite{ByrkaPRST17}, one obtains a feasible solution with the approximation factor
$$
\rho_{BR}(2 -(1-a)\eta_2).
$$
For $a$ bounded away from $1$, this is clearly an improvement on the $2\rho_{BR}<2.6742$ approximation achieved in \cite{ByrkaPRST17}. 
(Indeed, in Section~\ref{sec:improvedFLuniform}, we also prove $\lambda k_1+d_1\leq \lambda k+(2-\eta_1)\opt$ for some $\eta_1 > 0$ as well, which establishes $ad_1 + bd_2 < (2-\eta)opt$ for some $\eta >0$ for all values of $a$.)
To get an improvement only from Lemma~\ref{lem:S2bound}, 
Li and Svensson \cite{LiS16} present a $2(1+2a)+\eps$ approximation for any constant $\eps > 0$ (with extra running time $n^{\poly(1/\eps)}$), hence we can also assume that $a$ is bounded away from $0$ (since we aim at an approximation factor strictly larger than $2$). For $a$ bounded away from $0$ and $1$, the same authors present a $\frac{2(1+2a)}{(1+2a^2)}+\eps$ approximation that we will next use. Taking the best of the two mentioned solutions, and neglecting the $\eps$ in the approximation factor, one obtains the  following approximation factor for $k$-Median
$$
\rho_{kMed}=\max_{a\in [0,1]}\min\{\frac{2(1+2a)}{1+2a^2},\rho_{BR}(2 -(1-a)\eta_2)\}.
$$
The first term in the minimum is decreasing for $a\geq \frac{\sqrt{3}-1}{2}$, hence $\rho_{kMed}<2\rho_{BR}$, thus improving on \cite{ByrkaPRST17}. We numerically\footnote{The worst case can be computed analytically, but the formula is complicated and we omit it.} obtained that the worst case is achieved for $a\simeq 0.4955391$, leading to $\rho_{kMed}<2.67059$.

\subsection{Proof of Lemma \ref{lem:S2bound}}
\label{sec:improvedkMedian:boundS2}

We next use $S^{M}_2$ instead of $S^{M}$ and similarly for related quantities to stress that we are focusing on the locally optimal solution $S'=S_2$. We let $\alpha^L:=\opt^L/\opt$, $\alpha^M:=\opt^M/\opt$, $\alpha^{MM}=\opt^{MM}/\opt$, $\beta_2=d_2/\opt$, and $\beta_2^{MM}=d^{MM}_2/\opt$. Let also $k^L=|\OPT^L|$, $k^M=|\OPT^M|$, $k^L_2=|S_2^L|$, and $k^M_2=|S_2^M|$.

We apply to $S_2$ the local-search-based bound from Theorem \ref{thr:localSearch} in Section \ref{sec:localSearch}\footnote{Theorem \ref{thr:localSearch} is proved in Section \ref{sec:localSearchOmitted}. However for this part of the analysis it is sufficient to consider the simpler subcase $|S'|>k$ discussed in Section \ref{sec:localSearchOmitted:S2}.}. In the following we will neglect the term depending on $\eps$: indeed the latter parameter can be chosen arbitrarily small, independently from the other parameters in the proof (at the cost of a larger, yet polynomial running time). Taking that term into account would involve an extra additive term $O(\eps)$ in the approximation factor, which is however absorbed by the overestimations in our numerical analysis. Thus we obtain
\begin{equation}\label{eqn:S2_boundLS}
\lambda k_2+d_2 \leq \lambda k + \big(1+2\alpha^{L} +  \frac{\delta}{1-\delta} (\beta^{MM}_2 + \alpha^{MM}) \big)\cdot \opt =:\lambda k +\rho^A\cdot \opt.
\end{equation}

Observe that the above bound already implies Lemma \ref{lem:S2bound} when $\alpha^L$ and $\delta$ are sufficiently small (notice that $\beta^{MM}_2\leq \beta_2\leq 2$). We next derive an alternative bound which implies the claim in the complementary case. Here we exploit the LMP result from Section \ref{sec:LMP}. More specifically, we will show that the facility cost $\lambda |OPT^L|$ due to lonely facilities in $\OPT$ is upper bounded by $O(\opt^L)$. This will allow us to exploit Corollary \ref{cor:modifiedJMSclaim}.
By construction for each $f'\in S^M_2$, there is at least one distinct $f^*\in \OPT^M$, hence $k^M_2\leq k^M$. Thus
\begin{equation}\label{eqn:boundNumberLonely_S2}
k^L_2=k_2-k^M_2\geq k_2-k^M=k_2-k+k^L.
\end{equation}

The local optimality of $S_2$ implies the following bound (proof in Section \ref{sec:omittedSection4}):
\begin{lemma}\label{lem:boundLonelyFacilityCost}
One has 
\begin{equation}
\lambda k^L_2 \leq \big(\frac{2}{\delta}(1-\alpha^{MM})+2\frac{1-\delta}{\delta}(\beta_2-\beta_2^{MM})+2\frac{\delta}{1-\delta}(\beta_2^{MM}+\alpha^{MM}) \big)\cdot \opt.
\label{eq:boundLonelyFacilityCost}
\end{equation}
\end{lemma}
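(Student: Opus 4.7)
The approach is to exploit local optimality of $S_2$ through single-facility closing swaps applied to each lonely facility. Specifically, for each $f' \in S_2^L$, the swap $(A,B)=(\{f'\},\emptyset)$ is a feasible swap pair (since $|A|=1\leq \Delta$ and $|B|=0$), so local optimality of $S_2$ gives
\[
\lambda \leq \sum_{j \in C_{S_2}(f')} \bigl(dist(j, S_2 \setminus \{f'\}) - dist(j, f')\bigr).
\]
I will upper bound the right-hand side by exhibiting an explicit reassignment: for each $j \in C_{S_2}(f')$ with $\OPT(j) = f^*$, route $j$ via a representative $j' \in C_{\OPT}(f^*) \setminus C_{S_2}(f')$ and apply the triangle inequality
\[
dist(j, S_2(j')) \leq dist(j, \OPT) + dist(j', \OPT) + dist(j', S_2).
\]

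Two structural observations drive the argument. First, since $f'$ is lonely and $k_2>k$, the subset $M(f')\subseteq\OPT$ of facilities $(1-\delta)$-captured by $f'$ does \emph{not} $1/2$-capture $f'$; hence at least half of $|C_{S_2}(f')|$ (by count) corresponds to clients $j$ with $f^*=\OPT(j)\notin M(f')$, for which $|C_{\OPT}(f^*)\setminus C_{S_2}(f')|\geq \delta |C_{\OPT}(f^*)|$. Averaging $j'$ uniformly over this pool of representatives incurs only a multiplicative $1/\delta$ loss, and the density ratio $|C_{S_2}(f')\cap C_{\OPT}(f^*)|/|C_{\OPT}(f^*)\setminus C_{S_2}(f')|\leq (1-\delta)/\delta$ limits how many sources each representative must absorb. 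The factor of $2$ in front of every term in the claimed bound originates precisely from this ``at least half'' lonely condition: effectively only half of the clients of each $f'$ support the escape argument, so the per-$f'$ inequality must be scaled up by $2$. Second, because $1-\delta \geq 1/2$, any $f^*\in\OPT$ is $(1-\delta)$-captured by at most one facility of $S_2$; hence if $f^*\in M(f')$ for a lonely $f'$, then $f^*\notin M(f'')$ for any matched $f''\in S_2^M$, i.e.\ $f^*\in \OPT^L$. This keeps ``Case A'' clients (those with $\OPT(j)\in M(f')$) inside the non-$MM$ part of the cost, which is what the lemma statement requires.

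Summing the single-closure inequalities over all $f'\in S_2^L$, the left-hand side becomes $\lambda k^L_2$, and the right-hand side decomposes into three groups: (i) a sum $\sum dist(j,\OPT)$ over non-$MM$ clients, yielding the $\tfrac{2}{\delta}(1-\alpha^{MM})\opt$ term; (ii) a sum $\sum [dist(j',\OPT)+dist(j',S_2)]$ over representatives $j'$ whose pair $(S_2(j'),\OPT(j'))$ lies outside the $MM$-region, yielding $\tfrac{2(1-\delta)}{\delta}(\beta_2-\beta_2^{MM})\opt$; and (iii) a small residual $\tfrac{2\delta}{1-\delta}(\beta_2^{MM}+\alpha^{MM})\opt$ from representatives whose pair does lie in the $MM$-region. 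The much smaller coefficient $\tfrac{2\delta}{1-\delta}$ in (iii) reflects the fact that an $MM$ representative $j'$ is ``pinned'' to its matched partner, so it can serve as a representative for lonely $f'$'s at a density of at most $\delta/(1-\delta)$ rather than $(1-\delta)/\delta$.

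The main obstacle will be the global charging step. Naively, a single client $j'$ could serve as representative for many different lonely $f'$'s, which would blow up the aggregate bound. Controlling this requires simultaneously leveraging (a) the fact that each $\OPT$ facility is $(1-\delta)$-captured by at most one $S_2$ facility, (b) the $(1-\delta)/\delta$ density ratio per lonely $f'$, and (c) a careful separation of $MM$-representatives from non-$MM$ representatives to split the aggregate cost into the main non-$MM$ terms and the small $MM$ residual. Tuning the averaging weights and the representative selection so that the representative-side contributions collapse \emph{exactly} into the displayed expression---with the correct coefficients $\tfrac{2}{\delta}$, $\tfrac{2(1-\delta)}{\delta}$, and $\tfrac{2\delta}{1-\delta}$---is the technically delicate heart of the proof.
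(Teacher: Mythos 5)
Your plan identifies the right architecture and it matches the paper's: per-facility closures of lonely $f' \in S_2^L$, local optimality giving $\lambda \le \Delta(f')$, a randomized rerouting path via a representative client, the ``lonely'' condition giving the $\ge 1/2$ escape pool, the $(1-\delta)/\delta$ ratio for the representative load, the $\delta/(1-\delta)$ refinement for clients whose $\OPT$-facility is matched, and the observation that $f^*\in M(f')$ for a lonely $f'$ forces $f^*\in\OPT^L$. Each of these is a genuine ingredient of the paper's proof. But two steps in your sketch are either wrong as stated or left unaddressed, and they are precisely the parts that make the numbers come out.

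First, the treatment of clients $j\in C_{S_2}(f')$ with $\OPT(j) \in M(f')$ (i.e.\ $f^*$ \emph{is} $(1-\delta)$-captured by $f'$). You propose to ``effectively only [use] half of the clients of each $f'$ to support the escape argument, so the per-$f'$ inequality must be scaled up by $2$.'' This does not work: when $f'$ is closed, \emph{all} of $C_{S_2}(f')$ must be reassigned, so the reroute-cost of only half of them cannot serve as an upper bound on $\Delta(f')$. The paper instead reroutes these clients via a two-hop path: first to a random client $next(c)\in C'(f')$ (the escape pool), and then along $next(c)$'s one-hop escape. The factor of $2$ arises because $\E[|next^{-1}(c)|]\le 1$ for $c\in C'(f')$ (which does follow from $|C'(f')|\ge |C_{S_2}(f')|/2$), so every $c\in C'(f')$ contributes to the sum at most twice --- once for itself, once in expectation as some other client's $next$.

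Second, the global charging for the representatives. Your per-$f'$ density ratio $|C_{S_2}(f')\cap C_{\OPT}(f^*)|/|C_{\OPT}(f^*)\setminus C_{S_2}(f')|\le (1-\delta)/\delta$ is a bound \emph{for a single $f'$}, but the representative $c$ may be charged by clients coming from several lonely facilities simultaneously; one must bound $\E[|rand^{-1}(c)|]$ over \emph{all} sources. The paper does this by partitioning $C_{\OPT}(f^*)$ by the $S_2$-facility ($C^1,\ldots,C^h$, $c\in C^1$), writing $\E[|rand^{-1}(c)|]=\sum_{i\ge 2}|C^i|/(|C_{\OPT}(f^*)|-|C^i|)$, and using the inequality $\tfrac{x}{z-x}+\tfrac{y}{z-y}\le\tfrac{x+y}{z-x-y}$ to reduce to $h=2$; then $|C^2|\le(1-\delta)|C_{\OPT}(f^*)|$ (because a capturing $f^2$ puts its clients in case (b), which cannot charge $c$) gives $(1-\delta)/\delta$. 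The MM refinement $\delta/(1-\delta)$ is proved by the same mechanism after excluding the $C^{MM}$ clients from the partition, using that $\ge 1-\delta$ of $C_{\OPT}(f^*)$ is served by the matched $S_2^M$ facility. Finally, a $\widetilde{rand}^{-1}$ correction doubles the $rand$-charge once more to account for case-(b) clients, which is where the extra factor of two lands on $(1-\delta)/\delta$ and $\delta/(1-\delta)$. So the delicate charging step you flagged is indeed the crux, and the tools needed are the extremal-partition inequality and the $next/rand$ two-level accounting rather than a per-$f'$ density bound and a blanket scaling.
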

The idea of the proof is that removing any lonely facility $f'$ from $S_2$ determines an increase of the connection cost of the clients $C_{S_2}(f')$ which is at least equal to the facility cost $\lambda$. It is therefore sufficient to upper bound the cost of connecting each $c\in C_{S_2}(f')$ to some facility $f''\in S_2\setminus \{f\}$. Here we exploit the fact that $f'$ is lonely. This implies that if $f'$ $(1-\delta)$-captures certain facilities $M(f')$ in $\OPT$, the latter facilities do not $1/2$-capture $f'$. In particular, at least one half $C'(f')$ of the clients $C_{S_2}(f')$ are served in $\OPT$ by facilities not captured by $f'$. The idea is then to define a path that goes from $c$ to some $c'\in C'(f')$, from there to the facility $f^*\in \OPT$ serving $c'$ and from $f^*$ to the desired facility $f''$ (which in particular exists since $f^*$ is not captured by $f'$). 
This idea is inspired by PTASes for $k$-Median and $k$-Means in constant-dimensional Euclidean spaces~\cite{Cohen-AddadKM19} where the authors remove lonely facilities to convert a bicriteria approximation to a true approximation. Also note that while Lemma~\ref{lem:boundLonelyFacilityCost} uses local optimality as Theorem~\ref{thr:localSearch} and holds for any local optimum as well, the upper bound on $\lambda k_2^L$ (i.e., the RHS of~\eqref{eq:boundLonelyFacilityCost}) depends on $\beta_2 = d_2 / \opt$, which can be upper bounded by $2$ using the fact that the starting solution is already $2$ LMP approximate and $k_2 \geq k$.

We are ready to prove our second upper bound on the total cost of $S_2$.  One has
\begin{align*}
\lambda k^L & \overset{\eqref{eqn:boundNumberLonely_S2}}{\leq} \lambda k^L_2 \overset{Lem. \ref{lem:boundLonelyFacilityCost}}{\leq} \frac{2}{\delta \alpha^L} \big( 1+(1-\delta)\beta_2-(1-\frac{\delta^2}{1-\delta})\alpha^{MM}-(1-\frac{\delta}{1-\delta})\beta^{MM}_2 \big)\opt^L
=:T_{L} \opt^L.
\end{align*}
We can thus apply Corollary \ref{cor:modifiedJMSclaim}
with $\OPT'=\OPT^L$ and $T=T_{L}$ to infer
\begin{equation}\label{eqn:S2_boundLMP}
\lambda k_2+d_2 \leq \lambda k+(2\cdot (1-\alpha^L)+\opt^{+}_{JMS}(q,T_{L})\cdot \alpha^L)\opt =:\lambda k +\rho^B\cdot \opt.
\end{equation}
Combining \eqref{eqn:S2_boundLS} and \eqref{eqn:S2_boundLMP} we obtain 
$$
\lambda k_2+d_2\leq \lambda k+(2-\eta_2)\opt =:\lambda k+\min\{\rho^A,\rho^B\}\opt.
$$
Notice that $\opt^{+}_{JMS}(q,T_{L})$ is a non-decreasing function of $T_L$.
Since increasing $\beta_2\leq 2$ makes $T_L$, hence $\rho^B$, bigger without decreasing $\rho^A$, we can pessimistically assume $\beta_2=2$. We can choose $\delta\in [0,1/2]$ freely so as to minimize $\min\{\rho^A,\rho^B\}$. Given $\delta$, we can pessimistically choose $\alpha^L\in [0,1]$, $\alpha^{MM}\in [0,1-\alpha^L]$, and $\beta^{MM}_2\in [0,\beta_2]=[0,2]$ so as to maximize $\min\{\rho^A,\rho^B\}$. Altogether we get:
$$
2-\eta_2\leq \min_{\delta\in [0,1/2]}\max_{\alpha^L\in [0,1]}\max_{\alpha^{MM}\in [0,1-\alpha^L]}\max_{\beta^{MM}_2\in [0,\beta_2]}\min\{\rho^A,\rho^B\}.
$$
It should now be clear that $\eta_2>0$. Indeed, by choosing $\delta$ sufficiently close to zero (but positive), we achieve $\rho^A<2$ for $\alpha^L$ small enough, say $\alpha^L<1/2$. Otherwise we can assume that both $\alpha^L$ and $\delta$ are bounded away from zero. In that case $T_{L}$ is upper bounded by some constant, and it turns out that $\opt^+_{JMS}(q,T_L)<2$ in that case 
(implying $\rho^B<2$ since by assumption $\alpha^L>0$). This concludes the proof  of Lemma \ref{lem:S2bound}.
We numerically obtained (setting $q=400$) that $\eta_2>0.005360$. The worst case is achieved for $\delta\simeq 0.49777$, $\alpha^L\simeq 0.4948527$, $\alpha^{MM}\simeq 0.004005537$, $\beta^{MM}_2\simeq 0.00097229266$ (leading to $T_L\simeq 16.25852$).

\newpage

\bibliographystyle{alpha}
\bibliography{literature}

\newpage

\appendix

\section{Omitted proofs from Section \ref{sec:LMP}}
\label{sec:LMP_omittedProofs}

\lemjmssplitnew*
\begin{proof}
Given an optimal solution $S=(\alpha,d,r,\lambda)$ of $LP_{JMS}(q,T)$, we build a feasible solution $S'=(\alpha',d',r',\lambda')$ of $LP_{JMS}(cq,T)$ with the same objective value. We set
\begin{align*}
\alpha'_{i}:=\frac{\alpha_{\lceil i/c\rceil}}{c},\quad d'_{i}=\frac{d_{\lceil i/c\rceil}}{c},\quad 
r'_{j,i}:=\begin{cases} 
\frac{r_{\lceil j/c\rceil,\lceil i/c\rceil}}{c} & \text{if } \lceil j/c\rceil<\lceil i/c\rceil;\\
\frac{\alpha_{\lceil j/c\rceil}}{c} & \text{if }\lceil j/c\rceil=\lceil i/c\rceil.
\end{cases}
\quad \lambda'=\lambda,\quad \forall 1\leq j\leq i\leq cq.
\end{align*}
Observe that $S$ and $S'$ have exactly the same objective value. Let us show that $S'$ is feasible. Clearly all the variables in $S'$ are non-negative since the same holds for $S$. To avoid possible confusion, we use $(x)'$ to denote constraint $(x)$ in $LP_{JMS}(cq,T)$. The following inequalities show that Constraints $\eqref{con:sumdi}'$, $\eqref{con:orderalphai}'$, $\eqref{con:addedConstr}'$, and $\eqref{con:Tbound-new}'$, resp., hold for every feasible choice of the parameters:
$$
\sum_{i=1}^{cq}d'_i=\sum_{\ell=1}^{q}d_i \overset{\eqref{con:sumdi}}{=} 1; \quad\quad
\alpha'_i = \frac{\alpha_{\lceil i/c\rceil}}{c} \overset{\eqref{con:orderalphai}}{\leq} \frac{\alpha_{\lceil (i+1)/c\rceil}}{c}=\alpha'_{i+1}; \quad\quad
r'_{j,j}=\frac{\alpha_{\lceil j/c\rceil}}{c}=\alpha'_j; \quad\quad
\lambda'=\lambda\overset{\eqref{con:Tbound-new}}{\leq} T.
$$
Consider next Constraint $\eqref{con:orderrji}'$. We distinguish three subcases. If $\lceil j/c\rceil<\lceil i/c\rceil$, one has 
$$
r'_{j,i+1}=\frac{r_{\lceil j/c\rceil,\lceil (i+1)/c\rceil}}{c} \overset{\eqref{con:orderrji}}{\leq} \frac{r_{\lceil j/c\rceil,\lceil i/c\rceil}}{c}=r'_{j,i}.
$$
If $\lceil j/c\rceil=\lceil i/c\rceil<\lceil (i+1)/c\rceil$, one has
$$
r'_{j,i+1}=\frac{r_{\lceil j/c\rceil,\lceil (i+1)/c\rceil}}{c} \overset{\eqref{con:orderrji}}{\leq} \frac{r_{\lceil j/c\rceil,\lceil i/c\rceil}}{c}=\frac{r_{\lceil j/c\rceil,\lceil j/c\rceil}}{c} \overset{\eqref{con:addedConstr}}{\leq} \frac{\alpha_{\lceil j/c\rceil}}{c}=r'_{j,i}.
$$
Finally if $\lceil j/c\rceil=\lceil i/c\rceil=\lceil (i+1)/c\rceil$, one has
$$
r'_{j,i+1}=\frac{\alpha_{\lceil j/c\rceil}}{c}=r'_{j,i}.
$$  
Consider now Constraint $\eqref{con:triangleIneq}'$. We distinguish two subcases. If $\lceil j/c\rceil<\lceil i/c\rceil$,
$$
\alpha'_i=\frac{\alpha_{\lceil i/c\rceil}}{c} \overset{\eqref{con:triangleIneq}}{\leq} \frac{r_{\lceil j/c\rceil,\lceil i/c\rceil}+d_{\lceil i/c\rceil}+d_{\lceil j/c\rceil}}{c}=r'_{j,i}+d'_i+d'_j.
$$
Otherwise (i.e., $\lceil j/c\rceil=\lceil i/c\rceil$)
$$
\alpha'_i=\frac{\alpha_{\lceil i/c\rceil}}{c} = \frac{\alpha_{\lceil j/c\rceil}}{c}=r'_{j,i}\leq r'_{j,i}+d'_i+d'_j.
$$
It remains to consider Constraint $\eqref{con:LBlambda}'$. For the considered index $i$, let $i=c \ell +h$ with $\ell\in \{0,q-1\}$ and $h\in \{1,\ldots,c\}$. Notice that 
\begin{equation}\label{lem:jms-split-new:eqn1}
r'_{j,i}= \frac{\alpha_{\lceil i/c\rceil}}{c} =\alpha'_i,\quad\quad \forall c \ell +1 \leq j\leq i=c\ell+h.
\end{equation}
Then
\begin{align*}
& \sum_{j=1}^{i-1} \max\{r'_{j,i} - d'_j,0\} + \sum_{j=i}^{cq} \max\{\alpha'_i-d'_j,0\} \\
= & \sum_{j=1}^{c \ell} \max\{r'_{j,i} - d'_j,0\}+\sum_{j=c\ell +1}^{c \ell +h}\max\{r'_{j,i} - d'_j,0\}+\sum_{j=c\ell +h+1}^{cq} \max\{\alpha'_i-d'_j,0\}\\
\overset{\eqref{lem:jms-split-new:eqn1}}{\leq} & \sum_{j=1}^{c \ell} \max\{r'_{j,i} - d'_j,0\}+\sum_{j=c \ell+1}^{cq} \max\{\alpha'_i-d'_j,0\}\\
= & \sum_{h=1}^{\ell} c\cdot \max\{\frac{r_{h,\ell+1}}{c} - \frac{d_h}{c},0\} + \sum_{h=\ell+1}^{q} c\cdot \max\{\frac{\alpha_{\ell+1}}{c}-\frac{d_{h}}{c},0\}\overset{\eqref{con:LBlambda}}{\leq} \lambda=\lambda'.\qedhere
\end{align*}
\end{proof}

\lemJMSconcavity*
\begin{proof}
We will show that, for any fixed $q$, $\opt_{JMS}(q,T)$ is concave. The claim then follows. Indeed, assume by contradiction that there exist values $0<T_1<T_2$ and $0<\alpha<1$ such that 
$$
\alpha \opt_{JMS}(T_1)+(1-\alpha)\opt_{JMS}(T_2)\geq \opt_{JMS}(T)+\delta
$$ 
where $T=\alpha T_1+(1-\alpha)T_2$ and $\delta>0$. There must exist a finite $q$ such that $\opt_{JMS}(q,T_i)\geq \opt_{JMS}(T_i)-\delta/2$ for $i\in \{1,2\}$. Then we get the contradiction
\begin{eqnarray*}
\opt_{JMS}(T) & \geq & \opt_{JMS}(q,T)\geq \alpha \opt_{JMS}(q,T_1)+(1-\alpha)\opt_{JMS}(q,T_2)\\
                       & \geq & \alpha(\opt_{JMS}(T_1)-\frac{\delta}{2})+(1-\alpha)(\opt_{JMS}(T_2)-\frac{\delta}{2})\\
                       & \geq & \opt_{JMS}(T)+\delta-\frac{\delta}{2} > \opt_{JMS}(T).
\end{eqnarray*}
It remains to show that $\opt_{JMS}(q,T)$ is concave for any given $q\geq 1$. Consider any values $0<T_1<T_2$ and $\alpha\in (0,1)$. Let $T=\alpha T_1+(1-\alpha)T_2$ and $\OPT_{JMS}(q,T_i)$ be some optimal solution for $LP_{JMS}(q,T_i)$. We use $\lambda(q,T_i)$ to refer to the value of $\lambda$ in $\OPT_{JMS}(q,T_i)$ and similarly for the other variables. Consider the solution 
$$
APX_{JMS}(q,T)=\alpha \OPT_{JMS}(q,T_1)+(1-\alpha)\OPT_{JMS}(q,T_2)
$$ 
for $LP_{JMS}(q,T)$. Trivially $APX_{JMS}(q,T)$ if feasible. Indeed, it is the convex combination of two feasible solutions restricted to Constraints \eqref{con:sumdi}-\eqref{con:LBlambda}. Furthermore its value of $\lambda$ satisfies Constraint \eqref{con:Tbound-new} since, by the feasibility of $\OPT_{JMS}(q,T_i)$, 
$$
\lambda =\alpha \lambda(q,T_1)+(1-\alpha)\lambda(q,T_2)\leq \alpha T_1+(1-\alpha)T_2=T.
$$
The value of $APX_{JMS}(q,T)$ is $\alpha \opt_{JMS}(q,T_1)+(1-\alpha)\opt_{JMS}(q,T_2)$, hence $\opt_{JMS}(q,T)$ is at least the latter amount (since $LP_{JMS}(q,T)$ is a maximization LP).   
\end{proof}

\lemjmsUBnew* 
\begin{proof}
Let $S:=(\alpha,d,r,\lambda)$ be an optimal solution to $LP_{JMS}(cq,T)$. We show how to build a feasible solution $S^+:=(\alpha^+,d^+,r^+,\lambda^+)$ for $LP^{+}_{JMS}(q,T)$ with the same objective value. We set
$$
\alpha^+_i:=\sum_{\ell=(i-1)c+1}^{ic}\alpha_\ell,\quad d^+_i=\sum_{\ell=(i-1)c+1}^{ic}d_\ell,\quad r^+_{j,i}:= \sum_{\ell=(j-1)c+1}^{jc}r_{\ell,ic},\quad \lambda^+=\lambda, \quad \forall 1\leq j\leq i\leq q.
$$
Clearly the objective values of $S$ and $S^+$ are identical. Let us show that $S^+$ is a feasible solution. Clearly all the variables in $S^+$ are non-negative since the same holds for $S$. In order to avoid possible confusion, we will use $(x)^+$ instead of $(x)$ to denote the $(x)$-constraint in $LP^{+}_{JMS}(q,T)$. The following inequalities show that Constraints $\eqref{con:sumdi}^+$, $\eqref{con:orderalphai}^+$, $\eqref{con:orderrji}^+$, $\eqref{con:addedConstr}^+$, and $\eqref{con:Tbound-new}^+$, resp., hold for every valid choice of the indexes: 
$$
\sum_{i=1}^{q}d^+_i=\sum_{j=1}^{cq}d_i\overset{\eqref{con:sumdi}}{=}1.
$$
$$
\alpha^+_i=\sum_{\ell=(i-1)c+1}^{ic}\alpha_\ell \overset{\eqref{con:orderalphai}}{\leq} \sum_{\ell=(i-1)c+1}^{ic}\alpha_{\ell+c}=\sum_{\ell=ic+1}^{(i+1)c}\alpha_{\ell}=\alpha^+_{i+1}.
$$
$$
r^+_{j,i+1}=\sum_{\ell=(j-1)c+1}^{jc}r_{\ell,(i+1)c}\overset{\eqref{con:orderrji}}{\leq}\sum_{\ell=(j-1)c+1}^{jc}r_{\ell,ic}=r^+_{j,i}.
$$
$$
r^+_{j,j}=\sum_{h=1}^{c}r_{(j-1)c+h,jc}\overset{\eqref{con:orderrji}}{\leq}\sum_{h=1}^{c}r_{(j-1)c+h,(j-1)c+h}\overset{\eqref{con:addedConstr}}{\leq} \sum_{h=1}^{c}\alpha_{(j-1)c+h}=\alpha^+_j.
$$
$$
\lambda^+=\lambda\overset{\eqref{con:Tbound-new}}{\leq} T.
$$
\fab{Consider next constraint \eqref{con:triangleIneq-plus} that replaces \eqref{con:triangleIneq}:}
\begin{align*}
\alpha^+_i & =\sum_{h=1}^{c}\alpha_{(i-1)c+h} \overset{\eqref{con:triangleIneq}}{\leq} \sum_{h=1}^{c} (r_{(j-1)c+h,(i-1)c+h}+d_{(i-1)c+h}+d_{(j-1)c+h}) \\
& =d^+_i+d^+_j+\sum_{h=1}^{c} r_{(j-1)c+h,(i-1)c+h}  \overset{\eqref{con:orderrji}}{\leq} d^+_i+d^+_j+\sum_{h=1}^{c} r_{(j-1)c+h,\fab{(i-1)}c}=d^+_i+d^+_j+r^+_{j,\fab{i-1}}.
\end{align*}
It remains to consider Constraint $\eqref{con:LBlambda-plus}$ that replaces \eqref{con:LBlambda}. Let us prove some intermediate inequalities
\begin{align}\label{lem:jms-UB:eqn1}
\sum_{j=1}^{t}\max\{r^+_{j,t}-d^+_j,0\} & =\sum_{j=1}^{t}\max\left\{\sum_{\ell=(j-1)c+1}^{jc}r_{\ell,tc}-\sum_{\ell=(j-1)c+1}^{jc}d_\ell,0\right\}\nonumber \\
& \leq \sum_{j=1}^{t}\sum_{\ell=(j-1)c+1}^{jc}\max\{r_{\ell,tc}-d_{\ell},0\}=\sum_{\ell=1}^{tc}\max\{r_{\ell,tc}-d_{\ell},0\}.
\end{align}
\begin{align}\label{lem:jms-UB:eqn2}
\sum_{j=t+1}^{q}\max\{\alpha^+_{t}-d^+_j,0\} & =\sum_{j=t+1}^{q}\max\left\{\sum_{\ell=(j-1)c+1}^{jc}\alpha_{\ell}-\sum_{\ell=(j-1)c+1}^{jc}d_\ell,0\right\}\nonumber \\
& \leq \sum_{j=t+1}^{q}\sum_{\ell=(j-1)c+1}^{jc}\max\{\alpha_{\ell}-d_{\ell},0\}=\sum_{\ell=tc+1}^{cq}\max\{\alpha_{\ell}-d_{\ell},0\}.
\end{align}
Constraint $\eqref{con:LBlambda-plus}$ follows since:
\begin{align*}
& \sum_{j=1}^{i-1} \max\{r^+_{j,i} - d^+_j,0\} + \sum_{j=i}^{q} \max\{\alpha^+_i-d^+_j,0\} \overset{\eqref{lem:jms-UB:eqn1}+\eqref{lem:jms-UB:eqn2}}{\leq}\sum_{\ell=1}^{tc}\max\{r_{\ell,tc}-d_{\ell},0\}+\sum_{\ell=tc+1}^{cq}\max\{\alpha_{\ell}-d_{\ell},0\}\\
\overset{\eqref{con:addedConstr}}\leq & \sum_{\ell=1}^{tc-1}\max\{r_{\ell,tc}-d_{\ell},0\}+\max\{\alpha_{tc}-d_{tc},0\}+\sum_{\ell=tc+1}^{cq}\max\{\alpha_{\ell}-d_{\ell},0\} \overset{\eqref{con:LBlambda}}{\leq } \lambda=\lambda^+.\qedhere
\end{align*}
\end{proof}

\cormodifiedJMSclaim*
\begin{proof}
Recall that $\opt_{JMS}(T)\leq \opt_{JMS}\leq 2$. By Lemma \ref{lem:JMSconcavity}
\begin{eqnarray}
& & \sum_{f^*\in \OPT'}\opt_{JMS}(\frac{\open(f^*)}{d(f^*)})\cdot \frac{d(f^*)}{d(\OPT')}\leq \opt_{JMS}\left(\sum_{f^*\in \OPT'}\frac{d(f^*)}{d(\OPT')}\cdot \frac{\open(f^*)}{d(f^*)}\right) \nonumber \\
& = & \opt_{JMS}(\frac{\open(\OPT')}{d(\OPT')})\leq \opt_{JMS}(T) \overset{Lem. \ref{lem:jms-UB-new}}{\leq} \opt^+_{JMS}(q,T) \label{cor:modifiedJMSclaim:eqn1},
\end{eqnarray}
where the second-last inequality follows since $\opt_{JMS}(T)$ is non-decreasing in $T$. Then 
\begin{align*}
\open(S)+d(S) & \overset{Lem. \ref{lem:modifiedJMSclaim}}{\leq} \open(\OPT)+\sum_{f^*\in \OPT\setminus \OPT'}2d(f^*)+\sum_{f^*\in \OPT'}\opt_{JMS}(\frac{\open(f^*)}{d(f^*)})d(f^*) \\
& \overset{\eqref{cor:modifiedJMSclaim:eqn1}}{\leq} \open(\OPT)+2d(\OPT\setminus \OPT')+\opt^+_{JMS}(q,T)\cdot d(\OPT').\qedhere
\end{align*}
\end{proof}

\section{An Alternative LMP Bound for Small Facility Cost}
\label{sec:LMP_old}

Here we provide an alternative, analytical, way to upper bound the approximation factor of JMS. We define a variant of $LP_{JMS}(q)$ as follows:
\begin{align}
\mbox{max} & \sum_{i=1}^q \alpha_i - \lambda \nonumber \\ 
\mbox{s.t.} 
& \sum_{i=1}^q d_i = 1 \label{eq:zeron} \\
& \forall 1 \leq i < q: \alpha_i - \alpha_{i+1} \leq 0 \label{eq:onen} \\
& \forall 1 \leq j  < i < q: r_{j,i+1} - r_{j,i} \leq 0 \label{eq:twon} \\
& \forall 1 \leq j < i \leq q: \alpha_i - r_{j,i} - d_i - d_j \leq 0 \label{eq:threen} \\ 
& \forall 1 \leq j < i \leq q: r_{j,i} - d_j - g_{i,j} \leq 0 \label{eq:fourn} \\ 
& \forall 1 \leq i \leq j \leq q: \alpha_i - d_j - h_{i,j} \leq 0 \label{eq:fiven} \\ 
& \forall 1 \leq i \leq q: \sum_{j=1}^{i-1} g_{i,j} + \sum_{j=i}^q h_{i,j} - \lambda \leq 0 \label{eq:sixn} \\
& \forall 1 \leq j < q: r_{j, j+1} \leq \alpha_j \label{eq:eightn} \\
& \alpha, d, f, r, g, h \geq 0 \nonumber 
\end{align}
In particular notice that we do not use the variables $r_{j,j}$. Furthermore, we replace $r_{j,j}\leq \alpha_j$ with $r_{j,j+1}\leq \alpha_j$. Clearly the latter constraint is also guaranteed by JMS. Indeed, since $\alpha_j$ is an upper bound on the distance between client $j$ and its first facility and client $j$ only switches to closer facilities, $r_{j, i} \leq \alpha_j$ holds for all possible LP values that correspond to a run of the JMS algorithm.

Consider the variant $LP_{JMS}(q,T)$ of $LP_{JMS}(q)$, for a parameter $T>0$, where we add the following constraint that intuitively captures the condition $\open(f)\leq T\cdot d(f)$
\begin{equation}
\lambda \leq T.\label{eq:sevenn}
\end{equation}
Let $\opt_{JMS}(q,T)$ be the optimal value of $LP_{JMS}(q,T)$, and $\opt_{JMS}(T)=\sup_q\{\opt_{JMS}(q,T)\}$. Observe that $\opt_{JMS}(q,T)\leq \opt_{JMS}(q)$ where the equality holds for any fixed $q$ and $T$ large enough.
We are able to show that, for any finite $T$, $\opt_{JMS}(T)<2$. More precisely, we obtained the following result. 
\begin{restatable}{lemma}{lemjmsmain}
\label{lem:jms-main}
For any $T > 0$, 
$
\opt_{JMS}(T) \leq \min_{z \in [0, 1/3]} ( V(z) + T(M(z) - 1) ),
$ 
where
\begin{align*}
V(z)& = \max\{2 - \frac{z}{1 - z}, 2 - \frac{2z}{1 - z} + \ln ( 1 + \frac{z}{1 - 2z}) + \frac{4z^2}{(1-z)(1-2z)} \}, \mbox{ and}  \\
M(z) - 1 & = \ln(1 + \frac{z}{1 - 2z}) - \frac{z}{1 - z} + \frac{2z^2}{(1-z)(1 - 2z)}.
\end{align*}
\end{restatable}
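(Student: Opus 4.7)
The plan is a dual-fitting argument parameterized by $z \in [0, 1/3]$. For each such $z$ and each integer $q$, I aim to exhibit non-negative multipliers for the constraints of $LP_{JMS}(q, T)$ whose weighted sum produces an inequality of the form
\[
\sum_{i=1}^{q} \alpha_i \;\leq\; V(z) + M(z)\,\lambda,
\]
valid for every feasible primal point. A direct check (Taylor expansion at $z=0$, together with evaluation at $z = 1/3$, where $M(1/3) - 1 = \ln 2 + 1/2 > 0$) yields $M(z) \geq 1$ throughout $[0, 1/3]$. Combining this with $\lambda \leq T$ from constraint~\eqref{eq:sevenn} then gives
\[
\sum_i \alpha_i - \lambda \;\leq\; V(z) + (M(z)-1)\lambda \;\leq\; V(z) + T(M(z)-1).
\]
Since the bound is independent of $q$, taking the supremum over $q$ yields $\opt_{JMS}(T) \leq V(z) + T(M(z)-1)$, and finally taking the minimum over $z$ proves the lemma.

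\paragraph{Construction of the multipliers.}
I would adapt the chaining scheme introduced by Byrka--Pensyl--Rybicki--Srinivasan--Trinh. Split the index set $\{1, \ldots, q\}$ into a ``high'' band $i > (1-z)q$ and a ``low'' band $i \leq (1-z)q$. For the high band, combine the newly added constraint~\eqref{eq:eightn} ($r_{j,j+1} \leq \alpha_j$) with the triangle-inequality constraint~\eqref{eq:threen} and the base JMS-style bound on $\alpha_q$; together with the normalization~\eqref{eq:zeron}, this produces a contribution to $\sum_i \alpha_i$ matching the first piece $2 - z/(1-z)$ inside the $\max$ defining $V(z)$. For the low band, pivot each index $i$ against $p(i) \approx i + zq$, combining~\eqref{eq:threen} on the pair $(i, p(i))$ with the opening-cost constraint~\eqref{eq:sixn} aggregated at $p(i)$, together with~\eqref{eq:fourn}, \eqref{eq:fiven}, and the monotonicities~\eqref{eq:onen}, \eqref{eq:twon}. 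Iterating yields a telescoping recursion whose solution in the limit $q \to \infty$ replaces $\sum_{k = \lceil (1-2z)q \rceil}^{\lceil (1-z)q \rceil} 1/k$ by $\int_{1-2z}^{1-z} dt/t = \ln(1 + z/(1-2z))$, producing the logarithmic term common to $V(z)$ and $M(z)$; the quadratic contributions $4z^2/((1-z)(1-2z))$ and $2z^2/((1-z)(1-2z))$ emerge as the boundary corrections of this recursion. Matching the resulting coefficients against $V(z) + M(z)\lambda$ fixes the multipliers uniquely, and the $\max$ in $V(z)$ records which of the two pivot regimes dominates the contribution of $\alpha_{p(i)}$.

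\paragraph{Main obstacle.}
The technical crux lies in the low-band analysis: verifying global non-negativity of the multipliers, solving the telescoping recursion cleanly in the continuous limit, and ensuring that the bound obtained via the pivot at $p(i)$ propagates correctly through downstream indices in the presence of the newly added constraint~\eqref{eq:eightn}. The restriction $z \in [0, 1/3]$ is exactly the range where $1 - 2z > z$, which is needed both for the logarithm to be real and positive and for the pivots $p(i)$ to remain in $\{1, \ldots, q\}$ throughout the low band. Lemma~\ref{lem:JMSconcavity} is not logically required for the argument but serves as an external sanity check that $\opt_{JMS}(T)$ depends linearly on $T$ along the optimal $z^*(T)$.
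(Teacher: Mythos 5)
Your plan follows essentially the same route as the paper: a dual-fitting argument that exhibits, for each $z \in [0, 1/3]$, a feasible solution to the LP dual of $LP_{JMS}(q, T)$ with objective value $V(z) + T(M(z) - 1)$, where the multiplier pattern is an adaptation of the Byrka--Pensyl--Rybicki--Srinivasan--Trinh chaining scheme with a high-band/low-band split and a logarithmic term arising from a telescoping sum passing to an integral. The paper organizes this through an explicit continuous dual LP and then discretizes, whereas you propose working directly in the discrete LP; these are the same computation in different packaging.

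There are two genuine gaps, however. First, you propose to exhibit the multipliers ``for each such $z$ and each integer $q$,'' and then conclude by taking the supremum over $q$. But the paper's discretization (Section~\ref{sec:jms-disc}) only produces a feasible dual solution for $LP_{JMS}(q, T)$ when $z$ is an integer multiple of $1/q$; otherwise the grid of squares does not align with the regions defining $A, B, C$, the diagonal contributions $a'_j, b'_j$ no longer cancel, and constraint~\eqref{eq:r} can break. The paper bridges this by proving the monotonicity $\opt_{JMS}(q, T) \leq \opt_{JMS}(cq, T)$ (Lemma~\ref{lem:jms-split}), which allows passing from $q$ to a multiple $cq$ for which a grid-aligned $z' = d/(cq)$ near the optimizer is available. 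Your final step, ``since the bound is independent of $q$, taking the supremum over $q$,'' silently assumes the alignment problem away, and without the monotonicity lemma the argument does not close. Second, your justification that $M(z) \geq 1$ on all of $[0, 1/3]$ --- a Taylor expansion at $z = 0$ plus an evaluation at $z = 1/3$ --- is not a verification that it holds in the interior; this inequality is exactly the dual-feasibility condition for the multiplier of the constraint $\lambda \leq T$, so it must be established rigorously. (It does hold: using $\ln(1 + x) \geq x - x^2/2$ with $x = z/(1-2z)$ and the identity $z/(1-2z) - z/(1-z) = z^2/((1-z)(1-2z))$ shows $M(z) - 1 \geq \frac{z^2}{1-2z}\bigl(\frac{3}{1-z} - \frac{1}{2(1-2z)}\bigr) > 0$ on $(0, 1/3]$ --- but this needs to be written down.)
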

\noindent
The following simple corollary shows that the improvement with respect to $2$ is $\Theta(\frac{1}{c + T})$ for some constant $c$. 
Note that this is asymptotically tight because \cite{JMS02} proved that $opt_{JMS}(T) \geq 2 - \frac{2}{T + 2}$ for any positive even integer $T$.

\begin{corollary}
For any $T > 0$, 
$\opt_{JMS}(T) \leq 2 - \frac{1}{4(7 + 3T)}.$
\label{cor:jms-general}
\end{corollary}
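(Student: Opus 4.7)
The plan is to invoke Lemma~\ref{lem:jms-main} at the single point $z^{*} = \tfrac{1}{2(7+3T)}$, which lies in $(0,1/14] \subset [0,1/3]$ for every $T>0$ and is therefore a legal choice inside the minimum. After substitution, the entire corollary will boil down to a trivial quadratic check.

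The uniform simplification is that both $V$ and $M-1$ involve the logarithm only through $\ln\!\bigl(1+\tfrac{z}{1-2z}\bigr)$, so the inequality $\ln(1+y)\leq y$ applied at $y=\tfrac{z}{1-2z}$ will be the only analytic ingredient. Combining the resulting terms over the common denominator $(1-z)(1-2z)$, I expect to obtain
\[
V_2(z) \;\leq\; 2 - z + \frac{2z^{2}(2+z)}{(1-z)(1-2z)}, \qquad M(z)-1 \;\leq\; \frac{3z^{2}}{(1-z)(1-2z)}.
\]
Since the other branch $V_1(z) = 2 - \tfrac{z}{1-z}$ is trivially at most $2-z$, it is dominated by the right-hand side above, so the same bound applies to $V(z)=\max\{V_1(z), V_2(z)\}$. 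Adding the two estimates yields
\[
V(z) + T\bigl(M(z)-1\bigr) \;\leq\; 2 - z + \frac{z^{2}\bigl(4 + 2z + 3T\bigr)}{(1-z)(1-2z)}.
\]

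The only real work is the final substitution. Setting $z=z^{*}$ and noting that $z^{*} - \tfrac{1}{4(7+3T)} = \tfrac{z^{*}}{2}$, the claim reduces after cancelling one power of $z^{*}$ to
\[
2 z^{*}\bigl(4 + 2z^{*} + 3T\bigr) \;\leq\; (1-z^{*})(1-2z^{*}).
\]
Using the defining identity $2z^{*}(7+3T)=1$, the left-hand side expands as $z^{*}(8+6T) + 4(z^{*})^{2} = \bigl(2z^{*}(7+3T)-6z^{*}\bigr) + 4(z^{*})^{2} = 1-6z^{*}+4(z^{*})^{2}$, while the right-hand side is $1-3z^{*}+2(z^{*})^{2}$; the inequality therefore collapses to $2(z^{*})^{2}\leq 3z^{*}$, i.e.\ $z^{*}\leq 3/2$, which is comfortable since $z^{*}\leq 1/14$. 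The main risk in this plan is that the naive bound $\ln(1+y)\leq y$ might leave no room to reach the target $\tfrac{1}{4(7+3T)}$; the algebra just sketched shows that it is in fact tight enough, so no sharper logarithmic estimate is needed.
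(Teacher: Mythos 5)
Your proof is correct and follows essentially the same route as the paper: both apply $\ln(1+y)\le y$ at $y=\tfrac{z}{1-2z}$ to reduce $V(z)$ and $M(z)-1$ to rational expressions (your bound $2-z+\tfrac{2z^2(2+z)}{(1-z)(1-2z)}$ is algebraically identical to the paper's $2-\tfrac{z-7z^2}{(1-z)(1-2z)}$), then specialize to $z^*=\tfrac{1}{2(7+3T)}$. The only cosmetic difference is that the paper first invokes $(1-z)(1-2z)\le 1$ to drop the denominator before substituting, while you substitute directly and check the resulting quadratic inequality; incidentally your computation of $M(z)-1\le\tfrac{3z^2}{(1-z)(1-2z)}$ is the correct version of what appears in the paper as the typo $\tfrac{z^3}{(1-z)(1-2z)}$.
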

\begin{proof}
Using the approximation $\ln(1 + \frac{z}{1 - 2z}) \leq \frac{z}{1 - 2z}$, simple calculations show that 
$V(z) \leq 2 - \frac{z - 7z^2}{(1-z)(1-2z)}$ and $M(z) - 1 \leq \frac{z^3}{(1-z)(1-2z)}$. 
Therefore, $V(z) + T(M(z) - 1) \leq 2 - \frac{z-(7+3T)z^2}{(1-z)(1-2z)}\leq 2-(z-(7+3T)z^2)$. Taking $z = \frac{1}{2(7 + 3T)}$ yields the claim. 
\end{proof}

%

It remains to prove Lemma \ref{lem:jms-main}. To that aim we first show that $\opt_{JMS}(T)=\lim_{q\to +\infty}\opt_{JMS}(q,T)$. This is a straightforward consequence of the following lemma.
\begin{restatable}{lemma}{lemjmssplit}
\label{lem:jms-split}
Fix $T > 0$. For any positive integers $q$ and $m$, 
$\opt_{JMS}(q, T) \leq \opt_{JMS}(mq, T)$. 
\end{restatable}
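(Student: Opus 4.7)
The statement is the exact analogue, for the alternative LP $LP_{JMS}(q,T)$ of Section \ref{sec:LMP_old}, of Lemma \ref{lem:jms-split-new} already established for $LP_{JMS}(q,T)$ of Section \ref{sec:LMP}. The only difference between the two LPs is that the ``max'' expressions appearing in \eqref{con:LBlambda} are replaced here by the explicit slack variables $g_{i,j}, h_{i,j}$ subject to \eqref{eq:fourn}--\eqref{eq:sixn} (and $r_{j,j}\leq \alpha_j$ is replaced by $r_{j,j+1}\leq \alpha_j$, which JMS also satisfies). So the plan is to reuse the same block-splitting construction and, on top of it, set the new variables $g'$ and $h'$ to their minimal admissible values. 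The main (and essentially only) work will be to check constraint~\eqref{eq:sixn}.

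\noindent\textbf{Construction.} Let $S=(\alpha,d,r,g,h,\lambda)$ be an optimal solution of $LP_{JMS}(q,T)$. Following the proof of Lemma \ref{lem:jms-split-new}, for all $1\leq j\leq i\leq mq$ I would set
\begin{align*}
\alpha'_i:=\tfrac{\alpha_{\lceil i/m\rceil}}{m}, \quad d'_i:=\tfrac{d_{\lceil i/m\rceil}}{m}, \quad \lambda':=\lambda, \\
r'_{j,i}:=\begin{cases}\tfrac{r_{\lceil j/m\rceil,\lceil i/m\rceil}}{m} & \text{if } \lceil j/m\rceil<\lceil i/m\rceil,\\ \tfrac{\alpha_{\lceil j/m\rceil}}{m} & \text{if } \lceil j/m\rceil=\lceil i/m\rceil,\end{cases}
\end{align*}
and then choose the slack variables at their minimum permitted values, namely $g'_{i,j}:=\max\{r'_{j,i}-d'_j,0\}$ for $j<i$ and $h'_{i,j}:=\max\{\alpha'_i-d'_j,0\}$ for $j\geq i$. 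Conservation of objective is immediate: $\sum_{i=1}^{mq}\alpha'_i=\sum_{\ell=1}^{q}m\cdot\alpha_\ell/m=\sum_{\ell=1}^{q}\alpha_\ell$ and $\lambda'=\lambda$.

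\noindent\textbf{Easy constraints.} Constraints \eqref{eq:zeron}, \eqref{eq:onen}, \eqref{eq:threen}, \eqref{eq:sevenn} and the non-negativity conditions follow by the same line-by-line case analysis as in Lemma \ref{lem:jms-split-new} (splitting according to whether $\lceil j/m\rceil$ and $\lceil i/m\rceil$ fall in the same block or not). Constraint \eqref{eq:twon} uses the original \eqref{eq:twon} in the first case and $r_{\lceil j/m\rceil,\lceil j/m\rceil+1}\leq \alpha_{\lceil j/m\rceil}$ (the original \eqref{eq:eightn}) in the boundary case $\lceil j/m\rceil=\lceil i/m\rceil<\lceil (i+1)/m\rceil$. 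Constraint \eqref{eq:eightn} in $LP_{JMS}(mq,T)$ is automatic: if $j$ is not a multiple of $m$ we have $r'_{j,j+1}=\alpha_{\lceil j/m\rceil}/m=\alpha'_j$, while if $j=\ell m$ then $r'_{j,j+1}=r_{\ell,\ell+1}/m\leq \alpha_\ell/m=\alpha'_j$ by the original \eqref{eq:eightn}. Constraints \eqref{eq:fourn}--\eqref{eq:fiven} hold by the very definition of $g'$ and $h'$.

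\noindent\textbf{The only substantive step: \eqref{eq:sixn}.} Fix $i$ and write $\lceil i/m\rceil=t$. Within block $t$ the definition gives $r'_{j,i}=\alpha'_i$ for every $(t-1)m+1\leq j\leq i$, so $g'_{i,j}=\max\{\alpha'_i-d'_j,0\}$ for such $j<i$. Hence
\begin{align*}
&\sum_{j=1}^{i-1}g'_{i,j}+\sum_{j=i}^{mq}h'_{i,j} =\sum_{j=1}^{(t-1)m}\max\{r'_{j,i}-d'_j,0\}+\sum_{j=(t-1)m+1}^{mq}\max\{\alpha'_i-d'_j,0\}\\
&\qquad=\sum_{\ell=1}^{t-1}m\cdot\max\{\tfrac{r_{\ell,t}-d_\ell}{m},0\}+\sum_{\ell=t}^{q}m\cdot\max\{\tfrac{\alpha_t-d_\ell}{m},0\}\\
&\qquad=\sum_{\ell=1}^{t-1}\max\{r_{\ell,t}-d_\ell,0\}+\sum_{\ell=t}^{q}\max\{\alpha_t-d_\ell,0\}\leq \sum_{\ell=1}^{t-1}g_{t,\ell}+\sum_{\ell=t}^{q}h_{t,\ell}\leq \lambda=\lambda',
\end{align*}
where the first inequality invokes \eqref{eq:fourn}--\eqref{eq:fiven} in the original LP and the second is the original \eqref{eq:sixn}. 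This completes the verification, and the claim $\opt_{JMS}(q,T)\leq \opt_{JMS}(mq,T)$ follows. I expect no essential difficulty beyond keeping the block-indexing bookkeeping correct; the introduction of $g,h$ does not complicate matters because they are free variables that we are free to pin down optimally.
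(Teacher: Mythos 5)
Your proposal is correct and follows essentially the same block-splitting construction as the paper: scale each client into $m$ copies with $\alpha, d, r$ divided by $m$, set $r'_{j,i}$ to $\alpha_{\lceil j/m\rceil}/m$ on diagonal blocks (where $\eqref{eq:eightn}$ supplies the needed monotonicity at block boundaries), and pin $g', h'$ to their minimal admissible values before verifying $\eqref{eq:sixn}$ by regrouping the $m$ copies in each block. The only cosmetic difference is that the paper uses double-index notation $(i,a)\in[q]\times[m]$ where you use $\lceil i/m\rceil$, and your use of $\leq$ for $g_{i,j}\geq\max\{r_{j,i}-d_j,0\}$ is slightly more careful than the paper's implicit assumption that $g,h$ are tight.
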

\begin{proof}
Create $qm$ clients $(i, a) \in [q] \times [m]$ with the lexicographical ordering (first coordinate first). Define the values as follows.
\begin{itemize}
\item $\alpha_{i, a} = \alpha_{i} / m$. 
\item $d_{i, a} = d_i / m$.
\item $r_{(j, b), (i, a)}$: if $j < i$, then $r_{(j, b), (i, a)} = r_{j, i} / m$. Otherwise, $i = j$ and $r_{(j, b), (i, a)} = \alpha_j / m$. 
\item As usual, $g_{(i, a), (j, b)} = \max(r_{(j,b),(i,a)} - d_{(j,b)}, 0)$ and $h_{(i, a), (j, b)} = \max(\alpha_{i,a} - d_{(j,b)}, 0)$.
\item $f$ stays the same. 
\end{itemize}
Equations~\eqref{eq:zeron} and~\eqref{eq:onen} are easy to check. \eqref{eq:twon} is also true because new $r_{(j,b), (j, a)} = \alpha_j / m \geq r_{(j,b), (i, c)}$ if $i > j$ (by the assumption).
Consider \eqref{eq:threen} for $(j, b) < (i, a)$, if $j < i$, then it just follows from the old inequality with $i$ and $j$. 
If $i = j$ and $b < a$, since $r_{(j, b), (i,a)} = \alpha_j / m$, 
\[
\alpha_{(i,a)} - r_{(j, b), (i,a)} - d_{i, a} - d_{j, b} = -d_{i,a} - d_{j, b} \leq 0. 
\]
Finally, let us look at~\eqref{eq:sixn}. Fix $(i, a)$. We first compute relevant $g$ and $h$. 
\begin{itemize}
\item For $(j, b) < (i, a)$, if $j < i$, then $g_{(i, a), (j, b)} = g_{i,j} / m$. Otherwise, $i = j$ and $g_{(i, a), (j, b)} = \max(r_{(j, b), (i, a)} - d_j, 0) = \max(\alpha_j - d_j, 0) / m$. 
\item For $(j, b) \geq (i, a)$, $h_{(i, a), (j, b)} = \max(\alpha_{i, a} - d_{j,b}, 0) = h_{i, j} / m$. Note that if $i = j$, it is again exactly $\max(\alpha_j - d_j, 0) / m$. 
\end{itemize}

Since $g_{(i, a), (i, b)} = h_{i, i} / m = h_{(i, a), (i, c)}$ and $b < a \leq c$, so 

\begin{align*}
& \sum_{(j, b) < (i, a)} g_{(i, a), (j, b)} 
+ 
\sum_{(j, b) \geq (i, a)} h_{(i, a), (j, b)}  \\
=&
 \sum_{(j, b) : j < i} g_{(i, a), (j, b)} 
+ 
\sum_{(j, b) : j \geq i} h_{(i, a), (j, b)} \\
=&  \sum_{j < i} g_{i, j} 
+ 
\sum_{j \geq i} h_{i, j} \leq f. 
\end{align*}

\end{proof}


The above lemma motivated us to study the continuous version of the factor-revealing LP and its dual. 
\cite{JMS02} showed that $\opt_{JMS} \leq 2$ by constructing a feasible solution of value $2 - 1/q$ for the dual of $LP_{JMS}(q)$ for every $q \geq 1$. 
For $LP_{JMS}(q, T)$, letting $\{ A_{i,j} \}_{1 \le j < i \leq k}$, $\{ B_{i,j} \}_{1 \le j < i \leq k}$, $\{ C_{i,j} \}_{1 \leq i \leq j \leq k}$, $\{ N_i \}_{1 \leq i \leq q}$, $V$, and $M$ be the dual variables corresponding to~\eqref{eq:threen},~\eqref{eq:fourn},~\eqref{eq:fiven},~\eqref{eq:sixn},~\eqref{eq:zeron}, and~\eqref{eq:sevenn} and optimizing the other dual variables depending on them, we obtain the dual LP called $DP_{JMS}(q,T)$ shown below. 
The \emph{Continuous Dual LP}, dubbed $CDP_{JMS}(T)$, is obtained by letting $q \rightarrow \infty$ and 
replacing the counting measure on $\{ 1, \dots, q \}$ by the Lebesgue measure on $[0, 1]$; 
so variables are represented as functions and sums become integrals. 
In particular, we have $A, B : L \to \RP$, $C : U \to \RP$ where 
$L := \{ (i,j) \in [0,1]^2 : j \leq i \}$, and $U := \{ (i,j) \in [0,1]^2 : j \geq i \}$.
We use subscripts to denote function arguments (e.g., $A_{i,j} = A(i, j)$). 
See Appendix~\ref{sec:jms-setup} for a full derivation of both $DP_{JMS}(q,T)$ and $CDP_{JMS}(T)$.

{\footnotesize
\begin{align*}
\mbox{min }\quad & V - (M-1)T \hspace{2cm}(DP_{JMS}(q, T))
& \mbox{min }\quad & V - (M-1)T \hspace{2cm}(CDP_{JMS}(T)) \\
\mbox{s.t. }\quad & \forall i \in [q]: \sum_{j=1}^{i-1} A_{i, j}  + \sum_{j=i}^q C_{i,j}  = 1 
& \mbox{s.t. }\quad & \forall i \in [0, 1]: \int_{j=0}^i A_{i, j}  + \int_{j=i}^1 C_{i,j}  = 1 \\
& \forall j \in [q]:
& & \forall j \in [0, 1]:\\
& \sum_{i=j+1}^q A_{i,j}  + \sum_{i=1}^{j-1} A_{j,i}  + \sum_{i=j+1}^q B_{i,j}  + \sum_{i=1}^j C_{i,j}  \leq V 
& & \int_{i=j}^1 A_{i,j}  + \int_{i=0}^j A_{j,i}  + \int_{i=j}^1 B_{i,j}  + \int_{i=0}^j C_{i,j}  \leq V \\
& \forall \, 1 \leq j \leq i \leq q: \sum_{\ell=j+1}^i B_{\ell, j} \geq \sum_{\ell=j+1}^i A_{\ell, j},  
& & \forall \, 0 \leq j \leq i \leq 1: \int_{\ell=j}^i B_{\ell, j} \geq \int_{\ell=j}^i A_{\ell, j},  \\
& N_i = \max \bigg( \max_{j \in [1, i-1]} B_{i,j}, \quad \max_{j \in [i+1, q]} C_{i,j} \bigg), 
& & N_i = \max \bigg( \max_{j \in [0, i]} B_{i,j}, \quad \max_{j \in [i, 1]} C_{i,j} \bigg), \\ 
& M = \sum_{i=1}^q N_i 
& & M = \int_{i=0}^1 N_i 
\end{align*}
}

Let $cdopt_{JMS}(T)$ be the optimal value of $CDP_{JMS}(T)$. 
Note that setting $A(\cdot)$, $B(\cdot)$, $C(\cdot)$, and $N(\cdot)$ to the constant 1 function and letting $V = 2$ and $M = 1$, one obtains a feasible solution to $CDP_{JMS}(T)$, showing that $cdopt_{JMS}(T)\leq 2$ for any $T$. We are able to construct a better dual solution for finite $T$, showing that
$$
cdopt_{JMS}(T)\leq \min_{z \in [0, 1/3]}(V(z) + (M(z) - 1)T).
$$

\noindent The mentioned solution to $CDP_{JMS}(T)$, parameterized by $z$, is constructed by dividing the overall domain $[0, 1] \times [0, 1]$ into several regions defined by the lines including $x = z, y = z, x = 1-z, y = 1-z, x = y$ and by assigning different values to $A, B$, and $C$ for each such region. In order to obtain a feasible solution for $DP_{JMS}(q, T)$, we discretize our solution for $CDP_{JMS}(T)$. In more detail, we use the natural strategy of partitioning $[0, 1] \times [0, 1]$ into squares of side length $1/q$ and integrating the continuous variables over each square to obtain the appropriate discrete variables.
(E.g., $A_{q, 1}$ of $DP_{JMS}(q, T)$ is defined to be $\int_{i=(q-1)/q}^1 \int_{j=0}^{1/q} A_{i, j}$ where $A_{i, j}$ is taken from our $CDP_{JMS}(T)$ solution).
When $z$ is an integer multiple of $1/q$, the lines defining the small squares ``align nicely'' with the lines defining the continuous dual solution, 
so one can obtain the following lemma. 

\begin{restatable}{lemma}{lemjmsdisc}
\label{lem:jms-disc}
For any $T > 0$, $q\in \N$ and $z \in [0, 1/3]$ where $z = d/q$ for some integer $d$, 
$\opt_{JMS}(q, T) \leq V(z) + T(M(z) - 1)$. 
\end{restatable}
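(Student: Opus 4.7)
The plan is to prove the statement by LP duality: construct an explicit feasible solution for the dual LP $DP_{JMS}(q,T)$ of value $V(z) + T(M(z)-1)$, so that weak duality gives the desired upper bound on $\opt_{JMS}(q,T)$. The construction proceeds in two stages: first build an explicit continuous dual solution to $CDP_{JMS}(T)$ parameterized by $z$, and then discretize it along the $1/q$-grid, crucially using the hypothesis $z = d/q$ to avoid boundary effects.

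For the first stage, define piecewise-constant functions $A, B, C$ on $[0,1]^2$ whose regions of constancy are carved out by the lines $x = z$, $y = z$, $x = 1-z$, $y = 1-z$, and $x = y$ (the same lines mentioned in the excerpt). The specific values within each region are chosen so that: (i) for every $i \in [0,1]$ the normalization $\int_0^i A_{i,j}\, dj + \int_i^1 C_{i,j}\, dj = 1$ holds; (ii) for every $0 \leq j \leq i \leq 1$ the cumulative constraint $\int_j^i B_{\ell,j}\, d\ell \geq \int_j^i A_{\ell,j}\, d\ell$ holds; (iii) the pointwise maxima $N_i = \max(\sup_{j \leq i} B_{i,j},\, \sup_{j \geq i} C_{i,j})$ integrate to $M(z)$; and (iv) the supremum over $j$ of the total $V$-budget expression equals exactly $V(z)$. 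The pair $(V(z), M(z))$ is then read off from the objective function $V - (M-1)T$.

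For the second stage, partition $[0,1]^2$ into $q^2$ axis-aligned squares of side length $1/q$ and define, for each valid index pair, $A^{\mathrm{disc}}_{i,j} := \int_{(i-1)/q}^{i/q}\!\int_{(j-1)/q}^{j/q} A_{s,t}\, dt\, ds$, and analogously for $B$ and $C$; for $N^{\mathrm{disc}}_i$ use the corresponding maxima over discrete $j$. Because $z = d/q$, every defining line of the continuous regions coincides with a grid line, so each small square lies entirely inside one region and the integrands are constant there. As a consequence, integration over squares commutes with the operations appearing in all constraints of $DP_{JMS}(q,T)$: equality constraints and the $V$-budget inequality reduce to the corresponding continuous identities, and the cumulative $B \geq A$ inequality is preserved term-by-term. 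Similarly the discrete objective $V - (M-1)T$ equals $V(z) + T(M(z)-1)$ after summing $N^{\mathrm{disc}}_i$, since the discrete maximum agrees with the continuous supremum on each row of squares.

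The main obstacle is the $N / M$ constraint, which is nonlinear because of the max operator and therefore does not trivially commute with integration. The grid-alignment condition $z = d/q$ is precisely what resolves this: on each square, $B$ and $C$ are constant, so the maximum over the discretized index set $j$ in row $i$ equals the supremum over the corresponding interval in the continuous problem, and the telescoping/Riemann-sum passage from $\sum_i N^{\mathrm{disc}}_i$ to $\int_0^1 N_i\, di = M(z)$ is exact. After verifying the cumulative $B \geq A$ inequality on each partial sum (which again uses constancy of the integrands on individual squares, so partial sums of discrete variables equal integrals over rectangles of continuous variables), feasibility is complete and weak duality concludes the proof.
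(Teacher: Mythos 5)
The high-level strategy you describe — construct an explicit piecewise-constant continuous dual solution to $CDP_{JMS}(T)$ and then discretize it by integrating over $1/q$-squares — is the same one the paper uses, but your key technical claim is wrong and the proof does not close as written. You assert that because $z = d/q$, ``every defining line of the continuous regions coincides with a grid line, so each small square lies entirely inside one region and the integrands are constant there.'' That is false for the paper's construction: the regions are also cut by the lines $x = 1-y$, $x = 1-z-y$, and $x = 1/2$, where $y := z^2/(1-z)$, and none of these is an integer multiple of $1/q$ for a generic $z = d/q$. So the integrands $A, B, C$ are \emph{not} constant on most $1/q$-squares, and the later steps of your argument that lean on constancy (``on each square, $B$ and $C$ are constant, so the discrete maximum equals the continuous supremum''; ``partial sums of discrete variables equal integrals over rectangles'') are unjustified.

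The paper's actual proof does not need square-by-square constancy anywhere except along the diagonal strip: it verifies feasibility constraint by constraint using only additivity of the integral, and it invokes $z = d/q$ for two narrow purposes. First, the cumulative constraint \eqref{eq:r} has a boundary issue: the discrete partial sum $\sum_{\ell=j+1}^i b_{\ell,j}$ starts one grid step above the diagonal, while the continuous domination starts at $\ell = j$, so one must account for the diagonal integrals $a'_j := \int_{s_j}^{t_j}\int_{s_j}^{t_j} A$ and $b'_j := \int_{s_j}^{t_j}\int_{s_j}^{t_j} B$ that were dropped (there is no discrete variable $a_{j,j}$ or $b_{j,j}$). The paper shows $a'_j = b'_j$ for every $j$; this uses $z = d/q$ to align only the lines $x=z$, $x=1-z$, and $x=y$ with the grid, which is all that matters near the diagonal. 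Second, in bounding $m_i$ by $\int_{s_i}^{t_i} N_y$, the extra diagonal term folded into $c_{i,i}$ is shown not to increase the row maximum, again using only the region structure near the diagonal. Your discretization also omits the crucial modification that the paper makes: when $i=j$, the discrete variable $c_{i,i}$ must absorb the diagonal integral $\int_{s_i}^{t_i}\int_{s_i}^{t_i} A$ in addition to the integral of $C$; without this, the constraint $\sum_{j<i} a_{i,j} + \sum_{j \geq i} c_{i,j} = 1$ fails because the mass of $A$ on the diagonal strip is lost. To repair your argument, drop the blanket constancy claim, add the diagonal correction to $c_{i,i}$, verify each constraint by additivity, and isolate the precise places (the diagonal squares) where $z=d/q$ is genuinely needed.
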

\noindent 
We now have all the ingredients to prove Lemma \ref{lem:jms-main}.
\begin{proof}[Proof of Lemma \ref{lem:jms-main}]
Let $z^* = \argmin_{z \in [0, 1/3]} (V(z) + T(M(z) - 1))$. For any $\eps > 0$, one can choose $c, d \in \N$ such that $z' := \frac{d}{cq}\in [0,1/3]$ is close enough to $z^*$ to satisfy
$V(z') + T(M(z') - 1)< V(z^*) + T(M(z^*) - 1) + \eps$.
$$
\opt_{JMS}(q,T)\overset{Lem. \ref{lem:jms-split}}{\leq} \opt_{JMS}(cq,T)\overset{Lem. \ref{lem:jms-disc}}{\leq} V(z') + T(M(z') - 1)< V(z^*) + T(M(z^*) - 1)+\eps. 
$$
Since this holds for any $\eps$, we have $\opt_{JMS}(q,T)\leq \min_{z \in [0, 1/3]} ( V(z) + T(M(z) - 1) )$, hence the claim.
\end{proof}

It remains to prove Lemma ~\ref{lem:jms-disc}. As mentioned before, the proof of this lemma first considers the {\em continuous case} where the set of clients becomes a continuous set $[0, 1]$ followed by a discretization step. 
Section~\ref{sec:jms-setup} and~\ref{sec:jms-cont} will discuss the factor-revealing LP for the continuous case. 

\subsection{Reduction to dual continuous LP}
\label{sec:jms-setup}
In this subsection, we drive the dual LP $DP_{JMS}(q, T)$ and the continuous dual LP $CDP_{JMS}(T)$ from the factor-revealing LP $LP_{JMS}(q, T)$ introduced before.

We will primarily use dual variables for~\eqref{eq:threen},~\eqref{eq:fourn}, and~\eqref{eq:fiven}. 
(Dual variables for other constraints will be implicitly defined depending on them, but we will not use \eqref{eq:onen} at all.)
Let $\{ A_{i,j} \}_{1 \le j < i \leq q}$, $\{ B_{i,j} \}_{1 \le j < i \leq q}$, $\{ C_{i,j} \}_{1 \leq i \leq j \leq q}$ be the dual variables corresponding to~\eqref{eq:threen},~\eqref{eq:fourn}, and~\eqref{eq:fiven}. 
Note that these are the only variables that contain $\alpha_i, d_i$ primal variables apart from~\eqref{eq:zeron} and~\eqref{eq:onen}. 
Next, we will go over each dual constraint and give some conditions that our dual solutions need to satisfy: 

\begin{itemize}
\item $\alpha$ and $d$: 
In our dual solution, we will satisfy the constraints induced by $\alpha_i$'s with equality, which implies
\begin{equation}
\forall i: \sum_{j=1}^{i-1} A_{i, j} + \sum_{j=i}^q C_{i,j} = 1 \label{eq:alpha}
\end{equation}

For $d_i$'s, we have 
\begin{equation}
\forall j: 
\sum_{i=j+1}^q A_{i,j} + \sum_{i=1}^{j-1} A_{j,i} + 
\sum_{i=j+1}^{q} B_{i,j} + \sum_{i=1}^{j} C_{i,j} \leq V \label{eq:d}
\end{equation}
where $V$ will be the dual value corresponding to \eqref{eq:zeron}. (Note that $i,j$ is swapped only for the second term because~\eqref{eq:threen} has both $d_i$ and $d_j$.) 

\item $r$: 
The coefficient of $r_{j,i}$ is increased by $B_{i,j}$ and decreased by $A_{i,j}$. Also,~\eqref{eq:twon} increases the coefficient of $r_{j,i+1}$ and decrease the coefficient of $r_{j,i}$. (Intuitively, for same $j$, we can increase the coefficient for higher $i$ and decrease the coefficient for smaller $i$.)
Therefore, as long as we satisfy 
\begin{align}
& \forall 1 \leq j < q: \mbox{ the sequence } (B_{j+1,j}, ..., B_{q,j}) \mbox{ ``dominates'' } (A_{j+1,j}, ..., A_{q,j}) \nonumber \\
\Leftrightarrow \quad & \forall 1 \leq j < i \leq q: \sum_{\ell=j+1}^i B_{\ell,j} \geq \sum_{\ell=j+1}^i A_{\ell,j}, 
\label{eq:r}
\end{align}
then we can make sure that the dual constraints for $r_{j,i}$'s are satisfied. 

\item $g$ and $h$: 
Once we fix $A, B, C$, then for each $i \in [q]$, the dual variable corresponding to the $i$th equation of~\eqref{eq:sixn} should be 
\[
N_i := \max \bigg( \max_{j=1}^{i-1} B_{i,j}, \quad \max_{j=i}^q C_{i,j} \bigg).
\]
due to the constraints corresponding to $g, h$.

\item $f$: 
Let $M := \sum_i N_i$. (We will make sure $M \geq 1$ always.) Then from the dual constraint for $f$, we will have the dual variable for $f \leq T$ exactly $(M - 1)$. 
And the dual objective function value is $V + T(M-1)$. 
\end{itemize}

So that's the rule of the game: find $A, B, C$ that satisfy~\eqref{eq:alpha},~\eqref{eq:d}, and~\eqref{eq:r} to minimize the value $V + T(M-1)$. 
The JMS gave every $A,B,C$ variable $1/q$, so that $V = 2-1/q$ and $M = 1$. 

To ignore the diagonal issues, let us consider the {\em continuous setting} where for $A,B,C$, we divide each index by $q$ and multiply each value by $q$. As $q$ goes to infinity, then we are basically in the setting where we want to give a map $A, B : L \to \RP$ and $C : U \to \RP$ where 
\begin{align*}
L := \{ (i,j) \in [0,1]^2 : j \leq i \} \\
U := \{ (i,j) \in [0,1]^2 : j \geq i \}
\end{align*}
Then the three conditions~\eqref{eq:alpha},~\eqref{eq:d}, and~\eqref{eq:r} will be translated as

\begin{equation}
\forall i \in [0, 1]: \int_{j=0}^i A_{i, j}  + \int_{j=i}^1 C_{i,j}  = 1 \label{eq:alpha:c}
\end{equation}

\begin{equation}
\forall j \in [0, 1]: 
\int_{i=j}^1 A_{i,j}  + \int_{i=0}^j A_{j,i}  + 
\int_{i=j}^1 B_{i,j}  + \int_{i=0}^j C_{i,j}  \leq V \label{eq:d:c}
\end{equation}

\begin{align}
&\forall j \in [0, 1]: \mbox{ the function } B(\cdot, j) \mbox{ ``dominates'' } A(\cdot, j) \nonumber \\
\Leftrightarrow \quad & \forall \, 0 \leq j \leq i \leq 1: \int_{\ell=j}^i B_{\ell, j} \geq \int_{\ell=j}^i A_{\ell, j}, 
\label{eq:r:c}
\end{align}
and 
\[
N_i := \max \bigg( \max_{j \in [0, i]} B_{i,j}, \quad \max_{j \in [i, 1]} C_{i,j} \bigg), \qquad M := \int_{i=0}^1 N_i, 
\]
\noindent
and the objective value is $V - (M-1)T$. Note that making $A$, $B$, $C$ the constant 1 functions satisfy every constraint and give the value $2$ for any $T$. 
In Section~\ref{sec:jms-cont}, we construct a dual solution for this continuous dual program.
In Section~\ref{sec:jms-disc}, we will show how to convert it back to a dual solution for the discrete dual LP. 

\subsection{Solution to dual continuous LP}
\label{sec:jms-cont}
Let $z \in [0, 1/3]$ be a small constant, $y := z^2/(1-z)$, and 
\begin{align*}
V(z)& = \max\{2 - \frac{z}{1 - z}, 2 - \frac{2z}{1 - z} + \ln ( 1 + \frac{z}{1 - 2z}) + \frac{4z^2}{(1-z)(1-2z)} \}, \mbox{ and}  \\
M(z) - 1 & = \ln(1 + \frac{z}{1 - 2z}) - \frac{z}{1 - z} + \frac{2z^2}{(1-z)(1 - 2z)}.
\end{align*}
We are ready to construct a slightly better dual solution parameterized by $z$. 

\begin{itemize}
\item $A(i, j)$ is equal to
\begin{itemize}
\item $1/(1-z)$ when $i \in [z, 1-z]$, $j \in [z, i]$. 
\item $1/z$ when $i \in [1 - z, 1 - y], j \in [0, z]$. 
\item $1/(0.5-z)$ when $i \in [1 - y, 1], j \in [z, 0.5]$. 
\item $0$ otherwise. 
\end{itemize}

\item $B(i, j)$ is equal to 
\begin{itemize}
\item $1/(1-z)$ when $i \in [z, 1-z], j \in [0, i]$. 
\begin{itemize}
\item But, when $i \in [1-z- y, 1-z]$ and $j \in [z, 0.5]$, let $B(i,j) = 1/(1-z) + 1/(0.5-z)$. 
\end{itemize}
\item $0$ otherwise. 
\end{itemize}

\item $C(i, j)$ is equal to
\begin{itemize}
\item $1 / (1 - z - i)$ when $i \in [0, z]$ and $j \in [i, 1-z]$. 
\item $1/(1-z)$ when $i \in [z, 1-z]$ and $j \in [i, 1]$. 
\item $0$ otherwise. 
\end{itemize}
\end{itemize}

Let us first check~\eqref{eq:alpha:c}
\begin{equation*}
\forall i \in [0, 1]: \int_{j=0}^i A_{i, j}  + \int_{j=i}^1 C_{i,j}  = 1.
\end{equation*}
\begin{itemize}
\item $i \in [0, z]$: $0$ from $A$, $1$ from $C$, for each $i$, $C(i, j) = 1/(1-z-i)$ for $j \in [i, 1-z]$. 
\item $i \in [z, 1-z]$: $(i-z)/(1-z)$ from $A$ and $(1-i)/(1-z)$ from $C$. 
\item $i \in [1-z, 1-y]$: $1$ from $A$ (value $1/z$, interval length $z$). 
\item $i \in [1-y, 1]$: $1$ from $A$ (value $1/(0.5-z)$, interval length $0.5-z$). 
\end{itemize}

Let us check~\eqref{eq:r:c}
\begin{equation*}
\forall j \in [0, 1]: B(\cdot, j) \mbox{ ``dominates'' } A(\cdot, j), 
\label{eq:r:c*}
\end{equation*}
\begin{itemize}
\item $j \in [0, z]$: $A_{i,j} = 1/z$ when $i \in [1-z, 1-y]$. $B(i, j) = 1/(1-z)$ when $i \in [z, 1-z]$: For both $A$ and $B$, the integral is 
\begin{equation}
\frac{1}{z} \cdot(z -  y) = 
\frac{1}{z} \cdot(z -  \frac{z^2}{1-z}) = 
\frac{1-2z}{1-z},
\label{eq:vertical1}
\end{equation}
and clearly $B(\cdot, j)$ dominates $A(\cdot, j)$. 

\item $j \in [z, 0.5]$: $A_{i,j} = 1/(1-z)$ when $i \in [j, 1-z]$ and $1/(0.5-z)$ when $i \in [1-y, 1]$. 
$B_{i,j} = 1/(1-z)$ when $i \in [j, 1-z-y]$ and $1/(1-z) + 1/(0.5-z)$ when $i \in [1-z-y, 1-z]$. For both $A$ and $B$, 
the integral is 
\begin{equation}
\frac{1-z-j}{1-z} + \frac{y}{0.5-z}
\label{eq:vertical2}
\end{equation}
and clearly $B(\cdot, j)$ dominates $A(\cdot, j)$. 

\item $j \in [0.5, 1-z]$: $A_{i, j} = B_{i,j} = 1/(1-z)$ for $i \in [j, 1-z]$ and $0$ otherwise. 
The sum is 
\begin{equation}
\label{eq:vertical3}
(1 - z - j) / (1- z)
\end{equation}
for both $A$ and $B$. 

\item $j \in [1-z, 1]$: Both $A$ and $B$ are all zero here. 
\end{itemize}

We finally check~\eqref{eq:d:c}.

\begin{equation*}
\forall j \in [0, 1]: 
\int_{i=j}^1 A_{i,j}  + \int_{i=0}^j A_{j,i}  + 
\int_{i=j}^1 B_{i,j}  + \int_{i=0}^j C_{i,j}  \leq V
\end{equation*}

\begin{itemize}

\item For $j \in [1-z, 1]$: Both $A(\cdot, j)$ and $B(\cdot, j)$ are zero here. 
Since $C(j, \cdot) = 0$ here, $\int_{i} A(j, i) = 1$, 
$C(i, j)$ is $1/(1-z)$ when $i \in [z, 1-z]$. So the total sum is 
\[
\int_{i=0}^j A_{j,i} +  \int_{i=0}^j C_{i,j} = 1 + (1 - 2z) / (1-z) = 2 - \frac{z}{1 - z} \leq V(z).
\]

\item For $j \in [0.5, 1 - z]$: 
By~\eqref{eq:vertical3}, 
\[
\int_{i=j}^1 A_{i,j} + \int_{i=j}^1 B_{i,j} = 2(1-z-j)/(1-z).
\]
$C(i, j) = 1/(1-z)$ for $i \in [z, j]$ and $1/(1-z-i)$ for $i \in [0, z]$. 
Finally, $A(j, i) = 1/(1-z)$ for $i \in [z, j]$. Therefore, the total sum is 
\begin{align*}
&\int_{i=j}^1 A_{i,j}  + \int_{i=0}^j A_{j,i}  + 
\int_{i=j}^1 B_{i,j}  + \int_{i=0}^j C_{i,j}
= \frac{2(1-z-j)}{1-z} + \frac{2(j - z)}{1-z} + \int_{i=0}^z \frac{1}{1-z-i} \\
= &\frac{2(1-2z)}{1-z} + (\log(1-z) - \log (1-2z)) =
 \frac{2-4z}{1-z} + (\log(1 + \frac{z}{1-2z}))
\leq V(z).
\end{align*} 

\item For $j \in [z, 0.5]$: By~\eqref{eq:vertical2},
\[
\int_{i=j}^1 A_{i,j} + \int_{i=j}^1 B_{i,j} = \frac{2(1-z-j)}{1-z} + \frac{2y}{0.5-z}, 
\]
which is $\frac{2y}{0.5-z} = \frac{4z^2}{(1 - z)(1 - 2z)}$ more than the $j \in [0.5, 1-z]$ case. Everything else is identical. It is still at most $V(z)$. 

\item For $j \in [0, z]$: From~\eqref{eq:vertical1}, 
\[
\int_{i=j}^1 A_{i,j} + \int_{i=j}^1 B_{i,j} = \frac{2(1-2z)}{1 - z} = 2 - \frac{2z}{1-z}. 
\]
And $A_{j, i} = 0$. Therefore, 
\begin{align*}
&\int_{i=j}^1 A_{i,j}  + \int_{i=0}^j A_{j,i}  + 
\int_{i=j}^1 B_{i,j}  + \int_{i=0}^j C_{i,j} \\
=& 2 - \frac{2z}{1-z} + \int_{i = 0}^j \frac{1}{1-z-i}
= 2 - \frac{2z}{1-z} + (\log(1-z) - \log(1-z-y)) \\
\leq& 2 - \frac{2z}{1-z} + (\log(1-z) - \log(1-2z)) 
\leq V(z).
\end{align*}
\end{itemize}

Finally, let us compute $N_i$ and $M$. 
\begin{itemize}
\item For $i \in [0, z]$, $N_i := 1 / (1 - z - i)$. 
\item For $i \in [z, 1-z-y]$, $N_i := 1 / (1 - z)$. 
\item For $i \in [1-z-y, 1-z]$, $N_i := 1/(1-z) + 1 / (0.5 - z)$. 
\end{itemize}
Therefore, 
\[
M = \bigg( \log(1-z)-\log(1-2z) \bigg) + \frac{1 - 2z - y}{1 - z} + y \bigg( \frac{1}{1-z} + \frac{1}{0.5 - z} \bigg)  = M(z).
\]

\subsection{Conversion to Discrete Dual}
\label{sec:jms-disc}
Given the solutions to the continuous LP, the discretization step shown in the following claim proves Lemma~\ref{lem:jms-disc}. 
\begin{claim}
Fix $T > 0$, and consider the solution $A, B, C, V, N, M$ to the continuous dual program constructed in Section~\ref{sec:jms-cont}, which is parameterized by $z \in [0, 1/3]$.
If $z = d/q$ for some integer $d$, the value of the discrete factor-revealing LP with $q$ clients is at most the dual value for the continuous program, which is $V + T(M - 1)$. 
\end{claim}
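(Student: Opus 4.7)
The plan is to discretize the continuous dual solution $(A^{(c)}, B^{(c)}, C^{(c)}, V(z), M(z))$ constructed in Section~\ref{sec:jms-cont} into a feasible solution of $DP_{JMS}(q,T)$ whose objective is at most $V(z) + T(M(z)-1)$; weak LP duality applied to $LP_{JMS}(q,T)$ then yields the claimed upper bound on $\opt_{JMS}(q,T)$.

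First I would average the continuous variables over the $1/q \times 1/q$ grid squares $S_{i,j} := [(i-1)/q, i/q] \times [(j-1)/q, j/q]$, scaled by $q$ so that the trivial continuous solution $A^{(c)} \equiv B^{(c)} \equiv C^{(c)} \equiv 1$ reproduces the JMS baseline $A_{i,j} = B_{i,j} = C_{i,j} = 1/q$. Concretely, for $1 \leq j < i \leq q$ set $A_{i,j} := q \int_{S_{i,j}} A^{(c)}$, $B_{i,j} := q \int_{S_{i,j}} B^{(c)}$, and for $1 \leq i < j \leq q$ set $C_{i,j} := q \int_{S_{i,j}} C^{(c)}$. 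The diagonal square $S_{i,i}$ is split by the line $y=x$ into a lower half $T_i^L \subset L$ (carrying $A^{(c)},B^{(c)}$ mass) and an upper half $T_i^U \subset U$ (carrying $C^{(c)}$ mass). Since the discrete LP has no $A_{i,i}$ or $B_{i,i}$, I would absorb the $T_i^L$ mass of $A^{(c)}$ into $C_{i,i}$ by setting $C_{i,i} := q\int_{T_i^U} C^{(c)} + q\int_{T_i^L} A^{(c)}$, and discard the $T_i^L$ mass of $B^{(c)}$ (doing so only reduces the $B$-sums appearing in constraints~\eqref{eq:d} and~\eqref{eq:r}, so those constraints only become easier).

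Next I would verify the three key dual constraints by integrating their continuous counterparts over the appropriate $y$-strip and multiplying by $q$. Constraint~\eqref{eq:alpha} follows from integrating the continuous identity $\int_0^x A^{(c)}\,dy + \int_x^1 C^{(c)}\,dy = 1$ over $x \in [(i-1)/q, i/q]$: the result is $\sum_{j<i} A_{i,j} + C_{i,i} + \sum_{j>i} C_{i,j} = 1$ \emph{exactly}, with the diagonal-triangle contribution of $A^{(c)}$ appearing inside $C_{i,i}$ by our choice. Constraint~\eqref{eq:r} follows from the continuous dominance $\int_j^i B^{(c)}(\ell, y)\,d\ell \geq \int_j^i A^{(c)}(\ell, y)\,d\ell$, restricted to grid values of $i,j$ and averaged over $y \in [(j-1)/q, j/q]$. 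Constraint~\eqref{eq:d} follows from $\Phi(y) \leq V(z)$: a direct bookkeeping shows that the discrete LHS is $\Phi_{\text{disc}}(j) = q\int_{(j-1)/q}^{j/q} \Phi(y)\,dy - q\int_{T_j^L} A^{(c)} - q\int_{T_j^L} B^{(c)} \leq V(z)$, the two subtractions coming from the $A$-sums being truncated at the grid line (with the missing mass absorbed into $C_{j,j}$) and from the discarded $B^{(c)}$ mass. Finally, defining $N^{(c)}(x) := \max\bigl(\sup_{y \leq x} B^{(c)}(x,y),\ \sup_{y \geq x} C^{(c)}(x,y)\bigr)$ gives $\int_0^1 N^{(c)} = M(z)$ by construction; the bound $B_{i,j}, C_{i,j} \leq \int_{(i-1)/q}^{i/q} N^{(c)}$ is immediate for off-diagonal entries, and for the diagonal cell $C_{i,i}$ the pointwise inequality $A^{(c)} \leq B^{(c)}$ on $T_i^L$ (verified regime-by-regime from Section~\ref{sec:jms-cont}) together with $C^{(c)} \leq N^{(c)}$ on $T_i^U$ yields $C_{i,i} \leq q\int_{S_{i,i}} N^{(c)}(x)\,dy\,dx = \int_{(i-1)/q}^{i/q} N^{(c)}$. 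Summing over $i$ gives $M = \sum_i N_i \leq M(z)$, and the dual objective $V + T(M-1) \leq V(z) + T(M(z)-1)$.

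The main obstacle is the bookkeeping at the diagonal squares $S_{i,i}$: one must simultaneously maintain the \emph{equality} in~\eqref{eq:alpha} (which forces $A^{(c)}|_{T_i^L}$ to be assigned \emph{somewhere}), preserve the \emph{inequality} in~\eqref{eq:d} (ruling out many naive assignments that would inflate $\Phi_{\text{disc}}$), and keep the $N_i$-bound intact (which requires the pointwise comparison $A^{(c)} \leq B^{(c)}$ on $T_i^L$). The assignment chosen above threads this needle because $\Phi_{\text{disc}}$ only \emph{decreases} relative to the continuous integral, and the two triangular areas in $S_{i,i}$ together equal the area of a standard off-diagonal square. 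The hypothesis $z = d/q$ is used to ensure that each diagonal square lies entirely inside a single regime of the piecewise-defined continuous solution (the regime boundaries $x,y \in \{z, 1-z\}$ align with the grid), so that the pointwise comparison on $T_i^L$ can be verified uniformly within each square without handling boundary-crossing cases.
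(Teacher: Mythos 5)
Your overall discretization scheme is the same as the paper's (grid-square averaging, absorbing the diagonal $A$-mass into $c_{i,i}$, discarding the diagonal $B$-mass), and your verification of~\eqref{eq:alpha},~\eqref{eq:d}, and the $N_i$-bound is correct. However, your treatment of constraint~\eqref{eq:r} has a genuine gap.

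You claim that discarding the $T_i^L$ mass of $B^{(c)}$ ``only reduces the $B$-sums appearing in constraints~\eqref{eq:d} and~\eqref{eq:r}, so those constraints only become easier.'' This is backwards for~\eqref{eq:r}: that constraint is $\sum_{\ell=j+1}^i B_{\ell,j} \geq \sum_{\ell=j+1}^i A_{\ell,j}$, so reducing the $B$-sum makes the constraint \emph{harder} to satisfy, not easier. Concretely, the continuous dominance~\eqref{eq:r:c}, integrated over $x \in [s_j,t_j]$ and cut at $y = t_i$, gives $\int_{T_j^L} B + \sum_{\ell>j} b_{\ell,j}/q \;\geq\; \int_{T_j^L} A + \sum_{\ell>j} a_{\ell,j}/q$; the discrete constraint requires the inequality between the off-diagonal parts alone, which follows only if $\int_{T_j^L} B \leq \int_{T_j^L} A$. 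Your stated pointwise inequality $A^{(c)} \leq B^{(c)}$ on $T_i^L$ (which you also invoke for the $N_i$-bound) gives $\int_{T_j^L} A \leq \int_{T_j^L} B$ --- the \emph{wrong} direction for~\eqref{eq:r}. What saves the argument is that in fact $A^{(c)} = B^{(c)}$ on every diagonal square (this is what $z = d/q$ buys you, since it keeps each $S_{i,i}$ inside a single regime, and in every regime the two functions agree on the sub-diagonal strip), so both $\int_{T_j^L} A = \int_{T_j^L} B$ and the dominance over the off-diagonal part hold. The paper explicitly flags this as ``the only worry'' and checks the equality regime by regime; your proof needs that equality (or the inequality $\int_{T_j^L} B \leq \int_{T_j^L} A$) made explicit, not the one-sided bound $A \leq B$ you state.
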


\begin{proof}
We first construct dual solutions $\{ a_{i,j} \}_{1 \leq j < i \leq q}$, 
$\{ b_{i,j} \}_{1 \leq j < i \leq q}$, $\{ c_{i,j} \}_{1 \leq i \leq j \leq q}$ to the discrete factor-revealing LP; $V$ remains the same. 

They are defined as follows.
For $1 \leq i \leq q$, let $s_i := (i-1)/q$ and $t_i := i / q$. 
To simplify notations, we allow integrating $A, B,$ or $C$ over regions they are not defined; their value is $0$ in those regions. 
Also, we always use the standard Lebesgue measure on $\R$ and $\R^2$ and omit $dx dy$ notations. 
(Variables $y$ and $x$ correspond to $i$ and $j$ respectively.) 

\begin{itemize}
\item For all $1 \leq j < i \leq q$: $a_{i,j} = q \cdot \big( \int_{y=s_i}^{t_i} \int_{x=s_j}^{t_j} A_{y,x} \big)$. 
\item For all $1 \leq j < i \leq q$: $b_{i,j} = q \cdot \big( \int_{y=s_i}^{t_i} \int_{x=s_j}^{t_j} B_{y,x} \big)$. 
\item For all $1 \leq i \leq j \leq q$, $c_{i,j} = q \cdot \big( \int_{y=s_i}^{t_i} \int_{x=s_j}^{t_j} C_{y,x} \big)$. 
\begin{itemize}
\item If $i = j$, additionally increase $c_{i,j}$ by $q \cdot \big( \int_{y=s_i}^{t_i} \int_{x=s_i}^{t_i} A_{y,x} \big)$. 
\end{itemize}
\end{itemize}
We first check they satisfy~\eqref{eq:alpha},~\eqref{eq:d}, and~\eqref{eq:r}. 
\begin{itemize}
\item \eqref{eq:alpha}: for every $i \in [q]$, 
\begin{align*}
& \sum_{j=1}^{i-1} a_{i, j} + \sum_{j=i}^q c_{i,j} \\
=& q \cdot \bigg( \sum_{j=1}^{i-1} \int_{s_i}^{t_i} \int_{s_j}^{t_j} A_{y,x}  + \sum_{j=i}^q \int_{s_i}^{t_i} \int_{s_j}^{t_j} C_{y,x}  + \int_{s_i}^{t_i} \int_{s_i}^{t_i} A_{y, x}  \bigg) \\
=& q \cdot \int_{y=s_i}^{t_i} \bigg(\int_{x=0}^y A_{y,x} + \int_{x=y}^1 C_{y,x} \bigg) \\
=& 1,
\end{align*}
where the last equality follows from~\eqref{eq:alpha:c}. 

\item \eqref{eq:d}: for all $j \in [q]$, first note that 
\begin{align*}
& \frac{1}{q} \cdot \bigg( \sum_{i=j+1}^q a_{i,j} + \sum_{i=1}^{j-1} a_{j,i} + 
\sum_{i=j+1}^{q} b_{i,j} + \sum_{i=1}^{j} c_{i,j}  \bigg) \\
=& \sum_{i=j+1}^q \int_{s_i}^{t_i} \int_{s_j}^{t_j} A_{y,x} + \sum_{i=1}^{j-1} \int_{s_i}^{t_i} \int_{s_j}^{t_j} A_{x,y} + 
\sum_{i=j+1}^{q} \int_{s_i}^{t_i} \int_{s_j}^{t_j} B_{y,x} + \sum_{i=1}^{j} \int_{s_i}^{t_i} \int_{s_j}^{t_j} C_{y,x}  
+ \int_{s_j}^{t_j} \int_{s_j}^{t_j} A_{y,x},
\end{align*}
where the last $\int_{s_i}^{t_i} \int_{s_j}^{t_j} A_{y,x}$ term appears because of the 
additional increases for $c_{j, j}$. This is at most 
\begin{align*}
&  \int_{x=s_j}^{t_j} \int_{y = x}^{1} A_{y,x} + \int_{x=s_j}^{t_j} \int_{y=0}^{x} A_{x,y} + 
\int_{x=s_j}^{t_j} \int_{y = x}^{1} B_{y,x} + \int_{x=s_j}^{t_j} \int_{y = 0}^{x} C_{y,x} \\
\leq & \int_{x=s_j}^{t_j} 
\bigg( 
\int_{y = x}^{1} A_{y,x} + \int_{y=0}^{x} A_{x,y} + \int_{y = x}^{1} B_{y,x} + \int_{y = 0}^{x} C_{y,x}
\bigg) 
\leq \frac{V}{q}. 
\end{align*}
where the last inequality follows from~\eqref{eq:d:c}.

\item \eqref{eq:r}: we use~\eqref{eq:r:c} that proved $B(\cdot, x)$ dominates $A(\cdot, x)$ for every $x \in [0, 1]$, 
which almost immediately implies~\eqref{eq:r}. 
The only worry is that we {\em ignored} diagonals; for $j \in [q]$, since $a_{j, j}$ and $b_{j, j}$ are not variables,
the integrals $a'_j := \int_{s_j}^{t_j} \int_{s_j}^{t_j} A_{y, x}$ and $b'_j := \int_{s_j}^{t_j} \int_{s_j}^{t_j} B_{y, x}$
were not added to any discrete variable, and it can possibly violate \eqref{eq:r} if the latter is strictly larger than the former.
So below we check that $a'_j$ and $b'_j$ are exactly same for every $j \in [q]$. 
Here we use that $z = d/q$ for some integer $d$. 

\begin{itemize}
\item $1 \leq j \leq d$: both $a'_j$ and $b'_j$ are zero. 
\item $d <  j \leq q - d$: both $a'_j$ and $b'_j$ are $(1/q^2) \cdot (1/(1-z))$. 
\item $q - d < j \leq q$: both $a'_j$ and $b'_j$ are zero. 
\end{itemize}
\end{itemize}

Let $m_i = \max\big( \max_{j=1}^{i-1} b_{i,j}, \, \, \max_{j=i}^q c_{i,j} \big)$ for $i \in [q]$ and $N_y = \max\big( \max_{x \in [0, y]} B_{y,x}, \, \, \max_{x \in [y, 1]} C_{y, x} \big)$ for $y \in [0, 1]$. 
We show that 
\begin{equation}
m_i \leq \int_{y=s_i}^{t_i} N_y,
\label{eq:cont-disc-m}
\end{equation}
which implies that the same $M = \int_{y=0}^1 N_y$ is an upper bound of $\sum_{i=1}^q m_i$ and finishes the proof.
By definition, for $i \in [q]$, $m_i = b_{i, j}$ for $j < i$ or $c_{j ,i}$ for $j \geq i$. 
If $m_i = b_{i, j}$ for $j < i$, 
\[
m_i = b_{i,j} = q\cdot \int_{y=s_i}^{t_i} \int_{x=s_j}^{t_j} B_{y,x} \leq \int_{y=s_i}^{t_i} N_y. 
\]
The other case $m_i = c_{i, j}$ can be shown similarly. 
The only possible worry is that $c_{i, i}$ for $i \in [d + 1, q - d]$ added an additional integral $\int_{s_i}^{t_i} \int_{s_i}^{t_i} A_{y, x}$.
But in those regions all $A$, $B$, and $C$ have the same value $1 / (1-z)$, and $A$ and $C$ are defined only in the lower triangle and upper triangle respectively, 
$c_{i, i} = c_{i, i+1} = \dots c_{i, q} = 1/(1-z)$ for $i \in [d+1, q - d]$. Therefore, adding an additional integral to $c_{j, j}$ does not increase $m_i$ and~\eqref{eq:cont-disc-m} still holds. 
\end{proof}

\section{Proof of Theorem \ref{thr:localSearch}}
\label{sec:localSearchOmitted}

In this section we prove Theorem \ref{thr:localSearch}.
We assume that the size of the swaps done by the algorithm is $\Delta = (2/\eps)^{2/\eps^7}$.
We split the proof in two parts. In Section \ref{sec:localSearchOmitted:S2} we will consider
the simpler case $|S'|>k$, which also assumes swaps of smaller size.
This helps to introduce part of the ideas. In Section \ref{sec:localSearchOmitted:S1} we will consider the more complex case
$|S'| \le k$.
We start with the following lemma showing that local search can be made polynomial-time.
\begin{lemma}
  For any $\eps > 0$,
  the results of Lemmas~\ref{lem:localSearchBound_S2} and \ref{lem:improvedS1} can be obtained in polynomial time up to
  losing an additive factor of $O(\eps (\opt + d'))$.
  \label{lem:discrete_localsearch}
\end{lemma}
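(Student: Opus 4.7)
The plan is to modify the local search with the standard threshold trick: only accept a swap pair $(A,B)$ when it decreases the total cost $\lambda|S'|+d(S')$ by at least $\eps\cdot\widehat{\opt}/N$, where $\widehat{\opt}$ is a constant-factor approximation of $\opt$ (for instance the total cost of the initial JMS solution, which is at most $\lambda k+2\opt$) and $N=\poly(|C|+|F|)$ is a polynomial to be fixed below. This classical modification is designed to guarantee polynomial termination while introducing only the additional additive slack claimed in the statement.

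First I would perform the usual preprocessing so that the relevant aspect ratio of the metric is polynomial: distances of order $\eps\widehat{\opt}/\poly(|C|+|F|)$ or smaller are rounded down to $0$, contributing at most $O(\eps\opt)$ to the final error. Since local search is seeded with the JMS solution, the initial total cost is at most $\lambda k+2\opt$. Each accepted swap then reduces this cost by at least $\eps\widehat{\opt}/N$, so the number of iterations is $O(N/\eps)$. Each iteration enumerates at most $(|F|+|S'|)^{2\Delta}$ candidate swap pairs, which is polynomial in $|C|+|F|$ since $\Delta$ depends only on $\eps$, giving polynomial overall running time.

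Next, I would repeat the proofs of Lemma~\ref{lem:localSearchBound_S2} and Lemma~\ref{lem:improvedS1} under the weakened local-optimality guarantee that at termination every feasible swap pair $(A_i,B_i)$ satisfies
\[
\bigl(\lambda|S'\setminus A_i\cup B_i|+d(S'\setminus A_i\cup B_i)\bigr)-\bigl(\lambda|S'|+d(S')\bigr)\geq -\tfrac{\eps}{N}\widehat{\opt}.
\]
Both proofs combine a polynomial (in $|C|+|F|$) number of such swap-pair inequalities with polynomially bounded coefficients, so choosing $N$ a sufficiently large polynomial makes the aggregated extra slack bounded by $O(\eps\opt)\subseteq O(\eps(\opt+d'))$, matching the claimed error term.

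The main obstacle is the final bookkeeping in the analysis: I must inspect the swap-pair construction in Theorem~\ref{thr:localSearch} and verify that only polynomially many swap pairs with polynomially bounded coefficients enter the linear combination used in the proof. This should be essentially automatic because the construction in Theorem~\ref{thr:localSearch} produces $O(|F|)$ swap pairs and charges each client across only a constant number of them, so the accumulated slack stays within the claimed $O(\eps(\opt+d'))$ budget.
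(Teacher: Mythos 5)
Your high-level idea (stop the local search once improvements fall below a small threshold, accept the resulting slack, and propagate it through the finitely many swap inequalities used in the analysis) is indeed the standard way to make local search polynomial, and it is also the spirit of the paper's proof. However, there is a genuine gap in the way you set your threshold: you take it to be $\eps\widehat{\opt}/N$ with $\widehat{\opt}\approx\lambda k+2\opt$, i.e., the \emph{total} cost including the Lagrangian opening cost. With this choice, each approximate swap inequality carries slack $\eps\widehat{\opt}/N$, and after summing the (polynomially many) inequalities used in Lemmas~\ref{lem:localSearchBound_S2} and~\ref{lem:improvedS1} the accumulated error is $O(\eps\widehat{\opt}) = O(\eps(\lambda k + \opt))$, not the claimed $O(\eps(\opt+d'))$. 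Since $\lambda$ can be arbitrarily large relative to $\opt$ and $d'$, this silently degrades the LMP guarantee: you would end up with a $(1+O(\eps))\lambda k$ on the right-hand side instead of $\lambda k$, which is precisely what an LMP statement cannot afford. If you instead tie the threshold to $\opt$ alone, the number of iterations $(\lambda k+2\opt)/(\eps\opt/N)$ is no longer polynomial.

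The missing ingredient is the preprocessing that ensures the \emph{total} initial cost, including the opening cost, is within a polynomial factor of the connection-cost estimate. The paper's proof first guesses an estimate $\eta$ of the optimal connection cost, decomposes the instance into connected components of diameter controlled by $\eta$ (so that distances inside a component are between $\eps\eta/n'^2$ and $n'\eta$), and crucially argues that the per-facility opening cost may be assumed to lie in the window $[\eps^2\eta/n'^2,\,2n'^2\eta]$: if it is larger the optimum opens a single facility and the instance is trivial; if it is smaller one may open a facility near every client. Only after this step does the polynomial aspect ratio give both polynomially many iterations and per-step slack $O(\eps\eta/\poly)$, so that the $k$ inequalities used in the analysis accumulate at most $O(\eps\eta)\subseteq O(\eps\opt)$ error. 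Your preprocessing (rounding small distances to $0$) addresses only the lower end of the metric aspect ratio; you would still need to bound the opening cost relative to $\opt$, and the component decomposition to bound the upper end. Without this step, the claim ``choosing $N$ sufficiently large makes the aggregated extra slack $O(\eps\opt)$'' does not go through.
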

\begin{proof}
  We apply the following standard preprocessing of the graph. Let $\eta$ be the connection cost of the optimum solution.
  Our algorithm partitions the input metric by creating a graph whose vertex set is the set of points of the input metric space and
  whose edge set is the set of all pairs of
  points at distance at most $\eta$;
  and applies the local search algorithm independently in each subinstance defined by the set of points that are in the same connected component.

  Since the optimum connection cost is at most $\eta$, it is immediate that any optimum clustering is such that no two points of different connected
  components are in the same cluster, and so, obtaining the approximation guarantees of Lemmas~\ref{lem:localSearchBound_S2}\ref{lem:improvedS1} for
  each subinstance is enough to recover the same approximation guarantees for the whole instance.

  Next, for each subinstance of size $n'$, let us assume that the minimum distance between any pair of points is $\eps\eta/n'^2$ -- this can
  be achieved up to losing a factor $(1+\eps)$
  in the connection cost of the optimum solution by ``contracting'' each pair of points at distance smaller than $\eps\eta/n'^2$.
  Moreover, the maximum distance between any two points of the subinstance
  is at most $n' \cdot \eta$. Finally, we can assume without loss of generality that the opening cost is at most $2n'^2 \cdot \eta$ (since otherwise
  we know that the optimum solution opens at most 1 center and we can solve the instance optimally) and at least $\eps^2\eta/n'^2$
  (otherwise, since the minimum
  distance between a pair of points is $\eps\eta/n'^2$,
  we can obtain a $(1+\eps)$-approximate solution by opening a facility for each input client, at the candidate center location that is the closest).

  We thus have that for each subinstance on $n'$ points, the minimum cost is at least $\eps^2\eta/n'^2$ and the maximum is $\eta n'^2$.
  The local search is run such that a new solution
  is picked if its cost improves the cost of the current solution by a factor at least $(1+\eps^3/n'^5)$. Since the initial solution has cost at most
  $\eta n'^2$ and the final solution at most $\eps^2\eta/n'^2$, the number of steps is at most $O(n'^9\eps^{-5})$.
  Finally, two solutions of cost at least $\eps^2\eta/n'^2$ and at most $\eta n'^2$ differ in cost by a multiplicative factor at most
  $(1+\eps^3/n'^5)$ which is an additive factor of at most $\eps^3/n'^5 \cdot \eta n'^2 = \eps^3 / n'^3 \le \eps/n \cdot  \eps^2\eta/n'^2$.
  Since the analysis of Lemmas~\ref{lem:localSearchBound_S2}\ref{lem:improvedS1} uses at most $k$ potential swaps, the difference in cost between the analysis of the algorithm
  assuming it reaches a local optimum or a solution which is within a factor $(1+\eps^3/n'^5)$ is indeed at most $\eps \eta$ as desired.

  To conclude, the algorithm is run with $\eps^{-1} \log \frac{\text{maximum distance}}{\text{minimum distance}}$ different estimates of $\eta$,
  namely, all the powers of $(1+\eps)$ times the
  minimum distance between any pair of points, and the best solution is returned.
  \end{proof}

\subsection{Case $|S'|> k$.}
\label{sec:localSearchOmitted:S2}
We first describe a family of feasible swap pairs $(A_i,B_i)$ for $S'$, with $A_i \subseteq S'$ and $B_i \subseteq \OPT$.
Let $k' = S'$ and $d'$ be the service cost of $S'$.
The upper bound derives from the fact that, by the local optimality of $S'$, for each such pair $(A_i,B_i)$ one has
\[
\lambda k'+d' \leq \lambda |S'\setminus A_i \cup B_i|+d(S'\setminus A_i \cup B_i).
\] 
We start by creating (possibly infeasible) swap pairs as follows. 
Each matched facility $f_2\in S'^M$ and the corresponding matched facilities $M(f_2)\in \OPT^M$ define a swap pair $(\{f_2\},M(f_2))$. Each $f^*\in \OPT^L$ is added to the swap pair containing the closest facility $cl(f^*)\in S'$, creating one such pair $(\{cl(f^*)\},\{f^*\})$ if $cl(f^*)$ is not already contained in some swap  pair. Notice that each swap pair now contains exactly one facility $f_2\in S'$ and one or more facilities in $\OPT$: such facility $f_2$ is the \emph{leader} of the corresponding pair.
Notice also that each $f^*\in \OPT$ belongs to precisely one pair. Now we add the remaining lonely facilities $S'^L$ in $S'$ to the swap pairs, enforcing that each pair $(A,B)$ satisfies $|A|\geq |B|$. Notice  that this is always possible since $|S'|\geq |\OPT|$. We also observe that the sum of the discrepancies $disc(A,B)=|B|-|A|$ is initially exactly $k-k'$. Note that these swap pairs may be such that $|A|>1/\eps$, and so may be infeasible. If this happens, we say that the pair is large. We break large pairs into smaller ones as follows. We take any subset $A'\subseteq A$ of $-disc(A,B)=|A|-|B|$ facilities and let $A''=A\setminus A'$. We partition $A''$ and $B$ into sets $A''_1,\ldots,A''_q$ and $B_1,\ldots,B_q$ such that $|A''_i|=|B_i|$ for all $i$ and $|A''_i|=\frac{1}{4\eps}$ for $i<q$. We partition $A'$ into sets $A'_1,\ldots,A'_\ell$ where $|A'_i|=1/(4\eps)$ for $i<\ell$. We replace $(A,B)$ with the swap pairs $(A''_1,B_1),\ldots,(A''_q,B_q),(A'_1,\emptyset),\ldots,(A'_\ell,\emptyset)$. We observe that all these swap pairs are obviously feasible. 

However, we still need  to perform one last modification to the swap pairs. Let $a$ be the leader in $A$, and assume w.l.o.g. that $a\in A''_1$. For $q\geq 2$, when performing the swap $(A''_1,B_1)$, we remove the leader of $A$ without including in the solution all the facilities in $B$. This is problematic in the rest of our analysis. To resolve this issue, if $q\geq 3$, we replace $a$ with a random facility $a'\in A''_2$. Otherwise, we simply remove $a$ from $A''_1$. We remark that in both cases we keep the feasibility of the pair $(A''_1,B_1)$.

Let $(A_1,B_1),\ldots,(A_h,B_h)$ be the resulting overall set of feasible swap pairs. The following three properties, which will be helpful in the following, trivially hold. From the local optimality of $S'$ one has:
\begin{equation}\label{eqn:localOptimality_S2}
\forall i: \lambda k' + d' \leq \lambda (k_2+|B_i|-|A_i|)+d(S'\setminus A_i \cup B_i).
\end{equation}
Each time we decrease by one the cardinality of some set $A_1$ derived from a large pair $(A,B)$ to remove the leader $a$ of that pair, this means that $\ell\geq 2$, and in particular the corresponding set $A'$ satisfies $|A'|\geq \frac{1}{2\eps}$. Notice that the sum of the cardinalities of the sets $A'$ is precisely $k'-k$, and such sets are disjoint. Hence this event can happen at most $2\eps(k'-k)$ times. Thus: 
\begin{equation}\label{eqn:totalDiscrepancy_S2}
k-k'\leq \sum_{i}disc(A_i,B_i)=\sum_i (|B_i|-|A_i|)\leq k-k'+2\eps(k'-k).
\end{equation}
Finally, by construction each facility $f_2\in S'$ is contained in at most $2$ pairs. When $f_2$ belongs to two pairs, this is because $f_2$ was sampled from a set $A_2$ of size $\frac{1}{4\eps}$. Thus
\begin{equation}\label{eqn:probability_S2}
\forall f_2\in S': Pr[\text{$f_2$ is contained in $2$ swap pairs}]\leq 4\eps.
\end{equation}
We are now ready to prove the following upper bound. 

Recall that given $S'$ and $\opt$, we let $\alpha^L:=\opt^L/\opt$, $\alpha^M:=\opt^M/\opt$, $\alpha^{MM}=\opt^{MM}/\opt$, $\beta=d'/\opt$, and $\beta^{MM}=d'^{MM}/\opt$.
Let also $k^L=|\OPT^L|$, $k^M=|\OPT^M|$, $k'^L =|S'^L|$, and $k'^M=|S'^M|$.

\begin{lemma}\label{lem:localSearchBound_S2}
One has
\begin{align*}
& \lambda k'+d' \leq \lambda k+2\eps \lambda(k'-k)+\frac{\delta(1+4\eps)}{1-\delta}(d'^{MM}+\opt^{MM})+3(1+4\eps)\opt^L+(1+4\eps)\opt^M \\
= &\lambda k+\eps\left(2\lambda(k'-k)+\frac{4\delta}{1-\delta}(d'^{MM}+\opt^{MM})+12\opt^L+4\opt^M\right)+\rho^A(\delta,\alpha^L,\beta,\alpha^{MM},\beta^{MM})\opt.
\end{align*}
\end{lemma}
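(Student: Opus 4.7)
The plan is to sum the local-optimality inequality~\eqref{eqn:localOptimality_S2} across the swap pairs $(A_i,B_i)$ and combine it with a per-client reassignment bound. For each pair, rearranging~\eqref{eqn:localOptimality_S2} gives $\lambda(|A_i|-|B_i|)\leq d(S'\setminus A_i\cup B_i)-d'$, and the right-hand side is upper bounded by specifying for every client $c$ with $S'(c)\in A_i$ a surrogate facility $\sigma_i(c)\in(S'\setminus A_i)\cup B_i$, giving $d(S'\setminus A_i\cup B_i)-d'\leq\sum_{c:S'(c)\in A_i}(dist(c,\sigma_i(c))-dist(c,S'(c)))$. Summing across $i$ and invoking~\eqref{eqn:totalDiscrepancy_S2} to control the facility-side term yields a bound of the form $\lambda(1-2\eps)(k'-k)+d'\leq\sum_{i,c}dist(c,\sigma_i(c))$, up to lower-order slack absorbed into the $(1+4\eps)$ multiplier. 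The factor $(1+4\eps)$ comes from~\eqref{eqn:probability_S2}: a facility belongs to two swap pairs with probability at most $4\eps$, so in expectation each per-client reassignment contribution is inflated by $(1+4\eps)$; a standard averaging argument then fixes a deterministic realization satisfying the inequality.

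The reassignment rule $\sigma_i(c)$ is designed case-by-case based on the status of $f_2:=S'(c)$ and $f^*:=\OPT(c)$. For ``matched--matched'' clients ($f_2\in S'^M$ and $f^*\in M(f_2)$), both $f_2$ and $f^*$ belong to the matched swap pair, so we set $\sigma_i(c)=f^*$, contributing $dist(c,f^*)$; these sum to $\opt^{MM}$. For clients where exactly one of $f_2,f^*$ is matched (``matched--lonely'' or ``lonely--matched''), we again send $c$ to $f^*$ inside the unique swap pair containing $f^*$, contributing $dist(c,f^*)$ each; their total is at most $\opt^M-\opt^{MM}$. For clients with $f^*\in\OPT^L$, we apply the Gupta--Tangwongsan rerouting: $\sigma_i(c)=f^*$ inside the lonely swap pair $(\{cl(f^*)\},\{f^*\})$, and in any other swap where $f_2$ is removed we reroute via the path $c\to f^*\to cl(f^*)\to c$, exploiting that $cl(f^*)$ is the closest $S'$-facility to $f^*$. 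Triangle inequality then yields the classical factor-$3$ bound, summing to $3\opt^L$.

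The main technical obstacle is the remaining ``minority'' case: clients with $f_2\in S'^M$ and $f^*\notin M(f_2)$. By unpacking the matched conditions (namely the $(1-\delta)$-capture of each $f^{**}\in M(f_2)$ by $f_2$ together with the $\frac{1}{2}$-capture of $f_2$ by $M(f_2)$), such clients can be charged to a proxy matched client $c'$ served by $f_2$ in $S'$ and by some $f^{**}\in M(f_2)$ in $\OPT$; the triangle inequality then yields $dist(c,\sigma_i(c))\leq dist(c,f_2)+dist(c',f_2)+dist(c',f^{**})$. Because at least a $(1-\delta)$ fraction of $C_{\OPT}(f^{**})$ lies in $C_{S'}(f_2)$, each matched client can be reused as a proxy for at most a $\frac{\delta}{1-\delta}$ ratio of minority clients on average, producing the total overhead $\frac{\delta}{1-\delta}(d'^{MM}+\opt^{MM})$. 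The trickiest step will be ensuring that the proxy assignment is consistent across the different swap pairs, that no matched client is overused, and that the constant $\frac{\delta}{1-\delta}$ is tight; here the specific structure of swap pairs (matched pairs carry their entire $M(f_2)$, lonely pairs are built around the closest $S'$-facility) should guarantee the charging is well defined. Combining the four cases and inflating by $(1+4\eps)$ from~\eqref{eqn:probability_S2}, then adding the $2\eps\lambda(k'-k)$ facility-side slack from~\eqref{eqn:totalDiscrepancy_S2}, yields the claimed inequality.
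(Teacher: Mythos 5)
Your high-level plan (sum the local-optimality inequality~\eqref{eqn:localOptimality_S2} over the swap pairs, bound $\sum_i d(S(i))$ per client via a reassignment rule $\sigma_i(c)$, and absorb the pair-multiplicity slack into a $(1+4\eps)$ factor via~\eqref{eqn:probability_S2} and~\eqref{eqn:totalDiscrepancy_S2}) matches the paper's approach, and the arithmetic of the final bound is consistent. However, there is a genuine gap in the core charging argument for the $\frac{\delta}{1-\delta}$ term. You define the ``minority'' clients relative to $f_2 = S'(c)$, namely $f_2 \in S'^M$ and $f^* \notin M(f_2)$, and you charge their reassignment to proxy clients $c'$ with $S'(c') = f_2$ and $\OPT(c') = f^{**} \in M(f_2)$. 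The ratio of such minority clients to such proxies is controlled only by the $\tfrac12$-capture of $f_2$ by $M(f_2)$, so it is bounded by $1$, not by $\frac{\delta}{1-\delta}$. The $(1-\delta)$-capture fact you invoke (that $f_2$ serves more than a $(1-\delta)$ fraction of $C_{\OPT}(f^{**})$) bounds the \emph{escape} clients of $f^{**}$ relative to $C_{\OPT}(f^{**})$; it says nothing about the ratio of minorities to proxies within $C_{S'}(f_2)$. The paper orients this charge from the $\OPT$ side instead: case 3 of the paper's proof fixes $f^*(c) \in \OPT^M$, lets $f_1(c) \in S'^M$ be the facility with $f^*(c) \in M(f_1(c))$, and averages $\dist(f^*(c),f_1(c))$ over the clients $C'(f^*(c)) = C_{\OPT}(f^*(c)) \cap C_{S'}(f_1(c))$; since $f_1(c)$ $(1-\delta)$-captures $f^*(c)$, one has $|C_{\OPT}(f^*(c)) \setminus C'(f^*(c))| \le \frac{\delta}{1-\delta}|C'(f^*(c))|$, which is exactly where the $\frac{\delta}{1-\delta}$ comes from. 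Your $S'$-oriented charge cannot produce this constant.

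A secondary issue is that your case partition is not well defined. Your cases ``$f^*\in\OPT^L$'' and ``minority ($f_2\in S'^M$, $f^*\notin M(f_2)$)'' overlap, and the ``exactly one matched'' case double-counts matched--lonely clients (which also fall under $f^*\in\OPT^L$) while attributing their connection cost to $\opt^M-\opt^{MM}$ when it belongs to $\opt^L$. More importantly, for a lonely--matched client ($f_2\in S'^L$, $f^*\in\OPT^M$), you only specify $\sigma_i(c)=f^*$ for the index $i=j^*(c)$ containing $f^*$; you do not say where $c$ is routed in the distinct swap $j''(c)$ where $f_2$ is removed. The paper handles this via the $\OPT$-side matched facility $f_1(c)$, which is guaranteed to survive the swap $j''(c)$ precisely because $j^*(c)\neq j''(c)$; this is again where the $\Delta(c)$ proxy term and its $\frac{\delta}{1-\delta}$ coefficient originate. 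Re-orienting your classification on the status of $f^*(c)$ (lonely vs.\ matched in $\OPT$) rather than on both endpoints would align your argument with the paper's and repair these gaps.
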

\begin{proof}
Let us consider the swap pairs $(A_1,B_1),\ldots,(A_h,B_h)$ as described before. Let $S(i)=S'\setminus A_i\cup B_i$. One has
\begin{equation}
  h d' \overset{\eqref{eqn:localOptimality_S2}}{\leq} \sum_{i} \lambda (k'+|B_i|-|A_i|)+\sum_i d(S(i))\overset{\eqref{eqn:totalDiscrepancy_S2}}{\leq}
  \lambda(k-k')+2\eps \lambda(k'-k)+\sum_i d(S(i)).
\end{equation}
Therefore it is sufficient to show that
\begin{equation}\label{eqn:desiredBound_S2}
\E[\sum_i d(S(i))]\leq (h-1)d'+\frac{\delta(1+4\eps)}{1-\delta}(d'^{MM}+\opt^{MM})+3(1+4\eps)\opt^L+(1+4\eps)\opt^M.
\end{equation}

We next focus on a specific client $c$. Let $f^*(c)$ and $f'(c)$ be the facilities serving $c$ in  $\OPT$ and $S'$, resp. We use the shortcuts $d'(c)=d(c,S')$ and $\opt(c)=d(c,\OPT)$. The contribution of $c$ to $hd'$ is precisely $h\, dist(c,S')=h\, d'(c)$. Let us bound the expected contribution $\sum_{i}dist(c,S(i))$ of $c$ to $\sum_i d(S(i))$.  Define $j^*(c)$ such that $f^*(c)\in B_{j^*(c)}$. Define also $J'(c)$ as the set of indexes $i$ such that $f'(c)\in A_{i}$. Notice that $J'(c)$ is empty only if $f'(c)$ is the leader of a large group.  Otherwise,  $J'(c)$ has cardinality $2$ with probability at most $4\eps$, and otherwise cardinality $1$. 

We next assume that $J'(c)$ is empty or $J'(c)=\{j''(c)\}$. We later show how to fix the analysis for the case $|J'(c)|=2$. We distinguish a few cases (for each case, we assume that the conditions of previous cases do not apply):
\begin{enumerate}
\item\label{case:leaderSmallGroup} $f'(c)$ is the leader of a large group. Then 
$$
\sum_{i}dist(c,S(i))=\sum_{i\neq j^*(c)}dist(c,S(i))+dist(c,S(j^*(c)))\leq (h-1)d'(c)+\opt(c).
$$
From now on we can assume that $j''(c)$ is defined. 
\item\label{case:swapWithOPT} $j^*(c)=j''(c)$ (i.e., both $f^*(c)$ and $f'(c)$ belong to the same swap pair). This is similar to the previous case
$$
\sum_{i}dist(c,S(i))=\sum_{i\neq j^*(c)}dist(c,S(i))+dist(c,S(j^*(c)))\leq (h-1)d'(c)+\opt(c).
$$
\item\label{case:matched} $f^*(c)\in \OPT^M$. Let $f^*(c)$ be matched with $f_1(c)\in S'^M$, and $C'(f^*(c))$ be the clients served by $f^*(c)$ which are also served by $f_1(c)$. Let also $C(f^*(c))$ be all the clients served by $f^*(c)$. We observe that $|C'(f^*(c))|\geq (1-\delta)|C(f^*(c))|$ by the definition of matching, and $c\in C(f^*(c))\setminus C'(f^*(c))$ since we are not  in case \ref{case:swapWithOPT}.  
We have the upper bound 
\begin{align*}
dist(c,S(j''(c))) & \leq dist(c,f_1(c))\leq dist(c,f^*(c))+dist(f^*(c),f_1(c))\\
& \leq \opt(c)+\frac{1}{|C'(f^*(c))|}\sum_{c'\in C'(f^*(c))}(dist(f^*(c),c')+dist(c',f_1(c)))\\
& = \opt(c)+\frac{1}{|C'(f^*(c))|}\sum_{c'\in C'(f^*(c))}(\opt(c')+d'(c'))
\end{align*}
Let us define $\Delta(c)=\frac{1}{|C'(f^*(c))|}\sum_{c'\in C'(f^*(c))}(\opt(c')+d'(c'))$. Then
\begin{align*}
\sum_{i}dist(c,S(i)) & =\sum_{i\neq j''(c)}dist(c,S(i))+dist(c,S(j''(c))) \leq (h-1)d'(c)+\opt(c)+\Delta(c).
\end{align*}
\item\label{case:lonely} $f^*(c)\in \OPT^L$.
Let us upper bound $dist(c,S(j''(c)))$. Let $f_1(c)\in S'$ be the closest facility to $f^*(c)$ (i.e. the leader of the group of $f^*(c)$). By the previous cases it must happen that $f_1(c)\notin A_{j''(c)}$, hence $f_1(c)\in S(j''(c))$. Thus
\begin{align*}
dist(c,S(j''(c))) & \leq dist(c,f_1(c))\leq dist(c,f^*(c))+dist(f^*(c),f_1(c))\leq \opt(c)+dist(f^*(c),f'(c))\\
& \leq \opt(c)+dist(f^*(c),c)+dist(c,f'(c))= d'(c)+2\opt(c).
\end{align*}
Thus
\begin{align*}
\sum_{i}dist(c,S(i)) & =\sum_{i\neq j''(c),j^*(c)}dist(c,S(i))+dist(c,S(j''(c)))+dist(c,j^*(c))\\
& \leq (h-2)d'(c)+\opt(c)+d'(c)+2\opt(c)=(h-1)d'(c)+3\opt(c).
\end{align*}
\end{enumerate}
We observe that
\begin{align*}
\sum_{c \text{ in case 3}}\Delta(c) & \leq \sum_{c' \in C^{MM}}\frac{|C(f^*(c))\setminus C'(f^*(c))|}{|C'(f^*(c))|}(\opt(c')+d'(c'))\\
& \leq 
\frac{\delta}{1-\delta}\sum_{c' \in C^{MM}}(\opt(c')+d'(c'))\leq \frac{\delta}{1-\delta}(\opt^{MM}+d'^{MM}).
\end{align*}
Summing the different cases over the different clients one obtains
\begin{align*}
\sum_{i}d(S(i)) & \leq (h-1)d'+\opt+\sum_{c \text{ in case 4}}2\opt(c)+\sum_{c\text{ in case 3}}\Delta(c)\\
& \leq (h-1)d'+\opt+2\opt^L+\frac{\delta}{1-\delta}(\opt^{MM}+d'^{MM})\\
 & = (h-1)d'+\frac{\delta}{1-\delta}(d'^{MM}+\opt^{MM})+3\opt^L+\opt^M
\end{align*}
Recall that with probability $4\eps$ for each client not in case 1 above it might happen that there exists a second index $j''_2(c)$ such that $f'(c)\in A_{j''_2(c)}$.
In that case we might need to add some extra terms in cases 3 and 4 above, and also add a term $\min\{0,\opt(c)-d'(c)\}$. We pessimistically assume that the latter term is $0$, and that the mentioned probability is precisely $4\eps$ for each such client $c$. Summing over all clients one gets in expectation an extra additive term of value at most
$$
4\eps(\frac{\delta}{1-\delta}(d'^{MM}+\opt^{MM})+3\opt^L+\opt^M).
$$ 
This gives the desired inequality \eqref{eqn:desiredBound_S2}.
\end{proof}

\subsection{Case $|S'|\leq k$.}
\label{sec:localSearchOmitted:S1}
We complete the proof of Theorem~\ref{thr:localSearch}.
In fact we are going to prove a slightly stronger theorem that implies Theorem~\ref{thr:localSearch}.
Let $\delta', \delta^*\in (0,1/2)$ be parameters to be fixed later. We say that $f^*\in \OPT$ \emph{captures} $f'\in S'$ if more than a $1-\delta^*$ fraction of the clients served by $f'$ in $S'$ are served by $f^*$ in $\OPT$. We say that $f'\in S'$ captures $f^*\in \OPT$ if more than a $1-\delta'$ fraction of the clients served by $f^*$ in $\OPT$ are served by $f'$ in $S'$. Notice that the definition is \emph{not} symmetric. We moreover use a generalization of the definition of isolated regions
from \cite{Cohen-AddadKM19}: We say that a set of facilities $F' \subseteq S'$ and a set of facilities $F^*\subseteq \OPT$ such that $|F^*| \le 1/\epsilon$
are in \emph{isolation} if at least a $1-\delta'$ fraction of the clients served by $F^*$ in $\OPT$ are served by $F'$ in $S'$ and if at least
a $1-\delta^*$ fraction of the clients served by $F'$ in $S'$ are served by $F^*$ in $\OPT$.
In the following, we consider an arbitrary set of disjoint pairs of facilities $(F', F^*)$ in isolation.

We let $\OPT^M$ be an arbitrary set of facilities of $\OPT$ that satisfy either of the following: 
(1) they are captured by some facility in $S'$, or
(2) they are in at least one chosen group that is in isolation.
We let $\OPT^L = \OPT \setminus \OPT^M$. We let $S^M$ be the set of facilities of $S'$ that are captured by some facility in $\OPT$ or
in a chosen group that is in isolation, 
and set $S^L = S' \setminus S^M$.
We also let $\opt^M=d(\OPT^M)$, $\opt^L=d(\OPT^L)$, $d^M =d(S^M)$ and $d^L=d(S^L)$.
We let $C^{XY}=\{c\in C \mid S'(c)\in S^X \text{ and }\OPT(c)\in \OPT^Y\}$, e.g.: $C^{LM}$ is the set of clients served by a lonely facility in $S'$ and by a matched facility in $\OPT$. We let $C^{\star M} = C^{LM} \cup C^{MM}$, etc..
We let $d^{XY}$ and $\opt^{XY}$ be the connection cost in $S'$ and $\OPT$, resp., of the clients in $C^{XY}$.
We let $d^{\star X} = d^{L X} + d^{M X}$, etc..

We are ready to state our local-search based bound. 
\begin{theorem}\label{thr:localSearch_generalized}
  We have
  \begin{align*}
  & \lambda k' +d' \leq\\
   &  k\lambda  +  3 \opt^{LL} + \opt^{M} +  \frac{\delta'}{1-\delta'} (d^{MM} + \opt^{MM}) + O(\eps (d' + \opt))
 +\min \begin{cases} 3 \opt^{ML} \\
    \opt^{ML} +  \frac{\delta^*}{1-\delta^*} (d^{M} + \opt^{M})
  \end{cases}  
  \end{align*}
\end{theorem}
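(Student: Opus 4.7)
The plan is to generalize the proof of Lemma~\ref{lem:localSearchBound_S2} from Section~\ref{sec:localSearchOmitted:S2} (case $|S'| > k$) to the case $|S'| \le k$. The main new features are that (i) $\OPT$ is now the larger side so the size discrepancy in swap pairs flips sign, (ii) matched facilities now come both as singleton capture pairs and as isolation groups of total size up to $1/\eps$, and (iii) the asymmetric thresholds $\delta',\delta^*$ translate into the two separate charge terms, with the \texttt{min} arising from two valid reroutes available for $C^{ML}$.

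First I would construct the family of swap pairs $(A_i,B_i)$ as in Section~\ref{sec:localSearchOmitted:S2}, but now prioritizing the pre-selected disjoint isolation pairs $(F',F^*)$: for each such pair create a swap pair, then for each lonely $f^*\in\OPT^L$ attach it to the pair containing its closest $S'$-facility $cl(f^*)$ (or open a fresh pair $(\{cl(f^*)\},\{f^*\})$), and finally add leftover lonely $S'$-facilities to ensure $|B_i|\ge |A_i|$ for each pair (possible since $k'\le k$). Pairs with $|A_i|,|B_i|>\Delta$ are broken into feasible sub-pairs of size $\le\Delta$ by the same randomized ``leader shuffling'' trick used for $S_2$, yielding the analogues of~\eqref{eqn:localOptimality_S2},~\eqref{eqn:totalDiscrepancy_S2}, and~\eqref{eqn:probability_S2}: namely $\sum_i(|B_i|-|A_i|)\le k-k'+O(\eps(k-k'))$ and each facility appears in $\le 2$ sub-pairs with probability $O(\eps)$. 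Local optimality applied to each pair and summed gives $hd' \le \lambda(k-k')+O(\eps\lambda(k-k'))+\sum_i d(S(i))$.

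The bulk of the argument is the per-client upper bound on $\sum_i dist(c,S(i))$, mirroring the four cases of Lemma~\ref{lem:localSearchBound_S2}. For clients with $f^*(c)\in\OPT^M$ (cases~2--3), the swap pair containing $f^*(c)$ also contains the captured $S'$-facility, and the reroute through a representative captured client $c'\in C_{S'}(F')\cap C_\OPT(F^*)$ yields an extra charge $\Delta(c)\le d'(c')+\opt(c')$ which, summed over the minority clients of the isolation pair, is bounded by $\tfrac{\delta'}{1-\delta'}(d^{MM}+\opt^{MM})$ exactly as in case~3 of Section~\ref{sec:localSearchOmitted:S2} (the threshold here is $\delta'$ since the capture of $F^*$ by $F'$ is what governs). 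This accounts for the $\opt^M+\tfrac{\delta'}{1-\delta'}(d^{MM}+\opt^{MM})$ term. For $c\in C^{\star L}$ the classical Gupta--Tangwongsan reroute $c\to cl(f^*(c))$ gives contribution $d'(c)+3\opt(c)$, which alone yields the $3\opt^L = 3\opt^{LL}+3\opt^{ML}$ charge.

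The improvement for $C^{ML}$ that produces the \texttt{min} is the alternative reroute: for $c\in C^{ML}$, $f'(c)\in S^M$ lies in some isolation pair $(F',F^*)$ with $F^*\subseteq \OPT^M$ open in $S(i)$ after the swap, so $c$ can be reassigned to the nearest facility in $F^*$ via a captured client $c'\in C_{S'}(f'(c))\cap C_\OPT(F^*)$, giving extra cost $\opt(c)+\Delta^*(c)$ where averaging $\Delta^*(c)$ over the minority clients uses the $\delta^*$ threshold and sums to $\tfrac{\delta^*}{1-\delta^*}(d^M+\opt^M)$. Taking globally the smaller of the two bounds for the $C^{ML}$ contribution produces the \texttt{min}. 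Combining everything and absorbing the $O(\eps)$ contributions from leader shuffling and from the probability-$O(\eps)$ double-pair events into $O(\eps(d'+\opt))$ yields the theorem.

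The hardest part will be handling isolation groups with $|F'|$ or $|F^*|$ approaching $1/\eps$ when they get broken into sub-pairs of size $\Delta$: one must argue that for each sub-pair containing part of $F'$, enough of $F^*$ is still swapped in so the reroute through $F^*$ remains valid, and that charges to captured clients $c'$ are not over-counted across sub-pairs. The randomized leader reassignment together with the bound $\Delta = (2/\eps)^{2/\eps^7}\gg 1/\eps$ (so an entire isolation group fits in a single sub-pair after the enlargement) should resolve this, leaving only an additive $O(\eps(d'+\opt))$ slack.
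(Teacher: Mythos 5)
You are attempting to port the swap-pair construction and leader-shuffling trick of Section~\ref{sec:localSearchOmitted:S2} (the $|S'|>k$ case) directly to $|S'|\leq k$ by flipping the sign of the discrepancy. This is a genuinely different route from the paper, and it has a real gap.

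The paper's actual proof does \emph{not} build swap pairs by attaching lonely facilities and then breaking large pairs with leader reshuffling. Instead it (a) encodes the capture/isolation structure as a bounded-out-degree directed graph $G$ between $S'$ and $\OPT$, (b) first \emph{modifies $\OPT$ into $\wO$} via randomly chosen ``group exchanges'' that delete, for each $f\in S'$ with in-degree $\geq 2$, the set $\OPT(f)$ of $\OPT$-facilities pointing to $f$ and reopen $f$ in their place, thereby reducing the center count by roughly $\eps^2|\tF^*|$ and eliminating the problematic many-to-one structure, and then (c) performs a degree reduction (splitting heavy $\OPT$-vertices into $\lfloor d\eps^2\rfloor$ copies), contraction of capture/isolation edges, and a randomized BFS-level cut to partition $G'$ into components of size $\leq (1/\eps+1)^{O(1/\eps^7)}$, which become the swap pairs satisfying four explicit ``swap consistency'' properties (\ref{swap:1}--\ref{swap:4}). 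The per-client case analysis you describe is recognizably the same in spirit, but it rests entirely on those properties, and in particular on Property~\ref{swap:1} guaranteeing that $f\in A_i$ forces $\phi(f)\in B_i$ with probability $\geq 1-\eps$.

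The concrete failure point in your construction is the breaking step. In the $S_2$ case it is the $S'$-side $A$ that gets large, and the only facility in $A$ that ``matters'' structurally is the unique leader; reshuffling one facility to a random chunk of size $\Theta(1/\eps)$ fixes things with probability $1-O(\eps)$. In the $S_1$ case the structure inverts: a single $\OPT$-facility $f^*$ can $(1-\delta^*)$-capture an \emph{unboundedly large} set $M(f^*)\subseteq S'$, and every $f'\in M(f^*)$ needs $f^*$ to be opened in the sub-swap that closes $f'$ (or needs a captured sibling $f''\in M(f^*)$ left open, whose surcharge must itself be controlled). Since $f^*$ is a single facility, it can sit in only one sub-pair; once $|M(f^*)|$ exceeds $\Delta$ no amount of leader reshuffling puts $f^*$ with all its captured $S'$-facilities. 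You notice a related issue at the end and argue that ``an entire isolation group fits in a single sub-pair,'' but by definition isolation groups have $|F^*|\leq 1/\eps$ --- the unbounded object is $M(f^*)$, i.e.\ the set of $S'$-facilities with an edge into $f^*$, which is exactly what the paper's degree-reduction (copying heavy $\OPT$-vertices) and $\wO$ construction are designed to tame. Without that step your swap system violates Property~\ref{swap:1} for a constant fraction of closed facilities of $S'$, and the $\frac{\delta'}{1-\delta'}(d^{MM}+\opt^{MM})$ surcharge does not follow; you would only recover the generic $3\opt^M$-type charge, losing the theorem. The discrepancy and double-counting claims you assert by analogy to \eqref{eqn:totalDiscrepancy_S2}, \eqref{eqn:probability_S2} also depend on this missing machinery.
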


Assuming the above theorem, we can conclude the proof of Theorem~\ref{thr:localSearch}.
\begin{proof}[Proof of Theorem~\ref{thr:localSearch}]
  In the case where the solution $S'$ produced by the algorithm is such that $|S'| > k$,
  then we invoke Lemma~\ref{lem:localSearchBound_S2} and obtain the desired bounds.
  
  In the case where  $|S'| \le k$, then we have that the matching of facilities defined
  in Section~\ref{sec:localSearch} slightly differs from the more general
  one we have in this section; it is the following:
  Given the parameter $\delta$ of Theorem~\ref{thr:localSearch},
  consider $f^*\in \OPT$, and let $M(f^*)$ be the facilities in $S'$ which are
  $1/2$-captured by $f^*$. If $M(f^*)$ $(1-\delta)$-captures $f^*$, then we say that $f^*$ and $M(f^*)$ are \emph{matched}.
  If $M(f^*)$ $(1-\delta)$-captures $f^*$, then we say that $f^*$ and $M(f^*)$ are \emph{matched}.

  We now translate the above matching in the matching defined for Theorem~\ref{thr:localSearch_generalized}.
  For each such pair above, we create a pair of facilities in isolation $(f^*, M(f^*))$, note that by definition
  these pairs are pairwise disjoint (i.e.: for all pairs $(f^*, M(f^*))$, $(f_1^*, M(f_1^*))$, we have
  $f_1^* \neq f^*$ and $M(f_1^*) \cap M(f^*) = \emptyset$).
  We can thus apply Theorem~\ref{thr:localSearch_generalized}, with $\delta' := \delta$ and $\delta^* := 1/2$,
  and define $\OPT^M$ to be the set containing the chosen groups of facilities in isolation described above.
  It follows that the set of facilities of
  $\OPT$ and the set of facilities of $S'$ that are matched according to the definition of Theorem~\ref{thr:localSearch}
  and matched according to the definition of Theorem~\ref{thr:localSearch_generalized} are the same.
  We can thus write:
  \begin{align*}
    &\lambda k' +d' \leq  k\lambda  +  3 \opt^{L} + 
    \opt^{M}
    +  \frac{\delta'}{1-\delta'} (d^{MM}+ \opt^{MM})
    + O(\eps (d' + \opt)),
  \end{align*}
  as claimed.
\end{proof}

We complete the proof by showing Theorem~\ref{thr:localSearch_generalized}.
\begin{proof}[Proof of Theorem~\ref{thr:localSearch_generalized}]
We form swap sets $(A_i,B_i)$, $A_i\subseteq S'$ and $B_i\subseteq \OPT$, using the following directed graph $G$.
  For each chosen group $(F_1,F^*)$ of facilities in isolation, create a vertex $v_f$ for each $f \in F_1$ and a vertex $v_{F^*}$ for $F^*$
  and add edges $\langle v_f, v_{F^*} \rangle$ then repeat on the remaining facilities. 
  For each remaining facility $f$ of $\OPT$ or $S'$ create a vertex $v_f$.
  Then, for each $f' \in \OPT^M$ and $f\in S'$, create an edge from $v_f$ to $v_{f'}$ if $f'$ captures $f$.
  Namely, if $f'$ is in $\OPT^M$ and
  $f$ is in $S'$ then at least a $1-\delta^*$ fraction of the clients served by $f$ in $S'$ are served
  by $f'$ in $\OPT$.
  Also create an edge $v_{f'}$ to $v_{f}$  if $f'$ is in $S'$ and $f \in \OPT$
  and at least a $1-\delta'$ fraction of the clients served by $f$ in $\OPT$ are served by $f'$ in $S'$.
  Moreover, for each facility $f$
  of $\OPT-\OPT^M$,
  create an edge from $f$ to the vertex corresponding to the closest facility of $S'$.

  Note that the out-degree of each vertex of the graph is at most 1 and the out-degree of any vertex $v_{F^*}$ is 0.
  For each vertex $v_f$ with outgoing edge $\langle v_f, v_{f'} \rangle$, we define $\phi(f) = f'$. If $v_f$ has
  out-degree 0 we let $\phi(f) = \emptyset$.

  We let $\tF^*$ be the set of facilities $f$ of $\OPT$ that satisfy (1) $\phi(f)$ has indegree at least $2$; or (2)
  $f$ is part of a chosen group of facilities in isolation $(F_1,F^*)$ such that $|F^*| > |F_1|$.

  We now choose a collection of subsets $\{(F_i, F^*_i)\}$, where $F_i \subseteq S'$ and $F^*_i \subseteq \OPT$ and modify $\OPT$ to obtain a solution
  $\wO = \OPT \bigcup_i F_i  \setminus \bigcup_i F^*_i$  that satisfies the following:
  \begin{itemize}
  \item Each client $c$ served by a facility in $F^*_i$ in $\OPT$
  it is served by a facility in $F_i$ in $\wO$. 
  \item The service cost of $\wO$ is at most the service cost of $\OPT$ plus $O(\eps (d' + \opt))$.
  \item $\wO$ opens at most $|\OPT| - 2\eps^2 |\tF^*|$ centers.
  \end{itemize}

  To define a solution $\wO$ that satisfies the above, we proceed as follows. Consider each facility $f \in S'$ with
  indegree at least $2$, and let $\OPT(f)$ be the set of facilities of $\OPT$ with an outgoing edge toward $f$.
  Consider the group $f, \OPT(f)$. We bound the change in cost induced by the following \emph{group exchange}:
  add $f$ and remove $\OPT(f)$ from $\OPT$ and serve
  all the clients served by a facility in $\OPT(f)$ by $f$. Since for each $f^*$ we have that either $f$ captures $f^*$ or
  $f$ is the closest facility to $f^*$, we have that the change in cost is at most 
  \[\frac{1}{1-\delta'}\sum_{c : \text{ served by $f^*$ in } \OPT}  \dist(c, S') + \dist(c, \OPT).\]
  Consider also any chosen group of facilities in isolations $F_1, F^*$ and define a group exchange which consists
  in replacing $F^*$ with $F_1$ in $\OPT$. Replacing $F^*$ with $F_1$ and serving the clients
  served by $F^*$ in $\OPT$ with $F_1$ also increases the cost for these clients by at most the same amount as above.
  
  Therefore, consider all the chosen groups of facilities in isolation $(F_1,F^*)$, and all pairs $(f, F^*)$ where $f \in S'$
  and $F^* \subseteq \OPT$, $|F^*| > 1$, and where each facilities $f^* \in F^*$ has an edge toward $f$. Let $\tilde{k}$ be the number of groups.
  Pick a random subset of $2\eps \tilde{k}$ groups and perform the group exchange for each picked group.
  The expected cost increase is thus at most $O(\eps (d' + \opt))$.
  Moreover, each group exchange reduces the number of centers in the solution by at least 1 and since for each group $|F^*| \le 1/\eps$, we have
  that the number of groups is at least $\eps |\tF^*|$.  Therefore, the resulting
  solution contains at most $|\OPT| - 2\eps^2 |\tF^*|$ centers. Denote by $I = S' \cap \wO$.    Observe that the service cost for the
  clients served by any $f^* \in S' \cap \wO$ in $\wO$ is as good in $S'$ since
  $f^* \in S'$. 
  Importantly, we keep the chosen group of facilities in isolation unchanged in spirit: if a group of a facilities in isolation
  $(F^*, F)$ has been chosen above and so $F \subseteq \wO$, we replace $(F^*, F)$ by $(F, F)$.

  We now describe how to adapt the graph $G$.
  Each vertex representing a facility or a group of facilities that is both in $S'$ and $\wO$ is removed from the $G$ and
  so the resulting graph only contains facilities of $S'-I$ and $\wO - I$.
  Moreover, for each facility $f$ and group of facilities $F^*$ of $\OPT$ with outgoing edges towards $f$ for which the exchange
  was performed, we remove $f$ and $F^*$ from the graph (and all the adjacent edges).

  
  We now want to build a set of \emph{swaps} $\{(A_i, B_i)\}$, namely a collection of pairs $A_i \subset S' \setminus I$, $B_i \subset \wO - I$, where
  $|A_i|+|B_i| \le \Delta$. From there we can analyze the cost of the solutions
  $S' \setminus A_i \cup B_i$ and obtain a bound on the cost of $S'$.
  We will give a randomized procedure that produces a set of feasible swaps that satisfies the following property.\\
  \textbf{Swap consistency:}
  For any feasible swap $(A_i,B_i)$, we have:
  \begin{enumerate}
  \item\label{swap:1} Property \ref{swap:1}: For any facility $f \in S'$, if $f \in A_i$, 
    we have $\phi(f) \in B_i$ with probability 1 if $f$ is in a chosen group of facilities in isolation, and at least $(1-\eps)$ otherwise.
    In other words, if the swap
    \emph{closes} $f$ then it must open $\phi(f)$ with probability at least $(1-\eps)$ if $f$ is not in an isolated group
    and with probability 1 otherwise;
  \item\label{swap:2} Property \ref{swap:2}: For any facility $f^* \in \wO \setminus I$, if $f^* \notin B_i$, then $\phi(f^*) \notin A_i$. In other words,
    if the swap does not \emph{open} $f^*$ then it must leave $\phi(f^*)$ open;
  \item\label{swap:3} Property \ref{swap:3}: Each $f^* \in \wO \setminus I$ belongs to at least one group and $\sum_i |B_i| \le k + \eps^2|\tF^*|$;
  \item\label{swap:4} Property \ref{swap:4}: Each facility of $S'$ belongs to at most one group. The number of facilities of $S' \setminus I$
    that do not belong to any group is at most $\eps^2 |\tF^*|$. The facilities of $I$ do not belong to any group.
  \end{enumerate}
  \begin{claim}
    \label{claim:swapstructure}
    There exists a set of feasible swaps that satisfies the swap consistency properties.
  \end{claim}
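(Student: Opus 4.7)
The plan is to construct the swap pairs directly from the rooted-forest structure of the graph $G$ on $(S'\setminus I)\cup(\wO\setminus I)$. Because every vertex of $G$ has out-degree at most one, the undirected connected components are trees oriented toward their sinks, where a sink is an individual $\OPT^M$-vertex, a group vertex $v_{F^*}$ of a chosen isolated pair, or a ``lonely'' $S'$-vertex (out-degree zero). Since every vertex of $\OPT-\OPT^M$ carries exactly one closest-facility out-edge and no incoming edges in $G$, each such tree has depth at most two: a root, children in $S'$ (or children in some chosen $F_1$), and $\OPT-\OPT^M$ grandchildren that share a common closest $S'$ parent. This bounded depth is what will make a bounded-size partition possible.

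For every tree $T$ of size at most $\Delta$, I would take a single swap $(A_i,B_i)$ with $A_i := T\cap (S'\setminus I)$ and $B_i := T\cap (\wO\setminus I)$. Every $\phi$-edge incident to $T$ lies inside $T$, so Properties~1 and~2 are satisfied with probability one; in particular, when $T$ contains an isolated-group sink $v_{F^*}$, all of $F_1$ lands in $A_i$ together with a $B_i$-copy of $F^*$, giving the stronger probability-one part of Property~1 for isolated-group vertices. For a tree $T$ with $|T|>\Delta$, I would first bundle each $S'$-child $f$ with all of its $\OPT-\OPT^M$ grandchildren into a single cluster $\mathrm{cl}(f)$, because Property~2 forbids placing $f$ in $A_i$ without simultaneously placing every such grandchild in $B_i$. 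The clusters are then partitioned greedily into blocks of total size close to $\Delta$, and one sub-swap is formed per block, always inserting the tree root into $B_i$; this root duplication is harmless for Property~2 and contributes only to the $\sum_i |B_i|$ overcount. When an individual cluster $\mathrm{cl}(f)$ already exceeds $\Delta$, the facility $f$ is dropped from every $A_i$ (its grandchildren are still placed in some $B_i$), which is the only source of ungrouped $S'$-vertices. Finally, randomizing the block partition over $\Omega(1/\eps)$ cyclic shifts, in the style of \cite{AryaGKMMP04, LSGupta}, makes each $S'$-vertex land in the same block as its $\phi$-image except with probability at most $\eps$, yielding the probabilistic form of Property~1.

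The delicate step is verifying Properties~3 and~4 simultaneously. Root duplication across blocks inflates $\sum_i|B_i|-|\wO\setminus I|$ by $O(k/\Delta)$, and the number of $S'$-vertices dropped (from oversized clusters plus boundary randomization) is of the same order; both must be absorbed into the allowed slack $\eps^2|\tF^*|$. The amortization relies on the $\wO$-construction having already removed at least $2\eps^2|\tF^*|$ centers from $\OPT$: each split tree is charged to a distinct vertex in $\tF^*$ (a high-indegree $S'$-facility whose group exchange was performed, or a chosen isolated pair with $|F^*|>|F_1|$), and each drop is charged against the same budget. The very large choice $\Delta=(2/\eps)^{2/\eps^7}$ is precisely what leaves enough room after the compounded $\eps$-losses along the tree-splitting recursion. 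This matching of overcounts and drops against $\tF^*$ is the main technical obstacle, since the bookkeeping must be simultaneously tight in the $B_i$-total, the dropped-$S'$ count, and the Property~1 violation probability, while the random cyclic shift is drawn.
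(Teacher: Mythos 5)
Your decomposition is genuinely different from the paper's: you exploit the rooted-forest structure of $G$ and the (correct) observation that components have depth at most two, then partition trees into blocks with a duplicated root, whereas the paper first does an explicit degree-reduction step (splitting heavy $B$-vertices into $\lfloor d\eps^2\rfloor$ copies and moving heavy $A$-vertices into a discarded set $\bar{A}$), then \emph{contracts} the $\OPT\setminus\OPT^M\to S'$ closest-facility edges and all isolated-pair edges so that every such pair lives in a single vertex of a bounded-degree graph $G'$, and finally applies a BFS labeling with a random cut modulo $1/\eps^4+1$ to break $G'$ into bounded-size components. Your bundling of each $S'$-child with its grandchildren plays the same role as the paper's contraction, so that much is fine.

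However, there is a genuine gap in the part you yourself flag as ``the main technical obstacle.'' Your charging is wrong as stated: $\tF^*$ is a set of facilities of $\OPT$ (those $f$ with $\phi(f)$ of indegree $\geq 2$, or belonging to a chosen isolated pair with $|F^*|>|F_1|$), not a set of ``high-indegree $S'$-facilities whose group exchange was performed.'' More importantly, ``each split tree is charged to a distinct vertex in $\tF^*$'' simply does not hold: a tree whose root is an $\OPT^M$-vertex capturing many $S'$-children, each of which has at most one $\OPT\setminus\OPT^M$-grandchild, can have arbitrarily many vertices while containing \emph{no} vertex of $\tF^*$ at all (condition (1) of $\tF^*$ needs indegree $\geq 2$ at the $S'$-side, condition (2) needs an unbalanced isolated pair, and neither is triggered). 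So when $\tF^*$ is small the root-duplication overcount $O(k/\Delta)$ cannot be absorbed into $\eps^2|\tF^*|$, and Property~\ref{swap:3} fails. This is precisely the case the paper's degree-reduction on $B$-vertices is designed to handle, and the paper's own accounting for that step rests on a relation between $|A|$, $|B|$, and $|\tF^*|$ that your write-up does not reproduce or replace. Finally, the cyclic-shift randomization you invoke is superfluous in your framework: once the tree root is placed into \emph{every} $B_i$ of the split tree, Property~\ref{swap:1} is satisfied deterministically for all $S'$-children that are kept, so the probabilistic $(1-\eps)$ guarantee has no role to play; this is another sign that the bookkeeping for Properties~\ref{swap:1}--\ref{swap:4} has not actually been carried out.
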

  Let's first assume that the claim holds and proceed to the analysis. Let $h$ be the number of pairs $(A_i,B_i)$ in the
  set of swaps prescribed by Claim~\ref{claim:swapstructure}.

  For each swap set $(A_i, B_i)$ we consider the local update where the algorithm opens $B_i$ and removes $A_i$ from $S'$.
  Note this is a feasible local update since we assume $|B_i| + |A_i| \le \Delta$.
  Therefore, by the local optimality of $S'$, it must be the case that, for $S(i)=S'\cup B_i-A_i$, one has
  \begin{equation}\label{eqn:localSearchCondition}
    \lambda k'+d' \leq  \lambda  (k'+|B_i|-|A_i|)+d(S(i)).
  \end{equation}

  Summing over $i$, from \eqref{eqn:localSearchCondition}  one gets
  \begin{equation}\label{eqn:initialGlobalBound}
     \lambda h k'+hd'\leq \sum_{i} \lambda  (k'+|B_i|-|A_i|)+\sum_i d(S(i))=h  \lambda  k'+ \lambda (k-k')+\sum_i d(S(i)),
  \end{equation}
  where we have used Property~\ref{swap:3} of the swap consistency, the definition of $\wO$, and the fact that the swaps are
  defined on $S' \setminus I$ and $\wO \setminus I$.
  We next focus on a specific client $c$. Let $f^*(c)$ and $f_1(c)$ be the facilities serving $c$ in $\wO$ and $S'$, respectively.
  We use the shortcuts $d' (c)=\dist(c,S')$ and $\wo(c)=\dist(c,\wO)$. The contribution of $c$ to $hd'$ is precisely $h\, dist(c,S')=h\, d'(c)$.
  Let us bound the contribution $\sum_{i}\dist(c,S(i))$ of $c$ to $\sum_i d(S(i))$.
  
  First consider a client $c$ served by a facility in $I$ in $\wO$. Then, for the swap that closes the facility serving it in $S'$ (if there is such,
  and there can be at most one by Property~\ref{swap:3}),
  we can always reassign $c$ to the facility serving it in $I$. In this case, the cost is $\wo(c)$. In any other swap it is at most $d'(c)$.
  Hence, the contribution is always at most $(h-1)d'(c) + \wo(c)$.
  Additionally, for any client $c$ such that there is no swap containing the facility serving $c$ in $S'$, then since there must be a swap containing
  $f^*$ (by Property~\ref{swap:3}), we have that the cost contribution of $c$ is at most $(h-1)d'(c) + \wo(c)$. We will
  see in the remaining that the bound provided dominates this one and we henceforth focus on the case where there is
  at least one swap containing the facility serving $c$ in $S'$.

  We now turn to the rest of clients; We have 3 types of clients among the rest:
  \begin{itemize}
  \item $c \in C^{\star M}$: an \emph{$\OPT$-matched} client.
  \item $c \in C^{ML}$: an \emph{$S'$-matched} client.
  \item $c \in C^{LL}$: a \emph{lonely} client.
  \end{itemize}

    

  \paragraph{$\OPT$-matched clients.}
  Consider a client $c$ served in $\OPT$ by facility $f^*$ that is captured or in a chosen isolation group. Let $f$ be the facility serving $c$
  in $S'$.
  Consider the swap $(A_i,B_i)$ where $f \in A_i$. 
  We bound the cost paid by $c$ in $S(i)$ by one of the following
  \begin{enumerate}
  \item $\wo(c)$ if $f^* \in B_i$,
  \item Otherwise, the distance from $c$ to the facility $f'$ of $S'$ that
    is captured by $f^*$ that is the closest to $f^*$ if $f^*$ is in a chosen isolation group, or
  \item $\phi(f^*)$ if $f^*$ is captured.
  \end{enumerate}
  We argue that these are valid reassignments. (1) is obviously valid since it assumes that $f^* \in B_i$.
  For (2), observe that by Property~\ref{swap:1}, if $f^* \notin B_i$ then,
  $f' \notin A_i$ (since $f'$ is captured by $f^*$ and $f^*$ is in isolation, $f'$ is in an isolated group with $f^*$
  and so this holds with probability 1).
  For (3), by Property~\ref{swap:2}, if $f^* \notin B_i$ then $\phi(f^*) \notin A_i$.
  It follows that $\dist(c, f')$ is a valid upper bound on the cost of $c$ in $S(i)$ if $f^*$ is in a chosen
  isolation group and
  $\dist(c, \phi(f^*))$ is a valid upper bound on the cost of $c$ if $f^*$ is captured.
  
  In the case of (2), let us now provide an upper bound on $\dist(c, f')$, and therefore on the cost of $c$ in $S(i)$.
  We write $\dist(c, f') \le \dist(c, f^*) + \dist(f^*, f')$. Then, denote by $U(f^*)$ the set of clients that
  are served by $f^*$ in $\OPT$ and by a facility captured by $f^*$ in $S'$.
  Thus, we have that
  $\dist(f^*, f') \le \frac{1}{|U(f^*)|} \sum_{c' \in U(f^*)} \dist(c', f^*) + \dist(c', S')$.
  Hence
  \[\dist(c, f') \le \dist(c, f^*) +\frac{1}{|U(f^*)|} \sum_{c' \in U(f^*)} \dist(c', f^*) + \dist(c', S')\]
  Finally, for each such $f^*$, the number of clients served by $f^*$ in $\OPT$ that are not served by a facility of $S'$
  captured by $f^*$ is at most $\frac{\delta'}{1-\delta'} |U(f^*)|$.

  In the case of (3), let's then provide an upper bound on the $\dist(c, \phi(f^*)) \le \dist(c, f^*) + \dist(f^*, \phi(f^*))$.
  Denote by $U(f^*)$ the set of clients that
  are served by $f^*$ in $\OPT$ and by a $\phi(f^*)$ in $S'$. We have that 
  $\dist(f^*, \phi(f^*)) \le \frac{1}{|U(f^*)|} \sum_{c' \in U(f^*)} \dist(c', f^*) + \dist(c', S')$.
  Here again, for each such $f^*$, the number of clients served by $f^*$ in $\OPT$ that are not served by $\phi(f^*)$ in $S'$
  is at most $\frac{\delta'}{1-\delta'} |U(f^*)|$.

  Hence, the total service cost contribution of the $\OPT$-matched clients to $\sum_i d(S(i))$
  is at most
  \[(h-1)d'^{\star M} +  \opt^{\star M} + \frac{\delta'}{1-\delta'}\left(\opt^{MM} +  d'^{MM} \right) + O(\eps (d' + \opt)).\]
  
  \paragraph{$S'$-matched clients.} 
  Consider a client $c$ served in $S'$ by a facility $f$ that is captured by some facility of $\wO$ and
  served in $\wO$ by a facility $f^*$ not in isolation.
  Let $(A_i, B_i)$ be the swap such that
  $f^* \in B_i$, and $(A_j, B_j)$ be the swap such that $f \in A_j$.
  We have two options: (1) $i = j$ or (2) $i \neq j$.
  In the first case, we have that the cost for the swap $(A_i,B_i)$ is at most $\wo(c)$, and $d'(c)$ for any other swap.
  In the second case, we bound the cost as follows. For the swap $(A_i, B_i)$ the cost is at most $\wo(c)$.
  For any swap $(A_\ell, B_\ell)$, $\ell \notin \{i,j\}$, we have that the service cost for $c$ is at most $d'(c)$.
  Now, for the swap $(A_j, B_j)$ we have two options which we will bound by the minimum of two quantities since the
  local search equation holds for both cases.
  \begin{enumerate}
  \item Case 1: We bound the cost of $c$ by the distance from $c$ to the facility of $S'$ that is the closest to $f^*$.
    Since $f^*$ is not in isolation, it has an edge to the closest facility $f'$ in $S'$. Since $f^* \notin B_j$ it must be
    that $f' \notin A_j$. Hence this is a valid reassignment. In this case, the cost is by triangle inequality
    at most $\dist(c, f^*) + \dist(f^*, f') \le \dist(c, f^*)  + \dist(c, f^*) + \dist(c, f)$, since $\dist(f^*, f') \le \dist(f^*, f)$,
    and so at most $2\wo(c) + d'(c)$.
    Summing over all such clients, we have a total cost contribution of at most
    $(h-1)d'^{ML} + 3 \opt^{ML} +O(\eps d')$. Note that this bound holds with probability 1.
  \item Case 2: Recall that $f$ is captured (possibly in isolation).
    Let $F^*$ denote the set of facilities capturing $f$ (it is the set
    of facilities of $\wO$ if $f$ is in isolation or a single facility otherwise).
    We bound the cost of $c$ by the distance from $c$ to the facility $\tilde{f} \in F^*$ that is the closest to $f$.
    By Property~\ref{swap:1},
    $f \in A_j$ implies that $F^* \subseteq B_j$ with probability $1-\eps$ and so this is a valid upper bound on the cost
    only with this probability
    and we use Case 1 to provide an upper bound otherwise.
    Thus, this cost is at most $\dist(c, f) + \dist(f, F^*) + \eps 3 \wo(c)$. Let $U(f)$ be the set of clients served
    by $f$ in $S'$ and $F^*$
    in $\OPT$. Again, by triangle inequality we have that
    $\dist(f, F^*) \le \frac{1}{|U(f)|} \sum_{c' \in U(f)} \dist(c', f) + \dist(c', F^*)$.
    It follows that the cost for $c$ is at most 
    $\dist(c, f) + \frac{1}{|U(f)|} \sum_{c' \in U(f)} \dist(c', f) + \dist(c', F^*)$.
    Now since $f$ is captured by $F^*$, we have that the number of clients served by $f$ and not in $U(f)$ is at most
    $\frac{\delta^*}{1-\delta^*}|U(f)|$.
    Therefore, summing over all such $f$, the service cost contribution for the $S'$-isolated clients served by $f$ is
    at most $(h-1)d'^{ML} + \opt^{ML}+ \frac{\delta^*}{1-\delta^*} (d'^{M \star}  + \opt^{M\star} ) +O(\eps d').$
  \end{enumerate}

  \paragraph{Lonely clients.}
  Consider a client $c$ served in $S'$ by facility $f$ that is not captured and served in $\OPT$ by a facility $f^*$ that is not captured.
  Let $(A_i, B_i)$ be the swap such that
  $f^* \in B_i$.  For the swap $(A_i, B_i)$ the cost is at most $\wo(c)$ since $f^* \in B_i$.
  Now, for any swap $(A_j, B_j)$ where $f \in A_j$.
  We bound the cost of $c$ by the distance from $c$ to the facility of $S'$ that is the closest to $f^*$.
  Since $f^*$ is not in isolation, it has an edge to the closest facility $f'$ in $S'$. By Property~\ref{swap:2}, $f^* \notin B_j$ implies
  that $f' \notin A_j$. Hence this is a valid reassignment. In this case, the cost is by triangle inequality
  at most $\dist(c, f^*) + \dist(f^*, f') \le \dist(c, f^*)  + \dist(c, f^*) + \dist(c, f)$, since $\dist(f^*, f') \le \dist(f^*, f)$,
  and so at most $2\wo(c) + d'(c)$.
  Thus, summing over all such clients, we have a total cost contribution of at most
  $(h-1)d'^{LL} + 3(1+\eps) \opt^{LL} + \eps d'$.
  Combining the three above bounds yield the bound of the theorem.

  We now turn to the proof of Claim~\ref{claim:swapstructure}.
  \begin{proof}[Proof of Claim~\ref{claim:swapstructure}]
    We will work with the graph $G = (A,B, E)$ defined above, where $A,B$ are the vertices representing the facilities
    of $S'-I$ and $\wO-I$ respectively.
    We first preprocess the graph and then partition its vertices (hence partitioning the associated
    facilities). For each part $P_i$ of the partition
    we will create a swap $(A_i, B_i)$ by setting $A_i$ (resp. $B_i$) to be the set of facilities corresponding to vertices in
    $P_i \cap A$ (resp. $P_i \cap B$). Before defining the partition of $G$, we make the following modifications to the graph.
    
    \paragraph{Degree reduction.}
    For each vertex $v \in B$ with indegree $d > 1/\eps^2$, called a \emph{heavy vertex}
    create $\lfloor d \eps^2 \rfloor$ new vertices $v_1,\ldots, v_{\lfloor d\eps^2 \rfloor}$ that will go in the $B$ side of the graph
    and remove $v$. We say that
    the new vertices also represent the facilities of $\wO$ that $v$ was representing (the facilites are now represented
    $\lfloor d \eps^2 \rfloor$ times in the graph).
    Let us now bound the number of vertices in $B$ at the end of this procedure. Let $B_{>\eps^{-2}}$ be the
    set of vertices of $B$ with indegree greater than $1/\eps^2$. For each vertex $v \in B_{>\eps^{-2}}$ with indegree $d$ the contribution
    to the number of new vertices is $\lfloor d \eps^2 \rfloor - 1$.
    Recall that $|A| \le |B| + \eps^2|\tF^*|$ and $\tF^*$ is the set of facilities $f$ of $\wO$ that satisfy
    (1) $\phi(f)$ has indegree at least $2$; or (2)
    $f$ is part of a chosen group of facilities in isolation $(F_1,F^*)$ such that $|F^*| > |F_1|$.
    Thus the number of added vertices is at most $(1+\eps)\eps^2|\tF^*|$.
    Moreover, any vertex $v_f$ where $f \in S'$ that has indegree larger than $1/\eps^2$ is placed into a new set $\bar{A}$, we immediately
    infer that $|\bar{A}| \le \eps^2|\tF^*|$. 
    
    We now define edges adjacent to the new vertices. For each new vertex, the outgoing edge of
    $v$ is present in each new vertex as well. The incoming edges are arbitrarily partitioned into $\lfloor d\eps^2 \rfloor$
    groups of size at most $2/\eps^2$ and each new vertex receives a different group of incoming edges.
    The graph consisting of $G$ plus the new vertices and the new edges, and resulting from deleting vertices of $\bar{A}$,
    and deleting the heavy vertices has thus degree at most $1/\eps^2$.
  
    \paragraph{Defining $U$ and $\calP$.}
    Consider the graph where the above degree reduction procedure has been applied and the vertices of
    $\bar{A}$ have been deleted. We will now provide a partitioning of the above graph into
    parts $(A_i,B_i)$ such that $|A_i| + |B_i| \le (1/\eps+1)^{O(1/\eps^7)}$.
    
    Then, contract all edges from $f^* \in \wO$ to $f \in S'$ and replace the vertex by one vertex
    representing $f$ and all the facilities of $\wO$ now contracted to $f$.
    Moreover, for each pair of isolated facilities $F_1,F^*$, contract the edges $\langle v_f, v_{F^*} \rangle$.
    Since the outdegree of $v_{F^*}$ is by definition 0,
    the degree of this graph is thus at most $1/\eps^3$. Note that the only remaining edges in this graph
    are of the form $\langle f, f^* \rangle$, where $f \in S', f^* \in \wO$. Let $G'$ be the resulting graph.
    
    Therefore, consider the following randomized procedure on $G'$. For each connected component of $G'$,
    start at an arbitrary vertex of $A \setminus \bar{A}$ and performs a BFS on the undirected version of the graph,
    labeling the vertices of $A \setminus \bar{A}$ at hop-distance $i$ with label $i \mod 1/\eps^4+1$. Then pick a random
    integer $i^*$ in $[0, 1/\eps^4]$ and remove all the outgoing edges of the vertices with label equal to $i^*$
    -- the probability of any edge of being removed is thus at most $\eps^4$.
    
    Since the degree of each vertex is at most $1/\eps^3$, the connected components of the graph have thus size
    at most $(1/\eps+1)^{3/\eps^4}$. We now create a part in $\calP$ for each of them.
    Since each vertex may represent up to $1/\eps^3$ vertices, we have that the total number of
    vertices represented in a given connected component is at most $(1/\eps+1)^{3/\eps^7}$. For each connected component $(A_i, B_i, E_i)$ 
    we create a pair $(U_i, V_i)$ where $U_i$ is the set of facilities represented by vertices in $A_i$ and
    $V_i$ is the set of facilities represented by vertices in $B_i$.
    We now show that this set of swaps satisfies the swap consistency properties.
    
    Let us start with Property~\ref{swap:3}. Observe that each vertex of $G$ representing a facility of $\wO$ and
    of degree at most $1/\eps^2$ is also in $G'$ and  each vertex of $G$ representing a facility of $\wO$ and of degree $> 1/\eps^2$
    appears with multiplicity $\lfloor d \eps^2 \rfloor$. So since each facility of $\wO$ is initially represented in $G$ it is also
    represented in $G'$, and since we take a partition of the vertices of $G'$ it must appear in at least one swap created.
    Moreover, since the number of new vertices in $G'$ is at most $\eps^2 |\tF^*|$,
    we have that $\sum_i |B_i| \le k + \eps^2|\tF^*|$, as desired.
    
    Property~\ref{swap:4} follow immediately from the fact that we have a partition of $G'$ which contains
    vertices that represent all the facilities of $S'$ with multiplicity 1 (note that we did not create copies of vertices of $A$)
    except for the facilities of $S'$ whose corresponding vertices is in $\bar{A}$, in which case the facility is not present
    in any swap. Recall moreover that $|\bar{A}| \le  \eps^2|\tF^*|$.
    
    We now turn to Property~\ref{swap:1}. First, any edge of $G$ that has been contracted is satisfied (since all the facilities
    are represented by one single vertex in $G'$ and so necessarily in the same swap). Then, as discussed above, each edge of
    $G'$ is cut with probability at most $\eps^4$. If an edge $\langle f, f^* \rangle$
    is not cut then $f,f^*$ are in the same component and so for any swap $(A_i,B_i)$, if $f \in A_i$ then $f^* = \phi(f) \in B_i$ as desired.
    Moreover, since $G$ does not contain the facilities of $I$ anymore, we have that no facility in $I$ is in any swap.

    We finally turn to Property~\ref{swap:2}. Observe that all edges from $f^*$ to $f$ have been contracted, or the vertex corresponding to
    $f$ has been deleted. In the first case, we have that $f^*$ and $f$ are represented by the same vertex in $G'$ and so are necessarily
    in the same swap. In the latter case, $f$ does not belong to any swap.
    Hence, for any swap $(A_i,B_i)$, if $f^* \notin B_i$ then $f \notin A_i$.
      \end{proof}

\end{proof}

\section{Omitted Proofs from Section \ref{sec:improvedkMedian}}
\label{sec:omittedSection4}

\begin{proof}[Proof of Lemma \ref{lem:boundLonelyFacilityCost}]
Consider any $f'\in S_2^L$. Let $\Delta(f')$ be the increase of the connection cost of $S_2$ due to the removal of $f'$. The local optimality of $S_2$ implies $\lambda\leq \Delta(f')$, hence
$$
\lambda k^L_2\leq \sum_{f'\in S^L_2}\Delta(f').
$$
Thus it is sufficient to upper bound the righthand side of the above inequality. In order to do that, we consider each client $c$ in the set $C_{S_2}(f')$ of clients served by $f'$ in $S^L_2$, and describe a random path $P(c)$ from $c$ to some facility in $S_2$ distinct from $f'$. Let $\ell(P)$ be the length of a path $P$. The sum of values $\ell(P(c))-d_2(c)$ over clients $c\in C_{S_2}(f')$ gives a valid upper bound on $\Delta(f')$, and we will sum these values over all clients served by any $f'\in S^L_2$. The path $P(c)$ is defined as follows. Let $f^*(c)=\OPT(c)$ be the facility serving $c$ in $\OPT$. We distinguish two cases: (a) $f^*(c)$ is not $(1-\delta)$-captured by $f'$, and (b) the complementary case. In case (a), let $rand(c)$ be a random client served by $f^*(c)$ which is not served by $f'$ in $S_2$. The random path $P(c)$ is given by the concatenation of the shortest paths from $c$ to $f^*(c)$, from $f^*(c)$ to $rand(c)$, and from $rand(c)$ to the facility $f'(rand(c))$ serving $rand(c)$ in $S_2$. Notice that
$$
\ell(P(c))=\opt(c)+\opt(rand(c))+d_2(rand(c)).
$$
Case $(b)$ is slightly more complex. Let $C'(f')\subseteq C_{S_2}(f')$ be the clients which in $\OPT$ are served by a facility not captured by $f'$. We remark that $|C'(f')|\geq |C_{S_2}(f')|/2$ since otherwise $f'$ would be matched. Let $next(c)$ be a random client in $C'(f')$. The path $P(c)$ is the concatenation of the shortest path from $c$ to $f'$, the shortest path from $f'$ to $next(c)$, and the random path $P(next(c))$ already defined earlier. In this case we have
\begin{align*}
\ell(P(c)) & =d_2(c)+d_2(next(c))+\ell(P(next(c)))\\
& =d_2(c)+d_2(next(c))+\opt(next(c))+\opt(rand(next(c)))+d_2(rand(next(c)).
\end{align*}
Let $next^{-1}(c)$ be the clients $c'$ such that $c=next(c')$. We have that, for $c\in C'(f')$, $Ex[|next^{-1}(c)|]\leq 1$ (and $next^-1(c)=\emptyset$ otherwise). Let $\ell'(P(c))$ be the part of the lengths $\ell(P(c))$ not involving $rand(\cdot)$. Let also $\ell''(P(c))$ be  the remaining part of each such length. One has
\begin{align*}
 Ex[\sum_{f'\in S^L_2}\sum_{c\in C_{S_2}(f')}(\ell'(P(c))-d_2(c)]
= & \sum_{f'\in S^L_2}\sum_{c\in C'(f')}(\opt(c)-d_2(c)+Ex[|next^{-1}(c)|](d_2(c)+\opt(c)))\\
\leq & \sum_{f'\in S^L_2}\sum_{c\in C'(f')}2\opt(c) \leq 2(\opt^{LM}+\opt^{LL}).
\end{align*}
In order to upper bound the sum of the cost $\ell''(P(c))$, let us define $rand^{-1}(c)$ as the clients $c'$ in some set $C'(f')$ such that $c=rand(c')$. We wish to upper bound $Ex[|rand^{-1}(c)|]$. Consider the facility $f^*$ serving $c$ in $\OPT$, and let us partition the clients $C_{\OPT}(f^*)$ served by $f^*$ into subsets $C^1,\ldots,C^h$ depending on the facility $f^1,\ldots,f^h$ serving them in $S_2$. Assume w.l.o.g. that $c\in C^1$. For each $c'\in C^j$, $j>1$, $c'$ 
belongs to $rand^{-1}(c)$ with probability $\frac{1}{|C_{\OPT}(f^*)|-|C^j|}$. Thus
$$
Ex[|rand^{-1}(c)|]=\sum_{i\geq 2}\frac{|C^i|}{|C_{\OPT}(f^*)|-|C^i|}.
$$
Observe that for $x,y\geq 0$ and $x+y\leq z>0$, $\frac{x}{z-x}+\frac{y}{z-y}\leq \frac{x+y}{z-x-y}$. By repeatedly applying this inequality, one obtains that the largest value of the above sum is achieved when $h=2$, being in particular $|C^2|/|C^1|$. Notice however that $f^2$ does not $(1-\delta)$-capture $f^*$ by construction. Indeed otherwise clients $c'\in C^2$ would not consider selecting $rand(c)$ in $C^1$. This implies $|C^2|\leq (1-\delta)|C(f^*)|$,  hence $|C^2|/|C^1|\leq \frac{1-\delta}{\delta}$. Thus
$$
Ex[|rand^{-1}(c)|]\leq \frac{1-\delta}{\delta}
$$ 
Now let us define $\widetilde{rand}^{-1}(c)$ as the set of clients in $rand^{-1}(c)$ plus any client $c'$ in some set $C_{S_2}(f')\setminus C'(f')$ such that $next(c')\in rand^{-1}(c)$. By the above discussion and the independence of the sampling processes defining $next()$ and $rand()$, one has that
$$
Ex[|\widetilde{rand}^{-1}(c)|]\leq 2Ex[|rand^{-1}(c)|]\leq 2\frac{1-\delta}{\delta}
$$
We can conclude that
$$
Ex[\sum_{f'\in S^L_2}\sum_{c\in C_{S_2}(f')}(\ell''(P(c))]\leq \sum_{c}Ex[|\widetilde{rand}^{-1}(c)|](d_2(c)+\opt(c))\leq  2\frac{1-\delta}{\delta}(d_2+\opt).
$$
We can refine the above analysis for the clients $C^{LM}\cup C^{MM}$ as follows. Suppose that $f^*\in \OPT^M$, which implies that $c\in C^{LM}\cup C^{MM}$. In this case at least a $1-\delta$ fraction of the clients $C_{\OPT}(f^*)$ are served by matched facilities $M(f^*)\subseteq S^{M}_{2}$. Those clients cannot belong to $rand^{-1}(c)$. We can adapt the above analysis by letting the sets $C^i$ not include clients in $C^{MM}$ (while possibly $c\in C^{MM}$). This gives   
$$
Ex[|rand^{-1}(c)|]\leq \sum_{i\geq 2}\frac{|C^i|}{|C_{\OPT}(f^*)|-|C^i|}\leq \frac{1}{(1-\delta)|C_{\OPT}(f^*)|}\sum_{i\geq 2}|C^i|\leq \frac{\delta}{1-\delta},  
$$ 
hence 
$$
Ex[|\widetilde{rand}^{-1}(c)|]\leq 2Ex[|rand^{-1}(c)|]\leq 2\frac{\delta}{1-\delta}.
$$
This gives
\begin{align*}
& Ex[\sum_{f'\in S^L_2}\sum_{c\in C_{S_2}(f')}(\ell''(P(c))]
\leq \sum_{c}Ex[|\widetilde{rand}^{-1}(c)|](d_2(c)+\opt(c))\\
\leq &  2\frac{1-\delta}{\delta}(d_{2}^{LL}+\opt^{LL}+d_{2}^{ML}+\opt^{ML})+2\frac{\delta}{1-\delta}(d^{LM}_{2}+\opt^{LM}+d_{2}^{MM}+\opt^{MM}).
\end{align*}
Altogether
\begin{align*}
& Ex[\sum_{f'\in S^L_2}\sum_{c\in C_{S_2}(f')}(\ell(P(c))-d_2(c)]\\
\leq & 2(\opt^{LM}+\opt^{LL})+2\frac{1-\delta}{\delta}(d_{2}^{LL}+\opt^{LL}+d_{2}^{ML}+\opt^{ML})+2\frac{\delta}{1-\delta}(d_{2}^{LM}+\opt^{LM}+d_{2}^{MM}+\opt^{MM})\\
= & \frac{2}{\delta}\opt^{LL}+2\frac{1-\delta}{\delta}(d_{2}^{LL}+d_{2}^{ML}+\opt^{ML})+\frac{2}{1-\delta}\opt^{LM}+2\frac{\delta}{1-\delta}(d_{2}^{MM}+\opt^{MM})\\
\leq & \frac{2}{\delta}(\opt-\opt^{MM})+2\frac{1-\delta}{\delta}(d_{2}-d_{2}^{MM})+2\frac{\delta}{1-\delta}(d_{2}^{MM}+\opt^{MM}).
\end{align*}
\end{proof}

\section{An Improved LMP Approximation for Facility Location:\\ Uniform Opening Costs}
\label{sec:improvedFLuniform}

In this section we describe how to derive a better than $2$ LMP approximation for UFL in the case of uniform facility costs (without giving an explicit bound on the approximation factor). Suppose we are given a bipoint solution $S_B=aS_1+(1-a)S_2$ for $k$-Median as described in Section \ref{sec:improvedkMedian}, for some given value of $\lambda$. Recall that $d_i=d(S_i)$, $k_i=|S_i|$, $k_1\leq k<k_2$ and $k=ak_1+(1-a)k_2$. Lemma \ref{lem:S2bound} shows that $\lambda k_2+d_2\leq \lambda k+(2-\eta_2)\opt$ for some fixed constant $\eta_2>0$. In Section \ref{sec:boundsS1} we will prove the following (weaker) version of Lemma \ref{lem:S2bound} which applies to $S_1$.
\begin{lemma}\label{lem:improvedS1}
For some function $\eta_1(a)$ which is strictly positive for $a>0$, one has $\lambda k_1+d_1\leq \lambda k+(2-\eta_1(a))\opt$.
\end{lemma}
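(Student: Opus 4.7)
The proof mirrors that of Lemma~\ref{lem:S2bound}, producing two bounds on $\lambda k_1+d_1$ whose minimum is strictly below $\lambda k+2\opt$. Using notation analogous to Section~\ref{sec:improvedkMedian:boundS2}, set $\alpha^L,\alpha^M,\alpha^{MM},\beta_1=d_1/\opt,\beta_1^{MM}$, etc., with matched/lonely facilities defined according to the $|S'|\leq k$ branch of Section~\ref{sec:localSearch}. The crucial $a$-dependent estimate we shall rely on is $\beta_1\leq 2/a$, which follows by combining the JMS bound $\lambda k_1+d_1\leq \lambda k+2\opt$ (propagated through local search) with the bipoint identity $\lambda(k-k_1)=\tfrac{1-a}{a}\lambda(k_2-k)$ and the inequality $\lambda(k_2-k)\leq 2\opt$ (from $d_2\geq 0$). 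This $1/a$ factor is the source of the $a$-dependence in $\eta_1(a)$.

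The first bound, the analog of~\eqref{eqn:S2_boundLS}, follows from applying to $S_1$ the refinement of Theorem~\ref{thr:localSearch} stated as Theorem~\ref{thr:localSearch_generalized} in Section~\ref{sec:localSearchOmitted}, which in the $|S'|\leq k$ regime offers an additional $\min$ over two expressions for the $\opt^{ML}$ contribution. Taking the second option in the $\min$ and choosing the free parameters $\delta',\delta^*\in(0,1/2)$ sufficiently small yields
\begin{align*}
\lambda k_1+d_1\leq \lambda k+\rho^A\cdot \opt+O(\eps \cdot \opt/a), \qquad \rho^A\approx 1+2\alpha^{LL},
\end{align*}
which is strictly below $2$ whenever $\alpha^{LL}$ is bounded away from $1/2$. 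The $O(\eps\,\opt/a)$ error arises from scaling the $O(\eps(d_1+\opt))$ term of Theorem~\ref{thr:localSearch_generalized} by $\beta_1\leq 2/a$; it is absorbed by taking $\eps$ as a small multiple of $a\cdot \eta_1(a)$.

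The second bound, the analog of~\eqref{eqn:S2_boundLMP}, applies Corollary~\ref{cor:modifiedJMSclaim} to the JMS starting solution, whose cost dominates $\lambda k_1+d_1$ by local-search monotonicity. We take $\OPT'=\OPT^L$, which requires an estimate $\lambda k^L\leq T_L(a)\cdot \opt^L$. To produce such an estimate we first adapt Lemma~\ref{lem:boundLonelyFacilityCost}: removing lonely facilities $f_1\in S_1^L$ one at a time and rerouting their clients along paths of the form $c\to f^*(c)\to \mathrm{rand}(c)\to S_1(\mathrm{rand}(c))$ yields $\lambda k_1^L\leq T_1(a)\cdot \opt$, with $T_1(a)$ finite for $a>0$ (using $\beta_1\leq 2/a$). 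We then convert this into a bound on $\lambda k^L$ via the identity $k^L=k_1^L+(k-k_1)+(k_1^M-k^M)$: the term $\lambda(k-k_1)\leq \tfrac{2(1-a)}{a}\opt$ is controlled by the bipoint estimate above (strengthened using Lemma~\ref{lem:S2bound}), and the non-negative slack $\lambda(k_1^M-k^M)$ is handled by a supplementary local-search argument that swaps out ``redundant'' matched $S_1$ facilities (each such swap increasing the connection cost by at least $\lambda$, whose upper bound is again obtained by a path construction through the shared matched $\OPT^M$ facility). Plugging the resulting $T_L(a)$ into Corollary~\ref{cor:modifiedJMSclaim} gives
\begin{align*}
\lambda k_1+d_1\leq \lambda k+\rho^B\cdot \opt, \qquad \rho^B:=2(1-\alpha^L)+\opt^+_{JMS}(q,T_L(a))\cdot \alpha^L,
\end{align*}
which is strictly below $2$ whenever $\alpha^L>0$, since $\opt^+_{JMS}(q,T_L(a))<2$ for finite $T_L(a)$.

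Combining, $\min\{\rho^A,\rho^B\}\leq 2-\eta_1(a)$ after optimizing over $\delta',\delta^*$ and maximizing over the adversarial parameters $\alpha^{LL},\alpha^{ML},\alpha^{MM},\beta_1^{MM}$: Bound~$A$ covers the regime of small $\alpha^{LL}$, while Bound~$B$ covers large $\alpha^L$ (which is implied by large $\alpha^{LL}$), and the two regimes overlap. The main technical obstacle is twofold: extending Lemma~\ref{lem:boundLonelyFacilityCost} to the $|S'|\leq k$ setting, where the matching is asymmetric because a single matched $f^*$ may correspond to several matched $S_1$ facilities and therefore requires a more careful path construction, and controlling the new slack term $\lambda(k_1^M-k^M)$ so that $T_L(a)$ remains finite. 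The resulting $\eta_1(a)$ is positive for every $a>0$ but deteriorates as $a\to 0$, reflecting the $1/a$ factor in $\beta_1$.
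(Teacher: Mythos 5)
Your high-level strategy matches the paper's: two complementary bounds (the local-search bound from Theorem~\ref{thr:localSearch} / Theorem~\ref{thr:localSearch_generalized}, and a JMS/factor-revealing bound via Corollary~\ref{cor:modifiedJMSclaim}) plus a case split on the lonely mass $\alpha^L$, with the $a$-dependence coming from $\beta_1\le 2/a$. Bound~A is essentially correct (you invoke the $\min$ option in Theorem~\ref{thr:localSearch_generalized}, which is tighter than the paper's \eqref{eqn:S1_boundLS}, but the distinction is immaterial).

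The crux is Bound~B, and there your route diverges from the paper's in a way that leaves a real gap. You try to bound $\lambda k^L$ by first bounding $\lambda k_1^L$ via a Lemma~\ref{lem:boundLonelyFacilityCost}-style argument (local optimality, remove $f'\in S_1^L$) and then converting through the identity $k^L = k_1^L + (k-k_1) + (k_1^M - k^M)$. The first two terms are fine ($\lambda(k-k_1)\le\frac{2(1-a)}{a}\opt$ follows from the bipoint structure). But the slack $\lambda(k_1^M - k^M)$ is the whole game, and you only gesture at a ``supplementary local-search argument swapping out redundant matched $S_1$ facilities.'' This is not routine. In the $|S'|\le k$ regime the matching is many-to-one: a single $f^*\in\OPT^M$ can be matched to an arbitrarily large $M(f^*)\subseteq S_1$, so $k_1^M-k^M$ can be on the order of $k_1$. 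Closing a redundant $f_i'\in M(f^*)$ and rerouting its clients through $f^*$ to a designated $f_1'\in M(f^*)$ charges $\dist(f^*,f_1')$ to the common clients of $f^*$ and $f_1'$; since these could be a small fraction of $C_{S_1}(f_1')$ (and an even smaller fraction of $C_\OPT(f^*)$), the resulting charge blows up unless one carefully randomizes the target $f_j'$ and redoes the Lemma~\ref{lem:boundLonelyFacilityCost}-style double counting from scratch. You do not carry this out, and it is not obvious it closes. Relatedly, your adaptation of Lemma~\ref{lem:boundLonelyFacilityCost} itself to $S_1^L$ is also nontrivial: in the $|S'|\le k$ matching, a facility $f'\in S_1^L$ can be $1/2$-captured by some $f^*$ that merely fails to be $(1-\delta)$-captured by $M(f^*)$, so the ``at least half of $f'$'s clients go to an uncaptured $\OPT$ facility'' step of the $S_2$ argument no longer holds verbatim.

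The paper sidesteps all of this. Instead of going through $k_1^L$ at all, it bipartitions $\OPT^L$ into $D_A,D_B$ using the nearest-neighbor digraph on $\OPT^L$ (Lemma~\ref{lem:bipartite}), so each deleted $f^*\in D_X$ has its nearest $\OPT$-facility in $\OPT\setminus D_X$; then Lemma~\ref{lem:boundDeletionOPT} bounds $d(\OPT\setminus D_A)+d(\OPT\setminus D_B)$ purely structurally (it relies on $\beta_1\le 2/a$, not on any local optimality of $S_1$). Plugging $\OPT\setminus D_X$ into the LMP $2$-guarantee for $S_1$ and averaging gives Corollary~\ref{cor:boundDeletionOPT}, namely $\lambda k_1+d_1 \le \lambda(k-\tfrac{k^L}{2}) + (2 + \tfrac{1+\beta_1+\beta_1^L+\alpha^L}{\delta})\opt$. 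A short case split (is $\lambda k^L/(2\opt)$ large or small?) then yields either the improvement directly or the needed $\lambda k^L\le T_1\opt^L$ for Corollary~\ref{cor:modifiedJMSclaim}. The key insight you are missing is that the LMP property of $S_1$ is \emph{universal}: it holds against \emph{every} feasible solution, not just $\OPT$, so it can be applied to modified optima $\OPT\setminus D_X$. This removes any need to count or compare the numbers of matched/lonely facilities across $S_1$ and $\OPT$, which is exactly what your slack term was trying (and not obviously succeeding) to do.
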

 As a corollary we immediately obtain that $S_B$ is strictly better than $2$ approximate.
\begin{corollary}\label{cor:betterBipoint}
For some absolute constant $\eta>0$, one has $ad_1+bd_2 \leq (2-\eta)\opt$. 
\end{corollary}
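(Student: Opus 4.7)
The plan is to take the convex combination of Lemmas \ref{lem:improvedS1} and \ref{lem:S2bound} with weights $a$ and $b=1-a$ that match the coefficients of the bipoint solution itself. Multiplying the first inequality by $a$ and the second by $b$ and summing, the facility-cost terms collapse: the left side contributes $\lambda(ak_1+bk_2) = \lambda k$ and the right side contributes exactly $\lambda k$, so those cancel. What remains is precisely
\[
ad_1 + bd_2 \;\leq\; \bigl(2 - a\,\eta_1(a) - (1-a)\,\eta_2\bigr)\opt.
\]
Thus it suffices to show that $\eta(a) := a\,\eta_1(a) + (1-a)\,\eta_2$ is bounded below by an absolute positive constant for $a \in [0,1]$.

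To that end, I would split into two regimes. First, when $a \leq 1/2$, the second summand satisfies $(1-a)\eta_2 \geq \eta_2/2 > 0$, so $\eta(a) \geq \eta_2/2$ using that $\eta_2$ is the absolute positive constant from Lemma \ref{lem:S2bound}. Second, when $a > 1/2$, we have $a\,\eta_1(a) \geq \eta_1(a)/2$; since by Lemma \ref{lem:improvedS1} $\eta_1(a) > 0$ for all $a > 0$, and since $\eta_1$ is produced as the value of a finite-dimensional continuous optimization program parameterized by $a$ (in the same spirit as $\opt^+_{JMS}$ used for $\eta_2$ in Section \ref{sec:improvedkMedian:boundS2}), $\eta_1$ can be taken continuous — hence lower semi-continuous — on the compact interval $[1/2, 1]$. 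Thus $\inf_{a \in [1/2,1]} \eta_1(a)$ is a strictly positive constant, giving a uniform positive lower bound on $\eta(a)$ in the second regime. Setting $\eta$ to the minimum of the two bounds yields the desired absolute constant.

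The main (and quite mild) obstacle here is not the combination step, which is a direct two-line calculation, but rather ensuring that $\eta_1(a)$ is bounded away from zero uniformly on compact subsets of $(0,1]$ rather than merely being pointwise positive. This will be immediate once Lemma \ref{lem:improvedS1} is established in Section \ref{sec:boundsS1} via an explicit optimization program depending continuously on $a$; no new ideas are needed beyond those already used to derive $\eta_2$. In particular, the whole argument is structurally the same as the one in Section \ref{sec:improvedkMedian}: one splits the range of the bipoint parameter, uses $S_2$'s guarantee where $S_1$'s becomes weak, and uses $S_1$'s guarantee where $S_2$'s contribution is down-weighted to zero.
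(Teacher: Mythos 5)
Your proposal is essentially the same as the paper's: take the convex combination with weights $a$ and $b$ to cancel the $\lambda k$ terms, and then split at $a=1/2$ so that the $(1-a)\eta_2$ term carries the bound on the left half and $a\,\eta_1(a)$ carries it on the right half. The only difference is that you explicitly flag and address the need for $\eta_1(a)$ to be bounded away from zero on $[1/2,1]$ (invoking continuity of the defining optimization program), whereas the paper simply asserts $\frac{1}{2}\min_{a\in[1/2,1]}\eta_1(a)>0$ without elaboration — your extra care is welcome but does not change the argument.
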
  
\begin{proof}
From Lemmas \ref{lem:S2bound} and \ref{lem:improvedS1}, a valid choice for $\eta$ is
$$
\eta:=\min_{a\in [0,1]}\{a\eta_1(a)+(1-a)\eta_2\}.
$$ 
It is clear that $\eta>0$. Indeed, for $a\leq 1/2$ this is at least $\eta_2/2>0$ while for $a\geq 1/2$ it is at least $\frac{1}{2}\min_{a\in [1/2,1]}\{\eta_1(a)\}>0$.
\end{proof}

We can use Corollary \ref{cor:betterBipoint} to achieve an LMP $(2-\eta)$-approximation for UFL with uniform facility costs $\lambda$ as follows. We consider any possible number $k$ of facilities in a feasible solution. For each such $k$, we use the construction from Section \ref{sec:improvedkMedian} to obtain a bipoint solution $a(k)S_1(k)+(1-a(k))S_2(k)$ for $k$-Median, for some value $\lambda(k)$ of the Lagrangian multiplier. We return the cheapest solution $S_i(k)$. 
\begin{theorem}\label{thr:improvedFLuniform}
There is a deterministic LMP $(2-\eta)$-approximation for facility location with uniform facility costs, where $\eta>0$ is the constant from Corollary \ref{cor:betterBipoint}.  
\end{theorem}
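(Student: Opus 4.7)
The plan is to show that, for any UFL solution $S^*$ with $|S^*| = k^*$ facilities and connection cost $d^* := d(S^*)$, the iteration of the algorithm that uses $k = k^*$ already produces a candidate whose UFL cost (with the given uniform opening cost $\lambda$) is at most $\lambda k^* + (2-\eta) d^*$. Since the algorithm returns the overall cheapest among $O(|F|)$ candidates, this immediately yields the claimed LMP $(2-\eta)$ bound.

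In the iteration with $k = k^*$, the binary search construction from Section~\ref{subsec:prelim} yields a bipoint $a S_1 + (1-a) S_2$ with Lagrangian multiplier $\lambda(k^*)$, satisfying $k_1 := |S_1| \leq k^* < k_2 := |S_2|$ and $a k_1 + (1-a) k_2 = k^*$. By Corollary~\ref{cor:betterBipoint}, the bipoint connection cost is bounded by $a d_1 + (1-a) d_2 \leq (2-\eta)\, \opt(k^*)$, where $d_i := d(S_i)$ and $\opt(k^*)$ is the optimal $k$-Median cost with $k^*$ centers. Because $S^*$ is itself feasible for $k$-Median with $k^*$ centers, $\opt(k^*) \leq d^*$, so the same inequality holds with $d^*$ in place of $\opt(k^*)$.

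The key calculation then exploits that opening costs are uniform. Using $a k_1 + (1-a) k_2 = k^*$,
\[
a(\lambda k_1 + d_1) + (1-a)(\lambda k_2 + d_2) \;=\; \lambda k^* + \bigl(a d_1 + (1-a) d_2\bigr) \;\leq\; \lambda k^* + (2-\eta)\, d^*.
\]
Since the minimum of two numbers is at most any convex combination of them, the cheaper of $S_1$ and $S_2$ already achieves UFL cost at most $\lambda k^* + (2-\eta)\, d^*$, which is what we wanted.

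The central point to verify — and the only real conceptual step beyond invoking Corollary~\ref{cor:betterBipoint} — is that the bipoint guarantee, formally stated against the $k$-Median optimum $\opt(k^*)$, transfers to an arbitrary $S^*$ with $|S^*| = k^*$; this is immediate from $\opt(k^*) \leq d^*$. The remaining items, namely the polynomial enumeration over $k \in \{1, \ldots, |F|\}$, the polynomial-time bipoint construction via Lemma~\ref{lem:discrete_localsearch} with the $\eps$-slack absorbed into the slack of $\eta$, the determinism of JMS combined with local search (the randomization in the analysis of Theorem~\ref{thr:localSearch} is used only to bound the cost of an arbitrary local optimum, not to produce one), and the slight binary-search gap between $\lambda_1$ and $\lambda_2$ (handled exactly as in Section~\ref{subsec:prelim}), are all standard.
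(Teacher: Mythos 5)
Your proposal is correct and follows essentially the same route as the paper: enumerate over $k$, build the bipoint pair $(S_1, S_2)$ with $a k_1 + (1-a) k_2 = k$, invoke Corollary~\ref{cor:betterBipoint} to bound $a d_1 + (1-a) d_2 \leq (2-\eta)\opt$, and observe that the cheaper of $S_1, S_2$ has UFL cost at most the convex combination $\lambda k + (a d_1 + (1-a) d_2)$. The only stylistic difference is that the paper phrases the last step via a randomized solution and its expectation, whereas you state directly that $\min \le$ convex combination; these are the same argument.
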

\begin{proof}
Consider the above algorithm, and let $\OPT$ be any feasible solution with total cost $\lambda k+\opt$, where $\opt=d(\OPT)$ and $k=|\OPT|$. Consider the random solution $S'$ which is equal to $S_1(k)$ with probability $a(k)$ and $S_2(k)$ otherwise. Clearly the cost $\lambda |S'|+d(S')$ of $S'$ (notice that here we consider the original facility cost $\lambda$ rather than $\lambda(k)$)  is not worse than the cost of the solution returned by our algorithm. The claim follows since the expected cost of $S'$  is  
$$
a(k) (\lambda |S_1(k)| + d(S_1(k))+(1-a(k))(\lambda |S_2(k)| + d(S_2(k))\leq \lambda\, k+(2-\eta)\opt. 
$$ 
\end{proof}

\subsection{Improved Upper Bounds on $S_1$}
\label{sec:boundsS1}

We next prove Lemma \ref{lem:improvedS1}. To that aim, we next assume $a>0$.

Recall that, for a parameter $\delta\in [0,1/2]$ to be fixed later, we defined a many-to-one matching between $S_1$ and $\OPT$, where a facility $f^*\in OPT$ is matched with a set $M(f^*)$ of facilities in $S_1$ which are $1/2$-captured by $f^*$ and which together $(1-\delta)$-capture $f^*$. We next use $S^M_1$ instead of $S^M$ and similarly for related quantities to stress that we are focusing on the locally optimal solution $S'=S_1$. With the usual notation, we let $\alpha^L:=\opt^L/\opt$, $\alpha^M:=\opt^M/\opt$, $\alpha^{MM}=\opt^{MM}/\opt$, $\beta_1=d_1/\opt$, and $\beta_1^{MM}=d^{MM}_2/\opt$. Let also $k^L=|\OPT^L|$, $k^M=|\OPT^M|$, $k^L_1=|S_1^L|$, and $k^M_1=|S_1^M|$. Finally, we let $d_1^L$ be the connection cost in $S_1$ related to the clients served by lonely facilities, and set $\beta_1^L=d_1^L/\opt$.

We next apply Theorem \ref{thr:localSearch} to $S_1$, neglecting the term depending on $\eps$ for the usual reasons. Hence we get
\begin{equation}\label{eqn:S1_boundLS}
 \lambda k_1+d_1 \leq \lambda k  +  (1+2 \alpha^{L} +  \frac{\delta}{1-\delta} (\beta_1^{MM} + \alpha^{MM}))\cdot \opt=:\lambda k+\rho^A_1(\delta,\alpha^L,\alpha^{MM},\beta_1^{MM})\opt.
\end{equation}
Observe that the above bound already implies an improved upper bound on the cost of $S_1$ (w.r.t. $\lambda k+2\opt$) when $\alpha^L$ and $\delta$ are sufficiently small and $a>0$. Indeed, from $ad_1+(1-a)d_2\leq 2\opt$ and recalling that $\beta=d_2/\opt$, we derive: 
\begin{equation}\label{eqn:UBbeta1}
\beta_1^{MM}\leq \beta_1=\frac{d_1}{\opt}\leq \frac{2\opt-(1-a)d_2}{a\cdot \opt}=\frac{2-(1-a)\beta_2}{a}\leq \frac{2}{a}.
\end{equation}

We next derive an alternative bound which implies the claim in the complementary case. Similarly to Section \ref{sec:improvedkMedian:boundS2}, we show that the facility cost $\lambda |OPT^L|$ due to lonely facilities in $\OPT$ is upper bounded by $O(\opt^L)$. This will allow us to exploit Corollary \ref{cor:modifiedJMSclaim}.

Let us partition $\OPT^L$ in two sets $D_A,D_B$ as follows. We define a directed graph $G^L$ over the node set $\OPT^L$. Let $cl(f)$ be the closest facility to $f$ in $\OPT$. For each $f^*\in \OPT^L$, we  add a directed edge $(f^*,cl(f^*))$ iff  $cl(f^*)\in \OPT^L$. In particular each node $f^*$ in $G^L$ has outdegree $1$ or $0$, the latter case happening if $cl(f^*)\in \OPT^M$. 
\begin{lemma}\label{lem:bipartite}
By breaking ties properly,  one can guarantee that cycles of $G^L$ have length at most $2$ (in particular, $G^L$ is bipartite). 
\end{lemma}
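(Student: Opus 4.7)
The plan is to exploit the functional structure of $G^L$: every node has outdegree at most one, so each weakly connected component contains at most one directed cycle, and everything outside the cycle forms an in-tree pointing toward it. Consequently, it suffices to show that, under an appropriate tie-breaking rule, every directed cycle has length at most $2$. Once this is established, bipartiteness of $G^L$ follows immediately, since a $2$-cycle collapses to a single undirected edge in the underlying simple graph, so each weakly connected component is then an undirected tree.

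First I will argue that in any hypothetical directed cycle $f_1 \to f_2 \to \cdots \to f_\ell \to f_1$ with $\ell \geq 3$, all cycle edges must have the same length. The condition $cl(f_i) = f_{i+1}$ forces $dist(f_i, f_{i+1}) \leq dist(f_i, f_{i-1})$ for every $i$ (indices modulo $\ell$). Setting $d_i := dist(f_i, f_{i+1})$ and using symmetry of the metric, this gives $d_i \leq d_{i-1}$ for all $i$; iterating around the cycle yields $d_1 \leq d_\ell \leq d_{\ell-1} \leq \cdots \leq d_2 \leq d_1$, hence $d_1 = d_2 = \cdots = d_\ell =: d^*$. Moreover, since $cl(f_i) = f_{i+1}$, the common value $d^*$ equals $\min_{f \in \OPT \setminus \{f_i\}} dist(f_i, f)$ for every cycle vertex $f_i$.

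Next I will fix an arbitrary strict total order $\prec$ on $\OPT$ and define the tie-breaking rule: among all facilities in $\OPT \setminus \{f\}$ attaining the minimum distance to $f$, $cl(f)$ is chosen to be the $\prec$-smallest one. Now suppose for contradiction that a cycle of length $\ell \geq 3$ exists, and let $f_1$ be the $\prec$-smallest vertex in the cycle. Since $dist(f_2, f_1) = d^* = dist(f_2, f_3)$ and $d^*$ is the minimum distance from $f_2$ to $\OPT \setminus \{f_2\}$, both $f_1$ and $f_3$ lie in the tie set for $cl(f_2)$. The tie-breaking rule therefore yields $f_3 = cl(f_2) \preceq f_1$, and because $\ell \geq 3$ implies $f_3 \neq f_1$ we in fact get $f_3 \prec f_1$, contradicting the minimality of $f_1$ within the cycle. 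This rules out all cycles of length $\geq 3$, and self-loops are excluded since $cl(f) \neq f$ by definition.

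I do not foresee a substantial obstacle; the only subtle point is insisting that the order $\prec$ is a single global order used consistently across every invocation of $cl(\cdot)$, so that the comparison $f_3 \prec f_1$ extracted at $f_2$ can be pitted against the claim that $f_1$ is $\prec$-minimum over the entire cycle. Note that the argument uses only symmetry of the metric and does not require the triangle inequality.
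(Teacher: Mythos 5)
Your proof is correct and takes essentially the same approach as the paper's: a global tie-break order, the observation that all consecutive distances around any cycle of length $\geq 3$ must be equal, and a contradiction from a single arc that violates the tie-break. (The paper picks the last vertex in the order and examines the arc pointing into it, whereas you pick the $\prec$-smallest vertex and examine the arc leaving its successor, but this is the same argument up to relabeling.)
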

\begin{proof}
Suppose that we sort the nodes arbitrarily, and in case of ties we direct the edge leaving $v$ to the nearest node $u$ that appears first in this sorted list. Assume by contradiction that there exists a directed cycle $C=(v_1,v_2,\ldots,v_k,v_1)$ with $k\geq 3$. Assume w.l.o.g. that $v_1$ appears last in the sorted list among nodes of $C$. Let $d_{i,j}$ denote the distance between $v_i$ and $v_j$. By the minimality of the distances corresponding to the arcs of $G^{L}$ and by the symmetry of the distances between nodes in the original graph we have:
$$
d_{1,2}\leq d_{1,k}\leq  d_{k-1,k}\leq d_{k-2,k-1}\leq \ldots \leq d_{2,3}\leq d_{1,2}.
$$ 
This implies that all the distances corresponding to the arcs in $C$ are identical. This is however a contradiction since node $v_{k}$ should have its outgoing arc directed towards $v_{k-1}$ rather than towards $v_1$ since $v_{k-1}$ appears earlier than $v_1$ in the sorted list and $d_{1,k}=d_{k-1,k}$.
\end{proof}
The sets $D_A$ and $D_B$ are defined by computing a bipartition of $G^L$. Let $k_A=|D_A|$ and $k_B=|D_B|$. Notice that $k^L:=k_A+k_B$. 
\begin{remark}
$k_A$ and $k_B$ are potentially larger than any constant, hence removing $D_A$ or $D_B$ from $S_1$ might not be an option considered by the local search algorithm. Nonetheless, we can to upper bound the increase of the connection cost of $S_1$ due to the removal of such sets.
\end{remark}

The proof of the following lemma is analogous to the proof of Lemma \ref{lem:boundLonelyFacilityCost}, with some subtle technical differences due to the fact that, unlike $S_2$, $\OPT$ is not a locally optimal facility location solution (hence leading to a slightly weaker bound).
\begin{lemma}\label{lem:boundDeletionOPT}
Let $\OPT_X=\OPT\setminus D_X$ and $d_X=d(\OPT_X)$ for $X\in \{A,B\}$. Then 
$$
d_A+d_B \leq 2\opt+\frac{1}{\delta}(\opt+d_1+\opt^L+d^L_1).
$$
\end{lemma}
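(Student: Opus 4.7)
My plan is to mimic the random-path analysis of Lemma~\ref{lem:boundLonelyFacilityCost}, but with the roles of $\OPT$ and $S_1$ swapped, since here we remove $\OPT$-facilities and must reassign their clients by routing through $S_1$. The two key differences are that $\OPT$ is \emph{not} a locally optimal facility-location solution (so we cannot charge a $\lambda$-contribution per lonely facility and must bound the reassignment cost directly), and that each constructed path must end at a facility that is still present after removing $D_A$ (resp.\ $D_B$).

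First I would decouple $d_A$ and $d_B$. Every client served in $\OPT$ by a matched facility has cost $\opt(c)$ in both $\OPT_A$ and $\OPT_B$, and every client served by $D_X$ has cost $\opt(c)$ in $\OPT_{\bar X}$ and requires reassignment only in $\OPT_X$. This yields
$$
d_A+d_B \;\leq\; 2\opt^M + \opt^L + \sum_{X\in\{A,B\}}\sum_{f^*\in D_X}\sum_{c\in C_\OPT(f^*)} \text{reassign}_X(c) \;\leq\; 2\opt + R,
$$
and it remains to show $R\leq \tfrac{1}{\delta}(\opt+d_1+\opt^L+d_1^L)$.

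Next, for each lonely $f^*\in\OPT^L$ and client $c\in C_\OPT(f^*)$ I would define a random path $P(c)$ by swapping the roles of $\OPT$ and $S$ in the construction of Lemma~\ref{lem:boundLonelyFacilityCost}. Set $C''(f^*):=\{c'\in C_\OPT(f^*): S_1(c') \text{ is not }1/2\text{-captured by }f^*\}$; the lonely condition gives $|C''(f^*)|\geq \delta\,|C_\OPT(f^*)|$. If $c\in C''(f^*)$ (``case a''), let $f_1:=S_1(c)$, pick $rand(c)$ uniformly in $C_{S_1}(f_1)\setminus C_\OPT(f^*)$ (a set of size at least $|C_{S_1}(f_1)|/2$ since $f_1$ is not $1/2$-captured by $f^*$), and take $P(c)=c\to f_1\to rand(c)\to \OPT(rand(c))$. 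Otherwise (``case b''), pick $next(c)$ uniformly in $C''(f^*)$ and concatenate $c\to f^*\to next(c)\to P(next(c))$. Summing $\mathrm{E}[\ell(P(c))]$ along the lines of Lemma~\ref{lem:boundLonelyFacilityCost}, the fact that $|C''(f^*)|\geq \delta|C_\OPT(f^*)|$ replaces the $\tfrac{1-\delta}{\delta}$-type factor of that lemma by a $\tfrac{1}{\delta}$-type factor, producing contributions proportional to $\opt^L$ (from the $\opt(c)$ and $\opt(next(c))$ terms in case b), $d_1^L$ or $d_1$ (from $d_1(c)$ in case a), and $\opt$, $d_1$ (from $\opt(rand(\cdot))$ and $d_1(rand(\cdot))$ after inverting the uniform sampling).

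The main obstacle I anticipate is that the endpoint $g(c):=\OPT(rand(\cdot))$ of $P(c)$, while guaranteed to differ from $f^*$, may still lie in the \emph{same} part $D_{X(f^*)}$ and therefore not belong to $\OPT_{X(f^*)}$. Here I would use the bipartite structure from Lemma~\ref{lem:bipartite}: if $g(c)\in D_{X(f^*)}$, then $g(c)\in\OPT^L$ and its closest neighbour $cl(g(c))$ lies in the opposite part of the bipartition or in $\OPT^M$, hence in $\OPT\setminus D_{X(f^*)}$. The triangle inequality gives $\text{reassign}_{X(f^*)}(c)\leq \ell(P(c)) + \dist(g(c),cl(g(c)))$, and $\dist(g(c),cl(g(c)))\leq \dist(g(c),f^*)$ which I would bound by the reverse walk along $P(c)$. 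Controlling this ``bipartite correction'' without blowing up the constants beyond $\tfrac{1}{\delta}$ will be the delicate part; the fact that the bound contains the extra additive $\opt$ (compared to the cleaner $\tfrac{1-\delta}{\delta}$-bound of Lemma~\ref{lem:boundLonelyFacilityCost}) strongly suggests that this correction is exactly what is being absorbed.
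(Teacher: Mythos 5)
You correctly diagnose the central obstacle (the endpoint of a na\"ive reassignment path may itself lie in the deleted block $D_{X(f^*)}$), and the ingredients you reach for --- random paths, the $|C''(f^*)|\geq\delta|C_\OPT(f^*)|$ lower bound, the $rand^{-1}$ inversion, and the bipartite structure from Lemma~\ref{lem:bipartite} --- are all the right ones. However, your proposed resolution is genuinely different from the paper's, and the ``delicate part'' you flag does not close cleanly. The paper's trick is to make \emph{every} path pass through $f^*$, i.e.\ $c \to f^* \to next(f^*) \to f_1 \to rand(f^*) \to f'$, even when $c$ itself lies in $C'(f^*)$; there is no case split on whether $S_1(c)$ is captured. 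Then the endpoint $f'$ being in $D_X$ is harmless: since $cl(f^*)$ is the \emph{closest} facility of $\OPT$ to $f^*$ and is guaranteed to survive in $\OPT_X$ by the bipartition, one has $\dist(c,\OPT_X)\leq\dist(c,cl(f^*))\leq\dist(c,f^*)+\dist(f^*,cl(f^*))\leq\dist(c,f^*)+\dist(f^*,f')\leq\ell(P(c))$, regardless of where $f'$ lives. No correction term is ever needed.

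Your case (a) path $c\to f_1\to rand(c)\to\OPT(rand(c))$ skips $f^*$ entirely, which is exactly what forces you to introduce the ``bipartite correction'' $\dist(g(c),cl(g(c)))$. When you expand that correction via the reverse walk you pay roughly $\ell(P(c))+\opt(c)$ on top of $\ell(P(c))$, i.e.\ a factor of $2$ on the path length. This factor of $2$ is not absorbed by the stated constants: the lemma claims precisely $\frac{1}{\delta}(\opt+d_1+\opt^L+d_1^L)$, and there is no slack in the $rand$-side inversion (the paper's $\E[\sum_{f^*\in rand^{-1}(c)}|C_\OPT(f^*)|]\leq\frac{1}{\delta}$ is tight). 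You would end up with something like $\frac{2}{\delta}(\cdots)$ or worse, which is a weaker lemma than the one stated. Relatedly, your closing heuristic --- that the extra additive $\opt$ in the bound is ``exactly what absorbs'' the bipartite correction --- is a misreading: the $\opt+d_1$ in the bound comes from inverting the $rand(\cdot)$ sampling over \emph{all} clients (not just those served by $\OPT^L$), while the $\opt^L+d_1^L$ comes from inverting the $next(\cdot)$ sampling; neither is a slack term set aside for an endpoint correction. The paper even flags this explicitly: ``This path $P(c)$ always includes $f^*$ (this is a technical difference w.r.t.\ Lemma~\ref{lem:boundLonelyFacilityCost}),'' pointing out that you cannot port the path structure of that lemma verbatim here.

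Two minor further remarks. First, the paper's path is parametrized by the facility ($next(f^*)$, $rand(f^*)$ are chosen once per lonely $f^*$ and shared by all $c\in C_\OPT(f^*)$), which is what lets it pull out the multiplicity $|C_\OPT(f^*)|$ and then apply the lonely condition $|C_\OPT(f^*)|/|C'(f^*)|\leq 1/\delta$; your per-client randomness is not wrong but makes the $rand^{-1}$ bookkeeping messier. Second, note that $\OPT$ is not a local optimum, so unlike Lemma~\ref{lem:boundLonelyFacilityCost} you cannot convert the reassignment cost into a per-facility $\lambda$-charge; the paper's proof therefore bounds the increase of $d_A+d_B$ over $2\opt$ directly, which is exactly what necessitates the ``through $f^*$'' device you are missing.
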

\begin{proof}
Consider a given $D_X$, and any client $c$ served by $f^*\in D_X$ in $\OPT$. We upper bound the connection cost of $c$ in $\OPT_X=\OPT\setminus D_X$ by upper bounding the distance between $c$ and $cl(f^*)$. Notice that $cl(f^*)\in \OPT_X$ by construction. In order to do that we describe a random path $P(c)$ between $c$ and some facility $f'\in \OPT$ distinct from $f^*$ (possibly $f'\in D_X$). This path $P(c)$ always includes $f^*$ (this is a technical difference w.r.t. Lemma \ref{lem:boundLonelyFacilityCost}). Since 
$$
dist(c,cl(f^*))\leq dist(c,f^*)+dist(f^*,cl(f^*)) \leq dist(c,f^*)+dist(f^*,f'),
$$
the expected length $\ell(P(c))$ of $P(c)$ is a valid upper bound on the connection cost of $c$ in $\OPT_X$. 

The path $P(c)$ is defined as follows. Let $C'(f^*)$ be the clients served by $f^*$ in $\OPT$ which in $S_1$ are served  by a facility which is not $1/2$-captured by $f^*$. Notice that $|C'(f^*)|\geq \delta |C_{\OPT}(f^*)|$ since otherwise $f^*$ would be matched. Let $next(f^*)$ be a random client in $C'(f^*)$. Let $f_1=S_1(next(f^*))$ be the facility serving $next(f^*)$ in $S_1$. Let $rand(f^*)$ be a random client served by $f_1$ in $S_1$ but not by $f^*$ in $\OPT$. Finally let $f'\neq f^*$ be the facility serving $rand(f^*)$ in $\OPT$. Then $P(c)$ is the concatenation of the shortest paths from $c$ to $f^*$, from $f^*$ to $next(f^*)$, from $next(f^*)$ to $f_1$, from $f_1$ to $rand(f^*)$, and from $rand(f^*)$ to $f'$.

By the above discussion, the increase of the connection cost of $\OPT_A$ and $\OPT_B$ together w.r.t $2\opt$ 
is upper bounded by
\begin{align*}
& \sum_{f^*\in \OPT^L}\sum_{c\in C_{\OPT}(f^*)}\ell(P(c))-\opt(c)\\
= & \sum_{f^*\in \OPT^L}|C_{\OPT}(f^*)|\left(\opt(next(f^*))+d_1(next(f^*))+d_1(rand(f^*))+\opt(rand(f^*))\right).
\end{align*}
Let us bound the terms in the above sum involving $next(f^*)$ and $rand(f^*)$ separately. For the part involving $next(f^*)$, we observe that each $c'\in C'(f^*)$ is selected as $next(f^*)$ with probability $\frac{1}{|C'(f^*)|}\leq \frac{1}{\delta|C_{\OPT}(f^*)|}$, and it that case contributes to the sum with $|C_{\OPT}(f^*)|(\opt(c')+d_1(c'))$. Thus
\begin{align*}
& Ex[\sum_{f^*\in \OPT^L}|C_{\OPT}(f^*)|\opt(next(f^*))+d_1(next(f^*))]  \leq \sum_{f^*\in \OPT^L}\sum_{c'\in C'(f^*)}\frac{1}{\delta}(\opt(c')+d_1(c'))\\
\leq & \frac{1}{\delta}\sum_{f^*\in \OPT^L}\sum_{c\in C(f^*)}\opt(c)+d_1(c)\leq \frac{1}{\delta}(\opt^{L}+d_1^{L}).
\end{align*}
Consider next the terms depending on $rand(f^*)$. For a client $c$, define $rand^{-1}(c)$ as the facilities $f^*$ with $rand(f^*)=c$. Then the expected contribution of $c$ to the considered terms in the sum is
$$
(\opt(c)+d_1(c))\cdot  Ex[\sum_{f^*\in rand^{-1}(c)}|C_{\OPT}(f^*)|]
$$ 
Let $f_1=S_1(c)$ be the facility serving $c$ in $S_1$. Let us partition $C_{S_1}(f_1)$ in subsets $C^1,\ldots,C^h$ depending on the facilities $f^1,\ldots,f^h$ serving them in $\OPT$. W.l.o.g., assume $c\in C_1$. Notice that for each $f^i$, $i\geq 2$, $f^i$ selects $c$ as $rand(f^i)$ with probability $\frac{1}{|C_{S_1}(f_1)|-|C^i|}$. Then
$$
Ex[\sum_{f^*\in rand^{-1}(c)}|C_{\OPT}(f^*)|]=\sum_{i\geq 2}\frac{|C(f^i)|}{|C^i|}\frac{|C^i|}{|C_{S_1}(f_1)|-|C^i|}\leq \frac{1}{\delta}\sum_{i\geq 2}\frac{|C^i|}{|C_{S_1}(f_1)|-|C^i|}.
$$ 
As in the proof of Lemma \ref{lem:boundLonelyFacilityCost}, the righthand side of the above inequality is maximized for $h=2$, in particular being $|C^2|/|C^1|$. Since $f^2$ does not $1/2$-capture $f_1$ by construction, we have that $|C^2|\leq \frac{1}{2}|C_{S_1}(f_1)|$ and hence $|C^2|/|C^1|\leq 1$. Altogether
$$
Ex[\sum_{f^*\in rand^{-1}(c)}|C_{\OPT}(f^*)|]\leq \frac{1}{\delta}.
$$
Thus the terms involving $rand(f^*)$ altogether contribute in expectation with at most
$$
\sum_{c\in C}\frac{1}{\delta}(\opt(c)+d_1(c))=\frac{1}{\delta}(\opt+d_1).
$$
The claim follows.
\end{proof}
As a corollary, we achieve the desired alternative bound on the total cost of $S_1$.
\begin{corollary}\label{cor:boundDeletionOPT}
One has 
$$
\lambda k_1+d_1\leq \lambda (k-\frac{k^L}{2})+(2+\frac{1+\beta_1+\beta_1^L+\alpha^L}{\delta})\cdot \opt. 
$$
\end{corollary}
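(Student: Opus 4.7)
The plan is to leverage the fact that $S_1$ is produced by applying local search to the JMS solution $S_{JMS}$ run on the uniform-cost UFL instance with facility cost $\lambda$. Since local search only decreases the total cost, we have $\lambda k_1 + d_1 \leq \lambda |S_{JMS}| + d(S_{JMS})$, and the standard LMP $2$-approximation guarantee of JMS yields, for any feasible $S^* \subseteq F$,
\[
\lambda |S_{JMS}| + d(S_{JMS}) \leq \lambda |S^*| + 2\, d(S^*).
\]
I will apply this inequality with two different choices of $S^*$, namely $\OPT_A = \OPT \setminus D_A$ and $\OPT_B = \OPT \setminus D_B$, noting that $|\OPT_A| = k - k_A$ and $|\OPT_B| = k - k_B$.

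Concretely, the two instantiations give
\[
\lambda k_1 + d_1 \leq \lambda(k - k_A) + 2 d_A \quad\text{and}\quad \lambda k_1 + d_1 \leq \lambda(k - k_B) + 2 d_B.
\]
Averaging these two bounds and using $k_A + k_B = k^L$ (since $\{D_A, D_B\}$ partitions $\OPT^L$ by the bipartition of $G^L$ from Lemma~\ref{lem:bipartite}), we obtain
\[
\lambda k_1 + d_1 \leq \lambda\Big(k - \tfrac{k^L}{2}\Big) + (d_A + d_B).
\]

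To finish, I will plug in the bound from Lemma~\ref{lem:boundDeletionOPT}, namely $d_A + d_B \leq 2\opt + \frac{1}{\delta}(\opt + d_1 + \opt^L + d_1^L)$, and rewrite using $\beta_1 = d_1/\opt$, $\beta_1^L = d_1^L/\opt$, and $\alpha^L = \opt^L/\opt$:
\[
\lambda k_1 + d_1 \leq \lambda\Big(k - \tfrac{k^L}{2}\Big) + \Big(2 + \frac{1 + \beta_1 + \beta_1^L + \alpha^L}{\delta}\Big)\opt,
\]
which is the claimed inequality. The proof is essentially a bookkeeping argument once one recognizes that the averaging trick over the two ``lonely-facility-removed'' solutions $\OPT_A$ and $\OPT_B$ converts the $\lambda k^L$ that one would pay by deleting all lonely facilities at once into only $\lambda k^L/2$, at the price of adding the two connection costs $d_A + d_B$ already controlled by Lemma~\ref{lem:boundDeletionOPT}. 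The only mild subtlety is justifying why we can invoke JMS's $2$-LMP bound against $\OPT_A$ and $\OPT_B$ directly, which follows from the fact that local search is only used to improve a JMS starting solution, so any global LMP guarantee of JMS is inherited by $S_1$.
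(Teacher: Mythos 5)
Your proof matches the paper's argument exactly: both invoke the LMP~$2$ guarantee of $S_1$ (inherited from the JMS start, since local search only decreases cost) against $\OPT_A$ and $\OPT_B$, average the two resulting bounds to halve the $\lambda k^L$ savings, and then substitute the bound on $d_A + d_B$ from Lemma~\ref{lem:boundDeletionOPT}. The only difference is that you spell out the inheritance of the LMP bound from JMS through local search, which the paper's one-line proof takes for granted.
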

\begin{proof}
Since $S_1$ is LMP 2-approximate, from the existence of the solutions $\OPT_A$ and $\OPT_B$ one gets $\lambda k_1+d_1\leq \lambda (k-k_A)+2d_A$ and $\lambda k_1+d_1\leq \lambda (k-k_B)+2d_B$. The claim follows by averaging and Lemma \ref{lem:boundDeletionOPT}.
\end{proof}

Let $\eta\in [0,1]$ be a parameter to be fixed later.  Next we distinguish two cases. If $\frac{\lambda k^{L}}{2\opt}\geq \frac{1+\beta_1+\beta_1^L+\alpha^L}{\delta}+\eta$, then 
\begin{equation}\label{eqn:S1_boundLargeD}
\lambda k_1+d_1\overset{Cor.\, \eqref{cor:boundDeletionOPT}}{\leq} \lambda k +(2-\eta)\opt.
\end{equation}
Otherwise, we have
$$
\lambda k^L\leq 2(\frac{1+\beta_1+\beta_1^L+\alpha^L}{\delta}+\eta)\opt = 2(\frac{1+\beta_1+\beta_1^L}{\alpha^L\delta}+\frac{1}{\delta}+\eta)\opt^L=:T_1 \opt^L.
$$
In this case we can apply Corollary \ref{cor:modifiedJMSclaim}
with $\OPT'=\OPT^L$ and $T=T_1$ to infer
\begin{equation}\label{eqn:S1_boundLMP}
\lambda k_1+d_1 \leq \lambda k+(2\cdot (1-\alpha^L)+\opt^{+}_{JMS}(q,T_{1})\cdot\alpha^L)\opt=:\lambda k+\rho^B_1(\delta,\alpha^L,\beta_1,\beta^L_1,\eta).
\end{equation}

From the above discussion we get 
$$
\lambda k_1+d_1\leq \lambda k+(2-\eta_1(a))\opt
$$ 
where
$$
2-\eta_1(a)\geq \min\{\rho^A_1(\delta,\alpha^L,\alpha^{MM},\beta_1^{MM})),\max\{2-\eta,\rho^B_1(\delta,\alpha^L,\beta_1,\beta^L_1,\eta)\}\}.
$$
We can simplify the above formula by making some pessimistic choices. From \eqref{eqn:UBbeta1} and assuming pessimistically that $\beta_2=0$, we can set $\beta_1=2/a$. We can pessimistically upper bound $\alpha^{MM}$ with $\alpha^M=1-\alpha^L$ and $\beta_1^{MM}$ with $\beta_1^{MM}=\beta_1-\beta^L_1=\frac{2}{a}-\beta^L_1$. Then we can choose freely the value of $\delta$ to minimize  the lower bound on $2-\eta_1(a)$. For a given $\delta$, we can pessimistically choose the values of $\alpha^L\in [0,1]$ and $\beta_1^L\in [0,\frac{2}{a}]$ that maximize the minimum. Given that, we can freely choose $\eta$ to minimize the maximum. Altogether we obtain
$$
2-\eta_1(a) = \min_{\delta\in [0,1/2]}\max_{\alpha^L\in [0,1],\beta^L_1\in [0,\frac{2}{a}]}\min_{\eta\in [0,1]}\min\{\rho^A_1(\delta,\alpha^L,1-\alpha^L,\frac{2}{a}-\beta^L_1),\max\{2-\eta,\rho^B_1(\delta,\alpha^L,\frac{2}{a},\beta_1^L,\eta)\}\}
$$
It is not hard to see that for $a>0$ one has $\eta_1(a)>0$. Indeed, for positive and small enough $\delta$ and $\alpha^L$, one has $\rho^A_1(\delta,\alpha^L,1-\alpha^L,\frac{2}{a}-\beta^L_1)<2$. Otherwise, we can assume that $\delta$ and $\alpha^L$ are both bounded away from zero. For any positive $\eta$ trivially $2-\eta<2$. Finally, for $\eta$ small enough one has that $T_1$ is upper bounded by a constant. It turns out that $\opt^{+}_{JMS}(q,T_{1})$ is strictly smaller than $2$ for any constant $T_1$ and large enough $q$. This implies $\rho^B_1(\delta,\alpha^L,\frac{2}{a},\beta_1^L,\eta)<2$. More formally, one can replace $\opt^{+}_{JMS}(q,T_{1})$ in the above analysis with the analytical upper bound $\opt'_{JMS}(T_1)$ on $\opt_{JMS}(T_{1})$ obtained in Section \ref{sec:LMP_old}. It is not hard to show analytically that $\opt'_{JMS}(T_1)<2$ for any constant $T_1$, hence the claim.

\section{A Refined Approximation Factor for $k$-Median}
\label{sec:refinedkMedian}

In this section we sketch how to refine the approximation factor for $k$-Median from Section \ref{sec:improvedkMedian}. Consider a bipoint solution $S_B=aS_1+(1-a)S_2$ computed as in Section \ref{sec:improvedkMedian}. With the usual notation, $d_i=d(S_i)$ and $\beta_i=d_i/\opt$. Recall that Li and Svensson \cite{LiS16} present a $2(1+2a)+\eps$ approximation for any constant $\eps > 0$ (with extra running time $n^{\poly(1/\eps)}$), hence we can also assume that $a$ is bounded away from $0$ since we aim at an approximation factor larger than $2$. Notice also that $a\beta_1+(1-a)\beta_2\leq 2$, hence $\beta_1\leq \frac{2}{a}$. This means that $S_1$, which is a feasible solution, provides a $2/a$ approximation. Hence we can also assume that $a$ is bounded away from $1$ for similar reasons. Recall that, for $a$ bounded away from $0$ and $1$, Li and Svensson \cite{LiS16} present a $\frac{2(1+2a)}{(1+2a^2)}+\eps$ approximation that we will next use.

We will consider the best of $3$ approximate solutions:
\begin{enumerate}
\item The solution $S_1$ alone, which provides a $\beta_1$ approximation.
\item The solution $S_{LS}$ computed with the mentioned $\frac{2(1+2a)}{(1+2a^2)}+\eps$ approximation in \cite{LiS16}.
\item The solution obtained  by rounding our bipoint solution $(S_1,S_2)$ with the $\rho_{BR}<1.3371$ rounding algorithm in \cite{ByrkaPRST17}.
\end{enumerate}
Let us focus on the latter solution. The discussion from Section \ref{sec:improvedFLuniform} directly implies that this solution has approximation ratio at most
$$
\rho_{BR}(a\beta_1+(1-a)\beta_2)\leq \rho_{BR}(a(2-\eta_1(a))+(1-a)(2-\eta_2))=\rho_{BR}(2-a\eta_1(a)-(1-a)\eta_2).
$$
This already gives an improvement on the approximation factor from Section \ref{sec:improvedkMedian} due to the extra term $-a\eta_1(a)$ which is strictly positive for $a>0$. We can further refine this factor as follows. In the computation of $\eta_1(a)$ we assumed pessimistically that $\beta_1=2/a$, and in the computation of $\eta_2$ that $\beta_2=2$. However these two conditions cannot simultaneously hold unless $a=1$. In the latter case however $S_1$ provides a $2$ approximation already. Under the assumption $a<1$ and fixing a value $\beta_1\in [2,2/a]$, we can rather use $\frac{2-a\beta_1}{1-a}$ as a pessimistic upper bound on $\beta_2$. Hence with essentially the same analysis as before we get two new functions $\eta_1(a,\beta_1)\geq \eta_1(a)$ and $\eta_2(a,\beta_1)\geq \eta_2$ for a fixed $\beta_1$, leading to the refined approximation factor
$$
\rho_{BR}(2-a\eta_1(a,\beta_1)-(1-a)\eta_2(a,\beta_1)).
$$
Taking the best of the $3$ mentioned approximate solutions, and choosing the worst possible parameter $a\in [0,1]$ and $\beta_1\in [2,2/a]$,
we obtain the approximation factor:
$$
\rho_{kMed}=\max_{a\in [0,1],\beta_1\in [2,\frac{2}{a}]}\min\{\beta_1,\frac{2(1+2a)}{1+2a^2}+\eps,\rho_{BR}\left(2-a\eta_1(a,\beta_1)-(1-a)\eta_2(a,\beta_1) \right)\}
$$

\newcommand{\extendJMS}{\texttt{Extend-JMS}}
\newcommand{\localsearchJMS}{\texttt{LocalSearch-JMS}\xspace}

\section{An Improved LMP Approximation for Facility Location:\\ General Opening Costs}
\label{sec:fl}
Consider a general facility location instance where each facility $f \in F$ has a cost $\open(f)$.
In this section, we prove the following theorem for Facility Location with general facility cost.
\begin{theorem}
There exists an absolute constant $\eta > 0$ such that 
there is a polynomial time LMP $(2-\eta)$ approximation algorithm for Facility Location with general facility costs. 
\label{thr:facility_location}
\end{theorem}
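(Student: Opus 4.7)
The plan is to mirror the win-win strategy of Section~\ref{sec:improvedFLuniform}, but replace the standard bounded-size-swap local search (which provably fails for non-uniform costs by Claim~\ref{claim:ls_bad}) with the JMS-based local search \localsearchJMS sketched in the introduction. Concretely, I would start from an initial solution $S_0$ obtained by running JMS on the input instance; by \cite{JMS02} this is already LMP-$2$ approximate. I then refine $S_0$ by the following local move: pick any subset $A\subseteq S'$ with $|A|\le \Delta=O(1)$, set $\open(f)=0$ for every $f\in S'\setminus A$, and run JMS on the modified instance to obtain $S''$. If the total cost (measured with the original opening costs) of $S''$ is strictly smaller than that of $S'$, replace $S'$ by $S''$ and iterate; otherwise stop. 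The standard quantization argument of Lemma~\ref{lem:discrete_localsearch} gives polynomial running time up to an $O(\eps(\open(\OPT)+d(\OPT)))$ additive loss. Let $S'$ denote the final solution.

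Next, I would invoke the local-search guarantee for this new neighborhood, Theorem~\ref{thr:extendJMS}, which is the analog of Theorem~\ref{thr:localSearch} and gives the same type of bound except that $3\opt^L$ is replaced by $4\opt^L$ (the extra factor comes from the JMS approximation ratio paid inside a local move, since a swap that would have paid at most $3\opt^L$ in the classic analysis now has to be realized through a JMS run whose cost can be a constant factor worse). For any UFL solution $\OPT$, decomposed into matched and lonely facilities with parameter $\delta\in(0,1/2)$ as in Section~\ref{sec:localSearch}, this yields
\[
\open(S')+d(S')\le \open(\OPT)+\opt^M+4\opt^L+\frac{\delta}{1-\delta}(d^{MM}+\opt^{MM})+O(\eps(\open(\OPT)+d(\OPT)+d(S'))).
\]
When the fraction $\alpha^L:=\opt^L/d(\OPT)$ is small and $\delta$ is chosen small enough, this bound is already at most $\open(\OPT)+(2-\eta)d(\OPT)$ for an absolute constant $\eta>0$.

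To handle the complementary case $\alpha^L$ and $\delta$ bounded away from $0$, I would prove a non-uniform analog of Lemma~\ref{lem:boundLonelyFacilityCost}. The key point is that for any $f^*\in \OPT^L$, the swap move that simply adds $f^*$ to $S'$ (closing nothing; an admissible move with $A=\emptyset$) must not improve the cost; arguing exactly as in the proof of Lemma~\ref{lem:boundLonelyFacilityCost}, but charging per-facility opening costs instead of a uniform $\lambda$, shows that $\open(\OPT^L)\le T\cdot d(\OPT^L)$ for a constant $T=T(\alpha^L,\delta,d(S')/d(\OPT))$. Since $S'$ is LMP-$2$ approximate (it dominates the starting JMS solution $S_0$ in cost), $d(S')/d(\OPT)\le 2$, so $T$ is an absolute constant. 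Then Corollary~\ref{cor:modifiedJMSclaim} applied to $S_0$ with $\OPT'=\OPT^L$ yields
\[
\open(S')+d(S')\le \open(S_0)+d(S_0)\le \open(\OPT)+2\,d(\OPT\setminus\OPT^L)+\opt^+_{JMS}(q,T)\cdot d(\OPT^L),
\]
and since $\opt^+_{JMS}(q,T)<2$ for any finite $T$ and large enough $q$ by Lemma~\ref{lem:jms-UB-new} and Corollary~\ref{cor:jms-general}, this is strictly better than $\open(\OPT)+2d(\OPT)$, again by an absolute constant. Combining the two bounds and minimizing over $\delta$ as in the proof of Lemma~\ref{lem:S2bound} produces an absolute $\eta>0$.

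The main obstacle will be a clean proof of Theorem~\ref{thr:extendJMS}, i.e., showing that the JMS-inside-the-neighborhood local search indeed satisfies the same matched-vs.-lonely analysis as Theorem~\ref{thr:localSearch} with only the $3\to 4$ degradation on $\opt^L$. The delicate point is that, for a swap pair $(A_i,B_i)$ arising in the analysis of Theorem~\ref{thr:localSearch}, the corresponding local move here does not explicitly insert $B_i$; it only zeroes out the cost of $S'\setminus A_i$ and lets JMS pick what to open. One must show that JMS, run on this modified instance, produces a solution whose total cost is at most the LMP-$2$ bound applied to the reference solution $(S'\setminus A_i)\cup B_i$, hence at most $\open(B_i)+2\cdot(\text{connection cost of that reference solution})$. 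Tracing this factor of $2$ through the reassignment arguments of the proof of Theorem~\ref{thr:localSearch} produces exactly the extra $\opt^L$ term, while the matched-clients analysis is unaffected because those contributions already involve only the connection cost of $\OPT$ restricted to matched facilities and can absorb the JMS factor without a loss larger than the $\frac{\delta}{1-\delta}$ slack already present.
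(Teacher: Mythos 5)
Your high-level plan (JMS-seeded local search whose moves invoke \extendJMS, plus a win-win between the Theorem~\ref{thr:extendJMS} bound for small $\opt^L$ and a small-facility-cost bound fed into Corollary~\ref{cor:modifiedJMSclaim}) matches the paper, but two steps do not survive inspection. First, your route to $\open(\OPT^L)\le T\cdot d(\OPT^L)$ is broken: adding $f^*\in\OPT^L$ to $S'$ and invoking local optimality only yields $\open(f^*)\geq(\text{connection savings})$, a \emph{lower} bound on $\open(f^*)$, which is the wrong direction. Lemma~\ref{lem:boundLonelyFacilityCost} bounds the opening cost of lonely facilities \emph{of the local optimum} by removing them; even its general-cost analogue (Lemma~\ref{lem:boundLonelyFacilityCostFL}) only bounds $\open(S^L)$ in terms of $\open(\OPT^L)$, not $\open(\OPT^L)$ directly, and the bridge $\lambda k^L\le\lambda k_2^L$ that makes this useful in the $k$-Median case has no analogue for non-uniform costs. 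The paper instead builds a randomized alternative solution $\OPT^{\dagger}$ (Lemma~\ref{lem:boundDeletionOPTFL}) by deleting a random half of $\OPT^L$ (via a bipartition of a nearest-neighbor digraph) and possibly reopening some facilities of $S^L$ whose cost is controlled by Lemma~\ref{lem:boundLonelyFacilityCostFL}, and then applies the LMP-2 guarantee of $S'$ against $\OPT^{\dagger}$: either this already gives LMP-1, or it forces $\open(\OPT^L)=O(\opt+d')$. You are missing this two-lemma construction entirely.

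Second, your assertion that LMP-2 implies $d(S')/d(\OPT)\le 2$ is false: LMP-2 only gives $\open(S')+d(S')\le\open(\OPT)+2d(\OPT)$, so $d(S')$ can be as large as $\open(\OPT)+2d(\OPT)$, which is unbounded relative to $d(\OPT)$ when $\open(\OPT)\gg d(\OPT)$. (In the $k$-Median setting $d_2\le 2\opt$ follows because $k_2\ge k$ makes $\lambda k_2\ge\lambda k$; that crutch is not available for general UFL.) This is precisely why the paper states its core result as Lemma~\ref{lem:facility_location_fixed_scale} \emph{conditional on} $d'\le 4\opt$, and then needs the separate ``Cost Scaling'' step: it rescales the opening costs by a multiplier $\lambda$, runs a binary search to find scales $\lambda^*$ and $\lambda^*+\delta$ straddling $\open(\OPT)$, forms a two-point combination $aS_1+(1-a)S_2$, and observes that whichever of $S_1,S_2$ has the larger convex weight automatically satisfies the needed connection-cost bound. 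Without this step, your constant $T$ is not actually a constant and the argument collapses in the regime where the optimum has high opening cost relative to its connection cost.
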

\noindent 
The current (unoptimized) lower bound for $\eta$ is $2.25 \cdot 10^{-7}$.

\subsection{New Local Search and Facility Deletion}
\label{subsec:newlocal}
Given an instance of Facility Location, our algorithm runs the JMS algorithm to get the initial solution $S'$ and improves it via local search until no local improvement is possible. 
In this subsection, we prove that when the final solution $S'$ has a relatively small connection cost, we already improve on an LMP $2$ approximation. 
The proof is similar to the one that we presented for uniform facility costs, but has additional ideas that may be of independent interest. 
In fact, the following example shows that the standard local search where one swaps a constant number of facilities will not give any LMP approximation for a large class of objective functions. 
\begin{claim}
For any constants $\Delta \in \N$ and $\alpha, \beta \in \R^+$, 
there is an instance of general Facility Location $(C, F, dist, o)$ that admits feasible solutions $\OPT, S \subseteq F$ such that 
(1) $S$ is locally optimal for the objective function $\alpha \cdot \open(S) + \beta \cdot d(S)$ and the width-$\Delta$ local search, but (2) $\open(\OPT) + t \cdot d(\OPT) < \open(S) + d(S)$ for any $t \in \R$. 
\label{claim:ls_bad}
\end{claim}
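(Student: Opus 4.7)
The plan is to exhibit a star-metric instance in which $\OPT$ consists of many cheap facilities co-located with clients (so $d(\OPT) = 0$) and $S$ consists of one expensive central facility whose opening cost is tuned so that every width-$\Delta$ local move fails to strictly improve the weighted objective $\alpha\open + \beta d$. Concretely, I would place $N$ sites $s_1,\dots,s_N$ as leaves of a star around a central point $r$, each at distance $D$ from $r$ (hence pairwise site distance $2D$), put $n$ clients at each site, introduce cheap site-facilities $f_i$ at $s_i$ with opening cost $\lambda_{\mathrm{low}}$, and a single mega-facility $f_*$ at $r$ with opening cost $\lambda_*$. Taking $\OPT = \{f_1, \dots, f_N\}$ yields $\open(\OPT) = N\lambda_{\mathrm{low}}$ and $d(\OPT) = 0$; taking $S = \{f_*\}$ yields $\open(S) + d(S) = \lambda_* + NnD$.

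The first key step is to set $\lambda_{\mathrm{low}} := \beta n D / \alpha$. With this choice, any add-only move (keep $f_*$ and add $1 \le k \le \Delta$ site facilities) has change $\alpha k \lambda_{\mathrm{low}} - \beta k n D = 0$: the opening-cost gain exactly matches the saved connection cost. The second key step is to set $\lambda_* := (\beta/\alpha)\, n D (N - \Delta)$. A short direct computation then shows that any full-swap move (remove $f_*$ and add $1 \le k \le \Delta$ site facilities, which forces the $(N-k)n$ uncovered clients to travel distance $2D$ to a remaining site-facility) produces a cost change of exactly $\beta n D (\Delta - k) \ge 0$. Because $S = \{f_*\}$ has only one open facility, these two families of moves exhaust all width-$\Delta$ local moves, so $S$ is locally optimal under the standard strict-improvement convention used in the paper.

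To conclude, the last step verifies (2). Because $d(\OPT) = 0$, the quantity $\open(\OPT) + t\, d(\OPT)$ equals $N\lambda_{\mathrm{low}}$ independently of $t \in \R$, so (2) collapses to the single inequality $N\lambda_{\mathrm{low}} < \lambda_* + NnD$; substituting the chosen values simplifies it to $(\beta/\alpha)\Delta < N$. Since $\alpha,\beta,\Delta$ are fixed constants, this is satisfied by taking $N$ sufficiently large (with $n, D$ any positive numbers). The only mildly subtle point is reconciling the two local-optimality constraints for an \emph{arbitrary} ratio $\beta/\alpha$: $\lambda_*$ must scale with the ``swap-width'' $N - \Delta$ to survive full-swap moves, while $\lambda_{\mathrm{low}}$ is pinned down by the add-only moves. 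The resolution is precisely that the freedom to pick $N$ as a function of $\beta/\alpha$ and $\Delta$ always leaves a strictly positive gap between $\open(\OPT)$ and $\open(S) + d(S)$.
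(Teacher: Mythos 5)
Your proof is correct and uses essentially the same construction as the paper: a star metric with one expensive hub facility and many cheap facilities co-located with the clients, so that $d(\OPT)=0$ makes $\open(\OPT)+t\,d(\OPT)$ independent of $t$, while the hub's opening cost is tuned so that neither add-only nor full-swap width-$\Delta$ moves strictly improve the weighted objective. The only cosmetic differences are that you place $n$ clients at each site (one per site suffices, as in the paper) and you pin the facility costs so that the tightest moves tie exactly rather than strictly lose, which is still a valid local optimum under the strict-improvement convention the paper uses.
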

\begin{proof}
For some $n$ to be determined, consider the instance of UFL where $F = \{ f_0, \dots, f_n \}$, $C = \{ c_1, \dots, c_n \}$ where $dist(f_i, c_i) = 0$ and $dist(f_0, c_i) = 1$ for every $i \in [n]$.
We set $\open(f_0) = x$ and $\open(f_i) = y$ for $i \in [n]$ for some $x, y > 0$ such that $y > \beta/\alpha$, $\alpha (x - ry) < \beta (n - 2r)$ for every $r \in [\Delta]$, and $ny < x + n$. Such $x$ and $y$ exist for large enough $n$ depending on $\Delta, \alpha, \beta$ (e.g., take $y \approx \beta / \alpha$ for the first inequality, $x \approx n \cdot \min( \beta / \alpha - 1, 0)$ for the third, and $n$ large enough for the second). 
This finishes the description of the instance. 
Let $\OPT = \{ f_1, \dots, f_n \}$ and $S = \{  f_0 \}$. Note that $\OPT$ has $\open(OPT) = n \cdot y$ and $d(OPT) = 0$, and
$\open(S) = x$ and $d(S) = n$. 

From $S$, the width-$\Delta$ local search algorithm may close $f_0$ or not. 
Suppose that it considers a new solution $S'$ where $f_0$ is closed and $r \leq \Delta$ facilities from $\OPT$ are open. 
Then $\open(S') = r \cdot y$ and $d(S') = 2(n - r)$. 
Since $\alpha x + \beta n < \alpha r \cdot y + \beta \cdot 2(n - r) \Leftrightarrow 
\alpha(x - ry) < \beta(n - 2r)$, 
$S$ is better than $S'$ in terms of the objective function $\alpha \cdot \open(S) + \beta \cdot d(S)$. 
For the other case, when the local search considers a new solution $S'$ where $f_0$ is kept open and $r \leq \Delta$ facilities from $\OPT$ are open, 
$\open(S') = x + r\cdot y$ and $d(S') = (n - r)$.
Since $\alpha x + \beta n < \alpha (x + r \cdot y) + \beta \cdot (n - r) \Leftrightarrow \beta / \alpha < y$, $S$ is better than $S'$ in terms of the objective function $\alpha \cdot \open(S) + \beta \cdot d(S)$.  
Therefore, one can conclude that $S$ is locally optimal with respect to $\Delta$-local moves and the objective function $\alpha \cdot \open(S) + \beta \cdot d(S)$, but since $ny < x + n$, 
$\open(\OPT) + t \cdot d(\OPT) < \open(S) + d(S)$ for any finite $t$. 
\end{proof}
Therefore, our local search considers a new kind of local moves where we remove one facility from $S'$ and augment $S'$ by running the JMS algorithm on top of it.
More concretely, our local search algorithm \localsearchJMS goes as follows.
\begin{enumerate}
\item Start with an LMP $2$ approximate solution $S'$ (e.g., as returned by JMS).
\item While there exists a solution $S''$ such that either\\
  (1) $|S'' - S'| + |S' - S''| \le 1/\eps+1$ or\\
  (2) $S'' =$  \extendJMS($S' - \{f\} \cup \{f'\}$) for any $f \in S'$, $f' \notin S'$, \\
  and such that
  $\cost(S'') < \cost(S')$.
  \begin{enumerate}
  \item Do $S' \gets S''$.
  \end{enumerate}
\item Return $S'$.
\end{enumerate}
The procedure \extendJMS$(X)$ is simply
\begin{enumerate}
\item Modify the instance so that all facilities in $X$ have opening cost 0.
\item Run the standard JMS on the new instance.
\end{enumerate}

Let $S'$ be the solution at the end of \localsearchJMS (i.e., $S'$ is a local optimum).
Let $\OPT$ denote some given facility location solution.
Let $d'$ and $d^* = \opt$ be the total connection cost of $S'$ and $\OPT$ respectively.
The main result of this subsection is the following lemma. 

\begin{lemma}
There exists an absolute constant $\eta \geq 4.5 \cdot 10^{-7}$ such that 
if $S'$ is a local optimum of \localsearchJMS and $d' \leq 4\opt$, $\open(S') + d' \leq \open(OPT) + (2-\eta)d^*$.
\label{lem:facility_location_fixed_scale}
\end{lemma}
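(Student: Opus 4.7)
The plan is to follow a win-win strategy analogous to Lemma \ref{lem:S2bound}, combining a local-search bound that exploits the new \localsearchJMS moves with the refined JMS bound of Corollary \ref{cor:modifiedJMSclaim}. First, classify facilities in $\OPT$ and $S'$ as matched or lonely via the capturing notion of Section \ref{sec:localSearch} with parameter $\delta \in (0, 1/2)$. Set $\alpha^L := \opt^L / \opt$, $\alpha^{MM} := \opt^{MM}/\opt$, $\beta := d'/\opt$ (at most $4$ by hypothesis), and $\beta^{MM} := d^{MM}/\opt$.

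Second, invoke Theorem \ref{thr:extendJMS}, which adapts Theorem \ref{thr:localSearch} to the \localsearchJMS neighborhood (replacing $3\opt^L$ by $4\opt^L$ to account for the JMS loss inside \extendJMS). After absorbing the additive $O(\eps(\opt+d'))$ into slack, this gives
\[
\open(S') + d' \leq \open(\OPT) + \Big(1 + 3\alpha^L + \frac{\delta}{1-\delta}(\beta^{MM}+\alpha^{MM})\Big)\opt =: \open(\OPT) + \rho^A\,\opt.
\]
When $\alpha^L$ and $\delta$ are both small, $\rho^A < 2-\eta$ directly, and we are done. In the complementary regime, apply Corollary \ref{cor:modifiedJMSclaim} with $\OPT' = \OPT^L$. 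This requires an upper bound of the form $\open(\OPT^L) \leq T_L \cdot \opt^L$ for an explicit $T_L = T_L(\delta,\alpha^L,\beta,\beta^{MM})$, after which the corollary and the local-optimality of $S'$ with respect to JMS yield
\[
\open(S') + d' \leq \open(\OPT) + \big(2(1-\alpha^L) + \opt^+_{JMS}(q, T_L)\cdot \alpha^L\big)\opt =: \open(\OPT) + \rho^B\,\opt.
\]
Since $\opt^+_{JMS}(q,T_L) < 2$ for any finite $T_L$ and large enough $q$, $\rho^B < 2$ whenever $\alpha^L$ is bounded away from $0$. Combining yields $\open(S')+d' \leq \open(\OPT)+\min\{\rho^A,\rho^B\}\opt$, and a numerical minimax over the free parameters $\delta, \alpha^L, \alpha^{MM}, \beta^{MM}$ (with $\beta \le 4$ from the hypothesis $d'\leq 4\opt$) extracts the explicit $\eta \geq 4.5\cdot 10^{-7}$.

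The main obstacle is proving $\open(\OPT^L) \leq T_L \cdot \opt^L$, which is the general-cost analogue of Lemma \ref{lem:boundLonelyFacilityCost}. In the uniform case one bounds $\lambda\cdot |S'^L|$ by studying the effect of removing a single lonely $S'$-facility and then passes to $\lambda\cdot|\OPT^L|$ by a counting argument. Neither step survives under non-uniform opening costs: the counting argument is weightless, and Claim \ref{claim:ls_bad} rules out bounded-width swaps as a certificate. This is precisely where the \extendJMS moves enter. For a suitable choice of $f \in S'$ and $f' \in \OPT^L$, the move $S'' = \extendJMS(S' - \{f\} \cup \{f'\})$ sets the opening cost of $S'-\{f\}\cup\{f'\}$ to zero and runs JMS on the residual instance; the dual-fitting guarantee of JMS (applied to the modified opening costs) upper bounds $\cost(S'')$, and local optimality $\cost(S') \leq \cost(S'')$ then yields a linear inequality relating $\open(f')$ to connection-cost terms. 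Combining these inequalities across $f'\in \OPT^L$ via the same path-based reassignment scheme as in Lemma \ref{lem:boundLonelyFacilityCost} (where the lonely structure guarantees that a constant fraction of clients served by a lonely facility in $S'$ are served in $\OPT$ by non-captured facilities) yields the desired bound on $\open(\OPT^L)$, with $T_L$ a rational expression in $\delta, \alpha^L, \beta, \beta^{MM}$. The numerical minimization then proceeds in complete analogy to Section \ref{sec:improvedkMedian:boundS2}.
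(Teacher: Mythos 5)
Your win-win framing (a local-search bound against a refined-JMS bound via Corollary~\ref{cor:modifiedJMSclaim}) does match the paper's strategy, and the invocation of Theorem~\ref{thr:extendJMS} for the first branch is right in spirit (though the paper's bound reads $\open(S')+d' \leq \open(\OPT) + \frac{\delta}{1-\delta}d' + \frac{1}{1-\delta}\opt^M + 4\opt^L + O(\eps(d'+\opt))$, not quite the formula you displayed). The gap is in the step that produces $\open(\OPT^L) \leq T_L \cdot \opt^L$, and it is not cosmetic. You propose deriving it from moves $S''=\extendJMS(S'-\{f\}\cup\{f'\})$ with $f' \in \OPT^L$, arguing that local optimality $\cost(S')\leq\cost(S'')$ yields a linear inequality ``relating $\open(f')$ to connection-cost terms.'' But unpack that inequality: in $S''$ we pay $\open(f')$ and can only \emph{save} connection cost, so $\cost(S')\leq\cost(S'')$ says the savings do not exceed the added cost, i.e.\ it is a \emph{lower} bound on $\open(f')$. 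Local optimality never upper-bounds the opening cost of a facility absent from $S'$. This is exactly why the uniform-cost proof (Lemma~\ref{lem:boundLonelyFacilityCost}) only ever upper-bounds $\lambda k_2^L$, a quantity about facilities of $S'$, and then passes to $\lambda k^L$ by a cardinality argument ($k^L\leq k_2^L$) that relies on all facilities costing the same $\lambda$. That passage has no analogue with general opening costs, and the path-based reassignment you cite does not restore it.

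The paper takes a genuinely different route to $\open(\OPT^L)$. Lemma~\ref{lem:boundLonelyFacilityCostFL} uses ordinary width-two swaps (close $f'\in S^L$, possibly open one $f^*\in \OPT$) to prove $\open(S^L) \le \frac{1-\delta_2}{1-\delta'_2}\open(\OPT^L) + t(\opt + d')$, still a bound on the $S'$ side. To get leverage on $\open(\OPT^L)$ itself, the paper (i) partitions $\OPT^L$ into $D_A,D_B$ via a bipartite closest-facility graph, in the manner of Lemma~\ref{lem:bipartite}; (ii) builds a randomized solution $\OPT^\dagger$ that deletes one part and reopens a controlled subset of $S^L$, with expected facility cost $\leq \open(\OPT) - \zeta\open(\OPT^L) + t(\opt+d')$ and expected connection cost $\leq t'(\opt+d')$ (Lemma~\ref{lem:boundDeletionOPTFL}); and (iii) applies the LMP~$2$ guarantee of $S'$ \emph{against} $\OPT^\dagger$. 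This yields a dichotomy: if $\open(\OPT^L)$ is large, $S'$ is already LMP~$1$ approximate; otherwise $\open(\OPT^L) \le O(1)\cdot\opt$, which combined with $\opt^L\geq \alpha^L\opt$ (valid in the branch where $\rho^A \geq 2-\eta$) bounds $\open(\OPT^L)/\opt^L$, and only then does Corollary~\ref{cor:modifiedJMSclaim} apply. Without the deletion solution $\OPT^\dagger$ and the bipartite split, there is no valid path from local optimality to $\open(\OPT^L) \leq T_L\,\opt^L$. One further small point: the paper extracts the explicit $\eta\geq 4.5\cdot 10^{-7}$ from the closed-form bound $\opt_{JMS}(T)\le 2 - \frac{1}{4(7+3T)}$ of Corollary~\ref{cor:jms-general}, not from a numerical solve of $\opt^+_{JMS}(q,T_L)$.
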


\paragraph{Matching and Local Search.} 
We classify the facilities in $S'$ and $\OPT$ as lonely and matched as follows. Let $0 < \delta'_1 < \delta_1 \leq 1/2$, $0 < \delta'_2 < \delta_2 \leq 1/2$ be parameters to be determined later. Say $f' \in S'$ and $f^* \in \OPT$ are {\em matched} if $f'$ $(1-\delta_1)$-captures $f^*$ and $f^*$ $(1-\delta_2)$-captures $f'$.
All the facilities which are not matched are \emph{lonely}. 
As usual, let $\OPT^M$, $\OPT^L$, $S^M$, and $S^L$ denote the set of matched facilities in $\OPT$, lonely facilities in $\OPT$, matched facilities in $S'$, and lonely facilities in $S'$ respectively. 
Let $\opt^M$ (resp. $\opt^L$) be the total connection cost of the clients served by a matched facility in $\OPT^M$ (resp. $\OPT^L$). 
Also for a client $c \in C$, let $d'(c)$ and $d^*(c)$ of the connection cost of $c$ in $S'$ and $\OPT$ respectively. 

The following theorem proved in Section~\ref{sec:facilityLS} shows that \localsearchJMS already gives the desired guarantee when $\opt^L$ is small and $d'$ is not too large compared to $\opt$. 
\begin{theorem}\label{thr:extendJMS}
Fix $\delta_1=\delta$ and $\delta_2=1/2$ in the above definition of matching. 
Let $S'$ be the solution at the end of \localsearchJMS.
Then one has
$
\open(S')+d' \leq
   \open(\OPT)  + \frac{\delta}{1-\delta} d' + \frac{1}{1 - \delta} \opt^{M} +   4 \opt^{L} + O(\eps(d' + opt)).
  $
\end{theorem}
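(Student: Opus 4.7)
I will bound $\cost(S')$ by aggregating local-optimality inequalities from the two families of moves available to \localsearchJMS: type (1) constant-size swaps for matched pairs, and type (2) \extendJMS moves for lonely facilities.

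For the matched part I would adapt the proof of Theorem~\ref{thr:localSearch}: for each matched pair $(f' \in S^M, f^* \in \OPT^M)$ the type (1) swap $S'' = S' \setminus \{f'\} \cup \{f^*\}$ is feasible. Since $f'$ $(1/2)$-captures $f^*$ and $f^*$ $(1-\delta)$-captures $f'$, the clients in $C_{S'}(f') \cap C_{\OPT}(f^*)$ can be reassigned to $f^*$ at cost $\opt(c)$, while the non-captured clients of $f'$ (at most a $\delta$-fraction) are reassigned via a standard triangle-inequality path through the captured majority, contributing a term of the shape $\frac{\delta}{1-\delta}(d^{MM} + \opt^{MM})$. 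Summing over all matched pairs and invoking local optimality yields the $\frac{1}{1-\delta}\opt^M + \frac{\delta}{1-\delta}d'$ contribution together with the opening-cost cancellation $\open(\OPT^M) - \open(S^M)$.

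For each $f^* \in \OPT^L$ I would pick some $f(f^*) \in S'$ via a Gupta--Tangwongsan-style pairing (each $f^*$ paired with its closest $S'$ facility, with large groups broken into small pieces as in Theorem~\ref{thr:localSearch_generalized}) and consider $X = S' \setminus \{f(f^*)\} \cup \{f^*\}$ together with $S'' = \extendJMS(X)$. Since JMS immediately opens every zero-cost facility we have $X \subseteq S''$, hence $\open(S'') = \open(X) + \open(S'' \setminus X)$. The JMS LMP~$2$ guarantee on the modified instance gives, for any reference $\OPT'$,
\[
\open(S'' \setminus X) + d(S'') \leq \open(\OPT' \setminus X) + 2\, d(\OPT').
\]
I would choose $\OPT' = X \cup \OPT^L$, so that $\open(\OPT' \setminus X) \leq \open(\OPT^L) - \open(f^*)$, and $d(\OPT')$ is bounded by $\opt^L + d^{\star M}$ plus a reassignment cost for the clients orphaned by the removal of $f(f^*)$, routed via triangle inequality through their $\OPT^L$ neighbours (contributing an additional $2\opt^L$ term after summation). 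Combining with $\cost(S') \leq \cost(S'')$ and aggregating over all $f^* \in \OPT^L$, the $\open(S^L)$ contribution telescopes into $\open(\OPT^L) \leq \open(\OPT)$ while the connection-cost side produces the $4\opt^L$ bound (factor $2$ from JMS, applied to the $2\opt^L$ reassignment bound on $d(\OPT')$). The $O(\eps(d' + \opt))$ slack comes from the discretization of local search in Lemma~\ref{lem:discrete_localsearch}.

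\textbf{Main obstacle.} The hardest step is designing the pairing $f(\cdot) : \OPT^L \to S'$ so that (a) each $S'$ facility appears as $f(f^*)$ for only $O(1)$ values of $f^*$, preventing the $-\open(f(f^*))$ telescoping from over-charging $\open(S^L)$, and (b) the clients orphaned by closing $f(f^*)$ inside \extendJMS admit a small-cost reassignment within $\OPT' = X \cup \OPT^L$. The natural ``closest-$S'$-facility'' choice suffers from the multi-counting issue familiar from Gupta--Tangwongsan local search, and the large-group-breaking technique of Theorem~\ref{thr:localSearch_generalized} must be redone so as to be compatible with the JMS invocation inside \extendJMS rather than with a constant-size symmetric-difference move.
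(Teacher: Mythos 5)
Your high-level instinct---aggregate local-optimality inequalities, treating matched pairs with small symmetric-difference swaps and lonely $\OPT$-facilities via \extendJMS, then telescope opening costs---is on the right track, but the way you set up the per-move inequalities cannot aggregate to the claimed bound. The paper's proof organizes moves by facility of $S'$: it creates one group $(\{f'\},B)$ per $f'\in S'$, where $B$ collects the matched $\OPT$-facility of $f'$ (if any) together with \emph{all} lonely $\OPT$-facilities whose closest $S'$-facility is $f'$. Each $f'\in S'$ and each $f^*\in\OPT$ then appears in exactly one group, so summing the inequalities makes the opening-cost terms telescope to exactly $\open(S')-\open(\OPT)$. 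Your proposal instead does one move per matched pair \emph{and} one move per $f^*\in\OPT^L$. This already breaks the telescoping: the opening cost $\open(f')$ of a facility serving both matched and lonely $\OPT$-clients is debited in more than one inequality, so the saved opening cost no longer matches $\open(S')$.

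More seriously, your choice of reference $\OPT' = X\cup\OPT^L$ over-charges by a factor of $|\OPT^L|$. From a single \extendJMS move you get an inequality of the shape $\open(f(f^*)) + d' \le \open(\OPT^L) + 2\,d(\OPT')$, because $\open(S')$ cancels from both sides and the full lonely opening cost $\open(\OPT^L)$ appears on the right. Aggregating this over all $f^*\in\OPT^L$ repeats $\open(\OPT^L)$ and $2\,d(\OPT^L)$ once per lonely $\OPT$-facility, which is $|\OPT^L|$ times too many; dividing by $|\OPT^L|$ leaves no copy of $\open(S')$ on the left, so the theorem's $\open(S')\le\open(\OPT)+\cdots$ structure cannot emerge. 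The fix, which is what the paper does, is to choose the JMS reference to be $S^* = S'\setminus\{f'\}\cup B$ (not $X\cup\OPT^L$): then Lemma~\ref{lem:extendJMS}'s factor-$2$ penalty applies only to clients that $\mu$ routes to $B\setminus\{g(B)\}$, i.e.\ the clients of that one group, and the per-group inequality has exactly $\open(f')-\open(B)$ in it. Finally, your stated ``main obstacle'' (designing a pairing with $O(1)$ multiplicity and compatible group-breaking) is a red herring once you adopt the per-$f'$ organization: each $f'$ is the center of one group by construction, and the whole point of introducing \extendJMS as a local move is that it absorbs heavy groups without any breaking---the algorithm only explicitly opens $g(B)$ and lets JMS open the rest, with Lemma~\ref{lem:extendJMS} supplying the LMP-$2$ guarantee against the full reference $S'\setminus\{f'\}\cup B$.
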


\paragraph{Bounding $\open(S^L)$.} Using the local optimality of $S'$, one can prove the following lemma. 
\begin{lemma}\label{lem:boundLonelyFacilityCostFL}
There exists a constant $t = t(\delta_1, \delta'_2)$ such that 
\begin{equation}
\open(S^L) \leq \frac{1-\delta_2}{1 - \delta'_2} \cdot \open(OPT^L) + t\cdot (\opt + d'). 
\end{equation}
\end{lemma}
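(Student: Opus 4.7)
The plan is to use the local optimality of \localsearchJMS under type-(2) moves together with a weighted charging argument from $S^L$ to $\OPT^L$. For each $f \in S^L$ and candidate swap partner $f^\ast \notin S'$, let $\tilde S := (S' \setminus \{f\}) \cup \{f^\ast\}$ and $S'' := \extendJMS(\tilde S)$. The LMP $2$-approximation of JMS on the modified instance (where facilities of $\tilde S$ have zero opening cost) gives, via the identity $\open(S'') = \open(\tilde S) + \sum_{g \in S''\setminus \tilde S}\open(g)$ and with test solution $T = \tilde S$,
\[
\open(S'') + d(S'') \;\le\; \open(\tilde S) + 2\, d(\tilde S),
\]
so that local optimality $\cost(S') \le \cost(S'')$ yields the per-facility inequality $\open(f) \le \open(f^\ast) + 2\,d(\tilde S) - d'$. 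Bounding $d(\tilde S) \le d' + \Delta_f$ where $\Delta_f := \sum_{c \in C_{S'}(f)}\bigl(\dist(c, \tilde S) - d'(c)\bigr)$ is the reassignment cost of the clients of $f$ gives the usable form
\[
\open(f) \;\le\; \open(f^\ast) + d' + 2\Delta_f.
\]

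The first step is to construct a fractional charging map $\phi$ from $S^L$ to $\OPT^L$, weighted by the opening costs $\open(f)$, so that (i) each $f^\ast \in \OPT^L$ receives total $\open$-weighted charge at most $\frac{1-\delta_2}{1-\delta'_2}\open(f^\ast)$, and (ii) the weighted reassignment costs $\sum_f \Delta_f$, together with the weighted $d'$-overflow from the per-facility inequalities, are jointly bounded by $O_{\delta_1,\delta'_2}(\opt + d')$. Property (i) is where the two thresholds $\delta_2$ and $\delta'_2$ enter: since $f \in S^L$ is not matched, for every candidate $f^\ast \in \OPT$ at least a $\delta_2$-fraction of $C_{S'}(f)$ is not served by $f^\ast$ in $\OPT$, and the slack between $\delta_2$ and $\delta'_2$ supports a transportation-style rebalancing that yields the ratio $\frac{1-\delta_2}{1-\delta'_2}$. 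Property (ii) is established by replicating the random $\mathrm{next}/\mathrm{rand}$ path construction from the proof of Lemma~\ref{lem:boundLonelyFacilityCost}, which bounds the reassignment cost and the charging multiplicity through a single coupled sampling.

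The second step is to apply the per-facility inequality with $f^\ast = \phi(f)$ and aggregate according to $\phi$'s weights. By (i), the $\open$-weighted sum of $\open(\phi(f))$ collapses to exactly $\frac{1-\delta_2}{1-\delta'_2}\open(\OPT^L)$. By (ii), the $d'$- and $\Delta_f$-contributions aggregate to $t\cdot(\opt + d')$ with $t = t(\delta_1,\delta'_2)$. This yields the claimed inequality.

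The main obstacle is that the naive per-facility inequality produces an extraneous $+d'$ on every swap, which if summed unweightedly over $|S^L|$ swaps gives an $|S^L|\cdot d'$ blow-up incompatible with the target bound. The resolution is that the aggregation is performed through the $\open$-weighted fractional charging $\phi$ rather than a uniform sum: re-indexing the weighted sum over $\OPT^L$ instead of $S^L$, the total weighted $d'$-mass is at most $\frac{1-\delta_2}{1-\delta'_2}\open(\OPT^L)$-fraction of the total $\open$-mass, and the ensuing $d'$-contribution is reabsorbed into the additive $t\cdot d'$ slack by a straightforward telescoping. Verifying this rebalancing and showing that the \extendJMS step does not propagate any additional LMP-slack beyond what the $2\Delta_f$ term already accounts for is the most delicate technical piece, and is precisely the step in which the quantitative dependence of $t$ on $\delta_1$ and $\delta'_2$ is determined.
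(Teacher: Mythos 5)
There is a genuine gap in your argument, and it is exactly the obstacle you flag yourself: the extraneous $+d'$ term produced by each per-facility inequality. Your per-facility bound $\open(f)\le\open(f^*)+d'+2\Delta_f$ arises because the \extendJMS\ move reconnects \emph{all} clients via the JMS LMP bound with test solution $\tilde S$, so you inherit $2\,d(\tilde S)\ge 2d'-O(\Delta_f)$ on the right-hand side against only one $d'$ on the left. Summing over $|S^L|$ such inequalities unavoidably gives a $|S^L|\cdot d'$ term, and your stated fix -- re-indexing a weighted fractional charge so the $d'$-mass is ``reabsorbed into the additive $t\cdot d'$ slack by telescoping'' -- does not work: to recover $\open(S^L)$ on the left you need total weight $|S^L|$, so the $d'$-coefficient on the right is still $|S^L|$, not a constant. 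The weighted transportation only helps for the \emph{facility-cost} charge (it correctly caps the charge to each $f^*\in\OPT^L$ at $\frac{1-\delta_2}{1-\delta'_2}\open(f^*)$), but it has no mechanism to shrink the $d'$-coefficient; that term simply does not decompose facility-by-facility.

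The paper sidesteps this entirely by \emph{not} using \extendJMS\ moves for this lemma. It uses only the simple type-(1) moves: close one $f'\in S^L$ and possibly open one $f^*\in M(f')\subseteq\OPT^L$ (chosen randomly with probability proportional to $|C(f^*)\cap C(f')|$). The resulting solution is $S'\setminus\{f'\}\cup\{f^*\}$ literally, with no call to JMS, so the connection-cost change is the \emph{actual} reassignment cost of the clients $C_{S'}(f')$, not an LMP overcount of all of $d'$. Local optimality then gives $\open(f')\le\Pr[\mbox{open }f^*]\cdot\open(f^*)+\E[\Delta_{f'}]$ (or $\open(f')\le\E[\Delta_{f'}]$ in the case where no facility is opened, when $M(f')$ fails to $(1-\delta'_2)$-capture $f'$), with zero $d'$ overhead. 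The $\frac{1-\delta_2}{1-\delta'_2}$ factor is exactly the probability cap you identified, and $\E[\Delta_{f'}]$ is bounded by the randomized $\mathrm{next}/\mathrm{rand}$ path construction, whose coefficients ($\frac{1}{\delta_1\delta'_2}$-type terms) give the constant $t(\delta_1,\delta'_2)$. Your high-level ingredients (the disjointness of the $M(f')$ sets, the $\delta_2$ vs.\ $\delta'_2$ slack, the $\mathrm{next}/\mathrm{rand}$ rerouting) are the right ones; the choice of local move is the wrong one, and with it the structure of the per-facility inequality.
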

\begin{proof}
Consider a $f' \in S^L$. We will consider a (randomized) local move that closes $f'$ and opens at most one facility in $\OPT$.
Let $M(f') \subseteq \OPT$ be the facilities that are $(1-\delta_1)$-captured by $f'$. 
We consider the following cases. 
For a client $c$, let $f^*(c)$ (resp. $f'(c)$) be the facility serving $c$ in $\OPT$ (resp. $S'$). 

\begin{enumerate}
\item $M(f')$ does not $(1-\delta'_2)$ capture $f'$: Here we close $f'$ without opening any facility, leading to a decrease  of the opening cost by $\open(f')$. Consider the following randomized rerouting of the clients in $C(f')$ to $S' \setminus \{ f' \}$.

\begin{itemize}
\item If $c \in C(f') \setminus C(M(f'))$: Choose a random client $rand(c)$ from $C(f^*(c)) \setminus C(f')$. Reroute $c$ to $f'(rand(c))$. The expected increase in the connection cost for $c$ is at most 
\begin{equation}\label{eq:fl-deletion-1}
d^*(c) + \E_{rand(c)}[d^*(rand(c)) + d'(rand(c))] - d'(c).
\end{equation}
\item If $c \in C(f') \cap C(M(f'))$: Choose a random client $next(c)$ from $C(f') \setminus C(M(f'))$. As in the previous case, choose a random $rand(next(c))$ from $C(f^*(next(c))) \setminus C(f')$ and reroute $c$ to $f'(rand(next(c)))$. The increase in the connection cost for $c$ is at most 
\begin{equation}\label{eq:fl-deletion-2}
\E_{next(c)}[d'(next(c)) + d^*(next(c))] + \E_{rand(next(c))}[d^*(rand(next(c))) + d'(rand(next(c)))].
\end{equation}

\end{itemize}

\item If $M(f')$ does $(1-\delta_2')$ captures $f'$: Here we open one facility $f^* \in M(f')$ with probability $|C(f^*) \cap C(f')| / |C(M(f')) \cap C(f')|$ (note that the sum of the probabilities is $1$). Each $c \in C(f')$ is rerouted to the open facility $f^*$. 
Note that for each $c \in C(f')$, this is equivalent to choosing a random client $rand'(c)$ uniformly from $C(M(f')) \cap C(f')$ and rerouting $c$ to $f^*(rand'(c))$. 
Therefore, in expectation, the connection cost for $c$ is increased by
\begin{equation}\label{eq:fl-deletion-3}
\E_{rand'(c) \in C(M(f')) \cap C(f')} [d'(rand'(c)) + d^*(rand'(c))].
\end{equation}
\end{enumerate}

Since $S'$ is local optimal, all the (randomized) local moves considered above do not improve the cost of $S'$. Consider the sum of all the (expected) increased costs, considering both opening and connection cost. This sum has the following terms. 
\begin{enumerate}
\item $-\open(f')$ for every $f' \in S^L$. 
\item At most $\frac{1 - \delta_2}{1 - \delta'_2} \open(f^*)$ for every $f^* \in \OPT^L$. Note that $M(f') \subseteq \OPT^L$ for every $f' \in S^L$, and $M(f') \cap M(f'') = \emptyset$ for $f' \neq f'' \in S^L$. 
When $M(f')$ $(1-\delta_2')$-captures $f'$ and $f^*\in M(f')$ does not $(1 - \delta_2)$-capture $f^*$, $f^*$ is open with probability at most $\frac{1 - \delta_2}{1 - \delta'_2}$. (If $f^*$ $(1-\delta_2)$-captured $f'$, then they should be matched.) 
\item At most $d^*(c)$ for every $c \in C$. (From the sum of $d^*(c)$ in~\eqref{eq:fl-deletion-1}).
\item For every $c'$ such that $f'(c') \in S^L$ and $f^*(c') \notin M(f'(c'))$: 
\begin{enumerate}
\item \label{item:rand} 
In the sum of $\E_{next(c)}[d'(next(c)) + d^*(next(c))]$ in~\eqref{eq:fl-deletion-2}, 
$(d^*(c') + d'(c'))$ will appear with coefficient at most $\frac{1 - \delta_2'}{\delta_2'}$; $c'$ can be sampled by $c \in C(f') \cap c(M(f'))$,
and each such $c$ samples $next(c)$ from $|C(f') \setminus c(M(f'))|$ clients. In~\eqref{eq:fl-deletion-2}, we have $|C(f') \setminus c(M(f'))| \geq \frac{\delta'_2}{1-\delta'_2} |C(f') \cap C(M(f'))|$. 
\end{enumerate}
\item For every $c' \in C$: 
\begin{enumerate}
\item \label{item:next} 
In the sum of $\E_{rand(c)}[d^*(rand(c)) + d'(rand(c))]$ in~\eqref{eq:fl-deletion-1}, $(d^*(c') + d'(c'))$ will appear with coefficient at most $\frac{1}{\delta_1}$, because $c'$ can be possibly sampled by $c \in C(f^*(c)) \cap C(f'')$ where $f''$ does not capture $f^*(c)$ but each such $c$ samples $c'$ from at least $|C(f^*(c)) \setminus C(f'')| \geq \delta_1|C(f^*(c))|$ clients. 

\item In the sum of $\E_{rand(next(c))}[d^*(rand(next(c))) + d'(rand(next(c)))]$ in~\eqref{eq:fl-deletion-2}, $(d^*(c') + d'(c'))$ will appear with coefficient at most $\frac{1 - \delta_2'}{\delta_1 \delta_2'}$; each $c'' \in C$ will be $next(c)$ at most $\frac{1 - \delta_2'}{\delta'_2}$ times in expectation by the same argument as in item 4.(a), and using the same argument as in item 5.(a), the expected number of times $c'$ is chosen as $rand(next(c))$ is at most $\frac{1}{\delta_1}$ times the maximum expected number of times any $c'' \in C$ is chosen as $next(c)$. 
\end{enumerate}
\item For every $c'$ such that $f' \in S^L$ and $f^*(c') \in M(f'(c'))$:
\begin{enumerate}
\item  In the sum of $\E_{rand'(c) \in C(M(f')) \cap C(f')} [d'(rand'(c)) + d^*(rand'(c))]$ 
in~\eqref{eq:fl-deletion-3}, $(d^*(c) + d'(c'))$ will appear with coefficient at most $\frac{1}{1 - \delta'_2}$; $c'$ can be possibly sampled by $c \in C(f')$ and each such $c$ samples from at least $|C(M(f')) \cap C(f')| \geq (1 - \delta_2') |C(f')|$ clients. 
\end{enumerate}
\end{enumerate}

Therefore, if we add 3, 4, 5, 6 above over each client, we have 
\begin{equation}\label{eq:fl-deletion-4}
\sum_{f' \in S^L} \open(f') \leq \frac{1-\delta_2}{1 - \delta'_2} \cdot \sum_{f^* \in \OPT^L}  \open(f^*) + t(d^* + d'),
\end{equation}
where $t = 1 + \frac{1}{\delta_1\delta_2'} + \max(\frac{1 - \delta'_2}{\delta'_2}, \frac{1}{1 - \delta'_2})$ (note that 4 and 6 are mutually exclusive so we have the maximum instead of the sum).
\end{proof}

\paragraph{Bounding $\open(\OPT^L)$.}
Similarly to Section~\ref{sec:boundsS1}, 
partition $\OPT^L$ into $D_A$ and $D_B$ such that for every $f \in D_A$ (resp. $f\in D_B$), one of the facilities closest to $f$ in $\OPT$ is in $\OPT \setminus D_A$ (resp. $\OPT \setminus D_B$). 

\begin{lemma}\label{lem:boundDeletionOPTFL}
Let $t = t(\delta_1, \delta'_2)$ be the constant determined in Lemma~\ref{lem:boundLonelyFacilityCostFL}.
There exists a constant $t' = t'(\delta_2, \delta_1')$ and a randomized solution $\OPT^{\dagger}$ such that 
\begin{itemize}
\item The expected facility cost of $\OPT^{\dagger}$ is at most $\open(\OPT) - \big(\frac{1}{2} - \frac{(1-\delta_1)(1-\delta_2)}{2(1-\delta'_1)(1-\delta'_2)}\big) \open(\OPT^L) + t\cdot (\opt + d')$. 
\item The expected connection cost of $\OPT^{\dagger}$ is at most $t'\cdot (\opt + d')$. 
\end{itemize}
\end{lemma}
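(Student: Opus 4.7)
The plan is to extend the uniform-cost argument of Lemma~\ref{lem:boundDeletionOPT} to the non-uniform setting by a probabilistic retention of those facilities of $\OPT^L$ whose clients would suffer if rerouted. First, I would apply Lemma~\ref{lem:bipartite} to the directed graph on $\OPT^L$ whose arcs point each $f^* \in \OPT^L$ to its closest $\OPT$-neighbour $cl(f^*)$ whenever $cl(f^*) \in \OPT^L$. This yields a bipartition $\OPT^L = D_A \sqcup D_B$ such that for every $f^* \in D_X$ with $X \in \{A,B\}$, $cl(f^*) \in \OPT \setminus D_X$; this guarantees a deterministic rerouting target.

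To construct $\OPT^\dagger$, I sample $X \in \{A,B\}$ uniformly at random and set $\OPT^\dagger := (\OPT \setminus D_X) \cup R$, where $R \subseteq D_X$ is a random retention set defined as follows. For each $f^* \in D_X$, check whether there exists a (necessarily unique, since $\delta'_1 < 1/2$) $f'(f^*) \in S'$ that $(1-\delta'_1)$-captures $f^*$. If yes, include $f^*$ in $R$ with probability $p := \frac{(1-\delta_1)(1-\delta_2)}{(1-\delta'_1)(1-\delta'_2)}$; otherwise drop $f^*$ deterministically. The choice of $p$ is calibrated by the two-sided capture imbalance imposed by $f^* \in \OPT^L$: since $f'(f^*)$ $(1-\delta'_1)$-captures $f^*$, it $(1-\delta_1)$-captures $f^*$, and failure of $\delta$-matching forces $|C(f^*) \cap C(f'(f^*))| \leq (1-\delta_2)|C(f'(f^*))|$, while a symmetric argument yields the $(1-\delta'_2)$ denominator. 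This accounts for the precise ratio that appears in the statement.

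For the expected facility cost, $\E[\open(\OPT^\dagger)] = \open(\OPT) - \E[\open(D_X \setminus R)] \leq \open(\OPT) - (1-p)\,\open(\OPT^L)/2 + t\,(\opt+d')$, which is the desired bound since $(1-p)/2 = \tfrac{1}{2} - \tfrac{(1-\delta_1)(1-\delta_2)}{2(1-\delta'_1)(1-\delta'_2)}$. The additive $t(\opt+d')$ term absorbs the cases in which a single $f' \in S'$ serves as the designated $f'(f^*)$ for several $f^* \in D_X$, handled by a local-optimality amortization analogous to that in Lemma~\ref{lem:boundLonelyFacilityCostFL}. For the expected connection cost, clients of $\OPT^M$ contribute $\opt^M$. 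Each $c \in C(f^*)$ with $f^* \in D_X$ is served by $f^*$ with probability $p$ and routed to $cl(f^*) \in \OPT \setminus D_X$ otherwise; the latter cost $\dist(c,cl(f^*)) \leq \dist(c,f^*) + \dist(f^*,cl(f^*))$ is bounded by the random-path argument of Lemma~\ref{lem:boundDeletionOPT}: sample $next(f^*) \in C(f^*) \setminus C(f'(f^*))$ uniformly (a set of positive density by the non-capture of $f^*$) and $rand(f^*)$ uniformly in $C_{S'}(S'(next(f^*))) \setminus C(f^*)$; the resulting path has expected length $O(\opt + d')$ with constants depending only on $\delta_1$ and $\delta'_2$, producing the claimed $t'(\opt+d')$ bound.

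The main obstacle will be the amortized charging in the facility-cost step when a single $f' \in S'$ is the designated $f'(f^*)$ for several $f^* \in D_X$: the $(1-\delta'_1)$-capture relation is many-to-one, so retention events are not on disjoint objects and the naive pessimistic bound by $\open(\OPT^L)$ may need correction via local optimality (to absorb cases where retention is effectively forced by many overlapping requests into the $t(\opt+d')$ term). A secondary subtlety is ensuring that the endpoint facility $f'$ of the random path survives with sufficient probability; the deterministic bipartite property from Step~1 guarantees this for $cl(f^*)$, but the intermediate facility requires coupling the randomness of $X$ with the path sampling.
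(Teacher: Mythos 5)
Your construction is genuinely different from the paper's. The paper defines $\OPT_X$ by deleting $D_X$ from $\OPT$ and \emph{reopening a random subset of $S^L$} (the role-switched version of the procedure in Lemma~\ref{lem:boundLonelyFacilityCostFL}), and then invokes Lemma~\ref{lem:boundLonelyFacilityCostFL} itself to convert the resulting $\open(S^L)$ charge back into an $\open(\OPT^L)$ charge. You instead \emph{retain} a random subset of $D_X$, paying directly in $\OPT^L$ units. That route sidesteps the indirection: your facility-cost bound $\E[\open(\OPT^\dagger)] \leq \open(\OPT) - \tfrac{1-p}{2}\open(\OPT^L)$ follows immediately from the retention probability, with no need for Lemma~\ref{lem:boundLonelyFacilityCostFL} and no $t(\opt+d')$ slack; the ``amortization'' concern you raise there is a red herring, because you never reopen anything from $S'$ and so a many-to-one $f'(f^*)$ relation never enters the facility-cost account. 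This part of your argument is fine, and arguably cleaner than the paper's.

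The connection-cost argument, however, contains a genuine error. You sample $next(f^*)$ from $C(f^*)\setminus C(f'(f^*))$ and claim this set has ``positive density by the non-capture of $f^*$.'' In fact $f'(f^*)$ $(1-\delta_1')$-captures $f^*$ means precisely $|C(f^*)\cap C(f'(f^*))| > (1-\delta_1')|C(f^*)|$, so the set you want to sample from has density \emph{strictly less than} $\delta_1'$ in $C(f^*)$ and may be empty; loneliness gives no lower bound on it. What loneliness does give is the dual bound: since $f'(f^*)$ $(1-\delta_1)$-captures $f^*$ (as $\delta_1'<\delta_1$) and they are not matched, $f^*$ fails to $(1-\delta_2)$-capture $f'(f^*)$, hence $|C(f'(f^*))\setminus C(f^*)| \geq \delta_2|C(f'(f^*))|$. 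The random path must therefore be $f^* \to c' \to f'(f^*) \to c'' \to \OPT(c'')$ with $c'$ sampled from the \emph{large} set $C(f^*)\cap C(f'(f^*))$ and $c''$ from the $\delta_2$-dense set $C(f'(f^*))\setminus C(f^*)$. Two further gaps: (i) you never handle the $f^* \in D_X$ that have \emph{no} single $(1-\delta_1')$-capturer --- for those, $f'(f^*)$ is undefined and your path does not exist; the right move is the analogue of case~1 in Lemma~\ref{lem:boundLonelyFacilityCostFL} with the roles of $(\delta_1,\delta_1')$ and $(\delta_2,\delta_2')$ switched, sampling from $C(f^*)\setminus C(M(f^*))$ where $M(f^*)$ is the set of $S'$-facilities that $f^*$ $(1-\delta_2)$-captures. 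Relatedly, your retention rule ``retain iff a single $f'$ $(1-\delta_1')$-captures $f^*$'' is stricter than the set-capture condition that the role-switched Lemma~\ref{lem:boundLonelyFacilityCostFL} needs, and the mismatch leaves exactly the problematic $f^*$'s in the deterministic-deletion bucket without a dense sampling set. (ii) Your claimed dependency $t' = t'(\delta_1,\delta_2')$ is backwards; the role-switched argument yields $t' = t'(\delta_2,\delta_1')$ as in the statement.
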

\begin{proof} 
For each $X \in \{ A, B \}$, let $\OPT_X$ be the randomized solution achieved by applying the proof of the above Lemma~\ref{lem:boundLonelyFacilityCostFL} to delete each of $D_X$ and possibly reopen some facilities in $S^L$. $\OPT^{\dagger}$ samples $X \in \{A , B\}$ with probability $1/2$ each and samples $\OPT_X$. 

In $\OPT^{\dagger}$, each $f^* \in \OPT^L$ is deleted exactly with probability $1/2$, 
and each $f' \in S^L$ is reopen with probability at most $\frac{1 - \delta_1}{2(1 - \delta'_1)}$.
(Note that if $f' \in S^L$ reopens only when some $f^* \in \OPT^L$ capturing $f'$ is deleted, so it can possibly reopen in at most one of $\OPT_A$ and $\OPT_B$.)  
Therefore, the expected facility cost is at most 
\begin{align*}
& \, \, \open(\OPT) - \frac{\open(\OPT^L)}{2} + \frac{1 - \delta_1}{2(1 - \delta'_1)} \open(S^L) \\ 
\leq & \, \, \open(\OPT) - \frac{\open(\OPT^L)}{2} + \frac{1 - \delta_1}{2(1 - \delta'_1)} \bigg(\frac{1-\delta_2}{1 - \delta'_2} \cdot \open(OPT^L) + t(\opt + d') \bigg) \qquad \mbox{(Lemma~\ref{lem:boundLonelyFacilityCostFL})}\\
\leq & \, \, \open(\OPT) - \big(\frac{1}{2} - \frac{(1-\delta_1)(1-\delta_2)}{2(1-\delta'_1)(1-\delta'_2)}\big) \open(\OPT^L) + t(\opt + d').
\end{align*}
For the connection cost, since each $f' \in D_X$ has one of its closest facilities in $\OPT$ in $\OPT \setminus D_X$, 
the proof of Lemma~\ref{lem:boundLonelyFacilityCostFL} can be applied verbatim (switching the roles of $\delta_1 \leftrightarrow \delta_2$ and $\delta'_1 \leftrightarrow \delta'_2$) to show that the expected connection cost is $t'(\opt + d')$ for 
$t' = \frac{1}{2} ( 1 + \frac{1}{\delta_2\delta_1'} + \max(\frac{1 - \delta'_1}{\delta'_1}, \frac{1}{1 - \delta'_1}))$. 
(The additional factor $\frac{1}{2}$ comes from the fact that each $f^* \in \OPT^L$ is deleted with probability exactly $1/2$.) 
\end{proof}

Set $\delta_1 = \delta, \delta_2 = 1/2$, and 
assume that $d' \leq 4\opt$. 
By Theorem~\ref{thr:extendJMS},
\[
\open(S')+d' \leq
   \open(\OPT)  + \frac{\delta_1}{1-\delta_1} d' + \frac{1}{1 - \delta_1} \opt^{M} +   4 \opt^{L}.
\]
Let $\alpha^L = \opt^L / \opt$. Since 
\[
\frac{\delta_1}{1-\delta_1} d' + \frac{1}{1 - \delta_1} \opt^{M} + 4 \opt^{L} \leq \opt ( \frac{1 + 4\delta_1}{1 - \delta_1} + 4 \alpha^L), 
\]
if $\alpha^L \leq (1.9 - \frac{1 + 4\delta_1}{1 - \delta_1}) / 4$, $S'$ is already an LMP $1.9$ approximate solution. 
Therefore, we can assume that $\alpha^L \geq (1.9 - \frac{1 + 4\delta_1}{1 - \delta_1}) / 4$. (We will choose $\delta_1 > 0$ such that the RHS is strictly positive.) 

Also by Lemma~\ref{lem:boundDeletionOPTFL}, if we let $d^{\dagger}$ be the connection cost of $\OPT^{\dagger}$, 
\[
\open(S')+d' \leq
   \E[\open(\OPT^{\dagger})] + 2\E[d^{\dagger}] 
\leq \open(\OPT) - \zeta \open(\OPT^L) + t'' (\opt + d'),
\]
where $\zeta = \zeta(\delta_1, \delta_2, \delta'_1, \delta'_2) = \big(\frac{1}{2} - \frac{(1-\delta_1)(1-\delta_2)}{2(1-\delta'_1)(1-\delta'_2)}\big) > 0$ and 
$t'' = t''(\delta_1, \delta_2, \delta'_1, \delta'_2) = t + 2t' < \infty$. 

When $\zeta \cdot \open(\OPT^L) \geq t''(\opt + d')$ then $S'$ is already an LMP 1 approximate solution. 
Therefore, one can assume $\zeta \cdot \open(\OPT^L) < t''(\opt + d') \leq 3t'' \opt$. Together with $\opt^L \geq \opt \cdot (1.9 - \frac{1 + 4\delta_1}{1 - \delta_1}) / 4$, it upper bounds the ratio between the facility cost and the connection cost of $\OPT^L$ by some constant depending only on $\delta_1, \delta_2, \delta'_1, \delta'_2$. 
Therefore, we can apply an analogue of Corollary~\ref{cor:modifiedJMSclaim} to conclude that $S'$ is an LMP $(2 - \eta)$ approximation for some $\eta = \eta(\delta_1, \delta_2, \delta'_1, \delta'_2) > 0$. More specifically, a lower bound of $\eta$ can be obtained as follows. 

\begin{itemize}
\item Let $\delta$ be a free parameter to be determined and $\delta_1 = \delta, \delta'_1 = \delta, \delta_2 = 1/2, \delta'_2 = 1/4$. 
\item $t = 1 + \frac{1}{\delta_1\delta_2'} + \max(\frac{1 - \delta'_2}{\delta'_2}, \frac{1}{1 - \delta'_2}) = 4 + \frac{4}{\delta} \leq \frac{5}{\delta}$ for $\delta \leq \frac{1}{4}$.
\item $t' = \frac{1}{2} ( 1 + \frac{1}{\delta_2\delta_1'} + \max(\frac{1 - \delta'_1}{\delta'_1}, \frac{1}{1 - \delta'_1})) \leq \frac{1}{2} + \frac{3}{\delta} \leq \frac{4}{\delta}$. 
\item $\zeta = \big(\frac{1}{2} - \frac{(1-\delta_1)(1-\delta_2)}{2(1-\delta'_1)(1-\delta'_2)}\big) \geq \frac{1}{2} - \frac{1 - \delta_2}{2(1 - \delta'_2)} = \frac{1}{2} - \frac{1}{3} = \frac{1}{6}$.
\item $\opt^L \geq \opt \cdot (1.9 - \frac{1 + 4\delta}{1 - \delta}) / 4$.
\item $\open(OPT^L) \leq \frac{3(t + 2t')}{\zeta} \opt \leq 3 \cdot 6 \cdot (\frac{5}{\delta} + \frac{8}{\delta}) = \frac{234}{\delta}$. 
\item By Corollary~\ref{cor:modifiedJMSclaim} and the analytic bound on $opt_{JMS}(T)$ proved in Corollary~\ref{cor:jms-general} with $OPT' \leftarrow OPT^L$ and $T \leftarrow \frac{234/\delta}{(1.9 - \frac{1 + 4\delta}{1 - \delta}) / 4}$, we have 
$\eta \geq \frac{1.9 - \frac{1 + 4\delta}{1 - \delta} }{4} \cdot \frac{1}{4(7 + 3\frac{234/\delta}{(1.9 - \frac{1 + 4\delta}{1 - \delta}) / 4})}$. 
\item For $\delta = 0.05$, the lower bound becomes at least $4.5 \cdot 10^{-7}$. 
\end{itemize}

\subsubsection{Extend-JMS Local Search -- Proof of Theorem~\ref{thr:extendJMS}}
\label{sec:facilityLS}
We let $S_1$ denote the solution output by Extend-JMS local search and $\OPT$ be the optimum solution.
We will need the following lemma to analyse the \extendJMS\ swap.
\begin{lemma}
  \label{lem:extendJMS}
  Consider a set of facilities $S_0, S^*$, such that $S_0 \subseteq S^*$.
  Consider a not-necessarily-optimal assignment $\mu$ of clients to facilities in $S^*$, let $C_{\mu}(f)$ be the set
  of clients assigned to facility $f \in S^*$ in assignment $\mu$.
  The cost of the solution $S''$ (with optimal assignment) produced by \extendJMS($S_0$) is at most 
  \[\open(S^*) + \sum_{f \in S_0} \sum_{c \in C_{\mu}(f) } \dist(c,f) + 2\sum_{f \in S^* - S_0} \sum_{c \in C_{\mu}(f)} \dist(c,f).\]
\end{lemma}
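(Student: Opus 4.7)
The plan is to apply the refined dual-fitting analysis of JMS (Lemma~\ref{lem:modifiedJMSclaim}) to the modified instance in which facilities of $S_0$ have zero opening cost, comparing against $S^*$ with the partition of clients induced by $\mu$. Let $\open'$ denote opening costs in the modified instance (so $\open'(f) = 0$ for $f \in S_0$ and $\open'(f) = \open(f)$ for $f \notin S_0$), and let $S''$ be the solution returned by JMS on this modified instance. The standard dual-fitting analysis produces client duals $\alpha_c$ satisfying, for any facility $f^*$ and any cluster $D \subseteq C$ assigned to $f^*$,
\[
\sum_{c \in D} \alpha_c \;\leq\; \open'(f^*) + \opt_{JMS}\!\Bigl(\tfrac{\open'(f^*)}{\sum_{c \in D}\dist(c,f^*)}\Bigr) \cdot \sum_{c \in D} \dist(c, f^*),
\]
together with the weak-duality inequality $\sum_c \alpha_c \geq \open'(S'') + d(S'')$.

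The key observation is that $\opt_{JMS}(0) \leq 1$: setting $\lambda = 0$ in $LP_{JMS}(q,0)$, constraint~\eqref{con:LBlambda} forces $r_{j,i} \leq d_j$ for $j<i$ and $\alpha_i \leq d_j$ for all $j \geq i$, so in particular $\alpha_i \leq d_i$ and hence $\sum_i \alpha_i \leq \sum_i d_i = 1$. I would then apply the per-facility bound to each $f^* \in S^*$ with cluster $D = C_\mu(f^*)$. For $f^* \in S_0$ this yields $\sum_{c \in C_\mu(f^*)} \alpha_c \leq \sum_{c \in C_\mu(f^*)} \dist(c, f^*)$, and for $f^* \in S^* \setminus S_0$ it gives the standard $\sum_{c \in C_\mu(f^*)} \alpha_c \leq \open(f^*) + 2 \sum_{c \in C_\mu(f^*)} \dist(c, f^*)$. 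Summing over $f^* \in S^*$ (the sets $C_\mu(f^*)$ partition $C$),
\[
\sum_c \alpha_c \;\leq\; \open(S^* \setminus S_0) + \sum_{f \in S_0}\sum_{c \in C_\mu(f)} \dist(c, f) + 2 \sum_{f \in S^* \setminus S_0} \sum_{c \in C_\mu(f)} \dist(c, f).
\]

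To finish, I would combine this with weak duality and the inequality $\open(S'') = \open'(S'') + \open(S'' \cap S_0) \leq \open'(S'') + \open(S_0)$ (which uses $S_0 \subseteq S^*$ only to note that $\open(S_0)$ is a summand of $\open(S^*)$). Adding $\open(S_0)$ to both sides upgrades $\open(S^* \setminus S_0)$ to $\open(S^*)$, giving exactly the claimed bound, since the cost of $S''$ with its optimal assignment is no larger than $\open(S'') + d(S'')$.

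The main---though mild---obstacle is confirming that the per-facility dual-fitting bound of JMS is valid for the \emph{arbitrary} partition of $C$ induced by $\mu$, rather than only for the partition induced by the optimal assignment of clients to $S^*$. This however is implicit in the proof of Lemma~\ref{lem:modifiedJMSclaim}: the factor-revealing LP is constructed from nothing more than the cost of $f^*$ and the distances of the chosen cluster to $f^*$, so the bound applies to any subset of clients one chooses to assign to $f^*$.
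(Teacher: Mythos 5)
Your proof is correct and follows essentially the same plan as the paper's: run JMS on the instance with $S_0$'s opening costs zeroed out, apply per-facility dual-fitting bounds to the clusters $C_\mu(f)$, and add back $\open(S_0)$ to account for the zeroed costs. The only minor variation is that you derive the bound $\sum_{c \in C_\mu(f)} \alpha_c \leq \sum_{c \in C_\mu(f)} \dist(c,f)$ for $f \in S_0$ through the factor-revealing LP (showing $\opt_{JMS}(0) \leq 1$), whereas the paper argues it directly from the JMS dynamics: $\alpha_c$ equals the time at which $c$ first connects, and a cost-zero facility $f$ is open from time $0$, so $\alpha_c \leq \dist(c,f)$.
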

\begin{proof}
We follow the standard JMS analysis. The only difference between \extendJMS($S_0$) and the standard JMS algorithm is that 
we pretend that the opening cost of each $f \in S_0$ is zero, while paying the original cost once $f$ is open. 
We conservatively assume that each $f \in S_0$ is open, paying $\open(S_0)$ from the beginning.
Then the cost of $S''$ is at most $\open(S_0)$ plus the cost of $S''$ in the facility location instance when each $f \in S_0$ has the opening cost zero. 

Let $\alpha : C \to \R$ be the dual variables during the JMS algorithm. 
For each $f \in S^* - S_0$, the standard LMP $2$ approximation guarantee ensures that $\sum_{c \in C_{\mu}^f} \alpha_c \leq \open(f) + 2 d(c, f)$. 
For each $f \in S_0$, we claim that $\sum_{c \in C_{\mu}^f} \alpha_c \leq d(c, f)$. This follows from the fact that $\alpha_c$ is exactly the time at which $c$ is connected to some facility for the first time; since $f$ has the opening cost $0$ and is at distance $d(c, f)$ from $c$, $c$ is surely connected to some facility no later than $d(c, f)$. 

Combining the two bounds, we conclude that the cost of $S''$ is at most 
\[
\open(S_0) + 
\bigg( \sum_{f \in S_0} \sum_{c \in C_{\mu}(f) } \dist(c,f) \bigg) + 
\bigg(\open(S^* - S_0) + 2\sum_{f \in S^* - S_0} \sum_{c \in C_{\mu}(f)} \dist(c,f) \bigg),
\]
which proves the lemma. 
\end{proof}

\begin{proof}[Proof of Theorem~\ref{thr:extendJMS}] 
Let $S'$ be the solution at the end of \localsearchJMS (i.e., a local optimum).
Let $d' = d(S')$ be the total connection cost of $S'$, and $d'(c) := \dist(c, S')$ be the connection cost of $c$. 
We analyse the algorithm by forming \emph{swap} pairs $(A_i,B_i)$, $A_i\subseteq S'$ and $B_i\subseteq \OPT$.
At the intuition level, 
  each swap leads to a solution  $S'-A_i \cup B_i$ whose cost is thus at least as high as $S'$.
Start by creating a pair $(\{f'\}, \emptyset)$ for each facility of $S'$. 
Then, each matched facility $f^*\in \OPT^M$ is added to the group of the corresponding (unique!) facility $f'\in \OPT^M$
it is matched to. Finally, each $f^*\in \OPT^L$ is added to the group of the closest $f'\in S'$.
Notice that each group now consists of exactly one facility $f'\in S'$ and that each facility $f' \in S'$ appears
in exactly one group. Notice also that each $f^*\in \OPT$ belongs to exactly one group.

We say that a group containing more than $1/\eps$ facilities of $\OPT$ is a \emph{heavy} group. A group that is not heavy is
\emph{light}.
For each heavy group $g = (\{f'\}, B)$, designate a special facility $g(B) \in B$ that is the facility that is matched
to $f'$ if $f' \in S^M$ and an arbitrary facility otherwise.
For each group $g = (\{f'\}, B)$, we will consider a specific solution $S''_g$. 
Let $s''_g = d(S''_g)$ and $s''_g(c) := \dist(c, S''_g)$. 
\begin{enumerate}
\item If $(\{f'\}, B)$ is light, we consider the solution $S''_g = S' - \{f'\} \cup B$. Since $g$ is light, we have
  by local optimality that $\cost(S') \le \cost(S''_g)$.
\item If $(\{f'\}, B)$ is heavy, we proceed as follows. If $\{f'\}$ is matched, then the corresponding facility $f^*$ is
  part of $B$ and we consider the solution $S''_g$ produced by \extendJMS($S' - \{f'\} \cup \{g(B)\}$).
  We have by local optimality that $\cost(S') \le \cost(S''_g)$.
\end{enumerate}

We now aim at bounding the cost of each solution $S''_g$ defined by the above swaps. Given a facility $f^* \in \OPT$,
recall that $C_{\OPT}(f^*)$ is the set of clients served by $f^*$ in solution $\OPT$.
We first consider the case where $g = (\{f'\}, B)$ is light.
We then partition the clients into four categories.
\begin{itemize}
\item Let $C_1$ be the set of clients that are either
  (1) served by a lonely facility in $B$ in $\OPT$, or (2) served by $f'$ in $S'$ and the facility of $\OPT$ that is matched to $f'$.
\item Let $C_2$ be the set of clients not in $C_1$ that are served by $\{f'\}$ in $S'$ and by a matched
  facility in $\OPT$.
\item Let $C_3$ be the set of clients not in $C_1$ that are served by $\{f'\}$ in $S'$ and by a lonely
  facility  in $\OPT$.
\item Finally let $C_4$
  be the remaining clients.
\end{itemize}
We let $s''_g(c)$ denote the cost of client $c$ in solution $S''_g$.
  We have
\begin{itemize}
\item If $c \in C_1$, $s''_g(c) \le \opt(c)$.
\item If $c \in C_2$, we consider the facility $f^*$ serving
  $c$ in $\OPT$. Let $(\{f_2\}, B')$ be the group such that $f^* \in B'$. Since $B \neq B'$ we have
  that $f_2 \in S''_g$ and so $s''_g(c) \le \dist(c, f_2)$, which by triangle inequality
  is at most $\opt(c) + \dist(f_2,f^*)$.
  We have that \[\dist(f_2,f^*) \le \frac{1}{(1-\delta)|C_{\OPT}(f^*)|} \sum_{c' \in C_{\OPT}(f^*)} (\opt(c') + d'(c')).\]
\item If $c \in C_3$, we consider the facility $f^*$ serving
  $c$ in $\OPT$. Let $(\{f_2\}, B')$ be the group such that $f^* \in B'$. Since $B \neq B'$ we have
  that $f_2 \in S''_g$ and so $s''_g(c) \le \dist(c, f_2)$, which by triangle inequality
  is at most $\opt(c) + \dist(f_2,f^*)$. We have that 
  \[\dist(f^*, f_2) \le  \dist(f^*, f') \le  \opt(c) + d'(c).\]
\item If $c \in C_4$, we have $s''_g(c) \le d'(c)$.
\end{itemize}
Recall that for each client $c$, let $\OPT(c)$ be the facility of $\OPT$ that serves it in $\OPT$.
Since $\cost(S') \le \cost(S''_g)$, we have $  \cost(S') - \cost(S''_g) \le 0$,
and so
\begin{align}
&  \open(f') - \open(B) + \sum_{c \in C_1} \big( \opt(c) - d'(c) \big) \nonumber \\
+ & \sum_{c \in C_2} \bigg(\opt(c) - d'(c) + \frac{\sum_{c' \in C_{\OPT}(\OPT(C))} \opt(c') + d'(c')}{(1-\delta)|C_{\OPT}(\OPT(c))|} \bigg)
+ \sum_{c \in C_3} 2 \opt(c) & \le 0 \label{eq:light}
\end{align}

We now turn to the case where $(\{f'\}, B)$ is heavy. Recall $S''_g= $ \extendJMS($S' - \{f'\} \cup \{g(B)\}$) and
let $S'_g = S' - \{f'\} \cup B$.
We consider the following assignment of points
to centers.
We then partition the clients into four categories and define a not-necessarily-optimal assignment $\mu$ of
points to center in $S'_g$.
\begin{itemize}
\item If $g(B)$ and $f'$ are matched, let $C^a_1$ be the set of clients that are served by $g(B)$ in $\OPT$ and
  by $f'$ in $S'$. Otherwise let $C^a_1$ be the whole set of clients served by $g(B)$ in $\OPT$.
  Let the clients in $C^a_1$ be assigned to $g(B)$ in $\mu$.
\item Let $C^b_1$ be the set of clients that are served by a in $B - g(B)$ in $\OPT$. This facility is necessarily
  a lonely facility.
  These clients are assigned to the closest facility in $B-g(B)$ in $\mu$.
\item Let $C_2$ be the set of clients not in $C^a_1 \cup C^b_1$ that are served by $\{f'\}$ in $S'$ and by a matched
  facility $f^*$ in $\OPT$.
  These clients are assigned to the facility of $S'$ that is matched to $f^*$ in $\mu$.
\item Let $C_3$ be the set of clients not in $C^a_1 \cup C^b_1$ that are served by $\{f'\}$ in $S'$ and by a lonely
  facility $f^*$ in $\OPT$.
  These clients are assigned to the facility of $S'$ that is the closest to $f^*$ in $\mu$.
\item Finally let $C_4$
  be the remaining clients.
  These clients are assigned to the closest facility of $S' - \{f'\}$ in $\mu$.
\end{itemize}

We now bound the cost of solution $S'_g$ with the assignment $\mu$, which will allow us to bound the cost of $S''_g$ via Lemma~\ref{lem:extendJMS}.
Let $s'_g$ and $s'_g(c)$ be the total connection cost and the connection cost of $c$ in $S'_g$, {\em according to the assignment $\mu$}. 

\begin{itemize}
\item If $c \in C^a_1$, $s'_g(c) \le \opt(c)$.
\item If $c \in C^b_1$, $s'_g(c) \le \opt(c)$.
\item If $c \in C_2$, we consider the facility $f^*$ serving
  $c$ in $\OPT$. Let $(\{f_2\}, B')$ be the group such that $f^* \in B'$. Since $B \neq B'$ we have
  that $f_2 \in S''_g$ and so $s'_g(c) \le \dist(c, f_2)$, which by triangle inequality
  is at most $\opt(c) + \dist(f_2,f^*)$.
  We have that \[\dist(f_2,f^*) \le \frac{1}{(1-\delta)|C_{\OPT}(f^*)|} \sum_{c' \in C_{\OPT}(f^*)} (\opt(c') + d'(c')).\]
\item If $c \in C_3$, we consider the facility $f^*$ serving
  $c$ in $\OPT$. Let $(\{f_2\}, B')$ be the group such that $f^* \in B'$. Since $B \neq B'$ we have
  that $f_2 \in S''_g$ and so $s'_g(c) \le \dist(c, f_2)$, which by triangle inequality
  is at most $\opt(c) + \dist(f_2,f^*)$. We have that 
  \[\dist(f^*, f_2) \le  \dist(f^*, f') \le  \opt(c) + d'(c).\]
\item If $c \in C_4$, we have $s'_g(c) \le d'(c)$.
\end{itemize}

We then apply Lemma~\ref{lem:extendJMS} ($S_0 \leftarrow S' - \{f'\} \cup \{g(B)\}$, $S^* \leftarrow S'_g$ and $\mu \leftarrow \mu$)
to get a guarantee on $S''_g = $\extendJMS($S' - \{f'\} \cup \{g(B)\}$).
Since clients of type $C^a_1$, $C_2$, $C_3$ and $C_4$ are served by
a facility in $S' - \{f'\} \cup \{g(B)\}$ in assignment $\mu$, Lemma~\ref{lem:extendJMS} implies
\begin{align*}
\cost(S''_g) \, \le \, \, &  \open(S' - \{f'\} \cup B) + \sum_{c \in C^a_1} \opt(c) + \sum_{c \in C_2}\bigg(\opt(c) +
\frac{\sum_{c' \in C_{\OPT}(\OPT(C))} \opt(c') + d'(c')}{(1-\delta)|C_{\OPT}(\OPT(c))|} \bigg) \\
& + \sum_{c \in C_3} \big( 2\opt(c) + d'(c) \big) + \sum_{c \in C_4} d'(c)+ \sum_{c \in C^b_1} 2 \opt(c).
\end{align*}
It follows that, since   $\cost(S') - \cost(S''_g) \le 0$, we have
\begin{align}
&  \open(f') - \open(B)  + 
    \sum_{c \in C^a_1} \big( \opt(c)  - d'(c)  \big) \nonumber \\
  & + \sum_{c \in C_2}\bigg(\opt(c) - d'(c) +
   \frac{\sum_{c' \in C_{\OPT}(\OPT(C))} \opt(c') + d'(c')}{(1-\delta)|C_{\OPT}(\OPT(c))|} \bigg)  
   + \sum_{c \in C_3} 2\opt(c) + \sum_{c \in C^b_1} \big( 2 \opt(c) - d'(c) \big) \nonumber \\
   &\le 0 \label{eq:heavy}
\end{align}

We now conclude the proof of the theorem.
Consider summing up the inequalities~\eqref{eq:light} and \eqref{eq:heavy} over all groups $(\{f'\}, B)$. 
First, observe that each facility of $\OPT$ and $S'$ appears in exactly one group.
Hence, the sum of the quantities $\open(f') - \open(B)$ is exactly $\open(S') - \open(\OPT)$.

Moreover, each client $c$ served by a facility $f^* \in \OPT^L$ (hence not matched) appears as a
client of type $C_1, C_1^a$ or $C_1^b$ in exactly one swap (i.e.: the swap $(\{f'\}, B)$ where
$f^* \in B$), as a client of type $C_3$ in at most one swap (i.e.: the swap $(\{f'\}, B)$ where
$f'$ is the facility serving it in $S'$) and as a type $C_4$ client in all remaining swaps.
Therefore, its total contribution to LHS of the sum is at most
$4\opt(c) - d'(c)$.

In addition, each client $c$ served by a facility $f^* \in \OPT^M$ and served by a facility
$f' \in S'^M$ such that $f^*$ and $f'$ are matched appears as a client of type $C_1$ or $C_1^a$
in exactly one swap and as a client of type $C_4$ otherwise. Its total contribution is thus
at most $\opt(c) - d'(c)$.

Each client $c$ served by a facility $f^* \in \OPT^M$ and served by a facility
$f' \in S'$ such that $f^*$ and $f'$ are not matched appears as a client of type $C_2$
in exactly one swap and as a client of type $C_4$ otherwise. Its total contribution is thus
at most $\opt(c) - d'(c) +  \frac{\sum_{c' \in C_{\OPT}(\OPT(c))} \opt(c') + d'(c')}{(1-\delta)|C_{\OPT}(\OPT(c))|}$.
Observe that for each matched facility $f^*$ of $\OPT^M$ the number of such clients is at most
$\delta |C_{\OPT}(\OPT(c))|$ by the definition of matched. Hence, summing up over all clients served
by $f^*$ in $\OPT$, the above bound is at most
\[\sum_{c \in C_{\OPT}(f^*)} \big( \opt(c) - d'(c) \big) + \frac{\delta}{1-\delta} \sum_{c \in C_{\OPT}(f^*)} \big(\opt(c) + d'(c) \big).\]
Therefore, the total cost of $S'$ is as claimed at most
\begin{equation*}\open(\OPT) + \frac{\delta}{1-\delta} d' + \frac{1}{1-\delta} \opt^M + 4 \opt^L. \qedhere \end{equation*}
\end{proof}

\subsection{Final Result by Cost Scaling} 
The result in the above subsection shows that for any Facility Location instance, if $\OPT$ is the solution minimizing $\open(\OPT) + (2 - \eta)opt$, $S'$ is a local optimum of the extended local search starting from a JMS solution, and $d' \leq 4 \opt$, then it is already an LMP $(2 - \eta)$ approximate solution for some $\eta > 0$ depending only on our choices of $\delta_1, \delta_2, \delta'_1, \delta'_2$. 
Though ensuring $d' \leq 4\opt$ in general is a nontrivial task, we show how to get around this bottleneck and get an LMP $(2 - \eta / 2)$ approximation. 
Let $\OPT$ be the feasible solution that minimizes $\open(\OPT) + (2 - \eta/2)opt$.

The idea is to use the concept of bipoint solutions again. 
In particular, we multiply each facility cost by a factor $\lambda$, and apply the JMS algorithm and the extended local search to the resulting Facility Location instance. Let $S(\lambda)$ be the corresponding solution. 

If $\lambda = 0$, $S(0)$ connects all clients to their closest facilities. If the original facility cost of $S(0)$ is at most that of $\OPT$, then $S(0)$ is already an LMP $1$ approximation, so we can assume otherwise. 
On the other hand, let $\gamma =\min( \min_{f \in F} \open(f), \min_{f, f' \in F:\open(f) \neq \open(f')} |\open(f) - \open(f')| )$, 
and $M = \sum_{c \in C, f \in F} d(c, f)$.
Suppose we scale facility costs by $\lambda = 3M / \gamma$, and consider an arbitrary solution $S'$ that consists of one facility with the lowest opening cost. 
By the definition of $\lambda$, for any solution $S''$ that strictly spends more on the facility cost than $S'$, whether it opens two cheapest facilities or opens a facility which is not the cheapest, the difference between the opening costs between $S'$ and $S''$ is at least $3M$, which is at least 3 times larger than the connection cost of any solution. 
The fact that $S(\lambda)$ is also LMP $2$ approximate w.r.t. $S'$ (with respect to scaled facility costs) implies that $S(\lambda)$ also opens exactly one cheapest facility, so its facility cost must be at most that of $\OPT$ (both scaled and original). 

We perform a binary search over $\lambda$ until we find a value $\lambda^*$ such that $S_1=S(\lambda^*+\delta)$ has facility cost (with respect to original facility costs) $\open(S_1) \leq \open(\OPT)$ and $S_2=S(\lambda^*)$ has facility cost (with respect to original facility costs) $\open(S_2) \geq \open(\OPT)$.  Here $\delta>0$ can be chosen to be exponentially small in the input size and we will next neglect it at the cost of an extra factor $(1+\eps)$ in the approximation for arbitrarily small constant $\eps > 0$. 
Also note that the value of $\open(\OPT)$ can be guessed at the cost of an extra $(1+\eps)$ in the approximation. 

Define $a\in [0,1]$ such that $a\open(S_1)+(1-a)\open(S_2)=\open(\OPT)$. Let $d_i=d(S_i)$. Since $S_1$ and $S_2$ are LMP $2$ approximate solutions with respect to scaled facility costs, we have 
$$
\lambda^* \open(S_1) + d_1  \leq \lambda^* \open(\OPT) + 2\opt,
\quad \lambda^* \open(S_2) + d_2  \leq \lambda^* \open(\OPT) + 2\opt,
$$
which implies 
$$
ad_1+(1-a)d_2\leq 2\opt.
$$
Suppose that $a \geq 1/2$. (The other case is symmetric.) 
This implies that $d_1 \leq 4\opt$, so we can Lemma~\ref{lem:facility_location_fixed_scale} so that $S_1$ is indeed LMP $(2 - \eta)$ approximate with respect to scaled facility costs for some $\eta \geq 4.5 \cdot 10^{-7}$; $\lambda^* \open(S_1) + d_1  \leq \lambda^* \open(\OPT) + (2 - \eta)\opt$. 
This implies that a random solution choosing $S_1$ with probability $a$ and $S_2$ with probability $(1 - a)$ has the expected (original) facility cost
$\open(\OPT)$ and the connection cost at most $a(2-\eta)\opt + (1-a)2\opt \leq (2 - \eta/2)\opt$. 
Therefore, one of $S_1$ and $S_2$ is an LMP $(2-\eta/2)$ approximate solution. This proves Theorem~\ref{thr:facility_location}.

%

\end{document}